\pdfoutput=1 
\documentclass[11pt]{article}
\usepackage{fullpage}
\usepackage[pagebackref, colorlinks = true, linkcolor = blue, urlcolor  = blue, citecolor = purple, bookmarks, hypertexnames=false]{hyperref}
\usepackage[bookmarks]{hyperref}
\usepackage{amssymb}
\usepackage{amsmath}
\usepackage{amsthm}
\usepackage{graphicx,color,colordvi}
\usepackage{bbm}
\usepackage{varioref}
\usepackage[capitalise]{cleveref}
\usepackage{stmaryrd}
\usepackage[utf8]{inputenc}
\usepackage[blocks]{authblk}
\usepackage{dsfont}
\usepackage{mathtools}
\usepackage{authblk}
\usepackage{wrapfig}
\usepackage[english]{babel}
\usepackage{physics}
\usepackage{braket}
\usepackage[table]{xcolor}
\usepackage[center]{caption}
\usepackage{tikz}
\usetikzlibrary{shapes,arrows,patterns,positioning,shapes.geometric,arrows.meta,decorations.markings}
\usepackage{ifthen}
\usepackage{pgfplots}
\pgfplotsset{compat=1.9}
\usetikzlibrary{shapes,arrows.meta}
\usetikzlibrary{positioning}
\usetikzlibrary{shapes.geometric}
\RequirePackage[framemethod=default]{mdframed}
\usepackage[margin=1in]{geometry}
\usepackage{comment}
\usepackage{url}
\usepackage{ upgreek }
\usepackage{makecell}
\usepackage{fancyref} 
\usepackage{float}
\usepackage{enumitem}
\usepackage{autonum}
\usepackage{subcaption}
\newfloat{algorithm}{t}{lop}
\usepackage{array,rotating ,makecell, multirow, tabularx}
\usepackage{scrextend}
\usepackage{mathrsfs}
\usepackage{enumitem}
\usepackage{soul}

\usepackage{IEEEtrantools}
\usepackage[normalem]{ulem}
\usepackage{cancel}

\DeclareMathOperator{\supp}{supp}

\DeclareMathOperator{\id}{id}
\DeclareMathOperator{\co}{co}
\DeclareMathOperator{\cb}{cb}

\newcommand{\1}{\ensuremath{\mathbbm{1}}}
\newcommand{\lamdba}{\lambda}

\newcommand{\SWAP}{\mathsf{SWAP}}

\usepackage{cancel}

\newtheoremstyle{newdefinition}{}{}{\normalfont}{}{\bfseries}{}
{ }
{\thmname{#1} \thmnumber{#2}\thmnote{ (#3)}}

\newtheoremstyle{newplain}{}{}{\itshape}{}{\bfseries}{}{1em}
{\thmname{#1} \thmnumber{#2}\thmnote{ (#3)}}

\newtheoremstyle{newremark}{}{}{\normalfont}{}{\bfseries}{}{1em}
{\thmname{#1}}

\theoremstyle{newdefinition}
\newtheorem{definition}{Definition}[]

\theoremstyle{newplain}
\newtheorem{theorem}{Theorem}[section]
\newtheorem{lemma}{Lemma}[section]
\newtheorem{proposition}{Proposition}[section]
\newtheorem{corollary}{Corollary}[section]
\newtheorem{remark}{Remark}[section]

\newtheorem{conjecture}{Conjecture}[]
\newtheorem{counterexample}{Counterexample}[section]

\newtheoremstyle{myplain}{5pt}{5pt}{\itshape}{0pt}{\bfseries}{}{5pt plus 1pt minus 1pt}{}
\theoremstyle{myplain}
\newtheorem*{theorem*}{Theorem}
\newtheorem*{corollary*}{Corollary}

\DeclareMathOperator{\cH}{\mathcal{H}}

\DeclareMathOperator{\cS}{\mathcal{S}}

\DeclareMathOperator{\cK}{\mathcal{K}}

\DeclareMathOperator{\cB}{\mathcal{B}}

\DeclareMathOperator{\cX}{\mathcal{X}}
\DeclareMathOperator{\cY}{\mathcal{Y}}

\DeclareMathOperator{\cD}{\mathcal{D}}

\title{Two-Indexed Schatten Quasi-Norms with\\ Applications to Quantum Information Theory}
\author[1]{Jan Kochanowski\thanks{Email: jan.kochanowski@inria.fr}}
\author[2]{Omar Fawzi}
\author[1]{Cambyse Rouzé}

\affil[1]{Inria, Télécom Paris - LTCI, Institut Polytechnique de Paris, 91120 Palaiseau, France}

\affil[2]{Inria, ENS Lyon, UCBL, LIP, F-69342 Lyon Cedex 07, France}
\date{\today}

\begin{document}

\maketitle

\begin{abstract}
    We define 2-indexed $(q,p)$-Schatten quasi-norms for any $q,p > 0$ on operators on a tensor product of Hilbert spaces, naturally extending the norms defined by Pisier's theory of operator-valued Schatten spaces. We establish several desirable properties of these quasi-norms, such as relational consistency and the behavior on block diagonal operators, assuming that $|\frac{1}{q} - \frac{1}{p}| \leq 1$. In fact, we show that this condition is essentially necessary for natural properties to hold. Furthermore, for linear maps between spaces of such quasi-norms, we introduce completely bounded quasi-norms and co-quasi-norms. We prove that the $q \to p$ completely bounded co-quasi-norm is super-multiplicative for tensor products of quantum channels for $q \geq p>0$, extending an influential result of [Devetak, Junge, King, Ruskai, 2006]. Our proofs rely on elementary matrix analysis and operator convexity tools and do not require operator space theory. On the applications side, we demonstrate that these quasi-norms can be used to express relevant quantum information measures such as R\'enyi conditional entropies for $\alpha \geq \frac{1}{2}$ or the Sandwiched Rényi Umlaut information for $\alpha < 1$. Our multiplicativity results imply a tensorizing notion of reverse hypercontractivity, additivity of the completely bounded minimum output Rényi-$\alpha$-entropy for $\alpha\geq\frac{1}{2}$ extending another important result of [Devetak, Junge, King, Ruskai, 2006], and additivity of the maximum output R\'enyi-$\alpha$ entropy for $\alpha \geq \frac{1}{2}$. 
\end{abstract}

\newpage

\setcounter{tocdepth}{3} 
\tableofcontents

\newpage

\section{Introduction}

In his seminal work \cite{Book.Pisier.1998}, Gilles Pisier introduced 2-indexed Schatten norms on operators on a tensor product of Hilbert spaces. They are indexed by $q,p\in[1,\infty]$ and act locally as Schatten-$q$-, respectively -$p$-norms. In particular, they satisfy
\begin{align}
    \|X\otimes Y\|_{(q,p)}= \|X\|_q\|Y\|_p
\end{align} 
and serve as the non-commutative generalizations to the $\ell_q[\ell_p]$-vector norms, defined for $q,p\in[1,\infty]$ as
\begin{align}\label{equ:classical}
    \|v\|_{\ell_q[\ell_p]}=\bigg(\sum_i\bigg(\sum_j|v_{ij}|^p\bigg)^\frac{q}{p} \bigg)^\frac{1}{q}.
\end{align}
While these norms naturally appeared in the context of \emph{operator-valued Schatten spaces}, these have recently found ample applications in quantum information theory. In particular completely bounded (cb) norms of linear maps are defined via such norms. Examples include the diamond norm between quantum channels \cite[Chapter 3]{book:Watrous.2018}, cb-minimum output (conditional) entropies of quantum channels \cite{Devetak.2006, Fawzi.2026, Kochanowski.2025} or the complete modified logarithmic Sobolev inequalities \cite{Bardet.2022, Bardet.2024}. See \cite{Beigi.2023} for a recent review of these norms and \cite{book.Junge.2010} for a complete monograph.

It is established \cite{Beigi.2023} that the optimized sandwiched Rényi-$\alpha$ entropy of a state $\rho_{AB}$, as well as the entanglement Rényi-$\alpha$ entropy \cite{Wang.2016} of some pure state $\psi_{AB}$, for $\alpha>1$, respectively satisfy
\begin{align}
\label{eq:uparrowentropy_norm}
    H_\alpha^\uparrow(A|B)_{\rho_{AB}}=\frac{\alpha}{1-\alpha}\log\|\rho_{BA}\|_{(1,\alpha)},\qquad { and }\qquad H_\alpha(A)_{\tr_B[\psi_{AB}]}=\frac{\alpha}{1-\alpha}\log\|\psi_{AB}\|_{(\alpha,1)}.
\end{align} 
Both of these objects $H_\alpha^\uparrow(A|B)_{\rho_{AB}}, H_\alpha(A)_{\tr_B[\psi_{AB}]}$ are well defined and find significance in quantum information theory, however, not just for $\alpha>1$, but in fact for any $\alpha\geq\frac{1}{2}$ or even $\alpha>0$ in the second case.
In fact there are plenty of quantum information theoretic objects and tasks which would be naturally described using such 2-indexed Schatten norms with indices $q,p\leq 1$. In particular entropic quantities such as the optimized sandwiched Rényi-$\alpha$ conditional entropies, cb-minimal output Rényi-$\alpha$ entropies, Rényi-$\alpha$-coherent information and the sandwiched $\alpha$-Umlaut information for $\alpha\in[\frac{1}{2},1)$, are prime candidates. See \cref{sec:intro:multiplicativity} and \cref{sec:Applications} for more details. 
Additionally there lacks a notion of completely bounded quasi-norms of linear maps between quasi-normed spaces which would gives rise to, on the one hand, a notion of complete reverse hypercontractivity, extending reverse hypercontractivity \cite{Cubitt.2015,Beigi.2020} and mirroring complete hypercontractivity \cite{Beigi.2016, Bardet.2022}. Such norms are intimately linked to mixing time bounds of quantum Markov semigroups \cite{Bardet.2024} and one would expect the same to be true for the complete reverse hypercontractivity.
On the other hand, such a notion should give rise to a completely bounded minimum output Rényi-$\alpha$ entropies, among others, mirroring the $\alpha>1$ case of \cite{Devetak.2006}.

However, unlike in the commutative case of \eqref{equ:classical}, which readily extends to the quasi-normed regime of $0<q,p\leq 1$, the 2-indexed Schatten norms are by construction constrained to indices $1\leq q,p\leq\infty$ as operator space theory manifestly builds on hierarchies of norms. This includes the construction of Pisier norms via Haagerup tensor-products of row- and column operator spaces and explicit duality statements which fail to hold for (Schatten) quasi-norms \cite{Book.Pisier.1998, Beigi.2023}. A natural question then arises:

\begin{center}
\textit{What is the natural generalization of Pisier's 2-indexed Schatten norms when $q$ or $p$ is in $(0,1)$?}
\end{center}

In this work, we answer this question by introducing new quasi-norms for indices $0<q,p\leq\infty$ with $\big|\frac{1}{q}-\frac{1}{p}\big|\leq 1$, which we call \emph{2-indexed Schatten quasi-norms}.
We show that this `compatibility' condition, i.e. $\big|\frac{1}{q}-\frac{1}{p}\big|\leq 1$, is sufficient and in some sense necessary to admit natural properties one would expect such quasi-norms to posses. Hence, they form a natural extension of Pisier's theory to the quasi-normed regime and moreover, recover it when $1\leq q,p$. The 2-indexed Schatten quasi-norms allows us to establish new and extend old connections between entropic quantities and 2-indexed norms, for example \eqref{eq:uparrowentropy_norm} is extended to the whole range $\alpha \geq \frac{1}{2}$.
We also show that they naturally give rise to notions of completely bounded quasi-norms and co-quasi-norms of linear maps and prove multiplicativity properties for tensor products of linear maps.  
In particular, this suggests a notion of complete reverse hypercontractivity and shows additivity of the completely bounded minimal output Rényi entropy and maximal output Rényi entropy under tensor products of CP maps for any index $\frac{1}{2}\leq p$.
The proofs in this paper use matrix analysis techniques and do not require knowledge of operator space theory.

\subsection{Intuition from the Commutative Setting}

Before we present our definition and the main results of the paper, we give some intuition behind our approach to 2-indexed Schatten quasi-norms, by first considering the commutative case, i.e. the 2-indexed $\ell_q[\ell_p]$-(quasi-)norms, defined for any set of indices $0<q,p<\infty$ via
\begin{align}
\|v\|_{\ell_q[\ell_p]}:=\left(\sum_{i=1}^{D_1}\left(\sum_{j=1}^{D_2}|v_{ij}|^p\right)^\frac{q}{p} \right)^\frac{1}{q}\,,\qquad \forall v\in \mathbb{C}^{D_1\times D_2}.
\end{align}
Unfortunately this does not directly generalize to the non-commutative setting as for instance $\Tr_1[\tr_2[|X|^p]^\frac{q}{p}]^\frac{1}{q}$ does not define a norm, nor a quasi-norm \cite{Devetak.2006}. 
We can, however, re-express this definition into a variational formula using the generalized Hölder inequality.

\begin{theorem}[2-indexed $\ell_q$ (quasi-)norms]\label{thm:classical-2-indexed-spaces}
Let $0<p,q\leq\infty$ and set $\frac{1}{r}=\big|\frac{1}{q}-\frac{1}{p}\big|$, then for any $v\in\mathbb{C}^{D_1\times D_2}$ it holds that
\begin{align}
    \|v\|_{\ell_q[\ell_p]} = \begin{cases}
        \underset{{\substack{a,b\in\ell_{2r} \\a_i,b_i>0\forall i:v_i\ne 0}}}{\inf}\|a\|_{\ell_{2r}}\|b\|_{\ell_{2r}}\|(a^{-1}\otimes\mathbf{1})\circ v\circ (b^{-1}\otimes\mathbf{1})\|_{\ell_p} \quad &\text{ if } q\leq p \\
        \underset{{a,b\in\ell_{2r}}}{\sup}\|a\|^{-1}_{\ell_{2r}}\|b\|^{-1}_{\ell_{2r}}\|(a\otimes\mathbf{1})\circ v\circ (b\otimes\mathbf{1})\|_{\ell_p}  &\text{ if } q\geq p,
    \end{cases}
\end{align} where $\mathbf{1}$ is the vector of all $1$s. Here for any two $v,w\in\mathbb{C}^{D_1\times D_2}$, $v\circ w$ indicates the entrywise or Hadamard product of $v$ and $w$.
\end{theorem}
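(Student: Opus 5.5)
The plan is to reduce everything to a one-level statement about weighted $\ell_p$ norms and then apply the scalar generalized Hölder inequality (and its reverse) to the vector of row norms. Write $v=(v_{ij})$ and $w_i:=\|v_{i\cdot}\|_{\ell_p}=(\sum_j|v_{ij}|^p)^{1/p}$, so that $\|v\|_{\ell_q[\ell_p]}=\|w\|_{\ell_q}$. Because $a\otimes\mathbf 1$ acts on row $i$ by the scalar $a_i$, we get
\begin{align}
\|(a\otimes\mathbf 1)\circ v\circ(b\otimes\mathbf 1)\|_{\ell_p}=\Big(\sum_i |a_ib_i|^p w_i^p\Big)^{1/p}=\|a\circ b\circ w\|_{\ell_p}.
\end{align}
The same identity holds with $a^{-1},b^{-1}$ in the first case. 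After this step both formulas are statements about vectors indexed by $i$ only. The $\infty$ conventions ($p$ or $q=\infty$, and $r=\infty$ when $p=q$, which forces $a,b$ to be constant) I will handle by the usual limiting interpretation.

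\textbf{Case $q\geq p$.} Here $\frac1p=\frac1q+\frac1r$. Hölder's inequality with $\frac1p=\frac1{2r}+\frac1{2r}+\frac1q$, which is valid for all positive exponents including quasi-norm ones, gives
\begin{align}
\|a\circ b\circ w\|_{\ell_p}\le \|a\|_{\ell_{2r}}\|b\|_{\ell_{2r}}\|w\|_{\ell_q}.
\end{align}
So the supremum is at most $\|w\|_{\ell_q}$. For equality I take $a_i=b_i=w_i^{q/(2r)}$. Then $a_ib_iw_i=w_i^{q/r+1}=w_i^{q/p}$, so $\|abw\|_{\ell_p}=(\sum w_i^q)^{1/p}$, while $\|a\|_{\ell_{2r}}\|b\|_{\ell_{2r}}=(\sum w_i^q)^{1/r}$. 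The ratio is $(\sum w_i^q)^{1/p-1/r}=\|w\|_{\ell_q}$. If $w=0$ the identity is trivial.

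\textbf{Case $q\le p$.} Here $\frac1q=\frac1p+\frac1r$. Write $w=a\circ b\circ u$ with $u=a^{-1}b^{-1}w$ on the support of $w$. This is legitimate because the constraint requires $a_i,b_i>0$ wherever $v$, equivalently $w$, is nonzero. Hölder with $\frac1q=\frac1{2r}+\frac1{2r}+\frac1p$ gives
\begin{align}
\|w\|_{\ell_q}\le\|a\|_{\ell_{2r}}\|b\|_{\ell_{2r}}\|u\|_{\ell_p},
\end{align}
so the infimum is at least $\|w\|_{\ell_q}$. For attainment I take $a_i=b_i=w_i^{q/(2r)}$ on the support (and zero or arbitrary off it). Then $u_i=w_i^{1-q/r}=w_i^{q/p}$, so $\|u\|_{\ell_p}=(\sum w_i^q)^{1/p}$ and the product equals $(\sum w_i^q)^{1/r+1/p}=\|w\|_{\ell_q}$.

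The only real obstacle is bookkeeping. I need to check that Hölder is legitimately applied for exponents below $1$; this is fine, since it follows from the two-term version applied to $|x|^s$. I also need the boundary conventions: entries outside the support, $w_i=0$ in the infimum, and $p=q$, where $r=\infty$ and $a,b$ are only normalized in $\ell_\infty$. Each of these will be dispatched by a short separate remark.
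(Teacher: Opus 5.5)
Your proof is correct and is essentially the paper's argument. You reduce to the vector of row norms $w_i=\|v_{i\cdot}\|_{\ell_p}$, bound via the generalized H\"older inequality with the split $\frac{1}{2r}+\frac{1}{2r}+\frac{1}{q}$ (resp.\ $+\frac{1}{p}$), and attain it with $a_i=b_i=w_i^{q/(2r)}$; the only difference is that the paper states the one-level $\ell_p$ variational formulas first and then lifts them through $w$, whereas you lift first. The one place your explicit optimizer does not literally apply is $q=\infty>p$, where you should instead concentrate $a=b$ on a row maximizing $w_i$ (the paper does not treat this boundary case explicitly either).
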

\noindent We give a proof of this in  \cref{app:lplqVariationalFormulas} for sake of completeness. Moving to operators, one can also relate  Schatten (quasi-)norms by a variational expression that is analogous to the one above.

\begin{proposition}\label{prop:SpVariationalFormulas}

Let $0<q,p$ and set $\frac{1}{r}=\big|\frac{1}{q}-\frac{1}{p}\big|$, then for any $X\in\cS_q(\cH)$ it holds that
\begin{align}
    \|X\|_q=\begin{cases}
        \underset{\substack{X=\Pi_aX\Pi_b\\a,b\in\cS^+_{2r}(\cH)}}{\inf}\|a\|_{2r}\|b\|_{2r}\|a^{-1}Xb^{-1}\|_p  &\text{ if } q\leq p, \\
       \underset{\substack{X=\Pi_aX\Pi_b\\a,b\cS_{2r}^+(\cH)}}{\sup}\|a\|^{-1}_{2r}\|b\|^{-1}_{2r}\|aXb\|_p  &\text{ if } q\geq p.
    \end{cases}
\end{align} and when $X\geq0$ one may choose $a=b$.
\end{proposition}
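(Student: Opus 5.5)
The plan is to prove both cases from the generalized Hölder inequality for Schatten quasi-norms, $\|ABC\|_s\le\|A\|_{s_1}\|B\|_{s_2}\|C\|_{s_3}$ whenever $\frac1s=\frac1{s_1}+\frac1{s_2}+\frac1{s_3}$ and all exponents are positive; this holds also in the quasi-norm range, e.g.\ via Horn's majorization inequalities for singular values. For each case we show one inequality by Hölder and the reverse by exhibiting an explicit (near-)optimizer built from the polar decomposition $X=U|X|$. For the case $q\le p$ write $\frac1q=\frac1{2r}+\frac1p+\frac1{2r}$.

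\textbf{Case $q\le p$.} For any admissible $a,b$ we have $X=\Pi_aX\Pi_b=a(a^{-1}Xb^{-1})b$, with inverses taken on supports. Hölder then gives $\|X\|_q\le\|a\|_{2r}\|a^{-1}Xb^{-1}\|_p\|b\|_{2r}$, so $\|X\|_q\le\inf$. For the reverse, take $b=|X|^{q/(2r)}$ and $a=U|X|^{q/(2r)}U^*=|X^*|^{q/(2r)}$. Then $a^{-1}Xb^{-1}=U|X|^{1-q/r}$, restricted to the support. Since $1-\frac qr=\frac qp$, we get $\|a^{-1}Xb^{-1}\|_p=\Tr[|X|^q]^{1/p}$ and $\|a\|_{2r}=\|b\|_{2r}=\Tr[|X|^q]^{1/(2r)}$. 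The product is $\Tr[|X|^q]^{1/p+1/r}=\|X\|_q$, which confirms the exponent identity $\frac1p+\frac1r=\frac1q$. The support conditions hold because $\Pi_a$ is the left support and $\Pi_b$ the right support of $X$. If $X\ge0$ then $U$ can be taken equal to the support projection, so $a=b$.

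\textbf{Case $q\ge p$.} Here $\frac1p=\frac1{2r}+\frac1q+\frac1{2r}$. Hölder gives $\|aXb\|_p\le\|a\|_{2r}\|X\|_q\|b\|_{2r}$, so $\sup\le\|X\|_q$. For the reverse, choose $a=|X^*|^{q/(2r)}$ and $b=|X|^{q/(2r)}$, so that $aXb=U|X|^{1+q/r}$. Since $1+\frac qr=\frac qp$, we get $\|aXb\|_p=\Tr[|X|^q]^{1/p}$. Dividing by $\|a\|_{2r}\|b\|_{2r}=\Tr[|X|^q]^{1/r}$ yields $\Tr[|X|^q]^{1/p-1/r}=\|X\|_q$. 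If $X\ge0$, again $a=b$ works.

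\textbf{Edge cases.} When $q=p$ we have $r=\infty$, and we take $a=b=\Pi$ with the operator norm. When $q$ or $p$ equals $\infty$ the choices are read through limits, or one restricts to $\cS_q$ with finite support. The main obstacle is making sure Hölder genuinely holds for quasi-norm exponents below $1$ and with generalized inverses on supports; I would cite the singular-value (Horn/weak-majorization) proof, which is exponent-agnostic.
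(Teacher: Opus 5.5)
Your proposal is correct and follows essentially the same route as the paper: in each case the generalized Hölder inequality gives one inequality, and the explicit choice $a=|X^*|^{q/(2r)}$, $b=|X|^{q/(2r)}$ attains equality. This choice is exactly the paper's $a=UD^{s}U^*$, $b=V^*D^{s}V$ with $s=\frac{q}{2r}$ for the singular value decomposition $X=UDV$, and it satisfies the support condition $X=\Pi_aX\Pi_b$ automatically. Your exponent checks ($1-\frac{q}{r}=\frac{q}{p}$, resp.\ $1+\frac{q}{r}=\frac{q}{p}$) and your observation that $a=b$ when $X\geq 0$ are also correct.
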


\noindent Above, we denoted by $\mathcal{S}_q(\cH)$ the $q$-Schatten space of operators on the Hilbert space $\cH$, and $\mathcal{S}_q^+(\cH)$ the subset of positive operators. $\Pi_a$, resp.~$\Pi_b$, denotes the projection onto the support of $a$, resp.~$b$. The proof of the above identities relies only on the generalized Hölder inequality and can be found in \cref{app:SpVariationalFormulas}. Both of these should serve as ample motivation to conjecture that one can define 2-indexed Schatten quasi-norms by effectively copying the commutative variational formulas and changing the vectors to matrices and the $\ell_{q}$ norms to Schatten $\cS_q$ norms. We show that this naïve approach works except when $q,p$ are too far apart in the sense that $\big|\frac{1}{q}-\frac{1}{p}\big|>1$. 

\subsection{Main Results}

\subsubsection{A New Family of 2-Indexed Schatten Quasi-Norms}

None of the tools which are core to the definition of Pisier's norms carry over to the quasi-normed setting. Duality clearly only holds for indices in $[1,\infty]$ and breaks down below. Moreover, the construction of operator spaces builds manifestly on normed spaces. 
To reach a viable definition we start, hence, with factorization formulas which extend the ones for 2-indexed Schatten norms \cite{Devetak.2006, Beigi.2016} and which do not manifestly require norm or operator space properties. We define the following 2-indexed Schatten (quasi-)norms (see \cref{sec:Recap:QuasiNorms} for a recap on quasi-norms).
\begin{definition}[2-indexed Schatten (quasi-)norms]\label{def:Definition1}
Let $q,p\in(0,\infty]$ be such that $\frac{1}{r}=\big|\frac{1}{q}-\frac{1}{p}\big|\leq 1$, and $\cH_1,\cH_2$ two Hilbert spaces. Then for $q\leq p$, we define on $\mathcal{B}(\mathcal{H}_1\otimes \mathcal{H}_2)$
\begin{align}
    \|X\|_{(q,p)}:=\inf_{X_{12}=a_1Z_{12}b_1}\|a\|_{2r}\|b\|_{2r}\|Z\|_p \label{equ:Def1.1}
\end{align}
where the infimum is over operators $a_1=a\otimes \1$, $b_1=b\otimes \1$ acting non-trivially on $\cH_1$. Similarly, for $q\geq p$ we define
\begin{align}
    \|X\|_{(q,p)}:=\sup_{\underset{}{a,b}}\|a\|^{-1}_{2r}\|b\|^{-1}_{2r}\|a_1X_{12}b_1\|_p \,.\label{equ:Def1.2}
\end{align}
\end{definition}

\noindent While this definition clearly reproduces the normed setting when $q,p\geq 1$ \cite{Book.Pisier.1998, Devetak.2006, Beigi.2016, Fawzi.2026} it is not at all obvious that it suffices to give rise to a suitably well behaved family of indeed quasi-norms and that the condition $\big|\frac{1}{q}-\frac{1}{p}\big|\leq 1$ is the sufficient and necessary condition to do so.

The main goal of this paper is to prove important properties of these functionals $\|\cdot\|_{(q,p)}$ which demonstrate that these are indeed the correct notions of 2-indexed Schatten quasi-norms naturally extending Pisier's norms. To do this we will prove in \cref{sec:Operator.valued.quasi.norms} that they define families of $\min\{q,p,1\}$-normable quasi-norms (see Definition \ref{definekappanormable}).
Note that in the norm setting of $q,p\geq1$ this condition is always trivially fulfilled as $\min\{q,p,1\}=1$.
\begin{theorem}\label{thm:QuasiNomrms}
Take $0<q,p$ with $\big|\frac{1}{q}-\frac{1}{p}\big|\leq1$ and set $\kappa:=\min\{q,p,1\}$.
Then the expressions $\|\cdot\|_{(q,p)}$ define $\kappa$-normable quasi-norms on $\mathcal{B}(\mathcal{H}_1\otimes\cH_2)$. We denote the spaces $\mathcal{B}(\mathcal{H}_1\otimes \cH_2)$ endowed with these quasi-norms as $\cS_q[\cH_1,\cS_p(\cH_2)]$.
\end{theorem}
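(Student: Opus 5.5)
We need $\kappa$-normability. By the Aoki–Rolewicz-type characterization (Definition \ref{definekappanormable}), it suffices to prove three things: positive definiteness, absolute homogeneity, and the $\kappa$-triangle inequality $\|X+Y\|_{(q,p)}^\kappa\le\|X\|_{(q,p)}^\kappa+\|Y\|_{(q,p)}^\kappa$. Homogeneity is immediate from both formulas (rescale $Z$, or use the linearity of $X\mapsto a_1Xb_1$). Definiteness is handled differently in the two regimes. For $q\ge p$, a nonzero $X$ gives $\|a_1Xb_1\|_p>0$ for $a=b=\1$, so the supremum is positive. For $q\le p$, the plan is to compare with the one-indexed Schatten norm on $\cH_1\otimes\cH_2$: by the generalized Hölder inequality, $\|a_1Zb_1\|_{s}\le\|a_1\|_{2r}\|Z\|_p\|b_1\|_{2r}$ with $\frac1s=\frac1r+\frac1p=\frac1q$, which yields $\|X\|_q\le (\dim\cH_2)^{1/r}\|X\|_{(q,p)}$. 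In finite dimensions this gives definiteness; in general one would instead use $\|X\|_\infty\le\|a\|_\infty\|Z\|_\infty\|b\|_\infty$.

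\textbf{Case $q\le p$ (infimum).} Given near-optimal factorizations $X=a_1Zb_1$ and $Y=c_1Wd_1$, normalize so that $\|a\|_{2r}=\|b\|_{2r}$, and likewise for $c,d$. The idea is to combine the two factorizations into one. Take $\lambda,\mu>0$ and set
\[
e=(\lambda^{2r}|a^*|^{2r}+\mu^{2r}|c^*|^{2r})^{1/2r},\qquad f=(\lambda^{2r}|b|^{2r}+\mu^{2r}|d|^{2r})^{1/2r}.
\]
By the Douglas lemma, $a=e u$ and $c=ev$ with contractions $u,v$ satisfying $\lambda^{2r}uu^*+\mu^{2r}vv^*\le\1$ (after polar adjustments). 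Then
\[
X+Y=e_1\big(\lambda^{-1}(\lambda u)_1Z(\lambda u')_1\lambda^{-1}+\cdots\big)f_1 .
\]
The middle term is a column-times-diagonal-times-row product: $[\lambda u,\mu v]\,\mathrm{diag}(\lambda^{-2}Z,\mu^{-2}W)\,[\lambda u',\mu v']^T$, where row and column are contractions. Its $p$-quasi-norm is therefore at most $\|\lambda^{-2}Z\oplus\mu^{-2}W\|_p$, using $\|AXB\|_p\le\|A\|_\infty\|X\|_p\|B\|_\infty$, valid for all $p>0$. Meanwhile $\|e\|_{2r}^{2r}=\lambda^{2r}\|a\|_{2r}^{2r}+\mu^{2r}\|c\|_{2r}^{2r}$. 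Optimizing over $\lambda,\mu$ with the Hölder relation $\frac1q=\frac1r+\frac1p$ gives a bound of the form $\|X+Y\|_{(q,p)}\le\|(\|X\|,\|Y\|)\|_{\ell_q}$-type combinations, and in particular the $\min(q,p,1)$-triangle inequality. The restriction $r\ge1$, i.e. $|\frac1q-\frac1p|\le1$, enters because the combined operator must be a contraction with the correct exponents: operator monotonicity or convexity of $t\mapsto t^{1/r}$ requires $1/r\le1$, and this is exactly where the Douglas-lemma step uses $2r\ge 2$ ($t\mapsto t^{2/2r}$ operator monotone).

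\textbf{Case $q\ge p$ (supremum).} Here the triangle inequality reduces to the Schatten $p$-quasi-norm inequality $\|a_1(X+Y)b_1\|_p^{\kappa'}\le\|a_1Xb_1\|_p^{\kappa'}+\|a_1Yb_1\|_p^{\kappa'}$ with $\kappa'=\min(p,1)$. Taking the supremum then gives the $\min(p,1)$-triangle inequality, and since $p\le q$, $\min(p,1)=\kappa$. This case needs no compatibility condition, except that the supremum must be finite: by Hölder, $\|a_1Xb_1\|_p\le\|a\|_{2r}\|b\|_{2r}\|X\|_q\cdot$const, which is fine.

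The main obstacle is the gluing step in the infimum case. There one must show that the sum of two factored operators again factors with the combined norm controlled, using only operator monotonicity of $t^{1/r}$ with $r\ge1$ and contractive row/column bounds valid for quasi-norms. I expect this is where $|\frac1q-\frac1p|\le1$ is genuinely needed.
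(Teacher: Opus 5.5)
Your homogeneity and definiteness arguments are fine, and your case $q\ge p$ is exactly the paper's argument: $\min(p,1)$-subadditivity of $\|\cdot\|_p$ inside the supremum, followed by subadditivity of the supremum.

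The gluing step for $q\le p$, however, fails. From $e^{2r}=\lambda^{2r}(aa^*)^r+\mu^{2r}(cc^*)^r$ and operator monotonicity of $t\mapsto t^{1/r}$ you only get the separate bounds $\lambda^2aa^*\le e^2$ and $\mu^2cc^*\le e^2$, i.e.\ $\|\lambda u\|_\infty,\|\mu v\|_\infty\le1$. For the row $[\lambda u,\ \mu v]$ to be a contraction you would need $\lambda^2aa^*+\mu^2cc^*\le e^2$. That is the reverse of $\|\cdot\|_{\ell_1}\ge\|\cdot\|_{\ell_r}$, so it is false. Already for scalars $\lambda=\mu=a=c=1$ one gets $e=2^{1/2r}$ and $u=v=2^{-1/2r}$, so $\|[u,\ v]\|_\infty=2^{\frac12-\frac1{2r}}>1$ whenever $r>1$. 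Your stated condition also fails here, since $\lambda^{2r}u^2+\mu^{2r}v^2=2^{1-1/r}>1$.

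A quick sanity check confirms the argument cannot be right. If it were valid, optimizing $\|e\|_{2r}\|f\|_{2r}\|\lambda^{-2}Z\oplus\mu^{-2}W\|_p=\|(s,t)\|_{\ell_r}\|(x/s,y/t)\|_{\ell_p}$ over the scalings would give $\|X+Y\|_{(q,p)}\le(\|X\|_{(q,p)}^q+\|Y\|_{(q,p)}^q)^{1/q}$ for every compatible $q\le p$. That includes normed pairs such as $(q,p)=(2,4)$, and taking $Y=X\neq0$ contradicts homogeneity because $2>2^{1/q}$.

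If you glue honestly, $X+Y=[a_1\ c_1](Z\oplus W)[b_1\ d_1]^T$ together with polar decompositions of the row and column does give a valid factorization. The only control you then have is $\|aa^*+cc^*\|_r\le\|a\|_{2r}^2+\|c\|_{2r}^2$. This is sharp when $aa^*\propto cc^*$, and it is an $\ell_1$ combination rather than an $\ell_r$ one. Optimizing the scalings then yields only $\tilde\kappa$-subadditivity with $\tilde\kappa=\frac{p}{1+p}$, which is strictly smaller than $\kappa=\min\{q,1\}$ unless $\frac1q-\frac1p=1$.

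So the missing idea is to avoid factoring $X+Y$ at all and instead remove the infimum. The paper does this in \cref{lem:technical.sup.lemma} for $q\le p$, $q\le1$. It combines the defining infimum with the variational formula $\|\cdot\|_p=\sup\|c\,\cdot\,d\|_q/(\|c\|_{2r}\|d\|_{2r})$ from \cref{prop:SpVariationalFormulas}, and turns the resulting inf--sup of products into one of sums via \cref{lem:Sup.Prod.to.Sum}. It then swaps inf and sup by Sion's theorem, using joint concavity of $(\tilde C,\tilde D)\mapsto\|\tilde C^{1/2}X\tilde D^{1/2}\|_q^q$ for $q\le1$ \cite{Zhang.2020}. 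The outcome is $\|X\|_{(q,p)}=\sup_{A,B}\|AA^*\|_{(\infty,r)}^{-1/2}\|B^*B\|_{(\infty,r)}^{-1/2}\|AXB\|_q$, from which $q$-subadditivity follows exactly as in your supremum case. For $1\le q\le p$ the paper relies on Pisier's normed theory for the triangle inequality rather than on any gluing argument.
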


When $q,p\geq 1$ it is well established that they give rise to norms \cite{Book.Pisier.1998}, and when $q=p$ it is well known that the Schatten-$p$-spaces are $p$-normable quasi-normed spaces, so in that sense, \cref{thm:QuasiNomrms} constitutes the natural generalization of these spaces.

 In order to prove \cref{thm:QuasiNomrms}, we first prove two key structural results of these quasi-norms. The first and maybe most important one is about relational consistency. Explicitly, in \cref{def:Definition1} we defined the $\|\cdot\|_{(q,p)}$-quasi-norm in terms of factorizations through the Schatten-$p$-quasi-norm. This naturally raises the question whether these relations can be inverted, i.e. whether the Schatten-$p$-(quasi-)norm can be factored through the $\|\cdot\|_{(q,p)}$-(quasi-)norm and further whether the $\|\cdot\|_{(q,t)}$ quasi-norms can be factored through $\|\cdot\|_{(p,t)}$ quasi-norms. In the case where all indices $q,p,t\geq 1$, this is known to be true by Pisier's theorem \cite[Theorem 1.5, Lemma 1.7]{Book.Pisier.1998} (see also \cite[Lemma 3.1]{Fawzi.2026} for $\cX=\cS_t(\cH_2)$). Here, we show that this property extends beyond the norm regime.
\begin{theorem}\label{thm:qt.to.pt}
Let $q,p,t>0$ be s.t. $\max\big\{\big|\frac{1}{t}-\frac{1}{q}\big|,\big|\frac{1}{t}-\frac{1}{p}\big|,\big|\frac{1}{q}-\frac{1}{p}\big|\big\}\leq 1$ (we call such a triple compatible) then it holds that 
\begin{align} \label{equ:thm:qt.to.pt.inf}
    \|X\|_{(q,t)}&= \inf_{\underset{X_{12}=a_1Z_{12}b_1}{a,b\geq 0, Z}}\|a\|_{2r}\|b\|_{2r}\|Z\|_{(p,t)}  & \text{ if } q\leq p \\
    \|X\|_{(q,t)}&= \sup_{a,b\geq 0}\|a\|^{-1}_{2r}\|b\|^{-1}_{2r}\|a_1X_{12}b_1\|_{(p,t)} & \text{ if } q\geq p \label{equ:thm:qt.to.pt.sup}
\end{align} 
where $\frac{1}{r}=\big|\frac{1}{q}-\frac{1}{p}\big|$ and for any $q>0$ it holds that
\begin{align}
    \|X\|_{(q,q)} = \|X\|_q.
\end{align}
\end{theorem}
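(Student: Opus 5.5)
The identity $\|X\|_{(q,q)}=\|X\|_q$ comes first, since it is used throughout. For $q=p$ we have $1/r=0$, so $a,b$ range over bounded operators with the operator norm. Taking $a=b=\1$ in \eqref{equ:Def1.1} gives $\|X\|_{(q,q)}\le\|X\|_q$. Conversely, for any factorization $X_{12}=a_1Z_{12}b_1$, Hölder's inequality gives
\[
\|X\|_q\le\|a_1\|_\infty\|Z\|_q\|b_1\|_\infty=\|a\|_\infty\|b\|_\infty\|Z\|_q ,
\]
which proves the reverse inequality. The same two observations handle the $\sup$-form \eqref{equ:Def1.2}. For the main identities I will split into cases according to the relative order of $q,p,t$. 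This determines whether each of $\|\cdot\|_{(q,t)}$ and $\|\cdot\|_{(p,t)}$ is given by an infimum or a supremum. Throughout, $r_{xy}$ denotes the number with $1/r_{xy}=|1/x-1/y|$.

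\emph{Monotone cases.} Suppose first $q\le p\le t$, so every quasi-norm involved is an infimum and $1/r_{qt}=1/r_{qp}+1/r_{pt}$.

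For ``$\le$'' in \eqref{equ:thm:qt.to.pt.inf}, substitute the factorization $Z_{12}=c_1W_{12}d_1$ into $X_{12}=a_1Z_{12}b_1$. Then $X_{12}=(ac)_1W_{12}(db)_1$, and Hölder gives $\|ac\|_{2r_{qt}}\le\|a\|_{2r_{qp}}\|c\|_{2r_{pt}}$. Hence the right-hand side dominates $\|X\|_{(q,t)}$.

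For ``$\ge$'', take a near-optimal factorization $X_{12}=\alpha_1W_{12}\beta_1$ for $\|X\|_{(q,t)}$. Write the polar decomposition $\alpha=|\alpha^*|u$ and set
\[
a=|\alpha^*|^{r_{qt}/r_{qp}},\qquad c=|\alpha^*|^{r_{qt}/r_{pt}}u .
\]
Then $a\ge 0$, $ac=\alpha$, and $\|a\|_{2r_{qp}}\|c\|_{2r_{pt}}=\|\alpha\|_{2r_{qt}}$, because this is the equality case of Hölder. Splitting $\beta$ in the same way and setting $Z=c_1Wd_1$, we get $\|Z\|_{(p,t)}\le\|c\|\,\|d\|\,\|W\|_t$, which closes the inequality.

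The case $q\ge p\ge t$ (all suprema) is handled symmetrically. There one combines the two suprema into one, using Hölder in the form $\|ac\|\le\|a\|\,\|c\|$ and the same power splitting of positive operators.

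\emph{Mixed cases.} The remaining orderings, such as $p\le t\le q$ or $t\le q\le p$, couple an infimum with a supremum.

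One inequality always follows from a Hölder-type inequality for the two-indexed quasi-norms. For $q\le p$ this reads
\[
\|a_1Zb_1\|_{(q,t)}\le\|a\|_{2r_{qp}}\|b\|_{2r_{qp}}\|Z\|_{(p,t)},
\]
and the analogous statement holds for $q\ge p$. I will prove it by expanding whichever of the two definitions is the infimum and applying the scalar Hölder inequality on the first factor. The supremum side is controlled by testing with the extra operators.

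For the other inequality, the plan is to rewrite both sides as a single $\inf$–$\sup$ problem over pairs of positive operators on $\cH_1$ and exchange the order via Sion's minimax theorem.

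\emph{Main obstacle.} The hard part is the convexity/concavity needed for the minimax exchange. After the substitutions $a\mapsto a^{-2r}$ (or $a^{2r}$), normalized to density operators, one needs joint concavity/convexity of maps of the type
\[
(\sigma,\tau)\mapsto\Tr\!\big[(Z^*\sigma^{s}Z\,\tau^{s'})^{\gamma}\big].
\]
I would try to obtain this from Ando/Lieb concavity and operator convexity of $x\mapsto x^{\theta}$ for $\theta\in[-1,0]\cup[1,2]$. Compatibility of the triple should be exactly what places the relevant exponents in these admissible ranges, which is where the hypothesis $|1/x-1/y|\le1$ enters.

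The support conditions ($X=\Pi_aX\Pi_b$) will be handled by restricting to supports and taking limits $a+\epsilon\1$. If the minimax approach fails in some ordering, I would instead reduce to a monotone case by passing through an intermediate index, composing two already-established identities.
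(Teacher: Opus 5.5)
Your proof of $\|X\|_{(q,q)}=\|X\|_q$ is correct. So is your treatment of the two monotone orderings $q\le p\le t$ and $t\le p\le q$, and so is the easy Hölder-type inequality in each mixed ordering. For the monotone orderings your power-splitting argument is in fact more elementary than the paper's, which routes them through the minimax-based \cref{lem:inverseDef1}.

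The gap is the reverse inequality in the four mixed orderings, which is where the real work lies, and your plan fails as stated in two of them. Take $q\le t\le p$. Expanding the sup-form of $\|\cdot\|_{(p,t)}$ and substituting $E=ea^{-1}$, $\tilde E=E^*E$ (and symmetrically on the right) gives the single problem
\[
\inf_{a,b}\,\sup_{\tilde E,\tilde F}\,\|a\|_{2r_{qp}}\|b\|_{2r_{qp}}\,\|a\tilde Ea\|_{r_{tp}}^{-1/2}\,\|b\tilde Fb\|_{r_{tp}}^{-1/2}\,\|\tilde E^{1/2}X\tilde F^{1/2}\|_t
\]
over independent domains. Here $r_{tp}\ge r_{qp}$, so $\inf_a\|a\|_{2r_{qp}}\|a\tilde Ea\|_{r_{tp}}^{-1/2}=\|\tilde E\|_\infty^{-1/2}$, approached by nearly rank-one $a$. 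The exchanged sup--inf therefore equals $\|X\|_t$, which is strictly smaller than $\|X\|_{(q,t)}$: for $X=\1_d\otimes Y$ with $q<t$ one gets $d^{1/t}\|Y\|_t<d^{1/q}\|Y\|_t$. So no convexity theorem can justify this exchange. Concretely, $\sigma\mapsto\Tr[(K^*\sigma^{1/r_{qp}}K)^{r_{tp}}]$ is not concave since $r_{tp}/r_{qp}>1$ when $q<t$. This obstruction comes from the ordering of the indices, and compatibility does not repair it. The ordering $p\le t\le q$ fails symmetrically. Your fallback does not rescue this either: a composition of monotone steps is again monotone, so it can never produce a mixed ordering.

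The missing idea is \cref{lem:inverseDef1}: first establish the degenerate mixed identities with $q=t$. With $1/s=|1/t-1/p|$, these are
\[
\|W\|_t=\sup_{c,d}\|c\|_{2s}^{-1}\|d\|_{2s}^{-1}\|c_1Wd_1\|_{(p,t)}\quad (p\le t),\qquad
\|W\|_t=\inf_{c,d}\|c\|_{2s}\|d\|_{2s}\|c_1^{-1}Wd_1^{-1}\|_{(p,t)}\quad (p\ge t).
\]
The same substitution now yields $\|\sigma^{1/(2s)}\tilde C\sigma^{1/(2s)}\|_s^s$, which is concave in $\sigma$ and, as $s\ge1$, convex in $\tilde C$. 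Before Sion's theorem applies, you still need two further ingredients:
\begin{itemize}
\item The product of norms has no convex--concave structure, so you must first rewrite it as a weighted sum via the AM--GM lemmas \cref{lem:Inf.Prod.to.Sum} and \cref{lem:Sup.Prod.to.Sum}.
\item The remaining term needs joint convexity of $(A,B)\mapsto\Tr[(A^{-1/2}WB^{-1}W^*A^{-1/2})^{t/2}]$ for all $t>0$, respectively joint concavity of $\Tr[(A^{1/2}WBW^*A^{1/2})^{t/2}]$ for $t\le 1$. The restriction $t\le1$ is without loss of generality, since otherwise everything is normed. This is \cite[Theorem 1.1]{Zhang.2020}.
\end{itemize}

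Given this lemma, $q\le t\le p$ and $p\le t\le q$ follow by composing the definition with it through the index $t$. You then merge the two infima (resp.\ suprema) with \cref{prop:SpVariationalFormulas}, with no further minimax.

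The orderings $p\le q\le t$ and $t\le q\le p$ need a second exchange. The paper obtains the required concavity of $(C,D)\mapsto\|C_1^{1/2}XD_1^{1/2}\|_{(p,t)}^{1/2}$, resp.\ convexity of $(C,D)\mapsto\|C_1^{-1/2}XD_1^{-1/2}\|_{(p,t)}$, from already-proved orderings through the auxiliary index $P=p/(1-p)$, resp.\ $P=p/(1+p)$. In these two orderings your direct expansion into Schatten functionals does work, again with $t\le1$ in the concave branch. The exponents that arise there are $r_{pt}\ge 1$ and $r_{pt}\le r_{pq}$, which is exactly the admissible range.
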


\noindent This key result shows that \cref{def:Definition1} is in some sense built on minimal assumptions, yet sufficient to prove the much a more rich structure included in the relations in \cref{thm:qt.to.pt}. In particular it demonstrates that one can relate these quasi-norms with each other in the most natural way, as long as the indices of the involved quasi-norms are all compatible, i.e. not too far apart. This important structural result is known to be true for operator-valued Schatten norms. Note that \eqref{equ:thm:qt.to.pt.inf} with $p=\infty, q\geq1$ is usually referred to as Pisier's formula \cite[Theorem 1.5]{Book.Pisier.1998} and for $q,p,t\geq1$  \eqref{equ:thm:qt.to.pt.sup} follows by duality between operator-valued Schatten normed spaces with Hölder dual indices. In the quasi-normed setting, however, neither the duality nor the operator space construction of the $(\infty,t)$-norm exist so we have to prove it from the ground up. 

 Next, we study the behavior of our quasi-norms on block-diagonal operators. We demonstrate that they simplify to $\ell_q$-quasi-norms of the $\|\cdot\|_p$-quasi-norms of the blocks, as one would expect and as is true in the normed setting $1\leq p,q$ \cite[Corollary 1.3]{Book.Pisier.1998}. We also construct a counterexample to this statement, for when $p\geq q$ with $\frac{1}{r}=\frac{1}{q}-\frac{1}{p}> 1$ implying that the condition $\frac{1}{r}\leq 1$ is necessary in that case.

\begin{theorem}\label{thm:cq.additivity}
Given an orthonormal basis $\{|i\rangle\}$ of $\cH_1$, let $X=\bigoplus_iX_i=\sum_i|i\rangle\langle i|\otimes X_i\in \cB(\cH_1\otimes \cH_2)$ and $0<q,p$ s.t. $\frac{1}{r}=\big|\frac{1}{q}-\frac{1}{p}\big|\leq 1$, then
\begin{align}\label{equ:cq.additivityintro}
    \bigg\|\bigoplus_i X_i\bigg\|_{(q,p)}= \left(\sum_i\|X_i\|^q_p\right)^{\frac{1}{q}}.
\end{align}
Furthermore, for $q\geq p$ \eqref{equ:cq.additivityintro} holds for all $X$ \emph{if and only if} $\frac{1}{p}-\frac{1}{q}\leq1$.
\end{theorem}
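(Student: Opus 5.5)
We treat the two regimes $q\le p$ and $q\ge p$ separately, in each case proving the two inequalities against $\big(\sum_i\|X_i\|_p^q\big)^{1/q}$. Throughout, write $P_i=|i\rangle\langle i|$. For the achievable direction we use diagonal test operators $a=b=\sum_i \lambda_i P_i$ with $\lambda_i\ge0$. These reduce both definitions to the commutative variational formula of \cref{thm:classical-2-indexed-spaces}, applied to the vector $(\|X_i\|_p)_i$. Indeed, $a_1Xb_1=\bigoplus_i\lambda_i^2X_i$, and for block-diagonal operators $\|\bigoplus_i Y_i\|_p^p=\sum_i\|Y_i\|_p^p$. Hence the variational quantity equals $\|\lambda\|_{2r}^{-2}\big(\sum_i\lambda_i^{2p}\|X_i\|_p^p\big)^{1/p}$ for the supremum, and the analogous expression with $\lambda_i^{-2}$ for the infimum. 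Optimizing over $\lambda$ by \cref{thm:classical-2-indexed-spaces}, i.e.\ the scalar Hölder equality case, gives exactly $\|(\|X_i\|_p)_i\|_{\ell_q}$. This yields $\|X\|_{(q,p)}\ge(\sum_i\|X_i\|_p^q)^{1/q}$ when $q\ge p$, and $\le$ when $q\le p$.

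For the converse inequalities, we reduce to diagonal $a,b$ by a pinching argument. Let $\mathcal{E}(Y)=\sum_i P_iYP_i$ be the pinching on $\cH_1$ (tensored with the identity on $\cH_2$).

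Case $q\ge p$. Take arbitrary $a,b\ge0$. Since $P_iX=XP_i$ for the block-diagonal $X$, the diagonal blocks satisfy $P_i a_1Xb_1P_i=\sum_{j,k}(P_iaP_j)_1(P_j\otimes X_j\text{-part})\ldots$. Rather than expanding, we bound $\|a_1Xb_1\|_p$ directly by writing $a_1Xb_1=\sum_j (aP_j)_1 X_j' (P_jb)_1$, where $X_j'=P_j\otimes X_j$. For $p\ge1$ the triangle inequality applies, and for $p<1$ the $p$-triangle inequality $\|\sum_j Y_j\|_p^p\le\sum_j\|Y_j\|_p^p$ does. Each term has norm $\|aP_j\|\cdot$-type factors: precisely, $\|(aP_j)_1X_j'(P_jb)_1\|_p=\|aP_j\|_2\|P_jb\|_2\|X_j\|_p$, since $aP_j$ is rank one. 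Setting $\alpha_j=\|aP_j\|_2=\langle j|a^2|j\rangle^{1/2}$, this gives $\|a_1Xb_1\|_p^{\kappa}\le\sum_j(\alpha_j\beta_j\|X_j\|_p)^{\kappa}$. Two further facts are needed. First, $\|\alpha\|_{\ell_{2r}}\le\|a\|_{2r}$: because $\alpha_j^2$ is the diagonal of $a^2$ and $t\mapsto t^r$ is convex when $r\ge1$, the pinching inequality gives $\sum_j(\alpha_j^2)^r\le\Tr a^{2r}$. Second, the scalar bound: by Hölder, $\big(\sum_j(\alpha_j\beta_j x_j)^\kappa\big)^{1/\kappa}\le\|\alpha\|_{2r}\|\beta\|_{2r}\|x\|_{\ell_q}$ requires $\frac1\kappa\le\frac1r+\frac1q$. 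With $\kappa=p$ this is exactly $\frac1p-\frac1q\le\frac1r$, which holds with equality. If instead $p>1$ we are using the plain triangle inequality, so we need $\kappa=1$ and $1\le\frac1r+\frac1q$, which fails in general. For $p\ge1$ we therefore rely on the known Pisier result instead, or better, we prove a unified pinching bound $\|\mathcal{E}(Y)\|_p\le\|Y\|_p$ and reduce to $p\le1$. I expect this bookkeeping to be the main obstacle, together with the matching infimum case.

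Case $q\le p$. Given a factorization $X=a_1Zb_1$, we have $X_i=\langle i|X|i\rangle=\sum_{j,k}a_{ij}Z_{jk}b_{ki}$, where $a_{ij}$ are scalar entries and $Z_{jk}$ are blocks. We would like $\sum_i\|X_i\|_p^q\le\|a\|_{2r}^q\|b\|_{2r}^q\|Z\|_p^q$. Write $X_i=(\langle i|a\otimes\1)\,Z\,(b|i\rangle\otimes\1)$, so that $\|X_i\|_p=\|A_iZB_i\|_p$ with $A_i$ a row and $B_i$ a column. The operators $V_a=\sum_i|i\rangle\langle i|a$ and its analogue for $b$ give $\bigoplus_iX_i=\mathcal{E}(a_1Zb_1)$, which returns us to the starting point. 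Instead, we use the generalized Hölder inequality blockwise with weights. We set $\lambda_i=\|a|i\rangle\|_2$ and $\mu_i=\|b|i\rangle\|_2$, and let $\hat a=\sum_i\lambda_i^{-1}|i\rangle\langle i|a$, an operator whose rows have unit norm. Then $X_i=\lambda_i\mu_i\langle i|\hat a_1Z\hat b_1|i\rangle$, and we need $\sum_i(\lambda_i\mu_i y_i)^q\le\ldots$ with $(\sum_i y_i^p)^{1/p}\le\|Z\|_p$. This last bound requires the pinching-type estimate $\sum_i\|\langle i|\hat a_1Z\hat b_1|i\rangle\|_p^p\le\|Z\|_p^p$ for row-normalized contractions. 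That estimate is true for $p\ge1$ but delicate for $p<1$, and it is where the hypothesis $\frac1q-\frac1p\le1$ has to enter. We expect to prove it by interpolating with the case $r=1$, namely a trace-class estimate of the form $\sum_i\|X_i\|_q\le\|a\|_2\|b\|_2\|Z\|_p$ via a Cauchy–Schwarz argument. For the final ``if and only if'' in the $q\ge p$ regime, we construct an explicit counterexample when $\frac1p-\frac1q>1$. We take $X=\sum_{i,j}|i\rangle\langle j|\otimes|i\rangle\langle j|$, a maximally entangled rank-one operator, and show that the supremum, computed explicitly, differs from the candidate value obtained from the pinched form.
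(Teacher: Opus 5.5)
\textbf{Main gap: the case $q\le p$.} Your achievability direction is fine. So is your upper bound for $q\ge p$, $p\le1$, which is genuinely more elementary than the paper's route. You decompose $a_1Xb_1=\sum_j(a|j\rangle\langle j|b)\otimes X_j$, then use $p$-subadditivity, Hölder, and $\sum_j\langle j|a^*a|j\rangle^r\le\Tr(a^*a)^r$ for $r\ge1$. The paper instead uses joint concavity of $(A,B)\mapsto\|A_1^{1/(2r)}XB_1^{1/(2r)}\|_p^p$ together with averaging over diagonal phase unitaries, which also covers $p>1$ without appealing to Pisier.

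The estimate you reduce to for $q\le p$, namely $\sum_i\|\langle i|\hat a_1Z\hat b_1|i\rangle\|_p^p\le\|Z\|_p^p$, is false for every $p>0$, because $\hat a$ is not a contraction. Take $\cH_2=\mathbb{C}$, $\dim\cH_1=2$, $v=(|0\rangle+|1\rangle)/\sqrt2$ and $a=b=Z=|v\rangle\langle v|$. Then $\hat a=\hat b=\sqrt2\,|v\rangle\langle v|$ and both diagonal entries of $\hat aZ\hat b$ equal $1$, so the left side is $2>1=\|Z\|_p^p$.

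More fundamentally, this half of your argument never uses that the off-diagonal blocks of $X$ vanish. If it worked, it would prove $\big(\sum_i\|\langle i|X|i\rangle\|_p^q\big)^{1/q}\le\|X\|_{(q,p)}$ for every $X$, i.e.\ that the pinching on $\cH_1$ is contractive. That is false once $q<1$: for trivial $\cH_2$ one has $\|X\|_{(q,p)}=\|X\|_q$, and $X=|v\rangle\langle v|$ gives $2^{1/q-1}>1$. So no sharper Hölder bookkeeping, and no interpolation with $r=1$, can close the gap; block-diagonality must enter.

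The paper handles this as follows. It rewrites the infimum as $\inf\{G(A,B):A,B\ge0,\ \Tr A,\Tr B\le1\}$ with $G(A,B)=\|A_1^{-1/(2r)}XB_1^{-1/(2r)}\|_p^p$. This $G$ is jointly convex for $0\le\frac1r\le1$ (the result of Zhang cited in the paper; this is where the hypothesis is used). It is also invariant under $(A,B)\mapsto(U_jAU_j^*,U_jBU_j^*)$ with $U_j=\sum_k\omega^{jk}|k\rangle\langle k|$, because $(U_j)_1$ commutes with the block-diagonal $X$. Averaging over $j$ gives $G(\mathcal E(A),\mathcal E(B))\le G(A,B)$ at equal trace. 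Hence one may restrict to diagonal $a,b$, where your commutative computation finishes the proof.

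\textbf{Second gap: the counterexample for the ``only if'' part.} The operator $X=\sum_{i,j}|i\rangle\langle j|\otimes|i\rangle\langle j|$ is not block diagonal with respect to $\{|i\rangle\}$. Indeed it is not block diagonal in any basis of $\cH_1$, being rank one with full Schmidt rank. Since \eqref{equ:cq.additivityintro} only concerns block-diagonal operators, no value of $\|X\|_{(q,p)}$ can contradict it.

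You need a block-diagonal $X$ together with a non-diagonal test pair in the supremum that merges the blocks. The paper takes $X=|0\rangle\langle0|\oplus|1\rangle\langle1|$ (the pinching of your operator for $d=2$) and $a=b^*=\begin{pmatrix}1&1\\0&0\end{pmatrix}$. Then $a_1Xb_1=|0\rangle\langle0|\otimes\1_2$, so $\|X\|_{(q,p)}\ge2^{1/p-1}>2^{1/q}$ whenever $\frac1p-\frac1q>1$.

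Your own $q\ge p$ argument points to this example. Its only step using $r\ge1$ is $\sum_j\langle j|a^*a|j\rangle^r\le\Tr(a^*a)^r$. This fails for $r<1$ when $a^*a$ has a flat diagonal but concentrated spectrum, e.g.\ $a^*a=\begin{pmatrix}1&1\\1&1\end{pmatrix}$, with diagonal $(1,1)$ and spectrum $(2,0)$.
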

\noindent Lastly, for indices $0<q,p\leq 1$ we prove a reverse Hölder inequality, which generalizes the well-known one for Schatten spaces \cite[Lemma 3.3]{Book.Tomamichel.2016} and relates the $\|\cdot\|_{(q,p)}$-quasi-norm to its reverse Hölder dual quasi-norm $\|\cdot\|_{(-q^\prime,-p^\prime)}$.

\begin{lemma}[Reverse Hölder's inequality]\label{lem:GenRevHölder}
Let $0< q,p\leq 1$ with $\big|\frac{1}{q}-\frac{1}{p}\big|\leq 1$ and let $q^\prime, p^\prime$ be their Hölder dual indices, that is $\frac{1}{q}+\frac{1}{q^\prime}=\frac{1}{p}+\frac{1}{p^\prime}=1$. Then it holds that
\begin{align}\label{equ:GenRevHölder}
    \|X\|_{(q,p)}=\inf_{\underset{\Pi_A=\Pi_{XX^*}, \Pi_B=\Pi_{X^*X}}{A,B\geq0}}\|A\|^\frac{1}{2}_{(-q^\prime,-p^\prime)}\|B\|^\frac{1}{2}_{(-q^\prime,-p^\prime)}\|A^{-\frac{1}{2}}XB^{-\frac{1}{2}}\|_1,
\end{align} 
which for positive $X\geq 0$ simplifies to 
\begin{align}\label{equ:RevHölder}
    \|X\|_{(q,p)}=\inf_{\underset{\ker(Y)\subset\ker(X)}{Y\geq0}}\Tr[Y^*X]\, \|Y^{-1}\|_{(-q^\prime,-p^\prime)}.
\end{align}
\end{lemma}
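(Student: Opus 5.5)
The plan is to reduce \eqref{equ:GenRevHölder} to the one-index reverse Hölder inequality for Schatten quasi-norms. That inequality is the special case of \cref{prop:SpVariationalFormulas} with indices $(p,1)$. For $0<p\leq 1$ we have $\frac{1}{p}-1=\frac{1}{-p'}$. Taking $2r'=-2p'$ in \cref{prop:SpVariationalFormulas} and substituting $A'=a^2$, $B'=b^2$ (so that $\|a\|_{-2p'}=\|A'\|^{1/2}_{-p'}$) gives
\begin{align}
\|Z\|_p=\inf_{A',B'\geq 0}\|A'\|^{\frac12}_{-p'}\|B'\|^{\frac12}_{-p'}\|A'^{-\frac12}ZB'^{-\frac12}\|_1 .
\end{align}
I first record the index bookkeeping: $-q'=\frac{q}{1-q}$ and $-p'=\frac{p}{1-p}$ are positive, and $\frac{1}{-q'}-\frac{1}{-p'}=\frac1q-\frac1p$. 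Hence the pair $(-q',-p')$ is compatible, with the same $r$ as $(q,p)$, and $q\leq p$ if and only if $-q'\leq -p'$. So $\|\cdot\|_{(-q',-p')}$ is given by the same branch (inf or sup) of \cref{def:Definition1} as $\|\cdot\|_{(q,p)}$.

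\emph{Case $q\leq p$.} Insert the formula above into the infimum of \cref{def:Definition1} to obtain a joint infimum over $a,b,A',B'$ of
\begin{align}
\|a\|_{2r}\|b\|_{2r}\|A'\|^{\frac12}_{-p'}\|B'\|^{\frac12}_{-p'}\big\|A'^{-\frac12}a_1^{-1}Xb_1^{-1}B'^{-\frac12}\big\|_1 .
\end{align}
Next set $A:=a_1A'a_1$ and $B:=b_1B'b_1$. By polar decomposition, $a_1A'^{1/2}=A^{1/2}U$ for a partial isometry $U$, and similarly $b_1B'^{1/2}=V^*B^{1/2}$. Unitary invariance of $\|\cdot\|_1$ (with the support conditions ensuring the partial isometries act isometrically on the relevant ranges) then turns the trace-norm term into $\|A^{-1/2}XB^{-1/2}\|_1$.

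It then remains to show that, for positive $A$,
\begin{align}
\|A\|_{(-q',-p')}=\inf_{A=a_1A'a_1}\|a\|^2_{2r}\|A'\|_{-p'} ,
\end{align}
that is, that for positive operators one may take $a=b$ and $Z\geq 0$ in \cref{def:Definition1}. I would prove this by symmetrizing a general factorization $A=a_1Zb_1$. Since $A\geq 0$, the operator inequality $A\leq\frac12(a_1|Z^*|a_1+b_1|Z|b_1)$ should hold up to a partial-isometry twist, and I would combine it with the AM-GM/Hölder estimate on $\|a\|_{2r}\|b\|_{2r}$. This step may alternatively be available from \cref{thm:qt.to.pt} once the positive case is handled there.

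Both directions of the equality then follow. For ``$\geq$'', any $(a,b,A',B')$ produces an admissible pair $(A,B)$. For ``$\leq$'', any $A$ with a near-optimal factorization $A=a_1A'a_1$, and likewise for $B$, reverses the construction. The support constraints $\Pi_A=\Pi_{XX^*}$ and $\Pi_B=\Pi_{X^*X}$ are matched by restricting everything to supports, as in \cref{prop:SpVariationalFormulas}.

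\emph{Case $q\geq p$.} Here both quasi-norms are suprema, so the same substitution produces a $\sup_{a,b}\inf_{A',B'}$ expression, and this is the main obstacle. My first attempt is a minimax argument. Reparametrize $a$ by $\alpha=a^{2r}$, normalized to density operators, and similarly for $b$. Reparametrize the inner variables so that the objective is convex in $(A',B')$ after the substitution $A'\mapsto A'^{-1}$ (the map $A\mapsto\|A^{-1/2}YA^{-1/2}\|_1$ has the needed convexity structure via operator convexity of $t\mapsto t^{-1}$), and concave in $(\alpha,\beta)$ by an Epstein/Lieb-type concavity. Sion's theorem then swaps $\sup$ and $\inf$, after which the polar-decomposition merging from the first case goes through verbatim. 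If the required concavity fails, my fallback is to use \cref{thm:qt.to.pt} to factor $\|X\|_{(q,p)}$ through an intermediate compatible index at which the inf-branch already applies, and then bootstrap.

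Finally, for $X\geq0$, the simplified formula \eqref{equ:RevHölder} follows by taking $A=B$. This is justified by the same symmetrization used for positive operators above, together with the scalar inequality $\|A^{-1/2}XA^{-1/2}\|_1\geq\Tr[A^{-1}X]$ for $X\geq0$, which holds with equality; setting $Y=A^{-1}$ on the support of $X$ gives the stated form.
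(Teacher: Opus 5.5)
Your overall architecture matches the paper's. For $q\le p$ you concatenate \cref{def:Definition1} with \cref{prop:SpVariationalFormulas} at the pair $(p,1)$, absorb $a_1$ into the inner variable by polar decomposition, and recognise $\|a_1A'a_1\|_{(-q',-p')}$; that is exactly the paper's proof. The step you leave open there, that positive operators admit near-optimal symmetric factorizations $A=a_1A'a_1$, is precisely \cref{lem:Positive.Symmetrie}, so you should simply cite it. \cref{thm:qt.to.pt} does not supply it, and your own sketch does not close as stated. The inequality $A\le\frac12(a_1|Z^*|a_1+b_1|Z|b_1)$ is in fact true: for $Z=U|Z|$, write $A=x^*y$ with $x=|Z|^{1/2}U^*a_1$ and $y=|Z|^{1/2}b_1$. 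However, the natural way to exploit it loses a constant. Take $c^2=\frac12(a^2+b^2)$ and view $c_1^{-1}(\cdots)c_1^{-1}$ as a compression of $|Z^*|\oplus|Z|$; after balancing $\|a\|_{2r}=\|b\|_{2r}$ this only yields $\|c\|_{2r}^2\|c_1^{-1}Ac_1^{-1}\|_p\le 2^{1/p}\|a\|_{2r}\|b\|_{2r}\|Z\|_p$. The paper avoids this loss by using the joint convexity of $(\alpha,\beta)\mapsto\|\alpha_1^{-1/(2r)}A\beta_1^{-1/(2r)}\|_p^p$ from \cite{Zhang.2020} and averaging $\alpha$ and $\beta$.

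The genuine gap is the case $q\ge p$, where you apply Sion's theorem \emph{before} merging. At that stage, for fixed inner variables, the outer variable enters only through $\|A'^{-1/2}\alpha_1^{1/(2r)}X\beta_1^{1/(2r)}B'^{-1/2}\|_1$, with $\alpha_1^{1/(2r)}$ sandwiched between fixed non-commuting operators. No Epstein/Lieb/Zhang-type result covers this, and the function is in general not even quasi-concave. Take a trivial second factor, $X=\1$, $\beta=\1/2$, and $A'^{-1/2},B'^{-1/2}$ close to $|0\rangle\langle0|$ and $|1\rangle\langle1|$. Then, up to a constant, it is close to $|\langle0|\alpha^{1/(2r)}|1\rangle|$, which equals $\frac12$ at $\alpha=|\pm\rangle\langle\pm|$ and vanishes at their midpoint $\1/2$.

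The substitution $A'\mapsto A'^{-1}$ also goes the wrong way. $(A',B')\mapsto\|A'^{-1/2}YB'^{-1/2}\|_1$ is jointly convex as it stands \cite[Theorem 1.1 (2)]{Zhang.2020}, whereas $(G,H)\mapsto\|G^{1/2}YH^{1/2}\|_1$ is jointly concave. Your fallback through \cref{thm:qt.to.pt} does not help either, since it still leaves a $\sup\inf$ to exchange.

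The missing idea is to merge first. With $c=a_1C$ and $\tilde C=CC^*$ one has $\|c\|_{-2p'}=\|a_1\tilde Ca_1\|_{-p'}^{1/2}$, and the trace-norm term becomes $\|\tilde C^{-1/2}X\tilde D^{-1/2}\|_1$, which no longer depends on $(a,b)$. Now reparametrize $a\mapsto a^{1/(2r)}$ with $\|a\|_1\le1$ and use \cref{lem:Inf.Prod.to.Sum} to turn the product into a sum. The outer variables then appear only in $\|a_1^{1/(2r)}\tilde Ca_1^{1/(2r)}\|_{-p'}^{-p'}$ and its $b$-analogue. These are concave and continuous in $a$ by \cref{lem:continuity}, which applies since $\frac1r\le\frac{1}{-p'}$.

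The inner objective is jointly convex in $(\tilde C,\tilde D)$. The trace-norm term is convex by \cite[Theorem 1.1 (2)]{Zhang.2020}, and the power terms are convex by \cite[Lemma B.1]{Rubboli.2025}, but only when $-p'\ge1$, i.e.\ $p\ge\frac12$. The paper's argument relies on the same restriction at this point, so for $p<\frac12$ this branch needs an additional argument.

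With these properties Sion applies. Taking $\sup_a$ of the outer term gives $\|\tilde C\|_{(-q',-p')}^{-p'}$ by the sup branch of \cref{def:Definition1} together with \cref{lem:Positive.Symmetrie}. \cref{lem:Inf.Prod.to.Sum} then reassembles the product, and no merging step is needed afterwards.
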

\noindent Note that in the above lemma the Hölder dual indices $-q^\prime,-p^\prime\in[1,\infty]$ for $\frac{1}{2}\leq q,p\leq1$ and $-q^\prime,-p^\prime\in(0,\infty]$ for $0<q,p\leq1$.
See \cref{sec:Operator.valued.quasi.norms} for more details and proofs.

\subsubsection{Some Entropic Quantities via 2-Indexed Schatten Quasi-Norms}

In \cref{sec:Applications} we consider applications of these families of quasi-norms. First in \cref{sec:Apl:Entropies} we relate the above constructed family of quasi-norms to quantum information theoretic entropic quantities. Two particular examples we mention below are the optimized sandwich conditional Rényi entropy $H_\alpha^\uparrow(A|B)$ and the sandwiched Rényi Umlaut information $\tilde{U}_\alpha(A;B)$. We prove properties of these objects via their quasi-norm representations in \cref{sec:Apl:Entropies}. 

An almost immediate consequence of \cref{def:Definition1} is that the optimized sandwiched conditional entropy for any index $1>\alpha\geq\frac{1}{2}$ can be expressed as
\begin{align}
    H_\alpha^\uparrow(A|B)_\rho=\frac{\alpha}{1-\alpha}\log\|\rho_{BA}\|_{(\alpha,1)},
\end{align}
which extends the well known and used result for $\alpha>1$ \cite{Beigi.2023, Fawzi.2026}. We also introduce a new sandwiched Umlaut information for $\alpha<1$, which we define as
\begin{align}
       \tilde{U}_\alpha(A;B)_\rho&:=\inf_{\sigma_B\geq0, \|\sigma\|_1= 1}D_\alpha(\rho_A\otimes\sigma_B\|\rho_{AB}) =\frac{\alpha}{\alpha-1}\log\|(\1_B\otimes\rho_A^\frac{1}{2})\rho_{BA}^{\frac{1-\alpha}{\alpha}}(\1_B\otimes\rho_A^\frac{1}{2})\|_{(\frac{\alpha}{1-\alpha},\alpha)}.
\end{align}
This is a novel expression, since for $\alpha>1$ it does not correspond to a norm. For further properties of these objects and a connection to the quantum Umlaut information \cite{Girardi.2025} see \cref{sec:Apl:Entropies}.

\subsubsection{A Notion of Completely Bounded Quasi-Norms}


Using the 2-indeed Schatten quasi-norms for arbitrary indices $0<q,p\leq\infty$, 
we define completely bounded quasi-norms and co-quasi-norms of linear maps between quasi-normed Schatten spaces, which we believe to be of independent interest. In \cref{sec:cb.mixed.norms} we define the following \emph{mixed quasi-norm} and \emph{completely bounded quasi-norm} of some linear map $\Phi$
\begin{align}
    \|\Phi\|_{\cb,q\to p}&:=\sup_E\|\id_E\otimes\Phi\|_{(t,q)\to (t,p)}, \qquad \|\id\otimes\Phi\|_{(t,q)\to (t,p)}:=\sup_{X}\frac{\|\id\otimes\Phi\|_{(t,p)}}{\|X\|_{(t,q)}}
\end{align}
 for any compatible $0<q,p,t$. Then, we prove in \cref{lem:cb.norm.via.t} that this completely bounded quasi-norm is independent of the index $t>0$ as long as it is compatible with both $p,q$. This directly extends the normed setting $1\leq q,p,t$, for which this independence of $t$ is known by \cite[Lemma 1.7]{Book.Pisier.1998}. In addition to considering the usual (mixed) quasi-norm above we are also interested in analogous quantities involving a minimization. These are sometimes called \textit{minimum moduli} or \textit{co(-quasi-)norms} \cite{Mbekhta.1996, amelin.1973}. We define the \textit{mixed co-quasi-norm} and \textit{completely bounded co-quasi-norm} of a CP map $\Phi$ as
\begin{align}
    \|\Phi\|^+_{\cb,\co,q\to p}&:=\inf_E\|\id_E\otimes\Phi\|^+_{(t,q)\to (t,p)}, \qquad \|\id\otimes\Phi\|^+_{(t,q)\to (t,p)}:=\inf_{X\geq0}\frac{\|\id\otimes\Phi\|_{(t,p)}}{\|X\|_{(t,q)}}
\end{align} for any compatible $0<q,p,t$. In \cref{lem:cb.co.norm.via.t}, we prove again independence of this quantity on the index $t>0$, as long as it is compatible with both $0<q,p$.

In order to derive multiplicativity results for these completely bounded (co-)quasi-norms we first extend the important non-commutative Minkowski inequality to our 2-indexed quasi-normed setting. In \cref{sec:app:Minkowski.Inequality} we prove the following contractivity statement for the SWAP map. 
\begin{theorem}[A quasi-norm Minkowski inequality, informal, \cref{thm:QuasiNormMinkowskiInequality}]
For $0<q\leq p$ s.t. $\big|\frac{1}{q}-\frac{1}{p}\big|\leq 1$ and any tripartite operator $X\equiv X_{123}\in\cB(\cH_1\otimes \cH_2\otimes \cH_3)$, it holds that
\begin{align}
    \|X_{12;3}\|_{(p,q)}\leq \|X_{132}\|_{(p,q,p)}\,,
\end{align}
where the $\|\cdot\|_{(p,q,p)}$-quasi-norm is defined in \eqref{def:3.indexed.special} as a natural extension to \eqref{equ:Def1.2}, and $\|X_{12;3}\|_{(p,q)}$ here is to be understood as the joint system $\cH_1\otimes \cH_2$ being endowed with a $p$-norm, and $\cH_3$ endowed with a $q$-norm.
\end{theorem}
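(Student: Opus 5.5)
The plan is to prove the inequality for each fixed test pair in the supremum defining the left-hand side. By \cref{def:Definition1} with $p\geq q$ (the joint system $\cH_1\otimes\cH_2$ carrying the outer index $p$ and $\cH_3$ the inner index $q$), we have
\begin{align}
\|X_{12;3}\|_{(p,q)}=\sup_{a,b}\|a\|_{2r}^{-1}\|b\|_{2r}^{-1}\|a_{12}Xb_{12}\|_q,\qquad \tfrac1r=\tfrac1q-\tfrac1p\leq 1 .
\end{align}
So it suffices to show $\|a_{12}Xb_{12}\|_q\le \|X_{132}\|_{(p,q,p)}$ whenever $a,b\geq0$ on $\cH_1\otimes\cH_2$ satisfy $\|a\|_{2r}=\|b\|_{2r}=1$. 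For the right-hand side I would use the form of \eqref{def:3.indexed.special} obtained from the relational consistency of \cref{thm:qt.to.pt}. Namely, $\|X_{132}\|_{(p,q,p)}$ is a supremum over $c,d\geq0$ on $\cH_1$ with $\|c\|_{2r}=\|d\|_{2r}=1$ of $\|c_1Xd_1\|_{(q,q,p)}=\|c_1Xd_1\|_{(q,p)}$, where $\cH_1\otimes\cH_3$ is the joint outer system. That last quasi-norm is in turn given by the infimum formula \eqref{equ:Def1.1}, over factorizations $c_1Xd_1=\alpha_{13}Z\beta_{13}$ with cost $\|\alpha\|_{2r}\|\beta\|_{2r}\|Z\|_p$.

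\textbf{Step 1 (choice of the $\cH_1$-weights).} Given $a$, I set $c:=(\tr_2[a^{2r}])^{\frac1{2r}}$ on $\cH_1$, and similarly $d:=(\tr_2[b^{2r}])^{\frac1{2r}}$, using generalized inverses on supports. Then $\|c\|_{2r}=\|a\|_{2r}=1$. I then rewrite
\begin{align}
a_{12}Xb_{12}=(a_{12}c_1^{-1})(c_1Xd_1)(d_1^{-1}b_{12}),
\end{align}
which is legitimate because $a=\Pi_{c}a$ on the relevant supports.

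\textbf{Step 2 (insert a near-optimal factorization and apply Hölder).} Take a factorization $c_1Xd_1=\alpha_{13}Z\beta_{13}$ that is $\epsilon$-optimal for $\|c_1Xd_1\|_{(q,p)}$. The generalized Hölder inequality on $\cH_1\otimes\cH_2\otimes\cH_3$, with $\frac1q=\frac1{2r}+\frac1p+\frac1{2r}$, gives
\begin{align}
\|a_{12}Xb_{12}\|_q\le \|a_{12}c_1^{-1}\alpha_{13}\|_{2r}\,\|Z\|_p\,\|\beta_{13}d_1^{-1}b_{12}\|_{2r}.
\end{align}
This reduces everything to the single estimate
\begin{align}
\|a_{12}c_1^{-1}\alpha_{13}\|_{2r}\le\|\alpha\|_{2r}
\end{align}
and its mirror image for $\beta$ and $b$. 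Granting this, the bound becomes $\|c_1Xd_1\|_{(q,p)}+\epsilon\le\|X_{132}\|_{(p,q,p)}+\epsilon$, and taking the supremum over $a,b$ finishes the proof.

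\textbf{Step 3 (main obstacle: the weight estimate).} The operator $a_{12}c_1^{-1}$ acts on systems $1,2$ while $\alpha_{13}$ acts on systems $1,3$; they share system $1$ and do not commute. Expanding $|a_{12}c_1^{-1}\alpha_{13}|^{2}$ gives $\alpha_{13}\,(c^{-1}a^2c^{-1})_{12}\,\alpha_{13}$, and I want the trace of its $r$-th power to be at most $\Tr[\alpha^{2r}]$. Set $w:=a_{12}c_1^{-1}$. By the choice of $c$, $\tr_2[(w^*w)^{r}]$ should behave like the projection $\Pi_c$, at least after a suitable rearrangement. For $r\geq1$, which is exactly where the hypothesis $\frac1r\le1$ enters, I would first try the following route. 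Write the quantity as $\Tr[(w\,\alpha^2_{13}\,w^*)^{r}]$, and use operator convexity or monotonicity of $t\mapsto t^r$ on $[1,2]$ together with an Araki--Lieb--Thirring / Epstein-type concavity argument for the partial trace over $\cH_2$. This should bound it by $\Tr[\alpha^{2r}_{13}\,\tr_2[(w^*w)^r]]\le\Tr[\alpha^{2r}]$. If this direct route fails because $w$ is not a product, the fallback is to choose $c$ differently. For instance, one can take $c$ so that $a_{12}=c_1^{1/2}u_{12}c_1^{1/2}$ with $u$ a contraction in the $(\infty,2r)$ sense. Alternatively, one can first reduce to $a,b$ that are block diagonal in an eigenbasis of $\tr_2 a^{2r}$ and invoke \cref{thm:cq.additivity}. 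In either case the difficulty is concentrated in this one noncommutative weight estimate.
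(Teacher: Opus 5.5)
\paragraph{Verdict.} Steps 1--2 are sound. The support condition $a_{12}=a_{12}(\Pi_c)_1$ holds because $\supp(a)\subseteq\supp(\tr_2[a^{2r}])\otimes\cH_2$, and the three-factor Hölder inequality is valid for all positive exponents. The weight estimate $\|a_{12}c_1^{-1}\alpha_{13}\|_{2r}\le\|\alpha\|_{2r}$ that you isolate is in fact true.

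\paragraph{The gap.} The route you sketch for that estimate fails, so the crux of your proof is unproven. After Araki--Lieb--Thirring you would need $\tr_2[(w^*w)^r]\le\Pi_c$, and this is false. Take $r=2$, $\cH_1=\cH_2=\mathbb{C}^2$, $\cH_3$ trivial, and $a=|v\rangle\langle v|$ with $v=\sqrt{0.9}\,|11\rangle+\sqrt{0.1}\,|22\rangle$. Then $c=\mathrm{diag}(0.9,0.1)^{1/4}$ and $w^*w=|u\rangle\langle u|$ with $u=0.9^{1/4}|11\rangle+0.1^{1/4}|22\rangle$. Hence $\tr_2[(w^*w)^2]=\langle u|u\rangle\,\mathrm{diag}(0.9^{1/2},0.1^{1/2})=\mathrm{diag}(1.2,\,0.4)$. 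For $\alpha=|1\rangle\langle 1|$ your intermediate bound equals $1.2$, while the true value is $\|w\alpha\|_4^4=0.9$; ALT loses too much here.

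Your other suggestions do not close this either. Operator convexity of $t^r$ covers only $r\le 2$, whereas $r=(\frac1q-\frac1p)^{-1}$ ranges over all of $[1,\infty]$. The block-diagonal fallback is unjustified, since $a$ is an arbitrary positive operator on $\cH_1\otimes\cH_2$.

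\paragraph{The fix.} The missing idea is to let the partial trace act on $a^{2r}$ itself, via Epstein's theorem. For $r\ge1$ and any $B$, the map $Y\mapsto\Tr[(BY^{1/r}B^*)^r]$ is concave on $Y\ge0$; this is \cite[Theorem 1.1 (1)]{Zhang.2020} with inner power $1/r$ and outer power $r$. Set $\gamma:=c^{-1}\alpha\alpha^*c^{-1}$ on $\cH_1\otimes\cH_3$ and $B=\gamma_{13}^{1/2}$. This functional is invariant under conjugating $Y$ by unitaries on $\cH_2$. The Weyl twirl on $\cH_2$ (with $d_2=\dim\cH_2$) sends $a_{12}^{2r}$ to $c_1^{2r}\otimes\frac{\1_2}{d_2}$. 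Concavity and homogeneity then give
\[
\|a_{12}c_1^{-1}\alpha_{13}\|_{2r}^{2r}=\Tr_{123}[(\gamma_{13}^{1/2}a_{12}^{2}\gamma_{13}^{1/2})^{r}]\le\Tr_{13}[(\gamma^{1/2}c_1^{2}\gamma^{1/2})^{r}]=\|\Pi_c\alpha\|_{2r}^{2r}\le\|\alpha\|_{2r}^{2r}.
\]
The hypothesis $\frac1r\le1$ enters exactly as the hypothesis of Epstein's theorem. With this estimate, and its adjoint version for $b$, $d$ and the right factor, your argument is complete.

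\paragraph{Comparison with the paper.} The repaired argument is genuinely different from the paper's proof. The paper first establishes \cref{lem:3.index.formulas} through two Sion minimax exchanges, one of them only quasi-convex via \cref{lem.app.op.concavity}. This expresses both sides through $(\infty,r)$-type Pisier norms, after which Pisier's operator-space Minkowski inequality \cite[Corollary 1.10]{Book.Pisier.1998} is applied. Your route needs only Hölder, the explicit weight $c=(\tr_2[a^{2r}])^{1/(2r)}$, and Epstein concavity.
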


\noindent In the norm setting $1\leq q,p$, this inequality was proved via interpolation and properties of certain operator space tensor products \cite[Corollary 1.10]{Book.Pisier.1998}. There it constitutes a useful tool to prove multiplicativity statements, see e.g. \cite{Devetak.2006, Fawzi.2026}. In this work we do the same and use this non-commutative Minkowski inequality to derive multiplicativity results for mixed cb quasi-norms and cb co-quasi-norms of linear maps in \cref{sec:app:quasinorm.additivity}.

\subsubsection{New Quantum Information Multiplicativity Results}\label{sec:intro:multiplicativity}

 Next, we show that maximal output and completely bounded minimal output Rényi entropies for any index $\frac{1}{2}\leq p\leq \infty$ are additive and that the notion of \emph{complete reverse hypercontractivity} is indeed a well defined and useful one, in the sense that the corresponding co-quasi-norm is supermultiplicative under tensorproducts of channels. For more details on this see \cref{rem:cb.reverse.hypercontractivity}. 

For multiplicativity statements of the completely bounded quasi-norms see \cref{sec:app:mixed.norm.additivity} and for the completely bounded co-quasi-norms see \cref{sec:app:mixed.conorm.additivity}. One of these prior mentioned results is the following, which should be considered one of the main multiplicativity results of this paper.
\begin{theorem}\label{thm:cb.reverse hypercontractive.bound}
Let $q\geq p>0$ be s.t. $\frac{1}{p}-\frac{1}{q}\leq 1$ and $\Phi\,,\Psi$ be CP maps, then it holds that
\begin{align}
    \|\Phi\otimes\Psi\|^+_{\cb,\co,q\to p}\geq \|\Phi\|^+_{\cb,\co,q\to p}\cdot \|\Psi\|^+_{\cb,\co, q\to p}.
\end{align} 
\end{theorem}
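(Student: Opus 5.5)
The plan is to follow the Devetak--Junge--King--Ruskai strategy: factor $\Phi\otimes\Psi=(\Phi\otimes\id)\circ(\id\otimes\Psi)$ and use the freedom in the auxiliary index $t$ granted by \cref{lem:cb.co.norm.via.t}. Fix an auxiliary system $E$ and a positive $X\ge0$ on $E\otimes A_1\otimes A_2$, where $\Phi$ acts on $A_1$ and $\Psi$ on $A_2$. It suffices to show
\begin{align}
\|(\id_E\otimes\Phi\otimes\Psi)(X)\|_{(t,p)}\;\geq\;\|\Phi\|^+_{\cb,\co,q\to p}\,\|\Psi\|^+_{\cb,\co,q\to p}\,\|X\|_{(t,q)}
\end{align}
for a single convenient compatible $t$. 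The infimum over $E$ and $X$ then gives the claim.

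For the choice of $t$ and the chain of inequalities, take $t=p$; this is compatible since $\frac1p-\frac1q\le1$. The computation runs in three steps.
\begin{itemize}
\item[(1)] First apply the definition of $\|\Phi\|^+_{\cb,\co}$ with $t=p$ and environment $E$, regarding the operator $Y:=(\id_{EA_1}\otimes\Psi)(X)\ge0$ on $E\otimes A_1\otimes B_2$. Here $EB_2$ is grouped into a single system endowed with the $p$-norm. The operator $(\id\otimes\Phi\otimes\id)(Y)$ then has $\|\cdot\|_p=\|\cdot\|_{(p,p)}$ norm at least $\|\Phi\|^+_{\cb,\co}\,\|Y_{EB_2;A_1}\|_{(p,q)}$. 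This uses $\|\cdot\|_{(q,q)}=\|\cdot\|_q$ from \cref{thm:qt.to.pt}, together with invariance under permuting the tensor factors.
\item[(2)] Next, convert $\|Y_{EB_2;A_1}\|_{(p,q)}$ into a three-indexed quantity via the quasi-norm Minkowski inequality (\cref{thm:QuasiNormMinkowskiInequality}). Since $q\ge p$, the roles here are reversed relative to the informal statement, so I will apply it in the form valid for $q\ge p$ (or its appropriate dual/reordered version): $\|Y_{EB_2;A_1}\|_{(p,q)}\ge\|Y_{E A_1 B_2}\|_{(p,q,p)}$. This is the direction needed for a lower bound.
\item[(3)] Finally, view $\|\cdot\|_{(p,q,p)}$ through its variational definition (\eqref{def:3.indexed.special}) as a supremum over $a_E,b_E$ weights with $\|a\|_{2r},\|b\|_{2r}$ normalization, $\frac1r=\frac1p-\frac1q$. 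For each such weight, $a_E Y b_E=(\id\otimes\Psi)(a_EXb_E)$. When $a=b$ this is positive, so the definition of $\|\Psi\|^+_{\cb,\co}$ with environment $EA_1$ applies, yielding the bound $\|\Psi\|^+_{\cb,\co}\|a_EXa_E\|_{(q,q)}$ in the appropriate $(q,q)$ norm. Taking the supremum over $a$ and using \cref{thm:qt.to.pt} (sup form, $q\ge p$, with the triple $(t=p,q,q)$) reassembles $\|X\|_{(p,q)}$.
\end{itemize}

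The positivity restriction is handled by \cref{prop:SpVariationalFormulas} and its two-indexed analogue: for positive operators the optimal weights may be taken equal, $a=b$. This keeps every intermediate operator positive and the co-norm definitions applicable.

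The main obstacle is step (2): getting the Minkowski inequality in exactly the right orientation for $q\ge p$ with the correct nesting of systems. The informal statement is phrased for $q\le p$ as an upper bound, and for co-norms I need a lower bound. I would first try to derive the needed reverse inequality from the stated one through the reverse Hölder duality of \cref{lem:GenRevHölder}, or by inverting the roles of $(p,q)$ in \cref{thm:QuasiNormMinkowskiInequality}. Failing that, I would reprove it directly from the sup-formula \eqref{equ:Def1.2} by restricting the supremum to product weights $a_E\otimes\1$, which gives a lower bound immediately. That restriction may in fact be all that is needed, since step (3) only uses weights on $E$.
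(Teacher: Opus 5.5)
Your proposal has a genuine gap, and it comes from the choice $t=p$. That choice forces the intermediate space to be $(E:p,A_1:q,B_2:p)$, in which the \emph{middle} index is the larger one. This configuration is covered neither by \eqref{def:3.indexed.special}, which requires the middle index to be the smaller one, nor by \cref{thm:QuasiNormMinkowskiInequality}. Your Minkowski direction in step (2) is plausible commutatively, but nothing in the paper gives it here. The underlying problem is that when the index on $E$ is the smaller one, both $\|\cdot\|_{(EB_2:p,A_1:q)}$ and any sensible $(p,q,p)$ quantity are \emph{infima} over factorizations (the \eqref{equ:Def1.1} branch of \cref{thm:qt.to.pt}), not suprema. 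This breaks both of your steps:
\begin{itemize}
\item \emph{Step (2).} Restricting the weights in $\|Y_{EB_2;A_1}\|_{(p,q)}$ to $a_E\otimes\1$ bounds an infimum from above, not from below. ``Inverting the roles'' in \cref{thm:QuasiNormMinkowskiInequality} violates its hypothesis that the middle index is the smaller one. The reverse H\"older inequality is not a duality that would produce the reversed Minkowski inequality.
\item \emph{Step (3).} The claimed identity is false. With $\frac1r=\frac1p-\frac1q$, the quantity $\sup_a\|a\|_{2r}^{-2}\|a_EXa_E\|_q$ is not $\|X\|_{(E:p,A_1A_2:q)}$. For $X=\1_E\otimes\tau$ it equals $d_E^{\max\{0,\,2/q-1/p\}}\|\tau\|_q$, whereas $\|X\|_{(E:p,A_1A_2:q)}=d_E^{1/p}\|\tau\|_q$. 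The correct formula is $\inf_a\|a\|_{2r}^2\|a_E^{-1}Xa_E^{-1}\|_q$.
\item \emph{Why the infimum version does not rescue step (3).} To lower-bound an infimum you must control every factorization $Y=a_EZb_E$. For $a\neq b$, the operator $Z=(\id\otimes\Psi)(a_E^{-1}Xb_E^{-1})$ is not positive, so $\|\Psi\|^+_{\cb,\co}$ does not apply to it. You would need an infimum-type three-index analogue of \cref{lem:Positive.Symmetrie}. The paper provides none: \cref{prop:3index.sym.pos} covers only the supremum form, and proving the infimum form would require a joint convexity that is not available.
\end{itemize}

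The missing idea is to put the \emph{larger} index $q$ on $E$ and apply $\Phi$ first. The intermediate space is then $(E:q,P_1:p,Q_2:q)$, which is exactly \eqref{def:3.indexed.special}, and every inequality points the right way.
\begin{itemize}
\item \emph{The $\Phi$ step.} By \cref{thm:QuasiNormMinkowskiInequality}, $\|W_{EQ_2;P_1}\|_{(EQ_2:q,P_1:p)}\le\|W_{EP_1Q_2}\|_{(E:q,P_1:p,Q_2:q)}$. Hence the co-norm of $\id_E\otimes\Phi\otimes\id_{Q_2}$ into $(E:q,P_1:p,Q_2:q)$ is at least that of $\id_{EQ_2}\otimes\Phi$ into $(EQ_2:q,P_1:p)$. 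By \cref{lem:cb.co.norm.via.t} with $t=q$ on $EQ_2$, the latter is at least $\|\Phi\|^+_{\cb,\co}$.
\item \emph{The $\Psi$ step.} The input norm $\|X\|_{(E:q,P_1:p,Q_2:q)}$ is a supremum over $E$-weights, with $a=b$ for $X\ge0$ by \cref{prop:3index.sym.pos}. A near-optimal normalized $\hat a$ therefore gives $\|X\|_{(E:q,P_1:p,Q_2:q)}\le\|\hat a_EX\hat a_E\|_{(EP_1:p,Q_2:q)}+\epsilon$. The output $\|\cdot\|_{(E:q,P_1P_2:p)}$ is also a supremum over $E$-weights, so it is at least $\|(\id\otimes\Psi)(\hat a_EX\hat a_E)\|_p$. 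This reduces to the co-norm with $t=p$ on $EP_1$.
\end{itemize}
Your step (3) is essentially this argument. However, peeling off the $E$-weights with a single near-optimal choice only works when the index on $E$ is the larger one.
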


\noindent One can think of this result as a generalization of \cite[Theorem 11]{Devetak.2006} to arbitrary positive indices. The inequality above is, from a quantum information perspective the non-trivial one, as the LHS is defined via a minimization. Even though this inequality is achievable, we leave the proof of this fact to future work \cite{Kochanowski.in.prep}, as it requires somewhat more advanced technical tools, which we believe deserve their own proper exposition.

\cref{thm:cb.reverse hypercontractive.bound} is important for two reasons, on the one hand it is vital for the concept of \emph{complete reverse hypercontractivity} which builds manifestly on the supermultiplicativity of the above completely bounded $q\to p$-co-quasi-norm (for $1> q\geq p\geq \frac{1}{2}$). For a more detailed discussion of this see \cref{rem:cb.reverse.hypercontractivity}.
On the other hand we consider the special case of $q=1$. There \cref{thm:cb.reverse hypercontractive.bound}  implies additivity of the completely bounded minimal output Rényi-$p$ entropy for any $p\geq\frac{1}{2}$ extending the result of \cite{Devetak.2006} for $p>1$.
\begin{corollary}\label{cor:Additivity.mib.cb.entropy}
Let $\Phi:Q_1\to A_1\,,\Psi:Q_2\to A_2$ be CP maps and $p\geq\frac{1}{2}$, then 
\begin{align}
\inf_E\inf_{\rho_{EQ_1Q_2}}H^\uparrow_p(A_1A_2|E)_{(\id_E\otimes\Phi\otimes\Psi)(\rho_{})}= \inf_{E,\rho_{EQ_1}}H^\uparrow_p(A_1|E)_{(\id_E\otimes\Phi)(\rho_{})}+\inf_{E,\rho_{EQ_2}} H^\uparrow_p(A_2|E)_{(\id_E\otimes\Psi)(\rho_{})}.
\end{align}
\end{corollary}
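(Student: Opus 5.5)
The plan is to rewrite both sides of the corollary as logarithms of completely bounded co-quasi-norms and then apply \cref{thm:cb.reverse hypercontractive.bound} with $q=1$. For $p\geq\frac12$ the pair $(1,p)$ is compatible, since $\big|1-\frac1p\big|\leq 1$. When $p<1$ we have $q=1\geq p$ and $\frac1p-1\leq 1$, so the hypothesis of \cref{thm:cb.reverse hypercontractive.bound} holds. When $p>1$ we are in the regime $q\leq p$. There I would either fall back on \cite[Theorem 11]{Devetak.2006} or use the analogous supermultiplicativity statement for that regime from \cref{sec:app:mixed.conorm.additivity}. The case $p=1$ is trivial, because the conditional von Neumann entropy is additive.

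\textbf{Step 1 (entropy as a co-norm).} By the representation $H_p^\uparrow(A|E)_\rho=\frac{p}{1-p}\log\|\rho_{EA}\|_{(1,p)}$, valid for $p\geq\frac12$, we have for $\rho_{EQ}\geq 0$ with trace one
\begin{align}
H^\uparrow_p(A|E)_{(\id_E\otimes\Phi)(\rho)}=\frac{p}{1-p}\log\|(\id_E\otimes\Phi)(\rho_{EQ})\|_{(1,p)}.
\end{align}
Next I would use $\|\rho_{EQ}\|_{(1,1)}=\|\rho\|_1=1$ together with positive homogeneity. This lets me drop the normalization and take the infimum over all $X\geq0$ of the ratio $\|(\id\otimes\Phi)(X)\|_{(1,p)}/\|X\|_{(1,1)}$, which is exactly $\|\id_E\otimes\Phi\|^+_{(1,1)\to(1,p)}$. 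By \cref{lem:cb.co.norm.via.t} the choice $t=1$ is admissible, since it is compatible with $1$ and $p$. Taking the infimum over $E$ gives the cb co-quasi-norm.

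\textbf{Step 2 (monotonicity of the log).} The direction of the optimization has to match. For $p<1$ the prefactor $\frac{p}{1-p}$ is positive, so $\inf H=\frac{p}{1-p}\log\|\Phi\|^+_{\cb,\co,1\to p}$. For $p>1$ the prefactor is negative, so $\inf H$ corresponds to a supremum of the norm ratio, which is the usual $\cb$ norm $\|\Phi\|_{\cb,1\to p}$. In that case the relevant multiplicativity is the known statement of \cite{Devetak.2006}, and the new content is $p\in[\frac12,1)$.

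\textbf{Step 3 (the two inequalities).} The inequality ``$\leq$'' is elementary. I would take product inputs $\rho_{E_1Q_1}\otimes\sigma_{E_2Q_2}$ with $E=E_1E_2$. The conditional entropy $H^\uparrow_p$ is additive on product states, which follows from multiplicativity $\|X\otimes Y\|_{(q,p)}=\|X\|_{(q,p)}\|Y\|_{(q,p)}$; I would verify this from \cref{def:Definition1} or from known additivity of $H^\uparrow_p$. Equivalently, the co-quasi-norm is submultiplicative. The inequality ``$\geq$'' on entropies is, via Step 2, the supermultiplicativity $\|\Phi\otimes\Psi\|^+_{\cb,\co,1\to p}\geq\|\Phi\|^+_{\cb,\co,1\to p}\|\Psi\|^+_{\cb,\co,1\to p}$ of \cref{thm:cb.reverse hypercontractive.bound}. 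Taking $\frac{p}{1-p}\log$ of this yields the claim.

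The main obstacle I expect is the bookkeeping in Step 1. One point is that the infimum over unnormalized $X\geq0$ equals the infimum over states, which holds because $X=0$ is excluded and the expression is homogeneous. The other is the identification of the order $\rho_{EA}$ versus $\rho_{AE}$, where $E$ must carry the outer index $1$, with the convention in $\|\id_E\otimes\Phi\|_{(t,q)\to(t,p)}$. The remaining concern is the product-state multiplicativity of the $(1,p)$ quasi-norm in the quasi-norm regime. If needed, I would get it from \cref{thm:qt.to.pt} by choosing tensor-product factorizations for one inequality and tensor-product test operators in the sup formula for the other.
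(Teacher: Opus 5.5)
Your proposal is correct and follows the paper's proof: you rewrite $\inf_{E,\rho}H^\uparrow_p$ as $\frac{p}{1-p}\log\|\Phi\|^+_{\cb,\co,(Q:1)\to(A:p)}$, get ``$\geq$'' from \cref{thm:cb.reverse hypercontractive.bound} with $q=1\geq p$, and get ``$\leq$'' from product inputs plus additivity of $H^\uparrow_p$ on product states (for that last step the paper simply cites \cite[Corollary 5.9]{Book.Tomamichel.2016} instead of proving multiplicativity of $\|\cdot\|_{(1,p)}$). Two small slips in your side cases: for $p>1$ the statement you need is submultiplicativity of the cb norm (\cite[Theorem 11]{Devetak.2006}), not a result from the co-norm section; and for $p=1$, additivity on product states only gives ``$\leq$'', so ``$\geq$'' needs the chain rule or a limit $p\uparrow 1$ from the $p<1$ case.
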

\noindent For proofs and more details, see \cref{sec:app:mixed.conorm.additivity}.
\begin{remark}\label{ref:concurrent.work}
    While finalizing this paper, we were made aware of the concurrent work \cite{Li.2026} in which the authors prove, among others, the special case $q=1$ of \cref{thm:cb.reverse hypercontractive.bound} and thus \cref{cor:Additivity.mib.cb.entropy} independently. In this special case one can define the $\cb,\co,1\to p$ norm without a full theory of 2-indexed Schatten quasi-norms, as both the in- and output norms $\|\cdot\|_{(p,1)}=\|\tr_2[\cdot]\|_p$, $\|\cdot\|_{(p,p)}=\|\cdot\|_p$, respectively collapse to closed form expressions, which is the approach of \cite{Li.2026}. See \cref{cor:q.1.PartialTrace} and \cref{lem:cb.co.norm.via.t} for details on these special cases. Since our result is more general and adaptable to different settings, it requires this more advanced machinery.
\end{remark}

\noindent In \cref{sec:app:quasinorm.additivity}, we also prove that the regular (non-completely bounded) maximal output Rényi-$\alpha$ entropy is additive under tensor products of channels, a result we are unaware of having been previously established, except for the von Neumann case, where it is a direct consequence of the  subadditivity.
\begin{corollary}\label{cor:max.entropy.additivity}
Let $\frac{1}{2}\leq p\leq \infty$, then it holds for CP maps $\Phi:Q_1\to A_1\,, \Psi:Q_2\to A_2$ that
\begin{align}
    \sup_{\rho_{Q_1Q_2}}H_p(A_1A_2)_{(\Phi\otimes\Psi)(\rho_{Q_1Q_2})} = \sup_{\rho_{Q_1}}H_p(A_1)_{\Phi(\rho_{Q_1})}+\sup_{\rho_{Q_2}}H_p(A_2)_{\Psi(\rho_{Q_2})}.
\end{align}
\end{corollary}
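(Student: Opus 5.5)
We derive \cref{cor:max.entropy.additivity} from a multiplicativity statement for the (non-completely bounded) co-quasi-norm $\|\Phi\|^+_{\co,1\to p}:=\inf_{X\geq 0}\|\Phi(X)\|_p/\|X\|_1$ with $p\in[\frac{1}{2},1)$, and from an ordinary multiplicativity statement for the $1\to p$ norm when $p>1$.

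\textbf{Translation to norms.} Since $H_p(A)_\sigma=\frac{p}{1-p}\log\|\sigma\|_p$ for a state $\sigma$, and by homogeneity the supremum over states may be replaced by a ratio over all $X\geq0$:
\begin{itemize}
\item for $p<1$, maximizing $H_p$ means maximizing $\|\Phi(\rho)\|_p$, so the relevant quantity is $\sup_{X\geq 0}\|\Phi(X)\|_p/\|X\|_1$;
\item for $p>1$, the prefactor $\frac{p}{1-p}$ is negative, so maximizing $H_p$ means minimizing $\|\Phi(\rho)\|_p$, i.e.\ the co-norm $\inf_{X\geq0}\|\Phi(X)\|_p/\|X\|_1$.
\end{itemize}
In either case, the inequality ``LHS $\geq$ RHS'' in the corollary is trivial: restrict to product inputs $\rho_{Q_1}\otimes\rho_{Q_2}$ and use $\|Y\otimes Z\|_p=\|Y\|_p\|Z\|_p$. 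The case $p=1$ is immediate, since $\|\Phi(\rho)\|_1$ does not depend on $\rho$ in the relevant sense (or follows by continuity in $p$). The substantive task is therefore the reverse inequality.

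\textbf{Case $p>1$.} We must show that the $1\to p$ co-norm of $\Phi\otimes\Psi$ over positive inputs is supermultiplicative. We write $\Phi\otimes\Psi=(\Phi\otimes\id)(\id\otimes\Psi)$ and pass through mixed norms.
\begin{enumerate}
\item For $X_{12}\geq0$, use $\|X\|_1=\|X\|_{(1,1)}$ together with the fact that $\|\cdot\|_{(t,1)}=\|\tr_2\cdot\|_t$ (\cref{cor:q.1.PartialTrace}).
\item Apply $\id\otimes\Psi$, which maps $(1,1)\to(1,p)$ with a lower bound governed by the cb co-norm at $t=1$; by \cref{lem:cb.co.norm.via.t} this gives the bound $\|(\id\otimes\Psi)(X)\|_{(1,p)}\geq\|\Psi\|^+_{\cb,\co,1\to p}\|X\|_1$.
\item Apply $\Phi\otimes\id$ from $(1,p)$ to $(p,p)$. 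To handle this step we swap the systems using the quasi-norm Minkowski inequality (\cref{thm:QuasiNormMinkowskiInequality}) and then use the $(p,1)$ behaviour of $\Phi$.
\end{enumerate}
The difficulty is that for $p>1$ the non-cb co-norm of $\Psi$ may differ from its cb version, so this route does not close directly. We therefore expect to need a pure-input reduction instead: for $p>1$ the quantity $\|\Phi(\rho)\|_p$ is convex in $\rho$, so its minimum over states is attained at a pure state. The plan is to handle this case by a known additivity-type argument, or, if that fails, to invoke \cref{thm:cb.reverse hypercontractive.bound} directly. We regard this case as the main point of uncertainty in the plan.

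\textbf{Case $\frac{1}{2}\leq p<1$ (main obstacle).} Here we need submultiplicativity of $\sup_{X\geq0}\|\Phi\otimes\Psi(X)\|_p/\|X\|_1$. Write $(\Phi\otimes\Psi)(X)=(\Phi\otimes\id)\big((\id\otimes\Psi)(X)\big)$ and proceed in two steps.
\begin{enumerate}
\item Control $\|(\id\otimes\Psi)(X)\|_{(1,p)}\leq\|\Psi\|_{1\to p}\|X\|_{(1,1)}$. For $t=1$, $q=1$, the $(1,p)$ quasi-norm of a positive operator is given by the infimum formula with $\frac{1}{r}=\frac{1}{p}-1\leq 1$. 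The point is that for positive $X$ the optimal factorization $X=a_1Za_1$ with $a\geq0$ commutes past $\Psi$ (since $\Psi$ acts on system 2), which reduces the bound to the non-cb $1\to p$ norm of $\Psi$ applied blockwise.
\item Pass from $(1,p)$ to $(p,p)$ through $\Phi\otimes\id$ using the Minkowski inequality \cref{thm:QuasiNormMinkowskiInequality} (valid for $q=1\leq p$ after reordering), and bound the resulting map by $\|\Phi\|_{1\to p}$ applied blockwise via \cref{thm:cq.additivity}.
\end{enumerate}
Verifying that $\Psi$ acting on the second system preserves the optimal factorization in $(1,p)$ without cb losses is the step we expect to be hardest. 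We would first test it on classical–quantum $X$ using \cref{thm:cq.additivity}, and then extend to general $X$ by a pinching argument.
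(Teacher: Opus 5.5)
There is a genuine gap, and it lies in the order in which you compose the two maps, in both regimes. If you apply $\id\otimes\Psi$ first, you are left with $Y=(\id\otimes\Psi)(X)$ in $(Q_1:1,A_2:p)$. You then need $\|(\Phi\otimes\id_{A_2})(Y)\|_p\le\|\Phi\|^+_{1\to p}\|Y\|_{(Q_1:1,A_2:p)}$ for $p<1$, and the reverse inequality with $\|\Phi\|^+_{\co,1\to p}$ for $p>1$. Both are false. Take $\Phi(\rho)=\Tr[\rho]\frac{\1}{d}$ and $Y$ maximally entangled on $Q_1A_2$. Then $\|(\Phi\otimes\id)(Y)\|_p=d^{\frac2p-2}$ and $\|Y\|_{(Q_1:1,A_2:p)}=d^{1-\frac1p}$ (since $H^\uparrow_p(A_2|Q_1)=-\log d$), so the ratio is $d^{3(\frac1p-1)}$. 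By contrast, $\|\Phi\|^+_{1\to p}=\|\Phi\|^+_{\co,1\to p}=d^{\frac1p-1}$. Entropically, your order would require $H_p(A_1A_2)\le H^\uparrow_p(A_2|Q_1)+\sup_\rho H_p(A_1)_{\Phi(\rho)}$, and negative conditional entropy breaks this.

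The Minkowski inequality cannot repair this. The cb (co-)norms control $\id_E\otimes\Phi$ only when the reference sits in the first slot. Moving $A_2$ to the front goes the wrong way: $\|Y\|_{(Q_1:1,A_2:p)}\le\|Y\|_{(A_2:p,Q_1:1)}$ for $p\le 1$, with the reverse for $p\ge 1$. So for $p>1$ the obstruction is not a mismatch between cb and non-cb co-norms; for input index $1\le p$ these coincide by \cref{thm:q.downto.p.cb.equal.noncb.co}. Your two fallbacks also fail:
\begin{itemize}
\item Convexity of $\rho\mapsto\|\Phi(\rho)\|_p$ places its maximum, not its minimum, at pure states, and maximal output entropy is typically attained at mixed inputs.
\item \cref{thm:cb.reverse hypercontractive.bound} needs $q\ge p$, so it says nothing about $q=1<p$.
\end{itemize}
Conversely, the step you flagged as hardest for $p<1$ is easy. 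There $(1,p)$ is given by the supremum formula, not the infimum, and $\|Z_{EA}\|_{(E:1,A:p)}\le\|\tr_E Z\|_p$ for $Z\geq 0$ because conditioning reduces $H^\uparrow_p$. No cq or pinching reduction is needed.

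The fix is to reverse the order; this is the paper's argument (\cref{thm:cb.1.to.p.multiplicativity} for $p\le 1$ and \cref{thm:cb.co.1.to.p.additivity} for $p\ge 1$, combined in the remark after the latter).
\begin{itemize}
\item Apply $\Phi\otimes\id_{Q_2}$ first, into $(A_1:p,Q_2:1)$. By \cref{cor:q.1.PartialTrace} this quasi-norm equals $\|\Phi(X_{Q_1})\|_p$. So this step costs exactly $\|\Phi\|^+_{1\to p}$, resp.\ $\|\Phi\|^+_{\co,1\to p}$ (\cref{lem:quasi.id.right}, resp.\ \cref{lem:adjoining.id.conorm}).
\item Then apply $\id_{A_1}\otimes\Psi$ from $(A_1:p,Q_2:1)$ to $(A_1:p,A_2:p)$. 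Now the processed system $A_1$ is the reference in the first slot, so the cost is the cb (co-)norm of $\Psi$ with reference index $t=p$.
\item That equals the $t=1$ version by \cref{lem:cb.norm.via.t}, resp.\ \cref{lem:cb.co.norm.via.t}. For input index $1$ it further equals the non-cb (co-)norm by \cref{thm:q.downto.p.cb.equal.noncb}, resp.\ \cref{thm:q.downto.p.cb.equal.noncb.co}. This is where Minkowski enters, and in the correct direction.
\end{itemize}
Entropically, this is the chain-type bound $H_p(A_1A_2)\le H_p(A_1)+\sup_{E,\rho}H^\uparrow_p(A_2|E)$, together with $\sup_{E,\rho}H^\uparrow_p(A_2|E)=\sup_\rho H_p(A_2)_{\Psi(\rho)}$. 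Combined with your product-input argument, this gives the corollary. Finally, $p=1$ is subadditivity of von Neumann entropy, not a statement about $\|\Phi(\rho)\|_1$.
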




\section{Preliminaries}\label{sec:Preliminaries}

Before we begin we set our notation and central mathematical notions.
    
\subsection{Notation}\label{sec:Notation}
Hilbert spaces, denoted with $\cH, \cK$, will generically be assumed finite dimensional in this work, unless explicitly stated otherwise.
We denote by $\cD(\cH)$ the set of quantum states and $\mathcal{B}(\mathcal{H},\mathcal{K})$ that of linear operators from $\mathcal{H}$ to $\mathcal{K}$, with $\mathcal{B}(\mathcal{H})\equiv \mathcal{B}(\mathcal{H},\mathcal{H})$. $\mathcal{U}(\mathcal{H})$ stands for the group of unitary operators over $\mathcal{H}$. The identity operator on $\cB(\cH)$ is denoted with $\1$. We denote the adjoint of some operator $X\in\cB(\cH,\cK)$ with respect to the Hilbert inner product as $X^*\in\cB(\cK,\cH)$. An operator $a\in\cB(\cH)$ is called positive $a\geq0$, whenever there exists some $y\in\cB(\cH)$ s.t. $a=yy^*$.

Given some operator positive operator $a\geq0$ we denote its support projection as $\Pi_a$, that is the minimal projection s.t. $a=\Pi_aa\Pi_a$. Hence for any operator $X$ the projection onto its image (left support) is $\Pi_{XX^*}$ and the one onto its support (right support) as $\Pi_{X^*X}$. Given a bipartite operator $X\in\cB(\cH\otimes\cK)$, then its \emph{left},  respectively \emph{right marginal support projections} $L_X,R_X$ on $\cH$ are the minimal projections s.t.
\begin{align}\label{equ:full.to.marginal.support}
    X=(L_X\otimes\1)X(R_X\otimes\1).
\end{align} We will interchangeably call them projections onto the marginal image, respectively, support and the marginal in question will be clear from the context.
They are also equally characterized by being the minimal projections which satisfy $\Pi_{XX^*}\leq L_X\otimes\1, \quad \Pi_{X^*X}\leq R_X\otimes\1$.

When inverting an operator $a\geq0$ we will generically be using to the generalized Moore-Penrose inverse, which is a bounded operator $a^{-1}$ s.t. $a^{-1}a=aa^{-1}=\Pi_a$. Effectively it is the inverse on the support of that operator and $0$ elsewhere.

We use the convection that $|X|:=\sqrt{X^*X}$ for any operator $X$, which implies that its polar decompositions are $X=U|X|=|X^*|V$, for some unitaries $U,V$.

A linear map $\Phi:\cB(\cH)\to \cB(\cK)$ is called positive if it maps positive operators to positive ones. It is called completely positive (CP) if $\id_n\otimes\Phi:\cB(\mathbb{C}^n\otimes\cH)\to \cB(\mathbb{C}^n\otimes\cK)$ is positive for all $n\in\mathbb{N}$.
A map is called unital (U) if it preserves the identity, i.e. $\Phi(\1_{\cH})=\1_{\cK}$ and trace-preserving (TP) if it holds that $\Tr[\Phi(X)]=X$ $\forall X\in\cB(\cH)$. 
We call a CPTP map a quantum channel. We denote the identity channel as $\id:\cB(\cH)\to\cB(\cH)$.

\subsection{Quasi-Norms}\label{sec:Recap:QuasiNorms}

An important concept used throughout this work is that of a quasi-norm.

\begin{definition}\label{definekappanormable}
A $\kappa$-normable quasi-norm, for $\kappa\leq1$ on some vector space $\mathcal{V}$ over the complex numbers is a map 
$\|\cdot\|:\mathcal{V}\to [0,\infty]$ which is
\begin{enumerate}
    \item \emph{positive homogeneous of degree 1}: $\|\lambda v\|=|\lambda|\|v\|$ for all $\lambda\in\mathbb{C}, v\in\mathcal{V}$,
    \item \emph{positive definite}: $\|v\|=0$ iff $v=0$,
    \item \textit{$\kappa$-subadditive}: $\|v+w\|^\kappa\leq \|v\|^\kappa+\|w\|^\kappa$ for all $v,w\in\mathcal{V}$.
\end{enumerate}
The last property is called $\kappa$-subadditivity and implies in particular that
\begin{align}\label{equ:modulus.of.concavity}
    \|v+w\|\leq 2^{\frac{1}{\kappa}-1}(\|v\|+\|w\|)
\end{align} holds for any $v,w\in\mathcal{V}$, where $2^{\frac{1}{\kappa}-1}$ is called the modulus of concavity of the quasi-norm \cite{book:Kalton.03}. 
\end{definition}
The important Aoki-Rolewicz theorem \cite{Aoki.1942, Rolewicz.1972} essentially implies that any quasi-norm with a modulus of concavity $C=2^{\frac{1}{\kappa}-1}$ as in \eqref{equ:modulus.of.concavity} is equivalent to a $\kappa$-normable quasi-norm for that $0<\kappa\leq 1$. Hence any quasi-norm is equivalent to a $\kappa$-normable one, for some $0<\kappa\leq 1$.
Note that if a quasi-norm is $\kappa$-normable, then it is also $\mu$-normable for some $0<\mu\leq\kappa$ and that in particular any norm a $1$-normable quasi norm.

The Schatten spaces $\cS_p(\cH)$ and the sequence spaces $\ell_p(I)$ with $I\subseteq\mathbb{N}$ are examples of $p$-normable quasi-normed spaces for $0<p<1$ and normed spaces for $p\geq1$. These are for any $0<p$, in finite dimensions given by
\begin{align}
    \cS_p(\cH)=(\cB(\cH),\, \|\cdot\|_p), \qquad \|X\|_p:=\Tr[|X|^p]^\frac{1}{p} 
\end{align} and for finite sequences by
\begin{align}
    \ell_p(I):=(I^{\mathbb{C}},\, \|\cdot\|_{\ell_p}) \qquad \|v\|_{\ell_p}:=\Big(\sum_{i\in I}|v_i|^p\Big)^\frac{1}{p}.
\end{align}

These quasi-norms satisfy the following important generalized Hölder inequality.
\begin{lemma}[Generalized Hölder's inequality \cite{borell1982positivity, tomamichel2014relating}]\label{lem:GeneralizedHölder}
Let $0<p,q,r\leq \infty$ be s.t. $\frac{1}{r}=\frac{1}{p}+\frac{1}{q}$, then for any two vectors $v\in\ell_p,\, w\in\ell_q$ it holds that
\begin{align}
    \|v\circ w\|_{\ell_r}\leq \|v\|_{\ell_p}\|w\|_{\ell_q},
\end{align} where $w\circ v$ is the Hadamard product of the two vectors, i.e. $(w\circ v)_i=w_iv_i$. Equality holds if and only if $|v_i|^q \propto |w_i|^p$ for all $i$.
Likewise for any two operators $X\in\cS_q ,Y\in\cS_p$ it holds that
\begin{align}
    \|XY\|_r\leq \|X\|_q\|Y\|_p
\end{align} with equality iff $|X^*|^q \propto |Y|^p$. 
\end{lemma}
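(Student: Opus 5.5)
The plan is to prove the vector statement first, by reducing it to the classical Hölder inequality. Then I lift it to operators through majorization of singular values, and treat the equality cases last.

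\textbf{Vector case.} The inequality is trivial if $v=0$ or $w=0$, so assume neither vanishes. By homogeneity I normalize $\|v\|_{\ell_p}=\|w\|_{\ell_q}=1$. The infinite indices are immediate: if $q=\infty$ then $r=p$ and $|v_iw_i|\le \|w\|_{\ell_\infty}|v_i|$, and symmetrically for $p=\infty$. For finite indices, set $s=p/r$ and $t=q/r$. Since $\frac1r=\frac1p+\frac1q$ we have $s,t\ge1$ and $\frac1s+\frac1t=1$, even when $p,q<1$. Writing $|v_iw_i|^r=(|v_i|^r)(|w_i|^r)$ and applying the classical Hölder inequality with exponents $(s,t)$ gives
\begin{align}
\sum_i|v_iw_i|^r\le\Big(\sum_i|v_i|^{p}\Big)^{\frac rp}\Big(\sum_i|w_i|^{q}\Big)^{\frac rq}.
\end{align}
Taking $r$-th roots proves the inequality. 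Equality in the classical Hölder inequality (strict convexity, $1<s<\infty$) holds iff $(|v_i|^r)^s\propto(|w_i|^r)^t$, that is $|v_i|^p\propto|w_i|^q$. Note that this is the normalization $v$ with $p$ and $w$ with $q$. I would therefore check the exponent convention in the statement's equality clause ($|v_i|^q\propto|w_i|^p$) against this and state the correct form.

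\textbf{Operator case.} I use Horn's inequality $\prod_{i\le k}s_i(XY)\le\prod_{i\le k}s_i(X)s_i(Y)$ for all $k$, with singular values in decreasing order. This is a weak log-majorization. Since $x\mapsto e^{rx}$ is convex and increasing, it yields weak majorization of the $r$-th powers:
\begin{align}
\sum_i s_i(XY)^r\le\sum_i s_i(X)^r s_i(Y)^r.
\end{align}
Applying the vector case to the sequences $(s_i(X))_i$ and $(s_i(Y))_i$ then gives $\|XY\|_r\le\|X\|_q\|Y\|_p$.

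\textbf{Equality cases.} Sufficiency I would check directly. Write $X=U|X|$ and $Y=|Y^*|V$ with $U,V$ unitary, and suppose $|X|^q\propto|Y^*|^p$. Then $|X|$ and $|Y^*|$ are commuting functions of the same positive operator, so $\|XY\|_r=\||X||Y^*|\|_r$ can be computed on a common eigenbasis, where the vector equality case applies. The statement's condition is phrased as $|X^*|^q\propto|Y|^p$, and I would reconcile the two conventions (they differ by which side of the product the polar factors sit on) when writing this up.

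Necessity is the step I expect to be the main obstacle. Equality in the chain forces two things. First, equality in the vector Hölder step, so $s_i(X)^q\propto s_i(Y)^p$. Second, equality in the weak majorization under the strictly convex map $x\mapsto e^{rx}$, which forces $s_i(XY)=s_i(X)s_i(Y)$ for every $i$. Now set $A=|X|$ and $B=|Y^*|$; since $s_i(XY)=s_i(AB)$, the task is to show that $s_i(AB)=s_i(A)s_i(B)$ for all $i$, with $A,B\ge0$, forces $A$ and $B$ to be diagonal in a common ordered eigenbasis. I would argue inductively. For the top index, $\|AB\|_\infty=\|A\|_\infty\|B\|_\infty$ forces a maximizing unit vector $x$ to be a top eigenvector of $B$, and then $Bx$ must be a top eigenvector of $A$, so the top eigenvectors align. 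I then restrict to the orthogonal complement and use Courant--Fischer to repeat the argument for the next index. This gives $A\propto B^{p/q}$ on the supports, which is the stated condition.
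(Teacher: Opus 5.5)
The paper gives no proof of this lemma; it is quoted from the cited references. Your argument is a correct, self-contained proof along the standard route. For finite $p,q$ the exponents $s=p/r=1+p/q$ and $t=q/r=1+q/p$ both lie strictly between $1$ and $\infty$. So the classical H\"older inequality, including its strict-convexity equality case, settles the vector statement. Horn's inequality, together with the fact that weak log-majorization passes to $r$-th powers for every $r>0$, then lifts it to Schatten quasi-norms.

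On the equality clauses, you are right that the vector clause has its exponents swapped; the correct condition is $|v_i|^p\propto|w_i|^q$. In the operator case you should go further than planning to reconcile conventions, because the stated condition is simply false under the paper's convention. The notation section fixes $|X|=\sqrt{X^*X}$, and with that convention $|X^*|^q\propto|Y|^p$ fails in both directions:
\begin{itemize}
\item For $X=|e_1\rangle\langle e_1|$ and $Y=|e_2\rangle\langle e_1|$ we have $|X^*|=|Y|=|e_1\rangle\langle e_1|$, yet $XY=0$.
\item For $X=|e_1\rangle\langle e_1|$ and $Y=|e_1\rangle\langle e_2|$ equality holds, although $|X^*|\neq|Y|$.
\end{itemize}
The correct condition is $|X|^q\propto|Y^*|^p$, which is exactly what your sufficiency argument uses via $XY=U|X||Y^*|V$. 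The stated form is correct only under the opposite convention $|X|=\sqrt{XX^*}$.

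Your necessity sketch works, but three points need a line more each.
\begin{itemize}
\item \textbf{Equality in the majorization step.} Deducing $s_i(XY)=s_i(X)s_i(Y)$ from equality of the $r$-th power sums uses a lemma. Weak submajorization plus equal sums under a strictly increasing, strictly convex function forces equal decreasing rearrangements; one proves this by writing $x\le u\prec y$. Zero singular values must be split off first, since $\log 0=-\infty$. Equality of the sums also rules out $s(XY)$ having fewer nonzero entries than $s(X)s(Y)$.
\item \textbf{The alignment step.} You do not need Courant--Fischer here. Once the maximizing vector $x$ is a common eigenvector of $A$ and $B$ for $a_1$ and $b_1$, both operators reduce on $x^\perp$. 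The list $s(A|_{x^\perp}B|_{x^\perp})$ is $s(AB)$ with its top entry removed, so the hypothesis passes to the restrictions and a plain induction yields a common ordered eigenbasis.
\item \textbf{The degenerate stage.} The maximizing-vector step is vacuous once $a_kb_k=0$. However, the H\"older equality $s_i(X)^q=c\,s_i(Y)^p$ with $c>0$ makes the zeros of $s(A)$ and $s(B)$ occur at the same indices. At that stage both restrictions vanish, so nothing further is needed.
\end{itemize}

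Finally, the equality clauses are meaningful only for finite $p,q$. When $p$ or $q$ equals $\infty$, only the inequality is asserted, and your argument covers it.
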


\section{2-Indexed Schatten Quasi-Norms}\label{sec:Operator.valued.quasi.norms}

The goal of this section is to study the \textit{2-indexed Schatten quasi norms}. The main part of this section is to justify the proposed \Cref{def:Definition1}, while deriving some of its core properties.

\subsection{Basic Properties}\label{sec:basic.properties}
We begin with some basic consequences of \Cref{def:Definition1}. The first one is that, due to the isometric invariance of the Schatten (quasi-)norms, the 2-indexed Schatten quasi-norms are invariant under local isometries. For notational simplicity we present this for unitaries:
\begin{proposition}\label{prop:LocalIsomInv}
Let $q,p$ be s.t. $\big|\frac{1}{q}-\frac{1}{p}\big|\leq 1$ and let $U,V\in\mathcal{U}(\mathcal{H}_{1})$ or $\in\mathcal{U}(\mathcal{H}_{2})$ respectively. Then it holds that
\begin{align}
    \|U_1X_{12}V_1\|_{(q,p)} = \|X\|_{(q,p)} \qquad \text{ and }\qquad     \|U_2X_{12} V_2\|_{(q,p)} = \|X\|_{(q,p)}.
\end{align}
\end{proposition}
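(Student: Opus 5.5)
The plan is to argue directly from \cref{def:Definition1}, treating the cases $q\leq p$ (infimum over factorizations) and $q\geq p$ (supremum) separately. In each case the unitaries are absorbed either into the weights $a,b$ (for $U_1,V_1$) or into the middle operator $Z$ (for $U_2,V_2$). The only facts needed are the unitary invariance of Schatten quasi-norms, $\|UYV\|_s=\|Y\|_s$ for every $s>0$, and the fact that unitaries acting on different tensor factors commute. Since $U^*,V^*$ are again unitaries, it suffices in every case to prove ``$\leq$'': applying that inequality to $U_1X_{12}V_1$ with $U^*,V^*$ yields the reverse.

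\emph{Case $q\geq p$, local unitaries on $\cH_2$.} For any $a,b\geq0$ the factor $a_1=a\otimes\1$ commutes with $U_2=\1\otimes U$, and similarly for $b_1$ and $V_2$. Hence
\begin{align}
\|a_1U_2XV_2b_1\|_p=\|U_2a_1Xb_1V_2\|_p=\|a_1Xb_1\|_p,
\end{align}
so each term in the supremum \eqref{equ:Def1.2} is unchanged.

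\emph{Case $q\geq p$, local unitaries on $\cH_1$.} We have $\|a_1U_1XV_1b_1\|_p$. Take polar decompositions $aU=W|aU|=W\,(U^*a^2U)^{1/2}$ and $Vb=(Vb^2V^*)^{1/2}W'$, with $W,W'$ unitary. Setting $\tilde a=U^*aU$ and $\tilde b=VbV^*$, both positive, we get $\|a_1U_1XV_1b_1\|_p=\|\tilde a_1X\tilde b_1\|_p$ by unitary invariance. Moreover $\|\tilde a\|_{2r}=\|a\|_{2r}$ and $\|\tilde b\|_{2r}=\|b\|_{2r}$. Thus every term in the supremum for $U_1XV_1$ equals a term in the supremum for $X$, which gives ``$\leq$''.

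\emph{Case $q\leq p$.} Suppose $X=a_1Zb_1$. Then
\begin{align}
U_1XV_1=(Ua U^*)_1\,(U_1ZV_1)\,(V^*bV)_1,
\end{align}
which is an admissible factorization: the weights are positive with unchanged $2r$-norms, and $\|U_1ZV_1\|_p=\|Z\|_p$. For the unitaries on $\cH_2$, write $U_2XV_2=a_1(U_2ZV_2)b_1$, using commutation with $a_1,b_1$, and note $\|U_2ZV_2\|_p=\|Z\|_p$. Taking the infimum gives ``$\leq$'' in both situations.

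I expect no real obstacle. The main point requiring care is whether the weights in \cref{def:Definition1} must be positive, and whether support conditions such as $X=\Pi_aX\Pi_b$ are preserved. Conjugating the weights by unitaries, $a\mapsto U^*aU$, keeps them positive and maps support projections compatibly, so both issues are handled. The remark about isometries follows the same way, using that isometries preserve Schatten quasi-norms.
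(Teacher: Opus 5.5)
Your proof is correct and is essentially the paper's argument: absorb $U_1,V_1$ into the weights, absorb $U_2,V_2$ into $Z$ or commute them past $a_1,b_1$, and use unitary invariance of the Schatten quasi-norms. The only difference is cosmetic. The paper's \cref{def:Definition1} does not require positive weights, so it substitutes $c=U^*a$, $d=bV^*$ (resp.\ $c=aU$, $d=Vb$) directly as a bijection, whereas you conjugate to keep the weights positive; positivity of the optimizers is in fact derived in the paper as a corollary of this very proposition, and for non-self-adjoint $a$ your polar-decomposition step should read $|aU|=(U^*a^*aU)^{1/2}$.
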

\begin{proof}
In the case where $q\leq p$ we have by definition
\begin{align}
   \|U_1X_{12}V_1\|_{(q,p)} &= \inf_{U_1X_{12}V_1=a_1Z_{12}b_1}\|a\|_{2r}\|b\|_{2r}\|Z\|_p \\ &= \inf_{X_{12}=U_1^*a_1Z_{12}b_1V_1^*}\|a\|_{2r}\|b\|_{2r}\|Z\|_p \\
   &= \inf_{X_{12}=c_1Z_{12}d_1}\|c\|_{2r}\|d\|_{2r}\|Z\|_p \\
   &= \|X\|_{(q,p)}.
\end{align} 
Similarly,
\begin{align}
    \|U_2X_{12}V_2\|_{(q,p)} &= \inf_{U_2X_{12}V_2=a_1Z_{12}b_1}\|a\|_{2r}\|b\|_{2r}\|Z\|_p \\ 
    &= \inf_{{X_{12}=a_1U_2^*Z_{12}V_2^*b_1}}\|a\|_{2r}\|b\|_{2r}\|Z\|_p \\ &= \inf_{{X_{12}=a_1W_{12}b_1}}\|a\|_{2r}\|b\|_{2r}\|U_2W_{12}V_2\|_p \\ 
    &= \inf_{X_{12}=a_1W_{12}b_1}\|a\|_{2r}\|b\|_{2r}\|W\|_p \\
    &=\|X\|_{(q,p)}.
\end{align}
The proof on the case $q\geq p$ is analogous:
\begin{align}
    \|U_1X_{12}V_1\|_{(q,p)} &= \sup_{a,b}\|a\|^{-1}_{2r}\|b\|^{-1}_{2r}\|a_1U_1X_{12}V_1b_1\|_p \\ &= \sup_{c,d}\|cU^*\|^{-1}_{2r}\|V^*d\|^{-1}_{2r}\|c_1X_{12}d_1\|_p \\
    &= \|X\|_{(q,p)}
\end{align} and
\begin{align}
    \|U_2X_{12}V_2\|_{(q,p)} &= \sup_{a,b}\|a\|^{-1}_{2r}\|b\|^{-1}_{2r}\|a_1U_2X_{12}V_2b_1\|_p \\ &= \sup_{a,b}\|a\|^{-1}_{2r}\|b\|^{-1}_{2r}\|U_2a_1X_{12}b_1V_2\|_p \\ &= \sup_{a,b}\|a\|^{-1}_{2r}\|b\|^{-1}_{2r}\|a_1X_{12}b_1\|_p\\
    &= \|X\|_{(q,p)}.
\end{align}
\end{proof}
\noindent A direct consequence of the local isometric invariance and the polar decomposition is that we may assume the optimizers $a,b$ in \cref{def:Definition1} to be positive.
\begin{corollary}[Positive optimizers]
Let $0<q,p$, then
\begin{align}
\|X\|_{(q,p)}&=\left\{
\begin{aligned}
&\inf_{\underset{X_{12}=a_1Z_{12}b_1}{a,b \geq0\, , Z}}\|a\|_{2r}\|b\|_{2r}\|Z\|_p&q\le p \\&\sup_{{a,b\geq0}}\|a\|^{-1}_{2r}\|b\|^{-1}_{2r}\|a_1X_{12}b_1\|_p&q\ge p
    \end{aligned}\right.~~.
\end{align}
\end{corollary}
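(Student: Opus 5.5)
The plan is to use polar decompositions of the local factors $a,b\in\cB(\cH_1)$ and absorb the unitary parts, either into the middle operator $Z$ or into the local unitaries of \cref{prop:LocalIsomInv}. Since the original infimum and supremum range over all $a,b$, restricting to positive ones can only increase the infimum or decrease the supremum. So in each case it suffices to show the reverse inequality: any admissible pair $(a,b)$ can be replaced by a positive pair with the same objective value.

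\emph{Case $q\le p$.} Take any factorization $X_{12}=a_1Z_{12}b_1$. Write $a=|a^*|U$ and $b=V|b|$ with unitaries $U,V\in\mathcal{U}(\cH_1)$; in finite dimensions the polar decomposition can always be completed to a unitary. Then
\begin{align}
X_{12}=|a^*|_1\,\big(U_1Z_{12}V_1\big)\,|b|_1 .
\end{align}
Schatten quasi-norms are unitarily invariant, so $\||a^*|\|_{2r}=\|a\|_{2r}$ and $\||b|\|_{2r}=\|b\|_{2r}$. Since $U_1=U\otimes\1$ and $V_1=V\otimes\1$ are unitaries on $\cH_1\otimes\cH_2$, also $\|U_1ZV_1\|_p=\|Z\|_p$. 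Hence the factorization with positive outer factors $|a^*|,|b|$ and middle operator $U_1ZV_1$ attains the same value, and the two infima agree.

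\emph{Case $q\ge p$.} For arbitrary $a,b$ write instead $a=U|a|$ and $b=|b^*|V$. Then
\begin{align}
\|a_1X_{12}b_1\|_p=\|U_1|a|_1X_{12}|b^*|_1V_1\|_p=\||a|_1X_{12}|b^*|_1\|_p ,
\end{align}
again using unitary invariance of $\|\cdot\|_p$. The normalizations satisfy $\||a|\|_{2r}=\|a\|_{2r}$ and $\||b^*|\|_{2r}=\|b\|_{2r}$, so each objective value is attained by a positive pair, and the suprema coincide.

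No step here is a real obstacle. The only points needing care are the degenerate cases: $r=\infty$ (that is, $q=p$), where $\|\cdot\|_\infty$ is still unitarily invariant, and non-invertible $a,b$, where completing the partial isometry to a unitary is harmless in finite dimensions. If $a$ or $b$ vanishes, the corresponding term is trivial in both formulations.
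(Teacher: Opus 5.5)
Your proof is correct and follows the paper's argument: in the infimum case you write $a=|a^*|U$, $b=V|b|$ and absorb the unitaries into $Z$, and in the supremum case you write $a=U|a|$, $b=|b^*|V$ and drop the unitaries by unitary invariance of $\|\cdot\|_p$. Your explicit observation that restricting to positive factors can only weaken the optimization, so that only the reverse inequality needs proof, is a clarification the paper leaves implicit.
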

\noindent Throughout the rest of this paper, we will be assuming the optimizers and the operators over which we optimize to be positive semidefinite.
\begin{proof}
In the case of the infimum this is seen by considering the polar decompositions  $a=|a^*|U_a,b=V_b|b|$ for a given decomposition $a,b,Z$. Then clearly $(|a^*|,|b|,(U_a)_1Z_{12}(V_b)_1)$ is also a valid decomposition with, by the above proof, equal objective value, but now positive $a,b$. In the case of the or the supremum use the polar decompositions $a=V_a|a|,b=|b^*|U_b$ and observe that replacing $a,b$ by $|a|,|b^*|$ doesn't change the objective value.
\end{proof}
\noindent Another important observation is that we may assume the optimizers to preserve the marginal image and support of $X$. This also allows for a simpler form of the norm which leads to an explicit minimization over $Z$.

\begin{lemma}\label{lem:SupportConditions}
Given $X\in\cB(\cH_1\otimes\cH_2)$, denote with $L_X,R_X$ its left and right marginal support projections, i.e. the minimal projections s.t. $X_{12}=(L_X)_1X_{12}(R_X)_1$. Then for $0<q\leq p$ s.t. $\frac{1}{r}=\big|\frac{1}{q}-\frac{1}{p}\big|\leq 1$, it holds that
\begin{align}
    \|X\|_{(q,p)}&=\inf_{\underset{\Pi_a=L_X, \Pi_b=R_X}{a,b \geq0}}\|a\|_{2r}\|b\|_{2r}\|a^{-1}_1X_{12}b_1^{-1}\|_p,
\end{align} 
where $\Pi_{a,b}$ is the projection onto the support of $a,b$ respectively, and $a^{-1},b^{-1}$ are the generalized Moore-Penrose inverses of $a,b$., i.e. $aa^{-1}=a^{-1}a=\Pi_a$. Likewise for $0<p\leq q$ s.t. $\frac{1}{r}=\big|\frac{1}{q}-\frac{1}{p}\big|\leq 1$, it holds that
\begin{align}
    \|X\|_{(q,p)}&=\sup_{\underset{\Pi_a=L_X,\Pi_b=R_X}{a,b \geq0}}\|a\|^{-1}_{2r}\|b\|^{-1}_{2r}\|a_1X_{12}b_1\|_p.
\end{align}
Further for $q\geq p$, when $r>1$ then the stronger statement that the optimal $a,b$ satisfy $\Pi_a=L_X, \Pi_b=R_X$ holds.
\end{lemma}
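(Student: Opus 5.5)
The plan is to prove both identities by showing that any admissible pair $(a,b)$, together with $Z$ in the infimum case, can be replaced by one with $\Pi_a\le L_X$, $\Pi_b\le R_X$ without worsening the objective. Positivity of the optimizers is already available from the Corollary on positive optimizers. We then close the gap to equality $\Pi_a=L_X$, $\Pi_b=R_X$ separately in each case.

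\textbf{Case $q\le p$.} Take $X_{12}=a_1Z_{12}b_1$ with $a,b\ge0$. Since $X=(L_X)_1X(R_X)_1$, we also have $X=(L_Xa)_1Z(bR_X)_1$. By polar decomposition, $L_Xa=(L_Xa^2L_X)^{1/2}U$ and $bR_X=V(R_Xb^2R_X)^{1/2}$ with partial isometries $U,V$. Set $a'=(L_Xa^2L_X)^{1/2}$, $b'=(R_Xb^2R_X)^{1/2}$ and $Z'=U_1ZV_1$. Then $\|a'\|_{2r}=\|L_Xa\|_{2r}\le\|a\|_{2r}$, similarly for $b'$, and $\|Z'\|_p\le\|Z\|_p$, since multiplying by contractions does not increase singular values, which is valid for every $p>0$. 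So we may assume $\Pi_a\le L_X$ and $\Pi_b\le R_X$.

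Conversely, $X=a_1Zb_1$ forces $\Pi_{XX^*}\le\Pi_a\otimes\1$ and $\Pi_{X^*X}\le\Pi_b\otimes\1$. Minimality of $L_X,R_X$ then gives $L_X\le\Pi_a$ and $R_X\le\Pi_b$, hence equality. Finally we replace $Z$ by $(\Pi_a)_1Z(\Pi_b)_1=a_1^{-1}Xb_1^{-1}$. This again does not increase $\|\cdot\|_p$, which yields the stated formula with an explicit $Z$.

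\textbf{Case $q\ge p$.} For given $a,b\ge0$ we have $a_1Xb_1=(aL_X)_1X(R_Xb)_1$. Write $aL_X=W(L_Xa^2L_X)^{1/2}$, and analogously on the right. The partial isometries act isometrically on the relevant ranges, so $\|a_1Xb_1\|_p=\|a'_1Xb'_1\|_p$, while $\|a'\|_{2r}\le\|a\|_{2r}$. Hence the supremum may be restricted to $\Pi_a\le L_X$, $\Pi_b\le R_X$.

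To reach $\Pi_a=L_X$ exactly, we replace $a$ by $a+\varepsilon(L_X-\Pi_a)$ and $b$ by $b+\varepsilon(R_X-\Pi_b)$. In finite dimensions both $\|\cdot\|_{2r}$ and $\|(\cdot)_1X(\cdot)_1\|_p$ are continuous, so the objective converges as $\varepsilon\to0$ and the supremum is unchanged.

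\textbf{The stronger statement for $r>1$.} First we note that an optimizer exists: the objective is $0$-homogeneous in $a$ and $b$, so we may normalize $\|a\|_{2r}=\|b\|_{2r}=1$ and use compactness and continuity. Then we suppose $P:=L_X-\Pi_a\neq0$ and perturb to $a_\varepsilon=a+\varepsilon P$.

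The cost is $\|a_\varepsilon\|_{2r}^{2r}=\|a\|_{2r}^{2r}+\varepsilon^{2r}\Tr P$, i.e. a loss of order $\varepsilon^{2r}$. The gain comes from the fact that $Y=a_1Xb_1$ and $W=P_1Xb_1$ have orthogonal ranges. Hence $(Y+\varepsilon W)^*(Y+\varepsilon W)=Y^*Y+\varepsilon^2W^*W$, and the $p$-norm increases at order at least $\varepsilon^2$ whenever $W\neq0$. Since $2r>2$, the gain dominates and contradicts optimality.

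\textbf{Main obstacle.} The hard step is guaranteeing $W=P_1Xb_1\neq0$. If the optimal $b$ itself has deficient support, $P_1Xb_1$ might vanish. I would first try to handle this by perturbing $a$ and $b$ simultaneously and tracking the leading cross term $\varepsilon^2P_1XQ_1$ with $Q=R_X-\Pi_b$. If that term is too weak, I would instead argue that at least one of $P_1Xb_1$ or $a_1XQ_1$ must be nonzero, using $P_1X\neq0$ together with the definition of $L_X$ and $R_X$.
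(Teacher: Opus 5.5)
Your proof of the two variational formulas is correct. For $q\le p$ it is cleaner than the paper's. The paper compresses to $L_XaL_X$ and uses $(L_XaL_X)^{-1}=L_Xa^{-1}L_X$, which is false unless $a$ commutes with $L_X$. Your polar decomposition $L_Xa=(L_Xa^2L_X)^{1/2}U$ instead directly yields an admissible triple $(a',b',U_1ZV_1)$ with no larger objective. For $q\ge p$ (cut to $aL_X$, polar decomposition, then $\varepsilon$-perturbation and continuity) you follow the paper.

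The gap is in the stronger statement, exactly at the obstacle you flag, and your fallback is false. Write $P=L_X-\Pi_a$ and $Q=R_X-\Pi_b$. If $P_1Xb_1=0=a_1XQ_1$, then expanding $X=(\Pi_a+P)_1X(\Pi_b+Q)_1$ gives $X=(\Pi_a)_1X(\Pi_b)_1+P_1XQ_1$. This occurs, e.g., for block-diagonal $X$ with $a,b$ supported on one block, so neither cross term need be nonzero. The paper's appendix has the same hole: it asserts $(\Pi_a^\perp\otimes\1)X(b\otimes\1)\neq0$ without justification.

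Your simultaneous perturbation does handle this split case, but only partially. Minimality of $L_X$ gives $P_1XQ_1=P_1X\neq0$. Since $aXb$ and $PXQ$ have orthogonal ranges and orthogonal supports, $\|a_\varepsilon Xb_\varepsilon\|_p^p=\|aXb\|_p^p+\varepsilon^{2p}\|PXQ\|_p^p$. A gain of order $\varepsilon^{2p}$ against a loss of order $\varepsilon^{2r}$ yields the contradiction as soon as $r>p$, i.e. $q<\infty$.

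At $q=\infty$, which the hypotheses allow with $r=p>1$, the stronger claim is actually false. Take $\cH_2=\mathbb{C}$ and $X=\mathrm{diag}(2,1)$, so $\|X\|_{(\infty,p)}=\|X\|_\infty=2$. Equality in $\|aXb\|_p\le\|a\|_{2p}\|X\|_\infty\|b\|_{2p}$ forces $\mathrm{ran}(a),\mathrm{ran}(b)\subseteq\mathrm{span}\{|0\rangle\}$, although $L_X=R_X=\1$. So you need the extra hypothesis $q<\infty$, under which the joint perturbation closes the argument.

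There is also a smaller error: in the one-sided case the gain is not always of order $\varepsilon^2$. When $p>2$ and $W^*W$ is supported in $\ker(Y^*Y)$, the new eigenvalues of size $\sim\varepsilon^2$ contribute only $\sim\varepsilon^p$; for instance, eigenvalues $1,\varepsilon^2$ give $1+\varepsilon^p$. What you need is $2r>\max\{2,p\}$, which holds since $r>1$ and $r\ge p$.
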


\begin{remark}
Effectively \cref{lem:SupportConditions} and the condition $\Pi_a=L_X,\Pi_b=R_X$ implies that we can treat the optimization variables $a,b$ as strictly positive matrices on a fixed subspace governed entirely by $X$. Further the marginal image/ support of $X$ is inherited by $a_1X_{12}b_1$ and $a_1^{-1}X_{12}b_1^{-1}$ for such $a,b$ allowing for iteration. 
For full rank $X$ this means in particular that $a,b>0$ can be assumed.  We will use this powerful fact throughout the rest of the paper, sometimes  without explicitly writing this condition out for notational simplicity. 
\end{remark}

\begin{proof}
For the case $q\leq p$, on the one hand, if $a,b$ are s.t. $X_{12}=(\Pi_a)_1X_{12}(\Pi_b)_1=a_1a_1^{-1}X_{12}b_1^{-1}b_1$, which is equivalent to $\Pi_a\geq L_X, \Pi_b\geq R_X$, then the triple $a,b,Z:=a^{-1}_1X_{12}b_1^{-1}$ satisfies $X_{12}=a_1Z_{12}b_1$ and thus
\begin{align}
    \|X\|_{(q,p)}\leq \inf_{\underset{X=(\Pi_a)_1X_{12}(\Pi_b)_1}{a,b \geq0}}\|a\|_{2r}\|b\|_{2r}\|(a^{-1})_1X_{12}(b^{-1})_1\|_p.
\end{align} On the other hand, given any compatible triple $a,b,Z$, then
\begin{align}
    X_{12}=a_1Z_{12}b_1 \implies (a^{-1})_1X_{12}(b^{-1})_1 = (\Pi_a)_1Z_{12}(\Pi_b)_1 \text{ and } \Pi_a\geq L_X, \Pi_b\geq R_X
\end{align} but since $\|(\Pi_a)_1Z_{12}(\Pi_b)_1\|_p\leq \|\Pi_a\|_\infty \|Z\|_p \|\Pi_b\|_\infty\leq \|Z\|_p$  and this holds for any such compatible triple, it follows that
\begin{align}
\inf_{\underset{\Pi_a\geq L_X, \Pi_b\geq R_X}{a,b \geq0}}\|a\|_{2r}\|b\|_{2r}\|(a^{-1})_1X_{12}(b^{-1})_1\|_p \leq \inf_{\underset{X=a_1Z_{12}b_1}{a,b \geq0}}\|a\|_{2r}\|b\|_{2r}\|Z\|_p = \|X\|_{(q,p)}.
\end{align} 

\noindent To see that it suffices to consider $\Pi_a=L_X, \Pi_b=R_X$, take for some $a,b$ with $\Pi_a\geq L_X, \Pi_b\geq R_X$ and define $a^\prime:=L_XaL_X, b^\prime:=R_XbR_X$ which satisfy $\Pi_{a^\prime}=L_X\leq \Pi_a$ and $\Pi_{b^\prime}=L_X\leq \Pi_b$. Thus it follows that the generalized inverses satisfy
\begin{align}
a^{\prime\,-1}=L_Xa^{\prime\,-1}L_X=L_Xa^{-1}L_X, \quad     b^{\prime\,-1}=R_Xb^{\prime\,-1}R_X=R_Xb^{-1}R_X,
\end{align} which implies that
\begin{align}
    \|a^\prime\|_{2r}=\|L_XaL_X\|_{2r}\leq \|a\|_{2r}\|L_X\|_\infty^2=\|a\|_{2r}, \quad \|b^\prime\|_{2r}=\|R_XbR_X\|_{2r}\leq \|b\|_{2r}\|R_X\|_\infty^2=\|b\|_{2r}
\end{align} and that 
\begin{align}
    \|(a^{\prime\,-1})_1X_{12}(b^{\prime\,-1})_1\|_p&=\|(L_Xa^{-1}L_X)_1X_{12}(R_Xb^{-1}R_X)_1\|_p  
    \\ &= \|(L_Xa^{-1})_1X_{12}(b^{-1}R_X)_1\|_p \\
    &\leq \|(a^{-1})_1X_{12}(b^{-1})_1\|_p
\end{align} which finishes the argument. Note that this implies that the minimum is thus achieved exactly on those $a,b$ s.t. $\Pi_a,\Pi_b$ are the minimal projections s.t. $X=(\Pi_a\otimes\1)X(\Pi_b\otimes\1)$ holds, i.e the left and right marginal projections of $X$, $\Pi_a=L_X$ and $\Pi_b=R_X$.

For the case $q\geq p$, restricting the optimization to be only over $a,b\geq0$ s.t. $\Pi_a\leq L_X$ and $\Pi_b\leq R_X$ does not decrease the objective value by  an analogous argument to the above. For some $a,b$, setting $a^\prime:=aL_X, b^\prime:=R_Xb$ can only increase the objective value, since
\begin{align}
    \|a^\prime\|_{2r}=\|aL_X\|_{2r}\leq \|a\|_{2r}\|L_X\|_\infty=\|a\|_{2r}
\end{align} and the same holds for $b$, hence
\begin{align}
\|a\|^{-1}_{2r}\|b\|_{2r}^{-1}\|a_1X_{12}b_1\|_p &=  \|a\|^{-1}_{2r}\|b\|_{2r}^{-1}\|(a L_X)_1X_{12}(R_Xb)_1\|_p \\ &\leq \|a^\prime\|^{-1}_{2r}\|b^\prime\|_{2r}^{-1}\|a^\prime_1X_{12}b^\prime_1\|_p.
\end{align}
By polar decomposition and local unitary independence it follows that replacing $a^\prime = U|a^\prime|,\, b^\prime=|b^{\prime*}|V$ achieve the same value and $|a^\prime|=\sqrt{a^{\prime*} a^\prime}=\sqrt{L_Xa^2L_X}\geq0$ satisfies $\Pi_{|a^\prime|}\leq L_X$ and similarly for $\Pi_{|b^\prime|}\leq R_X$. The fact that we can restrict optimization now to only operators $a,b$ with $\Pi_a=L_X, \Pi_b=R_X$ stems from the norm continuity of objective functional in $a,b$ and the fact that all $a,b$ with $\Pi_a\leq L_X$ and $\Pi_b\leq R_X$, can be approximated (in norm) by the prior $a,b$. So we have
\begin{align}
    \sup_{a,b\geq0}\|a\|^{-1}_{2r}\|b\|^{-1}_{2r}\|a_1X_{12}b_1\|_p &= \sup_{\underset{\Pi_a\leq L_X, \Pi_b\leq R_X}{a,b\geq0}}\|a\|^{-1}_{2r}\|b\|^{-1}_{2r}\|a_1X_{12}b_1\|_p \\&= \sup_{\underset{\Pi_a= L_X, \Pi_b= R_X}{a,b\geq0}}\|a\|^{-1}_{2r}\|b\|^{-1}_{2r}\|a_1X_{12}b_1\|_p.
\end{align}
For a proof of the claim about the optimizers of this maximum in the case where $r>1$, see \cref{app:MarginalSupports}.
\end{proof}

\noindent Lastly, it turns out that these factorization expressions simplify when the operator $X\geq0$ is positive. This is a property which is well established \cite{Devetak.2006} and exploited in the normed setting, in particular implying a fruitful connection between operator space theory and optimized sandwiched Rényi conditional entropies \cite{Devetak.2006, Beigi.2023, Fawzi.2026}.
\begin{lemma}\label{lem:Positive.Symmetrie}
Let $0<q,p$ be s.t. $\big|\frac{1}{q}-\frac{1}{p}\big|\leq 1$, then if $X\geq0$ it holds that
\begin{align}
     \|X\|_{(q,p)}&=\inf_{\underset{X=(a\otimes\1)Z(a\otimes\1)}{a\geq0\, , Z}}\|a\|^2_{2r}\|Z\|_p = \inf_{\underset{\Pi_a=L_X=R_X}{a,b\geq0}}\|a\|^2_{2r}\|a^{-1}_1X_{12}a^{-1}_1\|_p  &\text{ for } q\leq p, \\
    \|X\|_{(q,p)}&=\sup_{\underset{\Pi_a=L_X=R_X}{a\geq0}}\|a\|^{-2}_{2r}\|a_1X_{12}a_1\|_p,  &\text{ for } q\geq p.
\end{align}
\end{lemma}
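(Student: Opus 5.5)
The plan is to reduce the two-sided optimization over $(a,b)$ to a symmetric one with $a=b$. The reduction uses a Cauchy--Schwarz/Hölder-type inequality for positive block operators in the Schatten (quasi-)norm, combined with the generalized Hölder inequality (\cref{lem:GeneralizedHölder}) for the factors $\|a\|_{2r}\|b\|_{2r}$. Since $X\geq0$, we have $X=X^*$, so $L_X=R_X$, and by \cref{lem:SupportConditions} we may restrict throughout to $a,b\geq0$ with $\Pi_a=\Pi_b=L_X$.

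\emph{Case $q\leq p$.} One direction is trivial: the symmetric infimum runs over a smaller set, so it is $\geq\|X\|_{(q,p)}$. For the converse, fix admissible $a,b$ and set $c:=\big(\tfrac{a^2+b^2}{2}\big)^{1/2}$, which also has support $L_X$. The key inequality to establish is
\[
\|c_1^{-1}X_{12}c_1^{-1}\|_p\leq \|a_1^{-1}X_{12}b_1^{-1}\|_p .
\]
Write $X^{1/2}$ and consider the block operator
\[
\begin{pmatrix} a^{-1}_1 X a^{-1}_1 & a^{-1}_1 X b^{-1}_1\\ b^{-1}_1 X a^{-1}_1 & b^{-1}_1Xb^{-1}_1\end{pmatrix}\geq0 .
\]
A cleaner route is interpolation-free: use that $t\mapsto$ weights can be absorbed unitarily. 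Concretely, take the row operator $R=(a_1^{-1}X^{1/2},\, b_1^{-1}X^{1/2})$ suitably rescaled. Alternatively, replace $(a,b)$ by the pair $(\sqrt{a b^{\!*}}\text{-type geometric mean})$. I would first try the geometric mean $g=a\#b$. One has the known inequality $\|g^{-1}Xg^{-1}\|_p\leq\|a^{-1}Xb^{-1}\|_p$ for $X\geq0$ (a log-majorization / Araki–Lieb–Thirring-type consequence of $|\langle$ singular values$\rangle|$ for $\lambda(A^{-1}XB^{-1})$ versus $\lambda(g^{-1}Xg^{-1})$, valid for all $p>0$ since it is a majorization of singular values, not a norm inequality). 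Together with $\|a\#b\|_{2r}\leq\|a\|_{2r}^{1/2}\|b\|_{2r}^{1/2}\cdot$(Hölder: $\|a^{1/2}Ub^{1/2}\|_{2r}\leq\|a\|_{2r}^{1/2}\|b\|_{2r}^{1/2}$), this gives the symmetric expression $\leq \|a\|_{2r}\|b\|_{2r}\|a^{-1}_1Xb^{-1}_1\|_p$.

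The main obstacle is the singular-value majorization $\sigma(g^{-1}Xg^{-1})\prec_{w\log}\sigma(a^{-1}Xb^{-1})$, with $g$ acting as $g\otimes\1$. I would prove it via antisymmetric tensor powers, reducing it to the operator-norm statement $\|g^{-1}Xg^{-1}\|_\infty\leq\|a^{-1}Xb^{-1}\|_\infty$. That statement is equivalent to: $X\leq \lambda\, a b$-type positivity. Concretely, $\|a^{-1}Xb^{-1}\|_\infty\leq1$ means that the block operator $\begin{pmatrix}a^2 & X\\ X& b^2\end{pmatrix}$ is positive (tensored with $\1$). By the defining extremal property of the geometric mean, this implies $X\leq a\#b$ in the Loewner order when $X\geq0$ (Ando). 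Hence $\|g^{-1}Xg^{-1}\|_\infty\leq1$. Compatibility with antisymmetric powers holds because the geometric mean commutes with tensor powers, and $\wedge^k(g\otimes\1)$ is still a geometric mean.

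\emph{Case $q\geq p$.} The symmetric supremum is $\leq\|X\|_{(q,p)}$ trivially. For the converse, given $a,b$, use the Hölder/Cauchy–Schwarz inequality
\[
\|a_1Xb_1\|_p=\|(a_1X^{1/2})(X^{1/2}b_1)\|_p\leq\|a_1Xa_1\|_p^{1/2}\,\|b_1Xb_1\|_p^{1/2},
\]
which follows from \cref{lem:GeneralizedHölder} with $\|a_1X^{1/2}\|_{2p}^2=\|a_1Xa_1\|_p$. Dividing by $\|a\|_{2r}\|b\|_{2r}$, the objective is bounded by the geometric mean of the two symmetric objectives, hence by their maximum, which is at most the symmetric supremum. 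The support condition $\Pi_a=L_X$ transfers directly from \cref{lem:SupportConditions}.
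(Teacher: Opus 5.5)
Your case $q\ge p$ is correct and is essentially the paper's argument. The case $q\le p$, however, breaks at its key step.

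\emph{The error.} Positivity of $\begin{pmatrix} a_1^2 & X\\ X & b_1^2\end{pmatrix}$, which is equivalent to $\|a_1^{-1}Xb_1^{-1}\|_\infty\le 1$, gives by Ando's extremal characterization $X\le a_1^2\#b_1^2$. It does not give $X\le a_1\#b_1$. To conclude $\|g^{-1}Xg^{-1}\|_\infty\le 1$ you need $X\le g^2$, so with $g=a\#b$ you would need $a^2\#b^2\le(a\#b)^2$. This fails in general. Both sides have determinant $\det a\det b$, so the Loewner inequality would force $a^2\#b^2=(a\#b)^2$, and that equality fails for generic non-commuting $a,b$, already for $2\times 2$ matrices.

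In fact the inequality $\|(a\#b)^{-1}X(a\#b)^{-1}\|_p\le\|a^{-1}Xb^{-1}\|_p$ that you call known is false for every $p>0$. Take $\cH_2$ trivial and $X:=a^2\#b^2$ on $\mathbb{C}^d$, with $a^2\#b^2\neq(a\#b)^2$. Write $a^2\#b^2=aWb$, where $W$ is the unitary from the polar decomposition of $ba^{-1}$. Then $a^{-1}Xb^{-1}=W$, so $\|a^{-1}Xb^{-1}\|_p=d^{1/p}$. On the other hand $M:=(a\#b)^{-1}X(a\#b)^{-1}>0$ has $\det M=1$ and $M\neq\1$, so $\|M\|_p>d^{1/p}$ by AM--GM on its eigenvalues. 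The argument as written therefore does not prove the lemma.

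\emph{The repair.} Choose $g:=(a^2\#b^2)^{1/2}$ instead, working on $L_X\otimes\cH_2$, where $a_1,b_1$ are invertible.
\begin{enumerate}
\item Norm factor: $\|g\|_{2r}^2=\|a^2\#b^2\|_r=\|aWb\|_r\le\|a\|_{2r}\|b\|_{2r}$ by H\"older.
\item Locality and support: $(a\otimes\1)^2\#(b\otimes\1)^2=(a^2\#b^2)\otimes\1$, so $g_1$ is local with $\Pi_g=L_X$.
\item Majorization: $\wedge^k$ is multiplicative and commutes with $\#$ and square roots, so $(\wedge^k g_1)^2=(\wedge^k a_1)^2\#(\wedge^k b_1)^2$. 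Running Ando's argument at every level $k$ gives $\prod_{i\le k}\sigma_i(g_1^{-1}Xg_1^{-1})\le\prod_{i\le k}\sigma_i(a_1^{-1}Xb_1^{-1})$. Hence $\|g_1^{-1}Xg_1^{-1}\|_p\le\|a_1^{-1}Xb_1^{-1}\|_p$ for all $p>0$.
\end{enumerate}

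\emph{Comparison with the paper.} With this fix your route differs genuinely from the paper's. The paper reparametrizes $a\mapsto a^{1/(2r)}$ under trace normalization. It then uses Zhang's joint convexity of $(a,b)\mapsto\|a_1^{-1/(2r)}Xb_1^{-1/(2r)}\|_p^p$, which requires $1/r\le 1$. Finally it symmetrizes with the arithmetic mean $(a+b)/2$, using $G(a,b)=G(b,a)$.

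The repaired argument replaces the convexity theorem by Ando's characterization plus antisymmetric tensor powers. It never uses $\big|\frac1q-\frac1p\big|\le1$. The paper's version is shorter given Zhang's result and fits the convexity and minimax machinery used elsewhere in the paper.
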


\begin{proof}
The proof of the case $q\geq p$ is the same as for the normed setting, see e.g.~\cite{Devetak.2006}. For convenience we repeat it here. Denote $A:=a_1\, B:=b_1$ and let $X = xx^*$. By generalized Hölder's inequality it follows that
\begin{align}
    \|AXB\|_p=\|Axx^*B\|_p\leq \|Ax\|_{2p}\|x^*B\|_{2p}=\|AXA^*\|^\frac{1}{2}_p\|B^*XB\|^\frac{1}{2}_p\leq \max\{\|AXA^*\|_p, \|B^*XB\|_p\}, 
\end{align} from which it directly follows that we may restrict the optimization to $B=A^*$.

The case of $q\leq p$ is a little more involved. In the normed setting, see e.g.~\cite[Proposition 5.2 (ii)]{Bardet.2024} the proof of this fact goes manifestly through operator-valued Schatten spaces and in particular the space $\cS_\infty[\cS_p]$. As this quasi-norm is not well defined for $q<1$, as then $q$ is not compatible with $\infty$ since $\frac{1}{q}-\frac{1}{\infty}>1$, we find another proof. We will show that if $a,b$ are suitable operators involved in the minimization, then $c:=\frac{a+b}{2}$ is also suitable and achieves the same objective value. Note first that since $X=X^*$ it follows that $L_X=R_X$. To do this we first rewrite
\begin{align}
    \|X\|^p_{(q,p)}&=\inf_{\underset{\Pi_a=L_X=\Pi_b}{a,b\geq0}}\|a\|^p_{2r}\|b\|^p_{2r}\|a_1^{-1}Xb_1^{-1}\|_p^p \\ &=\inf_{\underset{\Pi_a=L_X=\Pi_b}{a,b\geq0,\, \|a\|_1=\|b\|_1\leq1}}\underbrace{\|(a^{-\frac{1}{2r}}\otimes\1)X(b^{-\frac{1}{2r}}\otimes\1)\|_p^p}_{=:G(a,b)} \\&\equiv \inf_{\underset{\Pi_a=L_X=\Pi_b}{a,b\geq0,\, \|a\|_1=\|b\|_1\leq1}}G(a,b).
\end{align}
Due to \cite[Theorem 1.1 (2)]{Zhang.2020} $(a,b)\mapsto G(a,b)$ is jointly convex as $0\leq \frac{1}{r}\leq 1$, so in particular
\begin{align}
    G(c,c)= G\left(\frac{a+b}{2}, \frac{b+a}{2}\right)\leq \frac{1}{2}(G(a,b)+G(b,a))=G(a,b),
\end{align} where the last equality follows due to self-adjointness of $X=X^*$, which implies $G(a,b)=G(b,a)$.
In addition we have due to $\Pi_a=\Pi_b$ that $c\geq0$ with $\Pi_c=\Pi_a=\Pi_b=L_X$ and $\|c\|_1= \|\frac{a+b}{2}\|_1\leq \frac{1}{2}(\|a\|_1+\|b\|_1)\leq 1$ for all $a,b$ with $\|a\|_1=\|b\|_1\leq 1$, so taking the infimum yields
\begin{align}
   \inf_{\underset{\Pi_a=\Pi_b=L_X}{a,b\geq0,\, \|a\|_1=\|b\|_1\leq1}}G(a,b) &\geq  
   \inf_{\underset{\Pi_c=L_X=R_X}{c\geq0,\, \|c\|_1\leq1}}G(c,c),
\end{align} which is what we wanted to show. 
\end{proof}

\noindent An important consequence of \cref{lem:Positive.Symmetrie} and \cref{lem:SupportConditions} is that for $X\geq0$, the 2-indexed Schatten-quasi-norm acts like a partial trace if the second index is $1$. 
\begin{corollary}\label{cor:q.1.PartialTrace}
Let $q\geq \frac{1}{2}$ and $X\geq0$, then it holds that
\begin{align}\label{equ:q.1.PartialTrace}
    \|X\|_{(q,1)}= \|\tr_2[X]\|_{q}.
\end{align}
\end{corollary}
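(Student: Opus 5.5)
Since $X\geq0$, \cref{lem:Positive.Symmetrie} lets us work with symmetric optimizers $a=b$. The key observation is that for positive $Y$ the Schatten $1$-norm is just the trace, and the partial trace intertwines with operators acting on the first factor: $\Tr[(a\otimes\1)X(a\otimes\1)]=\Tr[a\,\tr_2[X]\,a]=\|a\,\tr_2[X]\,a\|_1$. Write $\rho:=\tr_2[X]\geq0$ and note that $\Pi_\rho=L_X$, since $X\geq 0$ has marginal support equal to the support of its partial trace. Both regimes then reduce to the single-system variational formula of \cref{prop:SpVariationalFormulas} for $X=\rho$ with $p=1$, and the compatibility condition $|\frac1q-1|\leq 1$ is exactly $q\geq\frac12$.

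\textbf{Case $q\leq 1$.} By \cref{lem:Positive.Symmetrie} and \cref{lem:SupportConditions},
\begin{align}
\|X\|_{(q,1)}=\inf_{a\geq0,\,\Pi_a=L_X}\|a\|_{2r}^2\,\|a^{-1}_1Xa^{-1}_1\|_1 .
\end{align}
Since $a_1^{-1}Xa_1^{-1}\geq0$, its $1$-norm equals $\Tr[a^{-1}\rho a^{-1}]=\|a^{-1}\rho a^{-1}\|_1$. Hence the right-hand side is exactly the infimum in \cref{prop:SpVariationalFormulas} for the operator $\rho$, restricted to $a=b$ with $\Pi_a=\Pi_\rho$. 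That proposition allows $a=b$ for positive operators, and the restriction $\Pi_a=\Pi_\rho$ costs nothing: any admissible $a$ with $\rho=\Pi_a\rho\Pi_a$ can be compressed to $\Pi_\rho a\Pi_\rho$, exactly as in the proof of \cref{lem:SupportConditions}. The infimum therefore equals $\|\rho\|_q$.

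\textbf{Case $q\geq1$.} Analogously,
\begin{align}
\|X\|_{(q,1)}=\sup_{a}\|a\|_{2r}^{-2}\,\Tr[a\rho a]=\|\rho\|_q ,
\end{align}
where the last equality is the supremum branch of \cref{prop:SpVariationalFormulas}.

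\textbf{Main obstacle.} The only delicate point is matching the support conventions and the $a=b$ reduction between \cref{lem:Positive.Symmetrie} and \cref{prop:SpVariationalFormulas}, in particular the identification $\Pi_\rho=L_X$. To see it, note that $X\geq0$ gives $\ker\rho\otimes\cH_2\subseteq\ker X$, because $\langle v\otimes w|X|v\otimes w\rangle\leq\langle v|\rho|v\rangle\,\|w\|^2$ (up to normalization). Conversely, $X=(P\otimes\1)X(P\otimes\1)$ implies $\rho=P\rho P$. Once this is settled, both directions are immediate. Alternatively, one can avoid the proposition entirely and derive the equality directly from Hölder's inequality (\cref{lem:GeneralizedHölder}), using the equality choice $a\propto\rho^{\pm(\cdot)}$ with the appropriate power.
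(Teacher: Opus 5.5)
Your proof is correct and follows the paper's argument. You use \cref{lem:Positive.Symmetrie} to pass to symmetric optimizers, note that the $1$-norm of $a_1^{\mp1}Xa_1^{\mp1}\geq0$ is its trace, so the partial trace passes through $a\otimes\1$, and then apply the single-system variational formula of \cref{prop:SpVariationalFormulas} to $\tr_2[X]$. Beyond the paper, you explicitly verify the support identification $L_X=\Pi_{\tr_2[X]}$ and treat the $q\geq 1$ supremum branch directly, whereas the paper defers that case to Devetak et al.
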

\begin{proof}
For $\frac{1}{2}\leq q<1$ observe that by positivity of $X$ and \cref{lem:Positive.Symmetrie} it follows that
\begin{align}
    \|X\|_{(q,1)}&=\inf_{a\geq0}\|a\|^2_{2r}\|a_1^{-1}X_{12}a_1^{-1}\|_1 \\  &= \inf_{a\geq0}\|a\|^2_{2r}\Tr_{12}[a_1^{-1}X_{12}a_1^{-1}] \\ &=  \inf_{a\geq0}\|a\|^2_{2r}\Tr_1[a^{-1}\tr_2[X_{12}]a^{-1}] \\
    &= \|\tr_2[X]\|_q.
\end{align} 
For the norm case $q\geq 1$, which is proven exactly as above we refer to \cite{Devetak.2006}.
\end{proof}

\subsection{Simplification for Block Diagonal Operators}

A natural property one would ask for a sensible notion of 2-indexed Schatten quasi-norms is that on block-diagonal operators, also called classical-quantum operators in quantum information theory, they behave as $\ell_q$-norms of the $\cS_p$ norms of the individual blocks. This is a basic property of the Pisier norm \cite[Corollary 1.3]{Book.Pisier.1998}, which however does not seem to be trivially derived from \cref{def:Definition1}. In fact, we demonstrate that the condition $|\frac{1}{q}-\frac{1}{p}|\leq1$ is sufficient and necessary for this property to hold. 
\begin{theorem*}[\cref{thm:cq.additivity}]
Let $\operatorname{dim}(\cH_1)=d$, let $X=\bigoplus_{i=1}^dX_i$ be a block-diagonal element of $\cB(\cH_1\otimes \cH_2)$, and $0<q,p$ s.t. $\frac{1}{r}=\big|\frac{1}{q}-\frac{1}{p}\big|\leq 1$, then
\begin{align}\label{equ:cq.additivity}
    \bigg\|\bigoplus_{i=1}^d X_i\bigg\|_{(q,p)}= \left(\sum_{i=1}^d\|X_i\|^q_p\right)^{\frac{1}{q}}.
\end{align}
Furthermore for $q\geq p$, \eqref{equ:cq.additivity} holds \emph{only if} $\frac{1}{p}-\frac{1}{q}\leq1$.
\end{theorem*}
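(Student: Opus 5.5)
The plan is to treat the cases $q\le p$ and $q\ge p$ separately, prove the two inequalities in each, and then handle the necessity claim with an explicit counterexample.

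\textbf{Case $q\le p$.} Here the quasi-norm is an infimum over factorizations.

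For the upper bound, take block-diagonal weights $a=b=\sum_i \lambda_i|i\rangle\langle i|$ and $Z=\bigoplus_i \lambda_i^{-2}X_i$. Then
\[
\|a\|_{2r}\|b\|_{2r}\|Z\|_p=\Big(\sum_i\lambda_i^{2r}\Big)^{1/r}\Big(\sum_i\lambda_i^{-2p}\|X_i\|_p^p\Big)^{1/p}.
\]
Optimizing over $\lambda_i$, the commutative case (\cref{thm:classical-2-indexed-spaces} and Hölder) gives exactly $\big(\sum_i\|X_i\|_p^q\big)^{1/q}$.

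For the lower bound, let $X=a_1Zb_1$ be any factorization. Compress to the $i$-th block: $X_i=(\langle i|\otimes\1)a_1Zb_1(|i\rangle\otimes\1)$. A cleaner route is to apply the pinching $\mathcal{P}(Y)=\sum_i P_iYP_i$, with $P_i=|i\rangle\langle i|\otimes\1$, to both sides. For $p\ge1$ the pinching is contractive on $\cS_p$, but in the quasi-norm regime it is not. So instead I would use the restricted form from \cref{lem:SupportConditions}: $\|X\|^p_{(q,p)}$ is an infimum over $a,b\ge0$ of $\|a\|^p_{2r}\|b\|^p_{2r}\|a_1^{-1}Xb_1^{-1}\|_p^p$, normalized to $\|a\|_{2r}=\|b\|_{2r}=1$.

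I would then show the objective does not increase when $a,b$ are replaced by their diagonal pinchings $\mathrm{diag}(a),\mathrm{diag}(b)$. Writing $a=\alpha^{1/(2r)}$, $b=\beta^{1/(2r)}$ with $\alpha,\beta$ trace-normalized, the map $G(\alpha,\beta)=\|\alpha_1^{-1/(2r)}X\beta_1^{-1/(2r)}\|_p^p$ is jointly convex for $1/r\le1$, which is exactly the Zhang result used in \cref{lem:Positive.Symmetrie}. It is also invariant under $\alpha\mapsto U\alpha U^*$, $\beta\mapsto U\beta U^*$ for the diagonal unitaries $U=\sum_i e^{i\theta_i}|i\rangle\langle i|$, because $X$ commutes with $U_1$ and by \cref{prop:LocalIsomInv}. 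Averaging over these unitaries and using convexity shows that $G(\mathrm{diag}\,\alpha,\mathrm{diag}\,\beta)\le G(\alpha,\beta)$. With diagonal weights, $a_1^{-1}Xb_1^{-1}=\bigoplus_i(\alpha_i\beta_i)^{-1/(2r)}X_i$, and we are back in the commutative optimization, which yields the $\ell_q$ value. This is where $1/r\le1$ enters.

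\textbf{Case $q\ge p$.} Here the quasi-norm is a supremum.

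The lower bound again comes from diagonal $a,b$, giving $\sup_\lambda\|\lambda\|^{-2}_{2r}\big(\sum_i\lambda_i^{2p}\|X_i\|_p^p\big)^{1/p}$, which equals the $\ell_q[\ell_p]$ value by the commutative formula.

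For the upper bound, take any $a,b\ge0$ and let $\alpha=a^{2r},\beta=b^{2r}$. I need $\|a_1Xb_1\|_p\le\|a\|_{2r}\|b\|_{2r}\big(\sum_i\|X_i\|_p^q\big)^{1/q}$. The natural tool is concavity: $(\alpha,\beta)\mapsto\|\alpha_1^{1/(2r)}X\beta_1^{1/(2r)}\|_p^p$ should be jointly concave when $1/r\le1$, which is again a case of Zhang's theorem (the $p$-power and positive-exponent regime). Combined with the same diagonal-unitary averaging, this lets us pinch $\alpha,\beta$ to diagonal form while only increasing the objective, after which the commutative bound finishes the argument. I expect verifying the exact range of Zhang's concavity/convexity theorem to be the main obstacle, particularly when $p<1$. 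If the range does not cover all cases, I would first reduce to $X\ge0$ via \cref{lem:Positive.Symmetrie}-type Hölder splitting $\|AXB\|_p\le\|AXA^*\|_p^{1/2}\|B^*XB\|_p^{1/2}$ (using polar decomposition of the blocks), and then use concavity of $\alpha\mapsto\Tr(\alpha^{s}Y\alpha^{s})^{p}$ for $Y\ge 0$, $s=1/(2r)$.

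\textbf{Necessity.} For $q\ge p$ with $1/p-1/q>1$ (so $r<1$), I would build a counterexample with $\dim\cH_1=2$ and rank-one blocks. Take $X=|0\rangle\langle0|\otimes|\psi\rangle\langle\psi|+|1\rangle\langle1|\otimes|\phi\rangle\langle\phi|$ and a non-diagonal $a=b$, e.g. $a=|+\rangle\langle+|$ perturbed, and choose $\psi,\phi$ orthogonal. The operator $a_1Xa_1$ then spreads mass so that the $p$-quasi-norm with $p$ small benefits from having more nonzero singular values, while $\|a\|_{2r}$ with $2r<2$ is cheap for rank-one $a$. Computing $\|a_1Xa_1\|_p$ explicitly, one checks the ratio exceeds $2^{1/q}$, the value for $X_i$ of unit norm. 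I would tune a one-parameter family and compare exponents $2^{1/p-1/r}$ against $2^{1/q}$; the failure is expected to appear exactly when $1/p-1/q>1$.
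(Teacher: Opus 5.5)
Your proposal is correct and follows the paper's proof essentially step for step. Diagonal weights give the easy inequality in each case. For the hard inequality, Zhang's joint convexity (for $q\le p$) or joint concavity (for $q\ge p$), combined with averaging over trace-preserving diagonal unitaries that commute with $X$, lets you replace $a,b$ by their diagonal pinchings. The concavity range you flagged is not an obstacle: Zhang's condition $p\le r$ holds automatically because $\frac{1}{r}=\frac{1}{p}-\frac{1}{q}\le\frac{1}{p}$, so your fallback through $X\geq0$ is unnecessary.

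Your necessity example ($\psi\perp\phi$, $a=b=|+\rangle\langle+|$) is the paper's counterexample up to a local unitary and scaling; the paper uses $a=\sqrt{2}\,|0\rangle\langle+|=b^*$. It needs no perturbation or tuning. You get $a_1Xa_1=|+\rangle\langle+|\otimes\frac{1}{2}\big(|\psi\rangle\langle\psi|+|\phi\rangle\langle\phi|\big)$ and $\|a\|_{2r}=1$, so the ratio is $2^{\frac{1}{p}-1}$, which exceeds $2^{\frac{1}{q}}$ exactly when $\frac{1}{p}-\frac{1}{q}>1$. By contrast, the quantity $2^{\frac{1}{p}-\frac{1}{r}}$ you propose to compare against is identically $2^{\frac{1}{q}}$, namely the value at $a=b$ equal to the identity, so that comparison proves nothing.
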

\noindent Before we give a proof of this statement we present an important consequence.
\begin{corollary}\label{tensorproduct}
Let $0<q,p$ s.t. $\frac{1}{r}=\big|\frac{1}{q}-\frac{1}{p}\big|\leq 1$, then for any $X\in\cB(\cH_1),Y\in\cB(\cH_2)$, it holds that
\begin{align}
    \|X\otimes Y\|_{(q,p)}=\|X\|_q\|Y\|_p.
\end{align}
\end{corollary}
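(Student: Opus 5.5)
We derive \cref{tensorproduct} from \cref{thm:cq.additivity} and the local unitary invariance of \cref{prop:LocalIsomInv}. The plan is to diagonalize $X$ on the first tensor factor, which turns $X\otimes Y$ into a block-diagonal operator with blocks that are scalar multiples of $Y$.

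First, take a singular value decomposition $X=U\Sigma V$, where $U,V\in\mathcal{U}(\cH_1)$ and $\Sigma=\sum_i s_i|i\rangle\langle i|$ with $s_i\geq 0$ the singular values of $X$. Then
\begin{align}
X\otimes Y=(U\otimes\1)(\Sigma\otimes Y)(V\otimes\1).
\end{align}
By \cref{prop:LocalIsomInv}, with unitaries acting on $\cH_1$, this gives $\|X\otimes Y\|_{(q,p)}=\|\Sigma\otimes Y\|_{(q,p)}$.

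Next, $\Sigma\otimes Y=\sum_i|i\rangle\langle i|\otimes s_iY=\bigoplus_i s_iY$ is block diagonal in the basis $\{|i\rangle\}$. Applying \cref{thm:cq.additivity}, which is valid because $|\frac1q-\frac1p|\leq1$, gives
\begin{align}
\|\Sigma\otimes Y\|_{(q,p)}=\Big(\sum_i\|s_iY\|_p^q\Big)^{\frac1q}=\Big(\sum_i s_i^q\Big)^{\frac1q}\|Y\|_p=\|X\|_q\|Y\|_p.
\end{align}
The middle step uses homogeneity of the Schatten quasi-norm, and the last step uses that $\|X\|_q$ equals the $\ell_q$ quasi-norm of the singular values of $X$.

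Two points need a little care. When some $s_i=0$, the corresponding blocks are zero; this is harmless because \cref{thm:cq.additivity} imposes no nondegeneracy assumption. For $q=\infty$ or $p=\infty$, the $\ell_q$ sum is read as a maximum. The real content lies in \cref{thm:cq.additivity}, which we take as given; the corollary itself presents no genuine obstacle.
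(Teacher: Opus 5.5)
Your proof is correct and matches the paper's own argument: you take the singular value decomposition of $X$, use local unitary invariance (\cref{prop:LocalIsomInv}) to reduce to the block-diagonal operator $\Sigma\otimes Y=\bigoplus_i s_iY$, and then apply \cref{thm:cq.additivity} together with homogeneity of $\|\cdot\|_p$. Your remarks on zero singular values and on the cases $q=\infty$ or $p=\infty$ are harmless additions.
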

\begin{proof}
Given any $X,Y$, let $X=UDV$ be its singular value decomposition with $D=\sum_ix_i|i\rangle\langle i|$, then by local isometric invariance, Proposition \ref{prop:LocalIsomInv}, and \cref{thm:cq.additivity} we have
\begin{align}
  \|X\otimes Y\|_{(q,p)} \!=\! \|D\otimes Y\|_{(q,p)}\! =\! \|\bigoplus_i(x_iY)\|_{(q,p)}\!=\! \left(\sum_i\|x_iY\|^q_{p}\right)^\frac{1}{q} \!=\! \left(\sum_i|x_i|^q_{p}\right)^\frac{1}{q}\|Y\|_p\!=\!\|X\|_q\|Y\|_p.
\end{align}
\end{proof}

\begin{proof}[Proof of Theorem \ref{thm:cq.additivity}]
The core idea of the proof consists in showing that the optimizers in the variational expressions that define these quasi-norms can be chosen block diagonal. On the one hand, the minimization, respectively maximization, give natural upper, respectively lower bounds on $\left\|\bigoplus_i X_i\right\|_{(q,p)}$ in terms of the RHS of \eqref{equ:cq.additivity}. To get the reverse inequalities we will use joint concavity, respectively convexity to establish the monotonicity of the norms with respect to the pinching map onto the diagonal blocks of the first system. Moreover, the needed joint concavity, resp. convexity are known to hold if and only if $0\leq\frac{1}{r}\leq 1$.

\medskip

Consider first the case $q\leq p$, then by \cref{equ:Def1.1} we have on the one hand that
\begin{align}
    \left\|\oplus_i X_i\right\|_{(q,p)} &=\inf_{a,b\geq0}\|a\|_{2r}\|b\|_{2r}\|a_1^{-1}(\oplus_iX_i)b_1^{-1}\|_{p} \\ 
    &\leq \inf_{\underset{X_i\neq 0\implies a_i,b_i>0}{a=\oplus_ia_i,b=\oplus_ib_i\geq0}}\|(a_i)_i\|_{\ell_{2r}}\|(b_i)_i\|_{\ell_{2r}}\|\oplus_ia^{-1}_iX_ib^{-1}_i\|_p \\ 
    &= \inf_{\underset{X_i\neq 0\implies a_i,b_i>0}{\{a_i,b_i\geq0\}_i}}\|(a_i)_i\|_{\ell_{2r}}\|(b_i)_i\|_{\ell_{2r}}\| (|a^{-1}_ib^{-1}_i|\|X_i\|_p)_i\|_{\ell_p} \\ 
    &= \| (\|X_i\|_p)_i \|_{\ell_q}\\
    &= \left(\sum_i\|X_i\|_p^q\right)^\frac{1}{q}.
\end{align}
For the other direction, consider first the pinching map onto the computational basis of the first system, which is well known to be a mixed unitary channel:
\begin{align}
    \Pi(\cdot)=\sum_{i=1}^d|i\rangle\langle i|\cdot|i\rangle\langle i|=\frac{1}{d}\sum_{j=1}^d U_j\cdot U_j^*, 
\end{align} where $U_j=\sum_k\omega^{jk}|k\rangle\langle k|$ with $\omega=\exp{2\pi i/d}$.
Next observe that the map 
\begin{align}
    (A\geq0\,,\,B\geq0)\mapsto  \|A^{-\frac{1}{2r}}XB^{-\frac{1}{2r}}\|^p_p= \Tr[(A^{-\frac{1}{2r}}XB^{-1}X^*A^{-\frac{1}{2r}})^\frac{p}{2}]
\end{align} is jointly convex due to \cite[Theorem 1.1. (2)]{Zhang.2020} with $0\leq \frac{1}{r}\leq 1$, and invariant under the replacement $(A,B)\mapsto (U_jAU^*_j,U_jBU^*_j)$ since for any $j\in[d]$, $(U_j)_1$ and $X=\oplus_{i}X_i$ commute. These two facts together imply that
\begin{align}\label{equ:decom.proof.1}
    \|(\Pi\otimes\id)(A)^{-\frac{1}{2r}}X(\Pi\otimes\id)(B)^{-\frac{1}{2r}}\|^p_p 
    \leq \|A^{-\frac{1}{2r}}XB^{-\frac{1}{2r}}\|^p_p.
\end{align} 
\noindent To finish the proof we use the above with $A=a_1,\, B=b_1$ to get 
\begin{align}
    \|\oplus_iX_i\|^p_{(q,p)} &= \inf_{{a,b\geq0}}\|a\|^p_{2r}\|b\|^p_{2r}\|a_1^{-1}(\oplus_iX_i)b_1^{-1}\|^p_{p} \\
    &= \inf_{{a,b\geq0, \|a\|_1,\|b\|_1\leq1}}\|a_1^{-\frac{1}{2r}}(\oplus_iX_i)b_1^{-\frac{1}{2r}}\|^p_{p} \\ 
    &\geq \inf_{\underset{X_i\neq0 \implies a_i,b_i>0}{a,b\geq0, \Tr[\Pi(a)],\Tr[\Pi(b)]\leq1}}\|\Pi(a)_1^{-\frac{1}{2r}}(\oplus_iX_i)\Pi(b)_1^{-\frac{1}{2r}}\|^p_{p} \\ &
    = \inf_{\underset{X_i\neq0 \implies a_i,b_i>0}{\{a_i,b_i\geq0\}, \sum_ia_i,\sum_ib_i\leq1}}\|\oplus_ia_i^{-\frac{1}{2r}}b_i^{-\frac{1}{2r}}X_i\|^p_{p}
    \\ &= \inf_{\underset{X_i\neq0 \implies a_i,b_i>0}{\{a_i,b_i\geq0\}, \sum_ia_i,\sum_ib_i\leq1}} \| (a^{-\frac{1}{2r}}_ib^{-\frac{1}{2r}}_i\|X_i\|_p)_i\|^p_{\ell_p} \\ 
    &= \inf_{\underset{X_i\neq0 \implies a_i,b_i>0}{(a_i)_i,(b_i)_i\in \ell^+_{2r}}} \|(a_i)_i\|^p_{\ell_{2r}}\|(b_i)_i\|_{\ell_{2r}}^p\|(a_i^{-1}b_i^{-1}\|X_i\|_q)\|^p_{\ell_p} \\
    &= \|(\|X_i\|_p)_i \|^p_{\ell_q} \\
    &= \left(\sum_i\|X_i\|_p^q\right)^\frac{p}{q}. 
\end{align}
where we used that the pinching is trace preserving, i.e. $\Tr[a]=\Tr[\Pi(a)]$ for $a\geq0$ and denoted $\Pi(a)=\oplus_ia_i,\, a_i\in\mathbb{C}$. The case $q\geq p$ is analogous: first, due to \cref{equ:Def1.2} we have
\begin{align}
    \left\|\oplus_i X_i\right\|_{(q,p)} &=\sup_{a,b\geq0}\|a\|^{-1}_{2r}\|b\|^{-1}_{2r}\|a_1(\oplus_iX_i)b_1\|_{p} \\ 
    &\geq \sup_{a=\oplus_ia_i,b=\oplus_ib_i\geq0}\|(a_i)_i\|^{-1}_{\ell_{2r}}\|(b_i)_i\|^{-1}_{\ell_{2r}}\|\oplus_ia_iX_ib_i\|_p \\ 
    &= \sup_{\{a_i,b_i\geq0\}_i}\|(a_i)_i\|^{-1}_{\ell_{2r}}\|(b_i)_i\|^{-1}_{\ell_{2r}}\| (a_ib_i\|X_i\|_p)_i\|_{\ell_p} \\ 
    &= \| (\|X_i\|_p)_i \|_{\ell_q} \\
    &= \left(\sum_i\|X_i\|_p^q\right)^\frac{1}{q}\,.
\end{align}
The reverse inequality follows also analogously to the previous case from joint concavity of the map 
\begin{align}
   (A,B) \mapsto \|A^\frac{1}{2r}XB^{\frac{1}{2r}}\|_p^p
\end{align} 
as per \cite[Theorem 1.1 (1)]{Zhang.2020}, since $0\leq \frac{1}{r}\leq 1$ and $p\leq r$, which is true since $\frac{1}{r}=\frac{1}{p}-\frac{1}{q}\leq \frac{1}{p}$. 

\medskip

 Next, we argue that in the regime $q\geq p$, if we took \cref{def:Definition1} without the $\frac{1}{r}\leq1$ condition, then \eqref{equ:cq.additivity} does not hold whenever $\frac{1}{r}>1$. Indeed, for $q>p>0$, with $\frac{1}{r}=\frac{1}{p}-\frac{1}{q}>1$, consider the following $4\times 4$ counterexample. Let $X:=X_0\oplus X_1$, with 
\begin{align}
    X_0:=|0\rangle\langle0|, \quad X_1:=|1\rangle\langle 1|,
\end{align} then we have $\|X_0\|_p=\|X_1\|_p=1$ and hence the RHS of \eqref{equ:cq.additivity} becomes $2^\frac{1}{q}$.
On the other hand recall that we take the LHS of \eqref{equ:cq.additivity} as
\begin{align}
    \|X\|_{(q,p)}&=\sup_{a,b}\|a\|^{-1}_{2r}\|b\|_{2r}^{-1}\|a_1(X_0\oplus X_1)b_1\|_p.
\end{align}
Now defining $a:=\begin{pmatrix}
    1 & 1 \\ 0&0
\end{pmatrix}= b^*$, then clearly $\|a\|_{2r}=\|b\|_{2r}=\sqrt{2}$ since $a,b$ have only one singular value $\sqrt{2}$ and we can bound
\begin{align}
    \|X\|_{(q,p)}&=\sup_{a,b}\|a\|^{-1}_{2r}\|b\|_{2r}^{-1}\|a_1(X_0\oplus X_1)b_1\|_p \\
    &\geq 2^{-1}\left\|\begin{pmatrix}
        1&1\\0&0
    \end{pmatrix}\begin{pmatrix}
        X_0&0\\0&X_1
    \end{pmatrix}\begin{pmatrix}
        1&0\\1&0
    \end{pmatrix}\right\|_p \\ &=2^{-1}\left\|\begin{pmatrix}
        X_0+X_1&0\\0&0
    \end{pmatrix}\right\|_p\\
    &= 2^{\frac{1}{p}-1},
\end{align} where the last equality comes from $X_0+X_1=\1_2$ with $\|\1_2\|_p=2^\frac{1}{p}$.
However, now due to $\frac{1}{p}-\frac{1}{q}>1$ it follows that $2^{\frac{1}{q}}<2^{\frac{1}{p}-1}$ which disproves \eqref{equ:cq.additivity} in this case.
\end{proof}

\subsection{Relational consistency}\label{sec:RelCon}

Next, we prove that the relations defining our quasi-norms can be inverted in the sense that ~the $\|\cdot\|_{(q,t)}$ quasi-norms can be factored through $\|\cdot\|_{(p,t)}$ quasi-norms in the most natural way.

\begin{theorem*}[Theorem \ref{thm:qt.to.pt}]
Let $q,p,t>0$ be s.t.~$\max\big\{\big|\frac{1}{t}-\frac{1}{q}\big|,\big|\frac{1}{t}-\frac{1}{p}\big|,\big|\frac{1}{q}-\frac{1}{p}\big|\big\}\leq 1$ and take \cref{def:Definition1} as a given, then it holds that 
\begin{align} 
    \|X\|_{(q,t)}&=\left\{
    \begin{aligned}
   & \inf_{\underset{X_{12}=a_1Z_{12}b_1}{a,b\geq 0, Z}}\|a\|_{2r}\|b\|_{2r}\|Z\|_{(p,t)}  & \text{ if } q\leq p \\
   &  \sup_{a,b\geq 0}\|a\|^{-1}_{2r}\|b\|^{-1}_{2r}\|a_1X_{12}b_1\|_{(p,t)} & \text{ if } q\geq p
   \end{aligned}\right.\qquad ,
\end{align}
where $\frac{1}{r}=\big|\frac{1}{q}-\frac{1}{p}\big|$ and for any $q$,
\begin{align}
    \|X\|_{(q,q)} = \|X\|_q.
\end{align}
\end{theorem*}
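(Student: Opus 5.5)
The identity $\|X\|_{(q,q)}=\|X\|_q$ is immediate. For $q=p$ we have $\frac1r=0$, so $\|a\|_{2r}=\|a\|_\infty$. Then $\|a_1Zb_1\|_q\le\|a\|_\infty\|Z\|_q\|b\|_\infty$ by Hölder (\cref{lem:GeneralizedHölder}), with equality at $a=b=\1$; this settles both the infimum and the supremum forms. For the main statement I would write $\frac1{r_{xy}}:=|\frac1x-\frac1y|$ for each pair of indices. I would then split into the six orderings of $q,p,t$. The key arithmetic fact is that the index $r=r_{qp}$ appearing in the statement always relates to the other two by $\frac1{r_{qt}}=\frac1{r_{qp}}\pm\frac1{r_{pt}}$ (a sum in the monotone cases, a difference otherwise), so Hölder exponents compose.

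The \emph{monotone cases} ($q\le p\le t$ and $q\ge p\ge t$) I would treat by composition and splitting.

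\emph{Case $q\le p\le t$.} Here both sides are infima. For "$\le$", take $X=a_1Z_{12}b_1$ and $Z=c_1Wd_1$ with $\|Z\|_{(p,t)}$ nearly attained. Then $X=(ac)_1W(db)_1$, and Hölder with $\frac1{2r_{qt}}=\frac1{2r_{qp}}+\frac1{2r_{pt}}$ gives $\|ac\|_{2r_{qt}}\le\|a\|_{2r_{qp}}\|c\|_{2r_{pt}}$, and likewise for $db$. For "$\ge$", take a near-optimal factorization $X=\alpha_1W\beta_1$ for $\|X\|_{(q,t)}$ with $\alpha,\beta\ge0$. Split $\alpha=\alpha^{\theta}\alpha^{1-\theta}$ with $\theta=r_{qt}/r_{qp}$, and do the same for $\beta$. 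Set $Z:=\alpha^{1-\theta}_1W\beta^{1-\theta}_1$. Then $\|Z\|_{(p,t)}\le\|\alpha^{1-\theta}\|_{2r_{pt}}\|\beta^{1-\theta}\|_{2r_{pt}}\|W\|_t$. Equality in Hölder for powers of a single operator gives $\|\alpha^\theta\|_{2r_{qp}}\|\alpha^{1-\theta}\|_{2r_{pt}}=\|\alpha\|_{2r_{qt}}$.

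\emph{Case $q\ge p\ge t$.} This is the same argument with suprema. Nested suprema compose by Hölder for one direction. For the other direction, given near-optimal $\alpha,\beta$ for $\|X\|_{(q,t)}$, split them into powers as above and test the outer supremum with $\alpha^\theta$, $\beta^\theta$.

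The \emph{mixed cases} (e.g.\ $q\le t\le p$, $t\le q\le p$, and their mirrors) are where an infimum meets a supremum, and pure composition only yields one inequality. For the other inequality I would first establish a Hölder-type lemma for the 2-indexed quasi-norms. For compatible indices with $\frac1s=\frac1x+\frac1u$ it should say
\begin{align}
\|a_1Xb_1\|_{(s,y)}\le\|a\|_{2u}\|X\|_{(x,y)}\|b\|_{2u},
\end{align}
together with a reverse, "division" version for the infimum side: for $X=a_1Zb_1$ one has $\|Z\|_{(s,y)}$ controlled by $\|X\|$ and the inverse norms. I would prove this by plugging the defining factorization or supremum of $\|X\|_{(x,y)}$ into the outer definition and absorbing the local operators. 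In the supremum-type cases this absorption requires comparing $\|c\,a\|$ with a single power of a positive operator. I would handle that by reducing to positive optimizers (\cref{lem:SupportConditions}), symmetrizing via \cref{lem:Positive.Symmetrie}, and then using the joint convexity/concavity of $(A,B)\mapsto\|A^{\pm\frac1{2r}}XB^{\pm\frac1{2r}}\|_p^p$ from \cite[Theorem 1.1]{Zhang.2020}. This is exactly where the compatibility constraints $\frac1r\le1$ enter, since Zhang's result needs them.

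With the lemma in hand, the missing inequality in each mixed case follows. For instance, for $q\le t\le p$, take any factorization $X=a_1Zb_1$ and apply the lemma with $x=p$, $y=t$, $u=r_{qp}$ to get $\|X\|_{(q,t)}\le\|a\|_{2r}\|b\|_{2r}\|Z\|_{(p,t)}$. The reverse inequality comes from splitting the near-optimal factorization of $\|X\|_{(q,t)}$ into powers, as in the monotone cases. I expect the Hölder-type lemma in the supremum-meets-infimum configuration to be the main obstacle. There the two optimizations do not commute trivially, and a minimax or joint-convexity argument (as in the proof of \cref{thm:cq.additivity}) is needed to interchange them.
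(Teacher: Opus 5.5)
Your proof of $\|X\|_{(q,q)}=\|X\|_q$ and of the two monotone orderings $q\le p\le t$ and $q\ge p\ge t$ is correct. There both sides are infima (resp.\ suprema), so composition together with splitting $\alpha=\alpha^\theta\alpha^{1-\theta}$ gives both inequalities. This is actually more direct than the paper, whose Cases~2 and~4 go through $\|\cdot\|_t$ via \cref{lem:inverseDef1}.

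The gap is in the four mixed orderings, and you have put the difficulty on the wrong side. Your H\"older-type lemma is the \emph{easy} half and needs neither minimax nor Zhang's theorem. For instance, for $s\le y\le x$ write $a_1Xb_1=(a\gamma^{-1})_1(\gamma_1X\delta_1)(\delta^{-1}b)_1$ with $\gamma,\delta$ suitable powers of $a,b$. Bound $\|\gamma_1X\delta_1\|_y$ by the supremum defining $\|X\|_{(x,y)}$, and use that generalized H\"older is an equality for powers of one operator; the other configurations are analogous. The reverse inequality is where all the content lies, and splitting a near-optimal factorization into powers cannot give it.

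Take $q\le t\le p$. To prove $\inf_{X=a_1Zb_1}\|a\|_{2r}\|b\|_{2r}\|Z\|_{(p,t)}\le\|X\|_{(q,t)}$ you must exhibit a $Z$ and bound the \emph{supremum} $\|Z\|_{(p,t)}=\sup_{c,d}\|c\|_{2s}^{-1}\|d\|_{2s}^{-1}\|c_1Zd_1\|_t$, with $\frac1s=\frac1t-\frac1p$, uniformly in $c,d$. A near-optimal $(\alpha,\beta,W)$ for $\|X\|_{(q,t)}$ only controls $\|W\|_t$, and no $Z$ built from powers of $\alpha,\beta$ and $W$ comes with such a bound.

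Specializing to $q=t$ shows that this step \emph{is} the inverse factorization $\|X\|_t=\inf_{c,d}\|c\|_{2s}\|d\|_{2s}\|c_1^{-1}Xd_1^{-1}\|_{(p,t)}$. There the optimal $(t,t)$-factorization is trivial ($\alpha=\beta=\1$), so there is nothing to split, yet the required $c,d$ must adapt to $X$. For $X=\operatorname{diag}(1,\eta)\otimes Y$ the choice $c=d=\1$ gives $2^{1/s}(1+\eta^p)^{1/p}\|Y\|_t$, which exceeds $\|X\|_t=(1+\eta^t)^{1/t}\|Y\|_t$ by a factor tending to $2^{1/s}$ as $\eta\to0$. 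In general the optimal $c,d$ have no closed form.

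Likewise, for $p\le t\le q$ the hard direction at $q=t$ is $\|X\|_t\le\sup_{c,d}\|c\|_{2s}^{-1}\|d\|_{2s}^{-1}\|c_1Xd_1\|_{(p,t)}$, which is a lower bound on an infimum. The paper proves these inverse formulas (\cref{lem:inverseDef1}) with the following steps:
\begin{itemize}
\item the substitution $\tilde C=CC^*$;
\item the product-to-sum \cref{lem:Inf.Prod.to.Sum} and \cref{lem:Sup.Prod.to.Sum};
\item joint convexity/concavity from Zhang's theorem;
\item Sion's minimax (\cref{thm:Sion}).
\end{itemize}
With these formulas in hand, $q\le t\le p$ and $p\le t\le q$ become compositions.

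For $p\le q\le t$ and $t\le q\le p$ a further inf--sup interchange is still needed. It rests on joint concavity of $(C,D)\mapsto\|C_1^{1/2}XD_1^{1/2}\|_{(p,t)}^{1/2}$, resp.\ joint convexity of $(C,D)\mapsto\|C_1^{-1/2}XD_1^{-1/2}\|_{(p,t)}$. The paper derives these from the already established cases using the auxiliary index $P=p/(1-p)$, resp.\ $P=p/(1+p)$. Your plan supplies none of this machinery where it is actually needed.
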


\begin{remark}
    Before be prove this theorem, which is one of our main structural results of this section we note that it implies that \cref{def:Definition1} and \cref{thm:qt.to.pt} are equivalent. This is significant since it shows that \cref{def:Definition1} is built on in some sense minimal simple assumptions, yet sufficient to prove the much more rich structure that is inbuilt to \cref{thm:qt.to.pt}. In particular it demonstrates that one can relate these quasi-norms with each other in the most natural way, as long as the indices of the involved quasi-norms are all compatible, i.e. not too far apart. This reproduces an important structural result which is known to be true for operator-valued Schatten norms \cite[Lemma 3.1]{Fawzi.2026} when $1\leq q,p,t$.
\end{remark}

\noindent To prove Theorem \ref{thm:qt.to.pt}, we first gather several useful implications of our \cref{def:Definition1}, which we will then suitably combine.

\begin{lemma}\label{lem:inverseDef1}
Given $0<q,p$ s.t. $\big|\frac{1}{q}-\frac{1}{p}\big|\leq 1$ then it holds that
\begin{align}
    \|X\|_{p} &= \|X\|_{(p,p)}, \label{equ:InvDef1.1} \\ 
 &= \sup_{{a,b\geq0}}\|a\|^{-1}_{2r}\|b\|^{-1}_{2r}\|a_1X_{12}b_1\|_{(q,p)}, \quad &&q\leq p \label{equ:InvDef1.2}\\
 &= \inf_{{a,b\geq0}}\|a\|_{2r}\|b\|_{2r}\|a_1^{-1}X_{12}b_1^{-1}\|_{(q,p)}, \quad &&q\geq p \label{equ:InvDef1.3}
\end{align}
where $\frac{1}{r}=\big|\frac{1}{q}-\frac{1}{p}\big|$.
\end{lemma}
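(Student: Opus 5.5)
The identity \eqref{equ:InvDef1.1} is immediate from \cref{def:Definition1}. For $q=p$ we have $r=\infty$, so $\|a\|_{2r}=\|a\|_\infty$. The choice $a=b=\1$, $Z=X$ gives $\|X\|_{(p,p)}\le\|X\|_p$. Conversely, every factorization $X=a_1Z b_1$ satisfies $\|X\|_p\le\|a\|_\infty\|Z\|_p\|b\|_\infty$ by \cref{lem:GeneralizedHölder}, which gives the reverse inequality. The two remaining identities each have one trivial inequality and one inequality that needs a well-chosen witness.

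\emph{Easy directions.} For \eqref{equ:InvDef1.2} with $q\le p$: for any $a,b\ge0$, the triple $(a,b,X)$ is admissible in \eqref{equ:Def1.1} for the operator $a_1Xb_1$. Hence $\|a_1Xb_1\|_{(q,p)}\le\|a\|_{2r}\|b\|_{2r}\|X\|_p$, and the supremum is at most $\|X\|_p$. For \eqref{equ:InvDef1.3} with $q\ge p$: taking the pair $(a,b)$ itself in \eqref{equ:Def1.2} applied to $Y=a_1^{-1}Xb_1^{-1}$ gives $\|Y\|_{(q,p)}\ge\|a\|_{2r}^{-1}\|b\|_{2r}^{-1}\|a_1Yb_1\|_p$. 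Using the support reduction of \cref{lem:SupportConditions}, so that $a_1a_1^{-1}X b_1^{-1}b_1=X$, this shows the infimum is at least $\|X\|_p$.

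\emph{Hard directions.} Here I need a witness pair. I would first reduce to $X\ge0$ with $a=b$ by the same Hölder trick as in \cref{lem:Positive.Symmetrie}: write $X=U|X|$ and $|X|=xx^*$, then split $\|a_1Xb_1\|$ by generalized Hölder. In the case $q\le p$ the claim then becomes: there is $a\ge0$ with $\|a\|_{2r}=1$ such that
\begin{align}
\inf_{\sigma\ge0,\ \Tr\sigma=1} G(\sigma)\ \ge\ \|X\|_p^p,\qquad G(\sigma):=\|\sigma_1^{-\frac{1}{2r}}a_1Xa_1\sigma_1^{-\frac{1}{2r}}\|_p^p .
\end{align}
Note that $G(a^{2r})=\|X\|_p^p$. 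The map $\sigma\mapsto G(\sigma)$ is convex by \cite[Theorem 1.1 (2)]{Zhang.2020}, since $1/r\le1$. It therefore suffices to choose $a$ so that $\sigma=a^{2r}$ satisfies the first-order (KKT) condition on the simplex of states. I would compute the derivative of $G$ at $\sigma=a^{2r}$ via the Fréchet derivative of $t\mapsto t^{-1/2r}$. I expect the stationarity condition to reduce to an equation tying $a^{2r}$ to a marginal of a power of $X$, morally $a^{2r}\propto\tr_2[\,\cdot\,]$ of a suitably normalized $|X|^p$-type operator, as in the commutative case of \cref{thm:classical-2-indexed-spaces}. I would then solve this equation for $a$, using a Brouwer fixed-point argument on the compact set of states if it cannot be solved explicitly. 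The case $q\ge p$ in \eqref{equ:InvDef1.3} is handled symmetrically: the map $\sigma\mapsto\|\sigma_1^{\frac{1}{2r}}Y\sigma_1^{\frac{1}{2r}}\|_p^p$ is concave by \cite[Theorem 1.1 (1)]{Zhang.2020}, since $p\le r$, so a stationary point is a global maximizer.

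The main obstacle is this last step: showing that the stationarity condition is solvable with $\Pi_a$ equal to the marginal support, and that the resulting witness gives equality. The difficulty is that $a$ does not commute with $X$, so the derivative of the operator power must be handled with care. If the fixed-point route becomes unwieldy, my fallback is a direct sandwiching argument. Take near-optimal factorizations $a_1Xa_1=c_1Zc_1$, so that $X=(a^{-1}c)_1Z(c a^{-1})_1$. Then control $\|X\|_p$ through a $2$-indexed Hölder bound in which the specific choice of $a$ cancels the $\|c\|_{2r}$ factors.
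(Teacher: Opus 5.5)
Your proof of \eqref{equ:InvDef1.1} and of the two easy inequalities matches the paper and is correct. These are the bound $\le\|X\|_p$ in \eqref{equ:InvDef1.2}, from the admissible factorization $(a,b,X)$ of $a_1Xb_1$, and the bound $\ge\|X\|_p$ in \eqref{equ:InvDef1.3}, from testing \eqref{equ:Def1.2} with $(a,b)$ itself. The gap is in the two reverse inequalities, which carry the whole content of the lemma. There you give a plan rather than a proof, and it has three concrete problems.

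\emph{The reduction to $X\ge0$ fails for \eqref{equ:InvDef1.2}.} The Hölder trick of \cref{lem:Positive.Symmetrie}, applied to $X=U|X|$, bounds $\|a_1Xb_1\|_{(q,p)}$ from \emph{above} by $\|a_1|X^*|a_1\|_{(q,p)}^{1/2}\|b_1|X|b_1\|_{(q,p)}^{1/2}$. That is the wrong direction for showing that a supremum is at least $\|X\|_p$; it only helps in \eqref{equ:InvDef1.3}, where you need an upper bound on an infimum. The non-local $U$ also cannot be removed by \cref{prop:LocalIsomInv}. For non-positive $X$ the diagonal choice genuinely loses. Take $X=|0\rangle\langle1|\otimes Y$ and $q<p$: then $\sup_a\|a\|_{2r}^{-2}\|a_1Xa_1\|_{(q,p)}\le 2^{-1/r}\|X\|_p$, while $a=|0\rangle\langle0|$, $b=|1\rangle\langle1|$ attain $\|X\|_p$.

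\emph{The stationarity step is never carried out, even for $X\ge0$.} You do not derive the condition, let alone solve it. The unknown $a$ enters both through $\sigma=a^{2r}$ and through the sandwiched operator $a_1Xa_1$, so this is a nonlinear non-commutative fixed-point problem. A Brouwer argument would need an explicit continuous self-map of the state space whose fixed points are exactly the solutions, together with control of $\sigma^{-1/2r}$ at the boundary and of the constraint $\Pi_a=L_X$.

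\emph{The fallback does not close the gap.} The natural bound $\|X\|_p\le\|a^{-1}c\|_\infty^2\|Z\|_p$ would require $\|a^{-1}c\|_\infty\le\|c\|_{2r}/\|a\|_{2r}$ for all admissible $c$. This fails for $c$ close to the spectral projection of $a$ at its smallest eigenvalue. You do not say which other 2-indexed Hölder inequality would replace it.

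The missing idea is a minimax exchange, which produces the witness non-constructively and for general $X$ with independent $a,b$.
\begin{itemize}
\item Inserting \eqref{equ:Def1.1}, in the form of \cref{lem:SupportConditions}, into the right-hand side of \eqref{equ:InvDef1.2} gives $\sup_{a,b}\inf_{c,d}$ of an objective that is not concave in $(a,b)$.
\item The substitution $c=aC$, $d=Db$, $\tilde C=CC^*$, $\tilde D=D^*D$, with $a\mapsto a^{1/2r}$ and $\|a\|_1,\|b\|_1\le1$, removes $a,b$ from the term containing $X$.
\item \cref{lem:Inf.Prod.to.Sum} then turns the product into $\frac{q}{2r}\|a^{1/2r}\tilde Ca^{1/2r}\|_r^r+\frac{q}{2r}\|b^{1/2r}\tilde Db^{1/2r}\|_r^r+\frac{q}{p}\|\tilde C_1^{-1/2}X_{12}\tilde D_1^{-1/2}\|_p^p$.
\item This sum is concave and continuous in $(a,b)$ on a compact convex set by \cref{lem:continuity}. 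It is jointly convex in $(\tilde C,\tilde D)$: the first two terms because $r\ge1$, the last by \cite[Theorem 1.1 (2)]{Zhang.2020}. So \cref{thm:Sion} swaps the order.
\item The inner supremum is now explicit by Hölder: $\sup_{\|a\|_1\le1}\|a^{1/2r}\tilde Ca^{1/2r}\|_r=\|\tilde C\|_\infty$. What remains is exactly the $r=\infty$ factorization defining $\|X\|_{(p,p)}=\|X\|_p$.
\end{itemize}
Case \eqref{equ:InvDef1.3} is the mirror image. It uses \cref{lem:Sup.Prod.to.Sum} and joint concavity of $(\tilde C,\tilde D)\mapsto\|\tilde C_1^{1/2}X\tilde D_1^{1/2}\|_p^p$ for $p\le1$; for $p\ge1$ it is the known normed statement. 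Your convexity of $G$ is the right kind of ingredient, but it has to be used inside this saddle-point argument, not to certify a candidate $a$ you cannot exhibit.
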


\begin{remark}
The factorization formulas in Lemma \ref{lem:inverseDef1} should be read as inverses of the ones with which we defined the 2-indexed quasi-norms \eqref{equ:Def1.1}, \eqref{equ:Def1.2}, implying that the regular Schatten-$p$-quasi-norms can be factorized through the $\|\cdot\|_{(q,p)}$-quasi-norms in the above way. 
\end{remark}

\begin{proof}[Proof of \cref{lem:inverseDef1}]
\cref{equ:InvDef1.1} follows directly from \eqref{equ:Def1.1} and \eqref{equ:Def1.2}: we observe that when $p=q$, then $\frac{1}{r}=\frac{1}{q}-\frac{1}{q}=0$ and so $r=\infty$. Then choosing the factorizations $a=b=\1, Z=X$ in \eqref{equ:Def1.1} yields $\|X\|_{(p,p)}\leq \|X\|_p$ and the factorization $a=b=\1$ in \eqref{equ:Def1.2} yields the converse $\|X\|_{(p,p)}\geq \|X\|_p$.

\medskip

Next we prove \eqref{equ:InvDef1.2}, i.e.~the case $q\leq p$. This is a little more involved and essentially follows from concatenating with \eqref{equ:Def1.1} to get a minimax expression, which under suitable rewriting allows for an application of Sion's minimax theorem, which we recall in \cref{thm:Sion} for completeness. The minimax expression naturally gives the lower bound implicit in \eqref{equ:InvDef1.2}, while the maximin gives the upper bound implicit in \eqref{equ:InvDef1.2}. Concretely, consider the right hand side of \eqref{equ:InvDef1.2}, then plugging in the expression derived in Lemma \ref{lem:SupportConditions}, we get 
\begin{align}
   \text{RHS of }{\eqref{equ:InvDef1.2}} &= \sup_{{a,b\geq0}}\|a\|^{-1}_{2r}\|b\|^{-1}_{2r}\|a_1X_{12}b_1\|_{(q,p)} \\ &= \sup_{{a,b\geq0}}\inf_{{c,d\ge 0}}\|a\|^{-1}_{2r}\|b\|^{-1}_{2r}\|c\|_{2r}\|d\|_{2r}\|(c^{-1}a)_1X_{12}(bd^{-1})_1\|_{p} \\ &\leq \|X\|_p,
\end{align} where the upper bound comes from replacing the infimum over $c,d$ with $c=a, b=d$. 
Here implicitly we dropped writing the conditions on $a,b,c,d$ over which we optimize, namely that, by \cref{lem:SupportConditions} we can restrict $a,b\geq0$ 
s.t. $\Pi_a=L_X, \Pi_b=R_X$. Since $a_1X_{12}b_1$ again have left and right marginal support projections $L_{a_1X_{12}b_1}=L_X, R_{a_1X_{12}b_1}=R_X$, the restriction for $c,d\geq0$ is also $\Pi_c=L_X, \Pi_d=R_X$.  
Next, we prove that the infimum and supremum above can be exchanged, which then by an analogous argument  yields the converse inequality. 

In principle, in order to be able to apply Sion's minimax theorem, we would need to show that the objective function of the minimax is jointly concave in the operators $a,b$ and jointly convex in $c,d$. As is, the functional above, however, is provably not jointly concave in $a,b$. To get around this we introduce a suitable change of variables. Again, for notational simplicity we drop mentioning the support and image conditions explicitly. Then
\begin{align}
   \text{RHS of }{\eqref{equ:InvDef1.2}} &= \sup_{a,b\geq0}\inf_{\underset{}{c,d\geq0}}\|a\|^{-1}_{2r}\|b\|^{-1}_{2r}\|c\|_{2r}\|d\|_{2r}\|\underbrace{(c^{-1}a)_1}_{C_1^{-1}}X_{12}\underbrace{(bd^{-1})_1}_{D_1^{-1}}\|_{p} \\ 
   &= \sup_{\underset{\|a\|_1\leq1, \|b\|_1\leq1}{a,b\geq0}}\inf_{C,D}\|a^{\frac{1}{2r}}CC^*a^{\frac{1}{2r}}\|^\frac{1}{2}_r\|b^{\frac{1}{2r}}D^*Db^{\frac{1}{2r}}\|^\frac{1}{2}_r\|C_1^{-1}X_{12}D_1^{-1}\|_p \\ 
   &=\sup_{\underset{\|a\|_1\leq1, \|b\|_1\leq1}{a,b\geq0}}\inf_{\underset{X=(\Pi_{\tilde{C}})_1X(\Pi_{\tilde{D}})_1}{\tilde{C},\tilde{D}\geq0}}\|a^{\frac{1}{2r}}\tilde{C}a^{\frac{1}{2r}}\|^\frac{1}{2}_r\|b^{\frac{1}{2r}}\tilde{D}b^{\frac{1}{2r}}\|^\frac{1}{2}_r\|\tilde{C}^{-\frac{1}{2}}_1X_{12}\tilde{D}_1^{-\frac{1}{2}}\|_p \\
   &= \!\!\left(\!\sup_{\underset{\|a\|_1\leq1, \|b\|_1\leq1}{a,b\geq0}}\inf_{\underset{X=(\Pi_{\tilde{C}})_1X_{12}(\Pi_{\tilde{D}})_1}{\tilde{C},\tilde{D}\geq0}}\!\!\!\!\!\!\left\{ \frac{q}{2r}\|a^\frac{1}{2r}\tilde{C}a^\frac{1}{2r}\|_r^r+\frac{q}{2r}\|b^\frac{1}{2r}\tilde{D}b^\frac{1}{2r}\|_r^r+\frac{q}{p}\|\tilde{C}_1^{-\frac{1}{2}}X_{12}\tilde{D}_1^{-\frac{1}{2}}\|^p_p
   \right\}\!\right)^{\!\!\!\frac{1}{q}}
\end{align}
In the second equality we defined $C^{-1}:=c^{-1}a, D^{-1}:=bd^{-1} \Leftrightarrow c=aC, d=Db$ as $\Pi_c= \Pi_a=L_X, \Pi_d= \Pi_b=R_X$ and used that, $\|c\|_{2r}=\|cc^*\|^\frac{1}{2}_r$ and $\|d\|_{2r}=\|d^*d\|^\frac{1}{2}_r$. It holds for $\tilde{C}:=CC^*$ and $\tilde{D}:=D^*D$ that $X=(\Pi_{\tilde{C}})_1X(\Pi_{\tilde{D}})_1$ since $\Pi_{\tilde{C}}=\Pi_{c}= \Pi_a=L_X$ and likewise $\Pi_{\tilde{D}}=\Pi_{d}= \Pi_b=R_X$. We further redefined $a\mapsto a^\frac{1}{2r}$, $b\mapsto b^\frac{1}{2r}$, and used the scale invariance under rescaling of $a,b$.
The last equality follows from a general operator-valued optimization which we postpone to \cref{lem:Inf.Prod.to.Sum}, with $g_a(\tilde{C})=\|a^\frac{1}{2r}\tilde{C}a^\frac{1}{2r}\|_r^r$, $h_b(\tilde{D})=\|b^\frac{1}{2r}\tilde{D}b^\frac{1}{2r}\|_r^r$, and $f(\tilde{C},\tilde{D})=\|\tilde{C}_1^{-\frac{1}{2}}X_{12}\tilde{D}_1^{-\frac{1}{2}}\|^p_p$, and thus $r_1=r,\, r_2=\frac{p}{2}$ and $\frac{1}{r_0}=\frac{2}{r}+\frac{2}{p}=\frac{2}{q}$, i.e. $r_0=\frac{q}{2}$, which follows since $\frac{1}{r}=\frac{1}{q}-\frac{1}{p}$. 
To apply Sion's minimax \cref{thm:Sion} first note upper semi-continuity and concavity in $a,b$ due to \cref{lem:continuity} for fixed $\tilde{C},\tilde{D}$.
Next to establish the suitable joint convexity we see that 
\begin{align}
 \tilde{C}\mapsto   g_a(\tilde{C}) = \|a^\frac{1}{2r}\tilde{C}a^\frac{1}{2r}\|_r^r= \Tr[(a^\frac{1}{2r}\tilde{C}a^\frac{1}{2r})^r]= \Tr[(a^\frac{1}{2r}\tilde{C}a^\frac{1}{r}\tilde{C}a^\frac{1}{2r})^\frac{r}{2}]
\end{align} 
is convex 
and the same obviously also applies to $h_b(\tilde{D})$ in terms of $\tilde{D}$. 
Lastly the function
\begin{align}
   f(\tilde{C},\tilde{D})&=\|\tilde{C}_1^{-\frac{1}{2}}X_{12}\tilde{D}_1^{-\frac{1}{2}}\|^p_p = \Tr\Big[\big|\tilde{C}_1^{-\frac{1}{2}}X_{12}\tilde{D}_1^{-\frac{1}{2}}\big|^p\Big] \\&= \Tr\Big[\big(\tilde{C}_1^{-\frac{1}{2}}X_{12}\tilde{D}_1^{-1}X_{12}^*\tilde{C}_1^{-\frac{1}{2}}\big)^\frac{p}{2}\Big] 
\end{align} is jointly convex in $(\tilde{C},\tilde{D})$ by \cite[Theorem 1.1 (2)]{Zhang.2020} since $p>0$. 
Since $\frac{q}{2r},\frac{q}{p}\geq0$ these individual concavity/convexity results  imply that  
\begin{align}
  (a,b,\tilde{C},\tilde{D})\mapsto  \frac{q}{2r}\|a^\frac{1}{2r}\tilde{C}a^\frac{1}{2r}\|_r^r+\frac{q}{2r}\|b^\frac{1}{2r}\tilde{D}b^\frac{1}{2r}\|_r^r+\frac{q}{p}\|\tilde{C}_1^{-\frac{1}{2}}X_{12}\tilde{D}_1^{-\frac{1}{2}}\|^p_p
\end{align}
 is jointly convex in $(\tilde{C},\tilde{D})$, jointly concave in $(a,b)$, and continuous in $(a,b)$ on positive trace sub normalized operators, for every fixed $\tilde{C},\tilde{D}$. The cone of positive semidefinite operators is clearly convex and adding the linear support/image constraint does not change this fact. The set of sub normalized trace-class operators is convex and norm-compact in finite dimensions. 
Hence Sion's minimax theorem can be applied and after unraveling the changes of variables yields:
\begin{align}
    \text{RHS of }{\eqref{equ:InvDef1.2}} & = \!\!\left(\!\inf_{\underset{X=(\Pi_{\tilde{C}})_1X_{12}(\Pi_{\tilde{D}})_1}{\tilde{C},\tilde{D}\geq0}}\sup_{\underset{\|a\|_1\leq1, \|b\|_1\leq1}{a,b\geq0}}\!\!\!\!\!\!\left\{ \frac{q}{2r}\|a^\frac{1}{2r}\tilde{C}a^\frac{1}{2r}\|_r^r+\frac{q}{2r}\|b^\frac{1}{2r}\tilde{D}b^\frac{1}{2r}\|_r^r+\frac{q}{p}\|\tilde{C}_1^{-\frac{1}{2}}X_{12}\tilde{D}_1^{-\frac{1}{2}}\|^p_p
   \right\}\!\right)^{\!\!\!\frac{1}{q}}   \\
   &=\left(\inf_{\underset{X=(\Pi_{\tilde{C}})_1X_{12}(\Pi_{\tilde{D}})_1}{\tilde{C},\tilde{D}\geq0}}\left\{ \frac{q}{2r}\|\tilde{C}\|_\infty^r+\frac{q}{2r}\|\tilde{D}\|_\infty^r+\frac{q}{p}\|\tilde{C}_1^{-\frac{1}{2}}X_{12}\tilde{D}_1^{-\frac{1}{2}}\|^p_p
   \right\}\right)^\frac{1}{q} \\
   &=\inf_{\underset{X=(\Pi_{\tilde{C}})_1X_{12}(\Pi_{\tilde{D}})_1}{\tilde{C},\tilde{D}\geq0}}\|\tilde{C}\|^\frac{1}{2}_\infty\|\tilde{D}\|^\frac{1}{2}_\infty\|\tilde{C}_1^{-\frac{1}{2}}X_{12}\tilde{D}_1^{-\frac{1}{2}}\|_p
   \\
   &=\inf_{\underset{X=(\Pi_{\tilde{C}})_1X_{12}(\Pi_{\tilde{D}})_1}{\tilde{C},\tilde{D}\geq0}}\|\tilde{C}\|_\infty\|\tilde{D}\|_\infty\|\tilde{C}_1^{-1}X_{12}\tilde{D}_1^{-1}\|_p\\
   &= \|X\|_{p}.
\end{align}
where the third equality follows again from \cref{lem:Inf.Prod.to.Sum}, and the last follows from  \cref{def:Definition1} \eqref{equ:Def1.1} with \cref{equ:InvDef1.1}.

\medskip

Completely analogously to above, we prove \eqref{equ:InvDef1.3} in the case $q\geq p$. Without loss of generality we may assume that  $1\geq p$, as else this is the well known \cite[(3.18)]{Devetak.2006}. We follows by concatenating with \eqref{equ:Def1.2} to get a maximin expression, which under analogous suitable rewriting allows for an application of Sion's minimax \cref{thm:Sion} theorem. The maximin expression naturally gives the upper bound implicit in \eqref{equ:InvDef1.3}, while the minimax gives the lower bound implicit in \eqref{equ:InvDef1.3}. Again for notational simplicity we drop writing the marginal support conditions from \cref{lem:SupportConditions}.
Due to \eqref{equ:Def1.2} we get
\begin{align}
    \text{RHS of }{\eqref{equ:InvDef1.3}} &= \inf_{\underset{}{a,b\geq0}}\|a\|_{2r}\|b\|_{2r}\|a_1^{-1}Xb_1^{-1}\|_{(q,p)} \\
    &= \inf_{a,b\geq0}\sup_{c,d\geq0}\|a\|_{2r}\|b\|_{2r}\|c\|_{2r}^{-1}\|d\|_{2r}^{-1}  \|(ca^{-1}\otimes\1)X(b^{-1}d\otimes\1)\|_{p}\\ &\geq \|X\|_p
\end{align} where the lower bound comes from replacing the supremum over $c,d$ with $c=a, b=d$. 
We will prove below, that the infimum and supremum can be exchanged, which then by an analogous argument to above yields the converse inequality. 

The argument that the infimum and supremum can be exchanged follows analogously, but slightly differently as before. Dropping again to write the support conditions from \cref{lem:SupportConditions} we have
\begin{align}
   \text{RHS of }{\eqref{equ:InvDef1.3}} &= \inf_{a,b\geq0}\sup_{c,d\geq0}\|a\|_{2r}\|b\|_{2r}\|c\|_{2r}^{-1}\|d\|_{2r}^{-1}  \|(ca^{-1}\otimes\1)X(b^{-1}d\otimes\1)\|_{p} \\ 
   &=\inf_{a,b\geq0}\sup_{c,d\geq0} \|a^\frac{1}{2r}\tilde{C}a^\frac{1}{2r}\|_r^{-\frac{1}{2}}\|b^\frac{1}{2r}\tilde{D}b^\frac{1}{2r}\|_r^{-\frac{1}{2}}\|(\tilde{C}^\frac{1}{2}\otimes\1)X(\tilde{D}^\frac{1}{2}\otimes\1)\|_p \\ 
   &= \left(\inf_{a,b\geq0}\sup_{\tilde{C},\tilde{D}\geq0}\left\{ -\frac{q}{2r}\|a^\frac{1}{2r}\tilde{C}a^\frac{1}{2r}\|_r^r-\frac{q}{2r}\|b^\frac{1}{2r}\tilde{D}b^\frac{1}{2r}\|_r^r+\frac{q}{p}\|(\tilde{C}^\frac{1}{2}\otimes\1)X(\tilde{D}^\frac{1}{2}\otimes\1)\|_p^p \right\}\right)^\frac{1}{q},
\end{align} where the last equality here follows from \cref{lem:Sup.Prod.to.Sum} with $g_a(\tilde{C})=\|a^\frac{1}{2r}\tilde{C}a^\frac{1}{2r}\|_r^r$, $h_b=\|b^\frac{1}{2r}\tilde{D}b^\frac{1}{2r}\|_r^r$, $f(\tilde{C},\tilde{D})=\|(\tilde{C}^\frac{1}{2}\otimes\1)X(\tilde{D}^\frac{1}{2}\otimes\1)\|_p^p$ and $r_1=r>1 ,\, r_2=-\frac{p}{2}<0$, which implies $\frac{1}{r_0}=\frac{2}{r}-\frac{2}{p}=-\frac{2}{q}<0$, as $\frac{1}{r}=\frac{1}{p}-\frac{1}{q}$. In the second line we defined $C:=ca^{-1}, D:=b^{-1}d \Leftrightarrow c=Ca, d=bD$ since $\Pi_a=\Pi_c=L_X, \Pi_b=\Pi_d=R_X$, by \cref{lem:SupportConditions} and set $\tilde{C}:=C^*C, \tilde{D}:=DD^*$.

To establish the required joint convexity observe that exactly as before $g_a(\tilde{C})$ is convex in $\tilde{C}$ and by \cref{lem:continuity} concave and continuous in $a$, which means that $-\frac{q}{2r}g_a(\tilde{C})$ is concave in $\tilde{C}$ and convex and continuous in $a$. The same applies to $h_b(\tilde{D})$. 
For
\begin{align}
f(\tilde{C},\tilde{D})& =\|(\tilde{C}^\frac{1}{2}\otimes\1)X(\tilde{D}^\frac{1}{2}\otimes\1)\|_p^p= \Tr[(\tilde{C}^\frac{1}{2}\otimes\1)X(\tilde{D}^\frac{1}{2}\otimes\1))^p] \\ &= \Tr[(\tilde{C}^\frac{1}{2}\otimes\1)X(\tilde{D}^1\otimes\1)X^*(\tilde{C}^\frac{1}{2}\otimes\1))^\frac{p}{2}]  
\end{align} is jointly concave in $\tilde{C},\tilde{D}$ by \cite[Theorem 1.1 (1)]{Zhang.2020} as $0<p\leq1$.
Hence as a sum it follows that
\begin{align}
    -\frac{q}{2r}\|a^\frac{1}{2r}\tilde{C}a^\frac{1}{2r}\|_r^r-\frac{q}{2r}\|b^\frac{1}{2r}\tilde{D}b^\frac{1}{2r}\|_r^r+\frac{q}{p}\|(\tilde{C}^\frac{1}{2}\otimes\1)X(\tilde{D}^\frac{1}{2}\otimes\1)\|_p^p
\end{align} is jointly convex in $a,b$ and jointly concave in $\tilde{C},\tilde{D}$. With the exact same discussion as above it follows that Sion's minimax theorem can be applied and the desired statement follows which concludes the proof of \cref{lem:inverseDef1} via
\begin{align}
   \text{RHS of }{\eqref{equ:InvDef1.3}} &= 
   \left(\sup_{\tilde{C},\tilde{D}\geq0}\inf_{a,b\geq0}\left\{ -\frac{q}{2r}\|a^\frac{1}{2r}\tilde{C}a^\frac{1}{2r}\|_r^r-\frac{q}{2r}\|b^\frac{1}{2r}\tilde{D}b^\frac{1}{2r}\|_r^r+\frac{q}{p}\|(\tilde{C}^\frac{1}{2}\otimes\1)X(\tilde{D}^\frac{1}{2}\otimes\1)\|_p^p \right\}\right)^\frac{1}{q} \\
   &= \left(\sup_{\tilde{C},\tilde{D}\geq0}\left\{ -\frac{q}{2r}\|\tilde{C}\|_\infty^r-\frac{q}{2r}\|\tilde{D}\|_\infty^r+\frac{q}{p}\|(\tilde{C}^\frac{1}{2}\otimes\1)X(\tilde{D}^\frac{1}{2}\otimes\1)\|_p^p \right\}\right)^\frac{1}{q} \\
   &= \sup_{\tilde{C},\tilde{D}\geq0}\|\tilde{C}\|^{-\frac{1}{2}}_\infty\|\tilde{D}\|^{-\frac{1}{2}}_\infty\|(\tilde{C}^\frac{1}{2}\otimes\1)X(\tilde{D}^\frac{1}{2}\otimes\1)\|_p \\
   &= \sup_{\tilde{C},\tilde{D}\geq0}\|\tilde{C}\|^{-1}_\infty\|\tilde{D}\|^{-1}_\infty\|(\tilde{C}\otimes\1)X(\tilde{D}\otimes\1)\|_p = \|X\|_p 
\end{align} where the last equality is just \cref{def:Definition1} \eqref{equ:Def1.2}.
\end{proof}

\noindent We are now in a position to prove \cref{thm:qt.to.pt} and in particular \eqref{equ:thm:qt.to.pt.inf} and \eqref{equ:thm:qt.to.pt.sup} starting from \cref{def:Definition1} and \cref{lem:inverseDef1}. 
\begin{proof}[Proof of \cref{thm:qt.to.pt}]
In principle we need to consider 6 different orders of the indices $q,p,t$. We will consider them one by one and the proof strategy for most of them will be analogous, suitably concatenating the factorization formulas from  \cref{def:Definition1} and \cref{lem:inverseDef1} to relate the desired $\|\cdot\|_{(q,t)}$ and $\|\cdot\|_{(p,t)}$, redefining the optimization variables, and in two cases swapping an occurring infimum and supremum. These two cases will be more intricate and first require proving joint concavity/ convexity of 2-indexed quasi-norm functionals. In the following, we denote by $m,s,r\geq 1$ the differences
\begin{align}
    \frac{1}{m}&:=\left|\frac{1}{t}-\frac{1}{q}\right|, \qquad
    \frac{1}{s}:=\left|\frac{1}{t}-\frac{1}{p}\right|, \qquad
    \frac{1}{r}:=\left|\frac{1}{q}-\frac{1}{p}\right|. 
\end{align}
For notational simplicity we will not be writing the support conditions of \cref{lem:SupportConditions} explicitly and only commenting on them when necessary. 

\medskip 

\noindent \underline{Case 1: $q\leq t\leq p$:}  From \Cref{lem:SupportConditions} and \cref{equ:InvDef1.3} it follows that
\begin{align}
    \|X\|_{(q,t)} &= \inf_{\underset{}{a,b\geq0}}\|a\|_{2m}\|b\|_{2m}\|a_1^{-1}X_{12}b_1^{-1}\|_{t} \\
    &= \inf_{\underset{}{a,b\geq0}}\inf_{\underset{}{c,d\geq0}}\|a\|_{2m}\|b\|_{2m}\|c\|_{2s}\|d\|_{2s}\|\underbrace{(c^{-1}a^{-1})_1}_{A_1^{-1}=(U^*\tilde{A}^{-\frac{1}{2}})_1}X_{12}\underbrace{(b^{-1}d^{-1})_1}_{B_1^{-1}=(\tilde{B}^{-\frac{1}{2}}V^*)_1}\|_{(p,t)} \\
    &\overset{(1)}{=} \inf_{{\tilde{A},\tilde{B}\geq0}}\inf_{a,b\geq0} \|a\|_{2m}\|a^{-1}\tilde{A}a^{-1}\|^\frac{1}{2}_{s}\|b\|_{2m}\|b^{-1}\tilde{B}b^{-1}\|_s^\frac{1}{2} \|\tilde{A}_1^{-\frac{1}{2}}X_{12}\tilde{B}_1^{-\frac{1}{2}}\|_{(p,t)} \\
    &\overset{(2)}{=}\inf_{\underset{}{\tilde{A},\tilde{B}\geq0}}  \|\tilde{A}\|^\frac{1}{2}_{r}\|\tilde{B}\|^{\frac{1}{2}}_{r}\|\tilde{A}_1^{-\frac{1}{2}}X_{12}\tilde{B}_1^{-\frac{1}{2}}\|_{(p,t)}\\ 
    &=\inf_{\underset{}{\tilde{A},\tilde{B}\geq0}}  \|\tilde{A}\|_{2r}\|\tilde{B}\|_{2r}\|\tilde{A}_1^{-1}X_{12}\tilde{B}_1^{-1}\|_{(p,t)},
\end{align} 
where (1) follows from local unitary invariance of the $(p,t)$ quasi-norm along with defining $A:=ac$ and $\tilde{A}:=AA^*=acca$, implying $cc=a^{-1}AA^*a^{-1}$ since by \cref{lem:SupportConditions} $\Pi_a=\Pi_c=L_X$ and the analogous holds for $\Pi_c=\Pi_d=R_X$. It also implies that $\Pi_{\tilde{A}}= \Pi_c= \Pi_a=L_X$. The polar decomposition  of the generalized inverse of $A$ is given by $A^{-1}=U^* \tilde{A}^{-\frac{1}{2}}$. Everything is analogous for $B:=bd$ with $\tilde{B}:=B^*B=dbbd$, it follows that $dd=b^{-1}BB^*b^{-1}$ and the polar decomposition of the inverse is $B^{-1}=\tilde{B}^{-\frac{1}{2}}V^*$ and that $\Pi_{\tilde{B}}= \Pi_d= \Pi_b=R_X$.
The equality in (2) follows from \cref{prop:SpVariationalFormulas} and the fact that $r\leq s$ with $\frac{1}{m}=\frac{1}{r}-\frac{1}{s}$. 
In the last line we redefined $\tilde{A}\rightarrow \tilde{A}^2, \tilde{B}\rightarrow \tilde{B}^2$.

\medskip

\noindent
\underline{Case 2: $q\leq t,\, p\leq t\,,\, q\leq p$:}
From \Cref{lem:SupportConditions} and \cref{equ:InvDef1.2}, it follows that
\begin{align}
    \|X\|_{(q,t)} &= \inf_{\underset{}{a,b\geq0}}\|a\|_{2m}\|b\|_{2m}\|a_1^{-1}X_{12}b_1^{-1}\|_{t} \\
    &= \inf_{\underset{}{a,b\geq0}}\sup_{\underset{}{c,d\geq0}}\|a\|_{2m}\|b\|_{2m}\|c\|^{-1}_{2s}\|d\|^{-1}_{2s}\|\underbrace{(ca^{-1})_1}_{A_1^{-1}=U^*|A^*|^{-1}_1}X_{12}\underbrace{(b^{-1}d)_1}_{B_1^{-1}=(|B|^{-1}V^*)_1}\|_{(p,t)} \\ 
    &\overset{(1)}{=} \inf_{\underset{}{A,B\geq0}}\sup_{\underset{}{c,d\geq0}}\|c\|_{2s}^{-1}\|d\|_{2s}^{-1}\|c|A|^2c\|_m^\frac{1}{2}\|d|B^*|^2d\|_m^\frac{1}{2}\||A^*|_1^{-1}X_{12}|B|_1^{-1}\|_{(p,t)} \\
    &\overset{(2)}{=} \inf_{\underset{}{A,B\geq0}}\||A|^2\|_r^\frac{1}{2}\||B^*|^2\|_r^\frac{1}{2}\||A^*|_1^{-1}X_{12}|B|_1^{-1}\|_{(p,t)} \\ 
    &\overset{(3)}{=}\inf_{\underset{}{A,B\geq0}}\|A\|_{2r}\|B\|_{2r}\|A_1^{-1}X_{12}B_1^{-1}\|_{(p,t)},
\end{align} 
where in (1) we set $A:=ac^{-1}$ s.t. $aa= cA^*Ac=c|A|^2c$, and analogously for $B:=d^{-1}b$ s.t. $bb=dBB^*d=d|B^*|d$ due to \cref{lem:SupportConditions}. They have polar decompositions $A=|A^*|V$ and $B=U|B|$. In $(2)$ we used \cref{prop:SpVariationalFormulas} with $r\geq m$ and $\frac{1}{s}=\frac{1}{m}-\frac{1}{r}$ by assumption of this case. In $(3)$ we used that $|A^*|^2=AA^*$ and $|A|^2=A^*A$ posses the same non-zero singular values and thus also Schatten-quasi-norms. Likewise for $B$. Lastly we renamed $|A^*|\to A\geq0,\, |B|\to B\geq0$.

\medskip

\noindent
\underline{Case 3: $p\le q\leq t,\, p\le t$:} This case is more challenging since it involves a switching of the infimum and supremum.
WLOG we may assume $p<1$, since in the case where $p\geq 1$ it follows that $q,t\geq1$ are as well and we are in the normed setting in which the claim is known to be true \cite[Lemma 3.1]{Fawzi.2026}. Further if $p= t$, i.e. $s=\infty$ we are done by \cref{lem:inverseDef1}. Hence in the following consider $1>p\neq t$. Before we deal with the claimed equality we first establish the following claim: The map
\begin{align}
    (C,D)\mapsto \|(C^{\frac{1}{2}}\otimes\1)X(D^{\frac{1}{2}}\otimes\1)\|^{\frac{1}{2}}_{(p,t)}
\end{align} is jointly concave in $C,D\geq0$ for $1>p\leq t$.\footnote{This range of $p$ is sufficient for this argument, but not exhaustive.} 
To do this we resort to the result of Case 1, which was independently proven. Indeed,
Since $p<1$ define $P:=\frac{p}{1-p}\geq t\geq p$, then $\frac{1}{p}-\frac{1}{P}=1$ and $\frac{1}{t}-\frac{1}{P}\leq 1$. By Case 1
\begin{align}
    &\|(C^{\frac{1}{2}}\otimes\1)X(D^{\frac{1}{2}}\otimes\1)\|^{\frac{1}{2}}_{(p,t)}\\
    &\qquad = \inf_{a,b\geq0}\|a\|_2^\frac{1}{2}\|b\|_2^\frac{1}{2}\|(a^{-1}C^{\frac{1}{2}}\otimes\1)X(D^{\frac{1}{2}}b^{-1}\otimes\1)\|^{\frac{1}{2}}_{(P,t)} \\
    &\qquad = \inf_{A,B}\|ACA^*\|_1^\frac{1}{4}\|B^*DB\|_1^\frac{1}{4}\|a_1^{-1}Xb_1^{-1}\|_{(P,t)}^\frac{1}{2} \\
    &\qquad =\inf_{\tilde{A},\tilde{B}\geq0}\|\tilde{A}^\frac{1}{2}C\tilde{A}^\frac{1}{2}\|_1^\frac{1}{4}\|\tilde{B}^\frac{1}{2}D\tilde{A}^\frac{1}{2}\|_1^\frac{1}{4}\|(\tilde{A}^{-\frac{1}{2}}\otimes\1)X(\tilde{B}^{-\frac{1}{2}}\otimes\1)\|_{(P,t)}^\frac{1}{2} \\
    &\qquad = \inf_{\tilde{A},\tilde{B}\geq0}\left\{\frac{1}{4}\|\tilde{A}^\frac{1}{2}C\tilde{A}^\frac{1}{2}\|_1+\frac{1}{4}\|\tilde{B}^\frac{1}{2}D\tilde{A}^\frac{1}{2}\|_1+\frac{1}{2}\|(\tilde{A}^{-\frac{1}{2}}\otimes\1)X(\tilde{B}^{-\frac{1}{2}}\otimes\1)\|_{(P,t)}\right\},
\end{align} where the last equality is \cref{lem:Inf.Prod.to.Sum} with $r_1=1,\, r_2=\frac{1}{2}$. Now since $0\leq C\mapsto \|\tilde{A}^\frac{1}{2}C\tilde{A}^\frac{1}{2}\|_1$ is linear it is in particular concave, hence the last sum is jointly concave in $C,D$ and this joint concavity is preserved by the infimum.

Having this joint concavity we may now proceed with the proof by applying \cref{equ:Def1.2} and \cref{equ:Def1.1}
\begin{align}
    &\|X\|_{(q,t)} \\
    &\qquad = \inf_{\underset{}{a,b\geq0}}\|a\|_{2m}\|b\|_{2m}\|a_1^{-1}Xb_1^{-1}\|_{t} \\
    &\qquad = \inf_{\underset{}{a,b\geq0}}\sup_{\underset{}{c,d\geq0}}\|a\|_{2m}\|b\|_{2m}\|c\|^{-1}_{2s}\|d\|^{-1}_{2s}\|(\underbrace{ca^{-1}}_{C}\otimes\1)X(\underbrace{b^{-1}d}_{D}\otimes\1)\|_{(p,t)} \\ 
    &\qquad =\inf_{\underset{}{a,b\geq0}}\sup_{\underset{}{C,D\geq0}}\|a\|_{2m}\|b\|_{2m}\|aC^*Ca\|^{-\frac{1}{2}}_{s}\|bDD^*b\|^{-\frac{1}{2}}_{s}\|(C\otimes\1)X(D\otimes\1)\|_{(p,t)} \\
    &\qquad =\left\{\inf_{\underset{}{a,b\geq0}}\frac{1}{2s-1}\sup_{\underset{}{C,D\geq0}}\!-\frac{1}{2}\|a^\frac{1}{2m}\tilde{C}a^{\frac{1}{2m}}\|^{s}_{s}-\frac{1}{2} \|b^\frac{1}{2m}\tilde{D}b^{\frac{1}{2m}}\|^{s}_{s}+2s\|(\tilde{C}^\frac{1}{2}\otimes\1)X(\tilde{D}^\frac{1}{2}\otimes\1)\|^\frac{1}{2}_{(p,t)}\right\}^{\frac{2s-1}{s}}
\end{align} 
where the last equality comes from \cref{lem:Sup.Prod.to.Sum} with $r_1=s,\, r_2=-\frac{1}{4}$. As in all previous proofs, the change of variables calculations are justified by \cref{lem:SupportConditions}. By the before proved claim we have that the last summand is jointly concave in $\tilde{C},\tilde{D}$ and by \cite[Lemma B.1]{Rubboli.2025} it follows that
\begin{align}
  -\frac{1}{2}\|a^\frac{1}{2m}\tilde{C}a^{\frac{1}{2m}}\|^{s}_{s} = -\frac{1}{2}\|\tilde{C}^\frac{1}{2}a^\frac{1}{m}\tilde{C}^\frac{1}{2}\|^{s}_{s}
\end{align} is concave in $\tilde{C}$ as $1\leq s$ and convex in $a$ as $0\leq \frac{1}{m}\leq 1$ and $s\leq m$, which follows from $q\geq p$. Along with \cref{lem:continuity} this implies that we can use Sion's minimax theorem to get
\begin{align}
   \|X\|_{(q,t)} &= \sup_{\underset{}{C,D\geq0}}\inf_{\underset{}{a,b\geq0}}\|a\|_{2m}\|b\|_{2m}\|aC^*Ca\|^{-\frac{1}{2}}_{s}\|bDD^*b\|^{-\frac{1}{2}}_{s}\|(C\otimes\1)X(D\otimes\1)\|_{(p,t)} \\
   &= \sup_{\underset{}{C,D\geq0}}\|C^*C\|^{-\frac{1}{2}}_{r}\|DD^*\|^{-\frac{1}{2}}_{r}\|(C\otimes\1)X(D\otimes\1)\|_{(p,t)}
   \\ &= \sup_{\underset{}{C,D\geq0}}\|C\|^{-1}_{2r}\|D\|^{-1}_{2r}\|(C\otimes\1)X(D\otimes\1)\|_{(p,t)}, 
\end{align} where the second equality is \cref{prop:SpVariationalFormulas} 
with $\frac{1}{m}=\frac{1}{s}-\frac{1}{r}\geq0$, via
\begin{align}
    \inf_{a\geq0}\|a\|_{2m}\|aC^*Ca\|^{-\frac{1}{2}}_{s} &= (\sup_{a\geq0}\|a\|_{2m}^{-2}\|aC^*Ca\|_s)^{-\frac{1}{2}} = \|C^*C\|_r^{-\frac{1}{2}}
\end{align} implicitly using \cref{lem:SupportConditions}.

\noindent
\underline{Case 4: $ t\leq p\le q\,,\, t\le q$:} This case is analogous to Case 2 and does not require switching of the supremum and infimum appearing. We have
\begin{align}
    \|X\|_{(q,t)} &= \sup_{a,b\geq0}\|a\|_{2m}^{-1}\|b\|_{2m}^{-1}\|(a\otimes\1)X(b\otimes\1)\|_t \\ 
    &= \sup_{a,b\geq0}\inf_{\underset{}{c,d\geq0}}\|a\|_{2m}^{-1}\|b\|_{2m}^{-1}\|c\|_{2s}\|d\|_{2s}\|(\underbrace{c^{-1}a}_{A}\otimes\1)X(\underbrace{bd^{-1}}_{B}\otimes\1)\|_{(p,t)} \\ 
    &= \sup_{A,B}\inf_{\underset{}{c,d\geq0}}\|cAA^*c\|_{m}^{-\frac{1}{2}}\|dB^*Bd\|_{m}^{-\frac{1}{2}}\|c\|_{2s}\|d\|_{2s}\|(A\otimes\1)X(B\otimes\1)\|_{(p,t)} \\ 
    &= \sup_{A,B}\|AA^*\|_{r}^{-\frac{1}{2}}\|B^*B\|_{r}^{-\frac{1}{2}}\|(A\otimes\1)X(B\otimes\1)\|_{(p,t)} \\ 
    &= \sup_{A,B\geq0}\|A\|_{2r}^{-1}\|B\|_{2r}^{-1}\|(A\otimes\1)X(B\otimes\1)\|_{(p,t)},
\end{align} where in the second to last line we used \cref{prop:SpVariationalFormulas} (\cite[Lemma 3.1 (i)]{Fawzi.2026} with $\cX=\mathbb{C}$) and $\frac{1}{s}=\frac{1}{m}-\frac{1}{r}$ via
\begin{align}
    \inf_{c\geq 0}\|cAA^*c\|_{m}^{-\frac{1}{2}}\|c\|_{2s} &= (\sup_{c\geq0}\|c\|_{2s}^{-2}\|cAA^*c\|_{m})^{-\frac{1}{2}} = \|AA^*\|_r^{-\frac{1}{2}}
\end{align} and in the last the polar decomposition $A=U|A|, B=|B^*|V$, local unitary invariance of the $\|\cdot\|_{(p,t)}$-quasi-norm and rewriting $|A|\to A\geq0\,, |B^*|\to B\geq0$.

\medskip

\noindent
\underline{Case 5: $q\geq t\geq p$:} This case is analogous to Case 1 and involves two suprema which need not even be switched:
\begin{align}
    \|X\|_{(q,t)}&=\sup_{a,b\geq0}\|a\|_{2m}^{-1}\|b\|_{2m}^{-1}\|(a\otimes\1)X(b\otimes\1)\|_{t} \\
    &= \sup_{a,b\geq0}\sup_{c,d\geq0}\|a\|_{2m}^{-1}\|b\|_{2m}^{-1}\|c\|_{2s}^{-1}\|d\|_{2s}^{-1}\|(\underbrace{ca}_A\otimes\1)X(\underbrace{bd}_B\otimes\1)\|_{(p,t)} \\
    &= \sup_{A,B}\sup_{c,d\geq0}\|c^{-1}AA^*c^{-1}\|_{m}^{-\frac{1}{2}}\|d^{-1}B^*Bd^{-1}\|_{m}^{-\frac{1}{2}}\|c\|_{2s}^{-1}\|d\|_{2s}^{-1}\|(A\otimes\1)X(B\otimes\1)\|_{(p,t)} \\ &= \sup_{A,B}\|AA^*\|_{r}^{-\frac{1}{2}}\|B^*B\|_{r}^{-\frac{1}{2}}\|(A\otimes\1)X(B\otimes\1)\|_{(p,t)} \\ 
    &=\sup_{A,B\geq0}\|A\|_{2r}^{-1}\|B\|_{2r}^{-1}\|(A\otimes\1)X(B\otimes\1)\|_{(p,t)}
\end{align} where the second to last equality follows from 
\begin{align}
    \sup_{c\geq0}\|c\|_{2s}^{-1}\|c^{-1}AA^*c^{-1}\|_{m}^{-\frac{1}{2}} = (\inf_{\underset{}{c\geq0}}\|c\|_{2s}^2\|c^{-1}AA^*c^{-1}\|_{m})^{-\frac{1}{2}} = \|AA^*\|_r^{-\frac{1}{2}},
\end{align} by \cref{prop:SpVariationalFormulas} with $\frac{1}{r}-\frac{1}{m}=\frac{1}{s}\geq0$. Here we implicitly used \cref{lem:SupportConditions}. In the last equality we used the polar decompositions $A=U|A|, B=|B^*|V$, local unitary invariance of the $\|\cdot\|_{(p,t)}$-quasi-norm, and the fact that $\|AA^*\|_r=\|A^*A\|_r=\|A\|^2_{2r}$, analogous for $B$, to reduce to $A,B\geq0$.
This completes the last case and thus the whole proof.

\underline{Case 6: $p\ge q\geq t$:} This case is somewhat analogous to Case 3 and requires a swapping of the appearing infimum and supremum. In order to deal with this we first establish the following claim: the map
\begin{align}
    (C,D)\mapsto \|(C^{-\frac{1}{2}}\otimes\1)X(D^{-1/2}\otimes\1)\|_{(p,t)}
\end{align} is jointly convex on $C,D\geq0$ for $p\geq t$. To do this we resort to the result of Case 5: let $P:=\frac{p}{1+p}\leq t\leq p$, then $\frac{1}{P}-\frac{1}{p}=1$ and $|\frac{1}{P}-\frac{1}{t}|=1+\frac{1}{p}-\frac{1}{t}\leq 1$.
Due to Case 5 we know that for this $P$ it holds that
\begin{align}
    &\|(C^{-\frac{1}{2}}\otimes\1)X(D^{-1/2}\otimes\1)\|_{(p,t)}\\
    &\qquad= \sup_{a,b\geq0}\|a\|_2^{-1}\|b\|_2^{-1}\|(\underbrace{aC^{-\frac{1}{2}}}_{A}\otimes\1)X(\underbrace{D^{-1/2}b}_{B}\otimes\1)\|_{(P,t)} \\
    &\qquad=\sup_{A,B}\|ACA^*\|^{-\frac{1}{2}}_1\|B^*DB\|_1^{-\frac{1}{2}}\|(A\otimes\1)X(B\otimes\1)\|_{(P,t)} \\
    &\qquad= \sup_{\tilde{A},\tilde{B}\geq0}\|\tilde{A}^\frac{1}{2}C\tilde{A}^\frac{1}{2}\|^{-\frac{1}{2}}_1\|\tilde{B}^\frac{1}{2}D\tilde{B}^\frac{1}{2}\|_1^{-\frac{1}{2}}\|(\tilde{A}^{\frac{1}{2}}\otimes\1)X(\tilde{B}^\frac{1}{2}\otimes\1)\|_{(P,t)} \\
    &\qquad= \sup_{\tilde{A},\tilde{B}\geq0}\left\{-\frac{1}{2}\|\tilde{A}^\frac{1}{2}C\tilde{A}^\frac{1}{2}\|_1-\frac{1}{2}\|\tilde{A}^\frac{1}{2}C\tilde{A}^\frac{1}{2}\|_1+2\|(\tilde{A}^{\frac{1}{2}}\otimes\1)X(\tilde{B}^\frac{1}{2}\otimes\1)\|_{(P,t)}^\frac{1}{2} \right\}
\end{align} where the last line follows from \cref{lem:Sup.Prod.to.Sum} with $r_1=1, r_2=-\frac{1}{4}$. Now observe that
$C\mapsto -\|\tilde{A}^\frac{1}{2}C\tilde{A}^\frac{1}{2}\|_1= -\Tr[\tilde{A}^\frac{1}{2}C\tilde{A}^\frac{1}{2}]$ is linear and hence in particular convex for any $\tilde{A}\geq0$ and the same holds for $D$. The claim now follows since the sum of a function convex in $C$ and one in $D$ is jointly convex in $(C,D)$ and this joint convexity is preserved by the supremum.

Without loss of generality, assume $p\neq t$, and thus $s<\infty$, since else we are done by \cref{lem:inverseDef1}.
Having established this we may now proceed as in the other case by applying \cref{equ:Def1.2} and \cref{equ:Def1.1} to get
\begin{align}
    &\|X\|_{(q,t)} \\
    &\quad = \sup_{a,b\geq0}\|a\|_{2m}^{-1}\|b\|_{2m}^{-1}\|(a\otimes\1)X(b\otimes\1)\|_t \\ 
    &\quad= \sup_{a,b\geq0}\inf_{\underset{}{c,d\geq0}}\|a\|_{2m}^{-1}\|b\|_{2m}^{-1}\|c\|_{2s}\|d\|_{2s}\|(\underbrace{c^{-1}a}_{C^{-1}}\otimes\1)X(\underbrace{bd^{-1}}_{D^{-1}}\otimes\1)\|_{(p,t)} \\
    &\quad = \sup_{a,b\geq0}\inf_{\underset{}{C,D}}\|a\|_{2m}^{-1}\|b\|_{2m}^{-1}\|aCC^*a\|^\frac{1}{2}_{s}\|bD^*Db\|^\frac{1}{2}_{s}\|(\underbrace{c^{-1}a}_{C^{-1}}\otimes\1)X(D^{-1}\otimes\1)\|_{(p,t)} \\
    &\quad = \sup_{a,b\geq0}\inf_{\underset{}{\tilde{C},\tilde{D}\geq0}}\|a\|_{2m}^{-1}\|b\|_{2m}^{-1}\|a\tilde{C}a\|^\frac{1}{2}_{s}\|b\tilde{D}b\|^\frac{1}{2}_{s}\|(\tilde{C}^{-\frac{1}{2}}\otimes\1)X(\tilde{D}^{-\frac{1}{2}}\otimes\1)\|_{(p,t)} \\ 
    &\quad = \left\{\sup_{\underset{\|a\|_1,\|b\|_1\leq1}{a,b\geq0}}\!\frac{1}{s+1}\inf_{\underset{}{\tilde{C},\tilde{D}\geq0}}\frac{1}{2}\|a^\frac{1}{2m}\tilde{C}a^\frac{1}{2m}\|^s_{s}+\frac{1}{2}\|b^\frac{1}{2m}\tilde{D}b^\frac{1}{2m}\|^s_{s}+s\|(\tilde{C}^{-\frac{1}{2}}\otimes\1)X(\tilde{D}^{-\frac{1}{2}}\otimes\1)\|_{(p,t)} \right\}^\frac{s+1}{s}
\end{align} 
where the last equality comes from \cref{lem:Inf.Prod.to.Sum} with $r_1=s<\infty,\, r_2=\frac{1}{2}$. As in the previous proofs, the change of variables calculations are justified by \cref{lem:SupportConditions}.
By the previous claim, the third summand is jointly convex in $\tilde{C},\tilde{D}$. By \cref{lem:continuity} $a\mapsto\|a^\frac{1}{2m}\tilde{C}a^\frac{1}{2m}\|^s_{s}$ is continuous and concave.
Additionally by \cite[Lemma B.1 3.]{Rubboli.2025} it is convex in $\tilde{C}$ as $1\leq s$. Thus the whole sum is jointly convex in $\tilde{C},\tilde{D}$ and jointly concave and continuous in $a,b$ and we may apply Sion's minimax theorem, yielding
\begin{align}
  \|X\|_{(q,t)} &= \inf_{\underset{}{\tilde{C},\tilde{D}\geq0}}\sup_{a,b\geq0}\|a\|_{2m}^{-1}\|b\|_{2m}^{-1}\|a\tilde{C}a\|^\frac{1}{2}_{s}\|b\tilde{D}b\|^\frac{1}{2}_{s}\|(\tilde{C}^{-\frac{1}{2}}\otimes\1)X(\tilde{D}^{-\frac{1}{2}}\otimes\1)\|_{(p,t)} \\ 
  &= \inf_{\underset{}{\tilde{C},\tilde{D}\geq0}}\|\tilde{C}\|^\frac{1}{2}_{r}\|\tilde{D}\|^\frac{1}{2}_{r}\|(\tilde{C}^{-\frac{1}{2}}\otimes\1)X(\tilde{D}^{-\frac{1}{2}}\otimes\1)\|_{(p,t)} \\ 
  &= \inf_{\underset{}{\tilde{C},\tilde{D}\geq0}}\|\tilde{C}\|_{2r}\|\tilde{D}\|_{2r}\|(\tilde{C}^{-1}\otimes\1)X(\tilde{D}^{-1}\otimes\1)\|_{(p,t)},
\end{align} where the second equality follows from $\frac{1}{s}-\frac{1}{m}=\frac{1}{r}$ and \cite[Lemma 3.1 (ii)]{Fawzi.2026} with $\cX=\mathbb{C}$ and the last equality is a renaming of $\tilde{C}\to \tilde{C}^2, \tilde{D}\to \tilde{D}^2$.

\medskip

\end{proof}

\subsection{Quasi-Norm Property}\label{sec:QuasiNorms.Proof}

 In the following we show that the 2-indexed functionals $\|\cdot\|_{(q,p)}$ we defined in \cref{def:Definition1} are $\kappa$-normable quasi-norms for $\kappa=\min\{q,p,1\}$ (cf.~\Cref{definekappanormable}). 
\begin{theorem*}[\cref{thm:QuasiNomrms}]
Let $0<q,p$ with $\big|\frac{1}{q}-\frac{1}{p}\big|\leq1$ and set $\kappa:=\min\{q,p,1\}$.
Then the functionals $\|\cdot\|_{(q,p)}$ defined in \cref{def:Definition1} are $\kappa$-normable quasi-norms on $\cB(\cH_1\otimes\cH_2)$. 
\end{theorem*}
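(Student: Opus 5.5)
Homogeneity is immediate from the definitions: rescaling $Z\mapsto\lambda Z$ in \eqref{equ:Def1.1}, or pulling $\lambda$ out of $\|a_1Xb_1\|_p$ in \eqref{equ:Def1.2}, gives $\|\lambda X\|_{(q,p)}=|\lambda|\|X\|_{(q,p)}$. Positive definiteness follows by comparison with the Schatten $p$-quasi-norm. For $q\le p$, Hölder gives $\|X\|_p\le\|a_1\|_\infty\|Z\|_p\|b_1\|_\infty\le\|a\|_{2r}\|b\|_{2r}\|Z\|_p$. Hence $\|X\|_{(q,p)}\ge\|X\|_p$, and this vanishes only if $X=0$. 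For $q\ge p$, the choice $a=b=\1$ in \eqref{equ:Def1.2} gives $\|X\|_{(q,p)}\ge d_1^{-1/r}\|X\|_p$, which again vanishes only if $X=0$. Finiteness in both cases is clear in finite dimensions.

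The substance is $\kappa$-subadditivity. In the regime $q\ge p$, $\kappa=p$. Fix $a,b$ and use $p$-subadditivity (or the triangle inequality if $p\ge1$) of $\|\cdot\|_p$ to get
\begin{align}
\|a\|_{2r}^{-\kappa}\|b\|_{2r}^{-\kappa}\|a_1(X+Y)b_1\|_p^\kappa\le \|a\|_{2r}^{-\kappa}\|b\|_{2r}^{-\kappa}\big(\|a_1Xb_1\|_p^\kappa+\|a_1Yb_1\|_p^\kappa\big)\le\|X\|_{(q,p)}^\kappa+\|Y\|_{(q,p)}^\kappa.
\end{align}
Taking the supremum over $a,b$ finishes this case. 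The only point to watch is that the supremum defining $\|X+Y\|$ may range over $a,b$ without the support restrictions of \cref{lem:SupportConditions}, and the unrestricted form of \eqref{equ:Def1.2} allows exactly that.

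The regime $q<p$ is the main obstacle, with $\kappa=q$. Take near-optimal factorizations $X=a_1Zb_1$ and $Y=c_1Wd_1$, normalized so that $\|a\|_{2r}=\|b\|_{2r}=\|X\|_{(q,p)}^{q/(2r)}$ and $\|Z\|_p=\|X\|_{(q,p)}^{q/p}$, and similarly for $Y$; this is possible by the scaling freedom and $\tfrac{1}{q}=\tfrac1r+\tfrac1p$. The plan is then the standard block-matrix trick. Write
\begin{align}
X+Y=\begin{pmatrix}a & c\end{pmatrix}_1\begin{pmatrix}Z&0\\0&W\end{pmatrix}\begin{pmatrix}b\\ d\end{pmatrix}_1,
\end{align}
i.e. $X+Y=\alpha_1\,\mathcal{Z}\,\beta_1$, where $\alpha:\cH_1\oplus\cH_1\to\cH_1$ is the row operator and $\beta$ the column operator. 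Then $|\alpha^*|^2=aa^*+cc^*$, so
\begin{align}
\|\alpha\|_{2r}^{2r}=\|aa^*+cc^*\|_r^r.
\end{align}
For $r\ge1$ (guaranteed by compatibility) this is at most $\big(\|aa^*\|_r+\|cc^*\|_r\big)^r$. To get the sharp $\kappa$-subadditivity, I would instead invoke \cref{thm:qt.to.pt}/Case 1-type enlargement: the operator $\alpha_1\mathcal Z\beta_1$ lives on $(\cH_1\oplus\cH_1)\otimes\cH_2$ on the middle, and I use local isometric invariance (\cref{prop:LocalIsomInv}) to replace $\alpha$ by its positive part $|\alpha^*|=(aa^*+cc^*)^{1/2}$ times an isometry absorbed into $\mathcal Z$. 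Then the block diagonal $\mathcal Z$ has $\|\mathcal Z\|_p^p=\|Z\|_p^p+\|W\|_p^p$.

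Combining with Hölder-type bookkeeping I would aim to show
\begin{align}
\|X+Y\|_{(q,p)}\le \big(\|a\|_{2r}^{2r}+\|c\|_{2r}^{2r}\big)^{\frac1{2r}}\big(\|b\|_{2r}^{2r}+\|d\|_{2r}^{2r}\big)^{\frac1{2r}}\big(\|Z\|_p^p+\|W\|_p^p\big)^{\frac1p}.
\end{align}
Inserting the normalizations, the right-hand side equals $(\|X\|^q+\|Y\|^q)^{1/r+1/p}=(\|X\|^q+\|Y\|^q)^{1/q}$. This is exactly $q$-subadditivity, which in turn implies $\kappa$-subadditivity for $\kappa=\min\{q,1\}\le q$; when $q\ge1$ the same computation gives the triangle inequality. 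The delicate step is the inequality $\|aa^*+cc^*\|_r^r\le\|a\|_{2r}^{2r}+\|c\|_{2r}^{2r}$. This is true for $r\ge1$ only in the reverse direction in general (superadditivity of $\Tr[t^r]$ on positive operators), so I expect to instead place the direct sum on the $\cH_1$ side: apply \eqref{equ:Def1.1} directly on $\cH_1\oplus\cH_1$ using $\mathrm{diag}(a,c)$, whose $2r$-norm is exactly $(\|a\|_{2r}^{2r}+\|c\|_{2r}^{2r})^{1/2r}$. I would then descend to $\cH_1$ via the compression $V^*(\cdot)V$ with $V=(\1,\1)^T$, arguing that compressing the first factor by a contraction does not increase $\|\cdot\|_{(q,p)}$ (absorbing $V$ into $\mathcal Z$ with norm at most one). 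Establishing this compression monotonicity cleanly is where I expect most of the work.
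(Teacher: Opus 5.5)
Your treatment of homogeneity, definiteness and the case $q\ge p$ is fine and is the paper's argument. The gap is the case $q<p$. The problem is not just that the compression step still needs a clean proof: neither of your two mechanisms can deliver the bound you aim for. Write $x=\|X\|_{(q,p)}$ and $y=\|Y\|_{(q,p)}$.

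\emph{Row factorization.} You correctly note that $\|(a\;c)\|_{2r}^{2r}=\Tr[(aa^*+cc^*)^r]\ge\|a\|_{2r}^{2r}+\|c\|_{2r}^{2r}$ for $r\ge1$. Falling back on the triangle inequality in $\cS_r$ and optimizing the scalings yields only $\|X+Y\|_{(q,p)}\le\|(x,y)\|_{\ell_{p/(1+p)}}$. This is weaker than $\|(x,y)\|_{\ell_q}$ unless $r=1$.

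\emph{Compression.} This does not repair the loss. The map $V=(\1,\1)^T$ has $\|V\|_\infty=\sqrt2$, so it is not a contraction. With the isometry $W=V/\sqrt2$ you have $X+Y=2\,W_1^*(X\oplus Y)W_1$, hence only $\|X+Y\|_{(q,p)}\le 2(x^q+y^q)^{1/q}$. Using the isometries $(\sqrt{\theta}\,\1,\sqrt{1-\theta}\,\1)^T$ on $\theta^{-1}X\oplus(1-\theta)^{-1}Y$ instead gives at best $\|(x,y)\|_{\ell_{q/(1+q)}}$. Nor can $V$ be ``absorbed into $\mathcal Z$'': $V^*$ does not commute past $\mathrm{diag}(a,c)$, and absorbing it into the outer factor simply returns the row operator $(a\;c)$ you started with.

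A quick test shows that an essential ingredient is missing. Apart from the endpoint $r=1$ (where $\|aa^*+cc^*\|_1=\|a\|_2^2+\|c\|_2^2$ makes the row trick exact), nothing in your argument uses $q\le1$. Yet the inequality you are trying to prove, $\|X+Y\|_{(q,p)}\le(\|a\|_{2r}^{2r}+\|c\|_{2r}^{2r})^{\frac{1}{2r}}(\|b\|_{2r}^{2r}+\|d\|_{2r}^{2r})^{\frac{1}{2r}}(\|Z\|_p^p+\|W\|_p^p)^{\frac1p}$, is false for $q>1$. Take $Y=X$ and $(c,d,W)=(a,b,Z)$ near-optimal: the left side is $2\|X\|_{(q,p)}$, while the right side is close to $2^{1/q}\|X\|_{(q,p)}$. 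So your remark that for $q\ge1$ the same computation gives the triangle inequality is also incorrect; it would give a false statement.

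A correct proof must use $q\le1$ in an essential way. The paper does this through \cref{lem:technical.sup.lemma}: for $q\le p$ and $q\le1$ it converts the infimum into a supremum, $\|X\|_{(q,p)}=\sup_{A,B}\|AA^*\|_{(\infty,r)}^{-1/2}\|B^*B\|_{(\infty,r)}^{-1/2}\|AXB\|_q$, with $A,B$ ranging over operators on all of $\cH_1\otimes\cH_2$. This is obtained by factoring $\|\cdot\|_p$ through $\|\cdot\|_q$ via \cref{prop:SpVariationalFormulas}, then exchanging the infimum over $a,b$ with the supremum over the new weights by Sion's theorem (\cref{thm:Sion}). The exchange needs joint concavity of $(\tilde C,\tilde D)\mapsto\|\tilde C^{1/2}X\tilde D^{1/2}\|_q^q$, which is exactly where $q\le1$ enters. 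Once $\|\cdot\|_{(q,p)}$ is a supremum of $\|AXB\|_q$ against $X$-independent weights, $q$-subadditivity follows verbatim as in your $q\ge p$ case.
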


\noindent In order to prove \cref{thm:QuasiNomrms}, we first establish the following factorization which relates the $\|\cdot\|_{(q,p)}$-quasi-norm through the $q$-quasi-norm with a supremum, whenever $q\leq p$.

\begin{lemma}\label{lem:technical.sup.lemma}
Let $0<q\leq p$ and $q\leq 1$, such that $\frac{1}{r}=\frac{1}{q}-\frac{1}{p}\leq 1$, then it holds for $X\in\cB(\cH_1\otimes\cH_2)$ that
\begin{align}
    \|X\|_{(q,p)}=\sup_{A,B}\|AA^*\|^{-\frac{1}{2}}_{(\infty,r)}\|B^*B\|^{-\frac{1}{2}}_{(\infty,r)}\|AXB\|_{q}
\end{align}
\end{lemma}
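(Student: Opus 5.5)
The plan is to prove the two inequalities separately. The inequality ``$\sup\le\|X\|_{(q,p)}$'' should follow from the generalized Hölder inequality. The reverse inequality needs either an explicit near-optimal choice of $A,B$ or a minimax argument of the type used in \cref{lem:inverseDef1}. Throughout I use $\frac1q=\frac1r+\frac1p$ and $r\ge1$, so the pair $(\infty,r)$ is compatible and $\|\cdot\|_{(\infty,r)}$ is defined by the supremum formula \eqref{equ:Def1.2} with Hölder gap $\frac1r-\frac1\infty=\frac1r$. For positive arguments, \cref{lem:Positive.Symmetrie} gives $\|Y\|_{(\infty,r)}=\sup_{c\ge0}\|c\|_{2r}^{-2}\|c_1Yc_1\|_r$.

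\emph{Upper bound.} Fix a factorization $X=a_1Zb_1$ with $a,b\ge0$, and fix $A,B$. Then $AXB=(Aa_1)Z(b_1B)$, and the three-term Hölder inequality (\cref{lem:GeneralizedHölder} applied twice with $\frac1q=\frac1{2r}+\frac1p+\frac1{2r}$) gives $\|AXB\|_q\le\|Aa_1\|_{2r}\|Z\|_p\|b_1B\|_{2r}$. Next I would write $\|Aa_1\|_{2r}^2=\|a_1(\cdot)a_1\|_r$ of the relevant positive square of $A$, and $\|b_1B\|_{2r}^2$ similarly. By the supremum formula for $(\infty,r)$ just recalled, $\|c_1Yc_1\|_r\le\|c\|_{2r}^2\|Y\|_{(\infty,r)}$. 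This bounds each factor by $\|a\|_{2r}\|AA^*\|^{1/2}_{(\infty,r)}$, respectively $\|b\|_{2r}\|B^*B\|^{1/2}_{(\infty,r)}$. Here I would fix the pairing of $A$ versus $A^*$ dictated by Hölder, using the polar decomposition and \cref{prop:LocalIsomInv} to match the form in the statement. Taking the infimum over factorizations and the supremum over $A,B$ gives ``$\le$''.

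\emph{Lower bound, first attempt (explicit optimizer).} In finite dimensions, and by \cref{lem:SupportConditions}, the infimum in \eqref{equ:Def1.1} is attained at some $a,b$ with $\Pi_a=L_X$, $\Pi_b=R_X$. Put $Z=a_1^{-1}Xb_1^{-1}$ and let $Z=U|Z|$ be its polar decomposition. I would choose
\[
A=|Z^*|^{p/(2r)}(a^{-1})_1,\qquad B=(b^{-1})_1|Z|^{p/(2r)},
\]
so that $AXB=|Z^*|^{p/(2r)}Z|Z|^{p/(2r)}$, which gives $\|AXB\|_q=\|Z\|_p^{p/q}$ because this is exactly the Hölder equality case. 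It then remains to show $\|a_1^{-1}|Z^*|^{p/r}a_1^{-1}\|_{(\infty,r)}\le \|a\|_{2r}^2\|Z\|_p^{p/r}$, and the analogous bound for $b$. For this I would use the first-order optimality condition of $a$ in the joint problem (perturb $a\mapsto a+\epsilon h$), which should link $\tr_2$ of suitable powers of $|Z^*|$ to powers of $a$. The supremum over $c$ in the $(\infty,r)$ formula would then be controlled by Hölder once more.

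\emph{Lower bound, fallback (minimax), and main obstacle.} The hard step is the bound on $\|a_1^{-1}|Z^*|^{p/r}a_1^{-1}\|_{(\infty,r)}$ just described. If the stationarity computation does not close, I would instead mimic the proof of \eqref{equ:InvDef1.2}. Write the right-hand side as a sup over $A,B$ and an inf over $c,d$ coming from the $(\infty,r)$-formula, after the same change of variables $\tilde C,\tilde D$. Convert products to sums via \cref{lem:Sup.Prod.to.Sum}. Check joint concavity and convexity using \cite[Theorem 1.1]{Zhang.2020} and \cite[Lemma B.1]{Rubboli.2025}, where the hypotheses $q\le1$ and $\frac1r\le1$ should enter. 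Then swap the sup and inf by Sion's theorem. After the swap, the inner supremum over $A,B$ at fixed $c,d$ should collapse, by \cref{prop:SpVariationalFormulas} with gap $\frac1q-\frac1p=\frac1r$, to the factorization expression \eqref{equ:Def1.1}. Verifying the concavity in $A$ of the $\|AXB\|_q^q$-type term is where the assumption $q\le1$ is needed, and I expect this to be the delicate point.
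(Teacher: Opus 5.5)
As you describe it, the step where you ``fix the pairing of $A$ versus $A^*$'' using the polar decomposition and \cref{prop:LocalIsomInv} does not work. Your Hölder chain is correct and gives $\|Aa_1\|_{2r}^2=\|a_1A^*Aa_1\|_r\le\|a\|_{2r}^2\|A^*A\|_{(\infty,r)}$ and $\|b_1B\|_{2r}^2\le\|b\|_{2r}^2\|BB^*\|_{(\infty,r)}$. But the polar unitary of $A$ acts on all of $\cH_1\otimes\cH_2$, while \cref{prop:LocalIsomInv} only handles local unitaries, and $\|\cdot\|_{(\infty,r)}$ is not invariant under global unitary conjugation.

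In fact, for general $A,B$ the displayed identity with $\|AA^*\|_{(\infty,r)}$ and $\|B^*B\|_{(\infty,r)}$ is false. On $\mathbb{C}^d\otimes\mathbb{C}^d$ with $d\ge2$ and $r<\infty$, take $X=|00\rangle\langle00|$, so $\|X\|_{(q,p)}=1$ by \cref{tensorproduct}. Let $|\Omega\rangle$ be maximally entangled, $A=|\Omega\rangle\langle00|$ and $B=A^*$. Then $\|AXB\|_q=\|\,|\Omega\rangle\langle\Omega|\,\|_q=1$. Meanwhile $AA^*=B^*B=|\Omega\rangle\langle\Omega|$, whose $(\infty,r)$-norm is $\sup_{c\ge0}\Tr[c^2]/(d\|c^2\|_r)=d^{-1/r}$ by \cref{lem:Positive.Symmetrie}. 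So the right-hand side is at least $d^{1/r}>1$. With your pairing, $A^*A=BB^*=|00\rangle\langle00|$ has norm $1$ and the example is harmless.

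What the polar decomposition legitimately gives, using only unitary invariance of $\|\cdot\|_q$, is that replacing $A\to|A|$ and $B\to|B^*|$ leaves $\|AXB\|_q$, $A^*A$ and $BB^*$ unchanged. So the correct statement uses the weights $\|A^*A\|_{(\infty,r)}$ and $\|BB^*\|_{(\infty,r)}$. Equivalently, it is the displayed identity with $A,B\ge0$, which is the paper's standing convention for optimization variables. Your Hölder argument proves the ``$\le$'' half of that correct version.

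The paper's proof has the same slip. Its penultimate line contains $\|a_1^{1/2r}C^*Ca_1^{1/2r}\|_r$ and $\|b_1^{1/2r}DD^*b_1^{1/2r}\|_r$, so the inner infimum yields $\|C^*C\|_{(\infty,r)}$ and $\|DD^*\|_{(\infty,r)}$. Its final line is therefore only right for $C,D\ge0$; this is harmless for \cref{thm:QuasiNomrms}.

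For ``$\ge$'', your explicit-optimizer route is not yet a proof. The key bound is left open, and as written it is not scale-invariant: it should read $\le\|a\|_{2r}^{-2}\|Z\|_p^{p/r}$, or you must normalize $\|a\|_{2r}=\|b\|_{2r}=1$.

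Your minimax fallback is the paper's argument run from the other side. The paper proceeds as follows:
\begin{itemize}
\item it writes $\|X\|_{(q,p)}=\inf_{a,b}\sup_{c,d}$ using \eqref{equ:Def1.1} and \cref{prop:SpVariationalFormulas};
\item it substitutes $C=ca_1^{-1}$, $D=b_1^{-1}d$ and passes to $\tilde C=C^*C$, $\tilde D=DD^*$;
\item it turns the product into a sum with \cref{lem:Sup.Prod.to.Sum};
\item it swaps inf and sup by Sion, using \cref{lem:continuity} in $(a,b)$ and the joint concavity of $(\tilde C,\tilde D)\mapsto\|\tilde C^{1/2}X\tilde D^{1/2}\|_q^q$ for $q\le1$ (Zhang, Theorem~1.1(1));
\item it evaluates the inner infimum by \cref{lem:Positive.Symmetrie}.
\end{itemize}
The concavity you flag as delicate is therefore needed in $\tilde A=A^*A$, not in $A$. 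For general $A$ the map $A\mapsto\|AXB\|_q^q$ is not concave (compare $A$, $-A$ and $0$), and passing to $\tilde A$ again requires the $A^*A$ pairing.

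With these corrections the plan works. Your Hölder bound is a clean, direct substitute for the weak-duality half of the paper's minimax.
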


\begin{proof}
We prove this by concatenating the suitable factorization formulas to first relate the $(q,p)$-quasi norm to the $(p,p)$-(quasi-)norm and then to the $(q,q)$-quasi-norm via \cref{prop:SpVariationalFormulas}. The result will then follow after an application of \cref{lem:Sup.Prod.to.Sum} and suitable semi-continuity and joint convexity/concavity arguments allowing an application of Sion's mininax theorem. More precisely, by Theorem \ref{thm:qt.to.pt},
\begin{align}
    \|X\|_{(q,p)} &=\inf_{\underset{}{a,b\geq0}}\|a\|_{2r}\|b\|_{2r}\|a_1^{-1}X_{12}b_1^{-1}\|_{p} \\ 
    &=\inf_{\underset{}{a,b\geq0}}\sup_{c,d\geq0}\|a\|_{2r}\|b\|_{2r}\|c\|_{2r}^{-1}\|d\|_{2r}^{-1}\|c_{12}a_1^{-1}X_{12}b_1^{-1}d_{12}\|_{q} \\ 
    &= \inf_{\underset{}{a,b\geq0, \|a\|_1,\|b\|_1\leq 1}}\sup_{\underset{\textup{}}{C,D}}\|a_1^\frac{1}{2r}C_{12}^*C_{12}a_1^\frac{1}{2r}\|_{r}^{-\frac{1}{2}}\|b_1^\frac{1}{2r}D_{12}D_{12}^*b_1^\frac{1}{2r}\|_{r}^{-\frac{1}{2}}\|CXD\|_{q} \\ 
    &= \inf_{\underset{}{a,b\geq0, \|a\|_1,\|b\|_1\leq 1}}\sup_{\underset{\textup{}}{\tilde{C},\tilde{D}\geq0}}\|a_1^\frac{1}{2r}\tilde{C}_{12}a_1^\frac{1}{2r}\|_{r}^{-\frac{1}{2}}\|b_1^\frac{1}{2r}\tilde{D}_{12}b_1^\frac{1}{2r}\|_{r}^{-\frac{1}{2}}\|\tilde{C}^\frac{1}{2}X\tilde{D}^\frac{1}{2}\|_{q} \\
    &= \!\Bigg(\inf_{\underset{a,b\geq0}{\|a\|_1,\|b\|_1\leq 1 }}\sup_{\underset{\textup{}}{\tilde{C},\tilde{D}\geq0}}\Big\{\!\!-\frac{p}{2r}\|a_1^\frac{1}{2r}\tilde{C}_{12}a_1^\frac{1}{2r}\|_{r}^{r}-\frac{p}{2r}\|b_1^\frac{1}{2r}\tilde{D}_{12}b_1^\frac{1}{2r}\|_{r}^{r} +\frac{p}{q
    }\|\tilde{C}^\frac{1}{2}X\tilde{D}^\frac{1}{2}\|^{q}_{q}\Big\} \Bigg)^\frac{1}{p}
\end{align}
where the last line follows by Lemma \ref{lem:Sup.Prod.to.Sum} 
As before,  $\|a_1^\frac{1}{2r}\tilde{C}_{12}a_1^\frac{1}{2r}\|_{r}^{r}$ is convex in $\tilde{C}$ and concave and continuous in $a$ due to \cref{lem:continuity} for any $\tilde{C}\geq0$. The same applies to the norm involving $b$ and $\tilde{D}$. Likewise 
\begin{align}
    \|\tilde{C}^\frac{1}{2}X\tilde{D}^\frac{1}{2}\|^q_{q} = \Tr[(\tilde{C}^\frac{1}{2}X\tilde{D}X^*\tilde{C}^\frac{1}{2})^{\frac{q}{2}}]
\end{align} is jointly concave in $(\tilde{C},\tilde{D})$ due to \cite[Theorem 1.1 (1)]{Zhang.2020} for $q\leq1$. As before these imply that Sion's minimax theorem can be applied. Thus we get
\begin{align}
    \|X\|_{(q,p)} &= \sup_{C,D}\inf_{\underset{}{a,b\geq0, \|a\|_1,\|b\|_1\leq 1}}\|a^\frac{1}{2r}_1C_{12}^*C_{12}a_1^\frac{1}{2r}\|_{r}^{-\frac{1}{2}}\|b_1^\frac{1}{2r}D_{12}D_{12}^*b_1^\frac{1}{2r}\|_{r}^{-\frac{1}{2}}\|CXD\|_{q} \\ 
    &= \sup_{C,D}\|CC^*\|^{-\frac{1}{2}}_{(\infty,r)}\|D^*D\|^{-\frac{1}{2}}_{(\infty,r)}\|CXD\|_{q},
\end{align} 
where the last equality follows from \cref{lem:Positive.Symmetrie}, or equivalently the factorization formulas for operator valued Schatten spaces, e.g.~\cite[Lemma 1.7]{Book.Pisier.1998}. 
\end{proof}
\noindent We are now ready to prove the quasi-norm property.
\begin{proof}[Proof of \cref{thm:QuasiNomrms}]
The positive homogeneity and positive definiteness follow more directly from the fact that the underlying Schatten norms posses these properties. Thus, it is easily checked that \eqref{equ:Def1.1} and \eqref{equ:Def1.2} preserve these.
The non-trivial part of this proof is proving the $\kappa$-normability for $\kappa<1$ since else we are in the well established normed setting \cite{Book.Pisier.1998, Beigi.2023, Devetak.2006}. Consider first the case $q\geq p$ so $\kappa=p<1$. Now by \eqref{equ:Def1.2} we have
\begin{align}
    \|X+Y\|^p_{(q,p)}&=\sup_{a,b\geq0}\|a\|^{-p}_{2r}\|b\|^{-p}_{2r}\|a_1(X+Y)_{12}b_1\|^p_{p} \\ &\leq \sup_{a,b\geq0}\|a\|^{-p}_{2r}\|b\|^{-p}_{2r}\big(\|a_1X_{12}b_1\|_p^p+\|a_1Y_{12}b_1\|^p_{p}\big) \\ 
    &\leq \sup_{a,b\geq0}\|a\|^{-p}_{2r}\|b\|^{-p}_{2r}\|a_1X_{12}b_1\|_p^p + \sup_{a,b\geq0}\|a\|^{-p}_{2r}\|b\|^{-p}_{2r}\|a_1Y_{12}b_1\|_p^p \\ 
    &= \|X\|^p_{(q,p)}+\|Y\|^p_{(q,p)},
\end{align} 
where the first inequality follows from the $p$-normability of the Schatten-$p$-quasi norm and the second one by the subadditivity of the supremum.

\medskip

\noindent We now consider the case $q\leq p$ so $\kappa=q<1$. 
To proceed we use the factorization formula established in \cref{lem:technical.sup.lemma} and the $q$-normability of the Schatten $q$-quasi-norm to get
\begin{align}
     \|X+Y\|^q_{(q,p)}&=\sup_{A,B}\|AA^*\|^{-\frac{q}{2}}_{(\infty,r)}\|B^*B\|^{-\frac{q}{2}}_{(\infty,r)}\|A(X+Y)B\|^q_{q} \\
     &\leq \sup_{A,B}\|AA^*\|^{-\frac{q}{2}}_{(\infty,r)}\|B^*B\|^{-\frac{q}{2}}_{(\infty,r)}\big(\|AXB\|_q^q+\|AYB\|_q^q\big) \\
     &\leq \sup_{A,B}\|AA^*\|^{-\frac{q}{2}}_{(\infty,r)}\|B^*B\|^{-\frac{q}{2}}_{(\infty,r)}\|AXB\|_q^q+ \sup_{A,B}\|AA^*\|^{-\frac{q}{2}}_{(\infty,r)}\|B^*B\|^{-\frac{q}{2}}_{(\infty,r)}\|AYB\|_q^q
     \\ &= \|X\|^q_{(q,p)}+ \|Y\|^q_{(q,p)}.
\end{align}
\end{proof}

\subsection{A 2-Indexed Reverse Hölder Inequality}

In the normed setting the 2-indexed Schatten norms satisfy, as a defining property, a duality relation, see e.g.~\cite[Proposition 4.3 ii)]{Beigi.2023}. The latter breaks down for strictly quasi-norms with indices $<1$. For Schatten-quasi-norms there exists, however, a reverse Hölder's duality which we here extend to two-indexed Schatten quasi-norms. We do this by factoring through the trace-norm. For positive operators this directly yields a reverse Hölder's inequality for 2-indexed Schatten-quasi-norms, generalizing \cite[Lemma 3.4, (3.13)]{Book.Tomamichel.2016}.

\begin{lemma}[Generalized reverse Hölder's inequality \cref{lem:GenRevHölder}]
Let $0< q,p\leq 1$ with $\big|\frac{1}{q}-\frac{1}{p}\big|\leq 1$ and let $q^\prime, p^\prime$ be their Hölder dual indices, that is $\frac{1}{q}+\frac{1}{q^\prime}=\frac{1}{p}+\frac{1}{p^\prime}=1$. Then it holds that
\begin{align}
    \|X\|_{(q,p)}=\inf_{\underset{\Pi_A=\Pi_{XX^*}, \Pi_B=\Pi_{X^*X}}{A,B\geq0}}\|A\|^\frac{1}{2}_{(-q^\prime,-p^\prime)}\|B\|^\frac{1}{2}_{(-q^\prime,-p^\prime)}\|A^{-\frac{1}{2}}XB^{-\frac{1}{2}}\|_1,
\end{align} 
For positive $X\geq 0$ this simplifies to 
\begin{align}\label{equ:RevHölder}
    \|X\|_{(q,p)}=\inf_{\underset{X=\Pi_YX\Pi_Y}{Y\geq0}}\Tr[Y^*X]\, \|Y^{-1}\|_{(-q^\prime,-p^\prime)}.
\end{align} where $X=\Pi_YX\Pi_Y$ means that $\ker(Y)\subset\ker(X)$.
\end{lemma}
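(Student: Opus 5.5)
The plan is to factor $\|\cdot\|_{(q,p)}$ globally through the trace norm, as a bipartite analogue of \cref{prop:SpVariationalFormulas}. First, the index bookkeeping. For $0<q,p\leq 1$ we have $-q'=\frac{q}{1-q}\in(0,\infty]$ and $-p'=\frac{p}{1-p}\in(0,\infty]$. Moreover
\begin{align}
\frac{1}{-q'}-\frac{1}{-p'}=\frac{1}{q}-\frac{1}{p},
\end{align}
so $(-q',-p')$ is a compatible pair with the same $r$ as $(q,p)$, and $\|\cdot\|_{(-q',-p')}$ is defined. We also have the key identity $\frac{1}{p}=1+\frac{1}{-p'}$, and likewise for $q$, which is exactly the Hölder relation needed to combine a trace-norm middle factor with two $2(-p')$-factors. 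When $q=p$ the claim is precisely \cref{prop:SpVariationalFormulas} with $\frac1s=\frac1p-1$, after writing $\|a\|_{2s}^2=\|a^2\|_{-p'}$.

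\textbf{Upper bound, case $q\le p$.} Here $-q'\le -p'$, so $\|A\|_{(-q',-p')}$ is an infimum. By \cref{lem:Positive.Symmetrie} I take $A=a_1Wa_1$ with $\|A\|_{(-q',-p')}\approx \|a\|_{2r}^2\|W\|_{-p'}$, and similarly $B=b_1W'b_1$. Using polar decompositions,
\begin{align}
A^{1/2}=a_1W^{1/2}V,\qquad B^{1/2}=V'W'^{1/2}b_1,
\end{align}
with $V,V'$ partial isometries. Then $A^{1/2}ZB^{1/2}=a_1\big(W^{1/2}VZV'W'^{1/2}\big)b_1$. The generalized Hölder inequality \cref{lem:GeneralizedHölder}, with $\frac1p=\frac{1}{2(-p')}+1+\frac{1}{2(-p')}$, bounds the middle factor in $\|\cdot\|_p$ by $\|W\|^{1/2}_{-p'}\|Z\|_1\|W'\|^{1/2}_{-p'}$. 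Plugging this into \eqref{equ:Def1.1} gives $\|X\|_{(q,p)}\le$ RHS, where I take $Z=A^{-1/2}XB^{-1/2}$ and the support conditions guarantee $X=A^{1/2}ZB^{1/2}$.

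\textbf{Upper bound, case $q\ge p$.} Now $\|\cdot\|_{(-q',-p')}$ is a supremum. For any local $a,b$, I apply Hölder as
\begin{align}
\|a_1A^{1/2}ZB^{1/2}b_1\|_p\le \|a_1Aa_1\|_{-p'}^{1/2}\,\|Z\|_1\,\|b_1Bb_1\|_{-p'}^{1/2}.
\end{align}
The definition \eqref{equ:Def1.2} then gives $\|a_1Aa_1\|_{-p'}\le\|a\|_{2r}^2\|A\|_{(-q',-p')}$, and similarly for $B$. Taking the supremum over $a,b$ yields the upper bound.

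\textbf{Lower bound, case $q\le p$.} This should be constructive. Take a near-optimal factorization $X=a_1Zb_1$ from \cref{lem:SupportConditions}. Factor $Z$ via \cref{prop:SpVariationalFormulas} as $Z=cWd$ with $\|Z\|_p\approx\|c\|_{2s}\|d\|_{2s}\|W\|_1$, where $\frac1s=\frac1p-1$. Then set
\begin{align}
A:=a_1cc^*a_1,\qquad B:=b_1d^*d\,b_1.
\end{align}
By \eqref{equ:Def1.1} for $(-q',-p')$ we get $\|A\|_{(-q',-p')}\le\|a\|_{2r}^2\|c\|_{2s}^2$. Moreover $A^{-1/2}XB^{-1/2}$ equals $W$ up to partial isometries, so its trace norm is at most $\|W\|_1$. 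Support projections are handled with Moore–Penrose inverses and a perturbation $c\to c+\epsilon\1$ in the limit.

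\textbf{Lower bound, case $q> p$ (main obstacle).} I expect this to be the hardest step, since there is no explicit factorization of $X$ to read $A,B$ off from. My first attempt is a minimax argument in the style of \cref{lem:inverseDef1}. I would write the RHS as $\inf_{A,B}\sup_{a,b}$ of $\|a\|_{2r}^{-2}\|a_1Aa_1\|_{-p'}$-type terms times $\|A^{-1/2}XB^{-1/2}\|_1$, and linearize the products into sums via \cref{lem:Inf.Prod.to.Sum} and \cref{lem:Sup.Prod.to.Sum}. Convexity in $(A,B)$ would come from \cite[Theorem 1.1]{Zhang.2020}, since $\|A^{-1/2}XB^{-1/2}\|_1$ is jointly convex. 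Concavity and continuity in $(a,b)$ would come from \cref{lem:continuity}, after which Sion's theorem swaps the infimum and supremum. For fixed $a,b$, the inner infimum over $A,B$ then reduces to the single-index identity \cref{prop:SpVariationalFormulas}, returning $\|a\|_{2r}^{-1}\|b\|_{2r}^{-1}\|a_1Xb_1\|_p$. If the needed convexity fails, the fallback is to use the Hölder equality conditions at the optimal $a,b$ of \eqref{equ:Def1.2} to guess $A\propto a_1^{-1}|X^*_{ab}|^{\,\cdot}a_1^{-1}$ and verify it directly.

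\textbf{Positive case.} For $X\ge0$, I would show that one may take $A=B$, in the same way as in \cref{lem:Positive.Symmetrie}. Writing $X=xx^*$,
\begin{align}
\|A^{-1/2}XB^{-1/2}\|_1\le\|A^{-1/2}x\|_2\,\|x^*B^{-1/2}\|_2=\Tr[A^{-1}X]^{1/2}\,\Tr[B^{-1}X]^{1/2}.
\end{align}
The right-hand side is a geometric mean, so the objective is bounded by its value at $A=B$ or at $B=A$, whichever is smaller. Setting $Y=A^{-1}$ then gives \eqref{equ:RevHölder}.
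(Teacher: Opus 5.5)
\textbf{Overall.} Your argument for the general identity follows the paper. For $q\le p$, your two inequalities are the paper's chain of equalities split in two: the constructive direction is its substitution $C=a_1c$, and the H\"older direction is the inverse substitution via \cref{lem:Positive.Symmetrie}. For $q>p$ you use the same Sion argument, and your H\"older estimate is just the trivial $\sup\inf\le\inf\sup$ half.

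\textbf{Two smaller repairs in the general case.} First, perturbing $c\to c+\epsilon\1$ enlarges $\Pi_A$ and so violates the constraint $\Pi_A=\Pi_{XX^*}$. Instead take $\Pi_a=L_X$ together with the SVD optimizer $c$ of \cref{prop:SpVariationalFormulas}; then $\Pi_A=\Pi_{XX^*}$ holds exactly. Second, the minimax needs convexity in $A$ of the linearized terms $\|a_1^{1/(2r)}Aa_1^{1/(2r)}\|_{-p'}^{-p'}$, not only of $\|A^{-1/2}XB^{-1/2}\|_1$. This holds only for $-p'\ge 1$, i.e.\ $p\ge\frac12$. The paper relies on the same fact (citing \cite[Lemma B.1]{Rubboli.2025} under $-p'\ge1$), so along this route the case $q>p$ is covered only for $p\ge\frac12$.

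\textbf{The genuine gap: the positive case.} Set $F(A):=\|A\|_{(-q',-p')}\Tr[A^{-1}X]$. Your Cauchy--Schwarz step only yields
\[
\|A\|^{\frac12}_{(-q',-p')}\|B\|^{\frac12}_{(-q',-p')}\|A^{-\frac12}XB^{-\frac12}\|_1\le\sqrt{F(A)F(B)},
\]
which is an \emph{upper} bound on the objective at $(A,B)$. Since $\sqrt{F(A)F(B)}\ge\min\{F(A),F(B)\}$, even your claim that the objective is bounded by the smaller of the two values does not follow. More importantly, an infimum needs the opposite statement: every pair $(A,B)$ must be beaten by some diagonal pair, i.e.\ $\inf_C F(C)$ must be at most the objective at $(A,B)$. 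Your inequality only reproduces the trivial direction, infimum over pairs $\le\inf_A F(A)$. This is the trick from the \emph{supremum} half of \cref{lem:Positive.Symmetrie}; the infimum half there needs a different argument.

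\textbf{Fix.} Do what the paper does and rerun the proof with symmetric optimizers. For $X\ge0$, \cref{lem:Positive.Symmetrie} lets you take $a=b$. Then $Z=a_1^{-1}Xa_1^{-1}\ge0$, so \cref{prop:SpVariationalFormulas} lets you take $c=d$. Your construction then directly gives $A=a_1c^2a_1=B$, which together with your H\"older direction proves the symmetric formula for $q\le p$. For $q>p$, run the minimax with a single $a$ and a single $\tilde C$.

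\textbf{Alternative fix.} Imitate the infimum half of \cref{lem:Positive.Symmetrie}. Use scale invariance to normalize $\|A\|_{(-q',-p')}=\|B\|_{(-q',-p')}=1$, then average to $C=\frac{A+B}{2}$. Joint convexity and the symmetry $G(A,B)=G(B,A)$ of $G(A,B)=\|A^{-1/2}XB^{-1/2}\|_1$ give $F(C)\le G(A,B)$. This needs the triangle inequality for $\|\cdot\|_{(-q',-p')}$, so it works only when $q,p\ge\frac12$.
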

\noindent Note that in the above lemma the Hölder dual indices $-q^\prime,-p^\prime\in (0,\infty]$ for $0<q,p\le 1$ and $-q^\prime,-p^\prime\in[1,\infty]$ for $\frac{1}{2}\leq q,p\leq 1$.

\begin{proof}[Proof of \cref{lem:GenRevHölder}]
We consider first the case $0<q\leq p\leq 1$ with $\frac{1}{r}:=\frac{1}{q}-\frac{1}{p}\leq 1$, then we have by Lemma \ref{lem:SupportConditions}
\begin{align}
 \|X\|_{(q,p)} &=\inf_{\underset{}{a,b\geq0}}\|a\|_{2r}\|b\|_{2r}\|a_1^{-1}X_{12}b_1^{-1}\|_p \\
 &=\inf_{\underset{}{a,b\geq0}}\inf_{c,d\geq0}\|a\|_{2r}\|b\|_{2r}\|c\|_{-2p^\prime}\|d\|_{-2p\prime}\|c_{12}^{-1}a_1^{-1}X_{12}b_1^{-1}d_{12}^{-1}\|_1 \\ 
 &= \inf_{a,b\geq0}\inf_{C,D}\|a\|_{2r}\|b\|_{2r}\|a_1^{-1}C_{12}C_{12}^*a_1^{-1}\|_{-p^\prime}^\frac{1}{2}\|b_1^{-1}D_{12}^*D_{12}b_1^{-1}\|_{-p^\prime}^\frac{1}{2}\|C^{-1}XD^{-1}\|_1 \\ 
 &=\inf_{C,D}\inf_{a,b\geq0}\|a\|_{2r}\|b\|_{2r}\|a_1^{-1}C_{12}C_{12}^*a_1^{-1}\|_{-p^\prime}^\frac{1}{2}\|b_1^{-1}D_{12}^*D_{12}b_1^{-1}\|_{-p^\prime}^\frac{1}{2}\|C^{-1}XD^{-1}\|_1 \\ 
 &=\inf_{C,D}\|CC^*\|_{(-q^\prime,-p^\prime)}^\frac{1}{2}\|D^*D\|_{(-q^\prime,-p^\prime)}^\frac{1}{2}\|C^{-1}XD^{-1}\|_1 \\ 
 &=\inf_{\tilde{C},B\geq0}\|\tilde{C}\|_{(-q^\prime,-p^\prime)}^\frac{1}{2}\|\tilde{D}\|_{(-q^\prime,-p^\prime)}^\frac{1}{2}\|\tilde{C}^{-\frac{1}{2}}X\tilde{D}^{-\frac{1}{2}}\|_1,
\end{align} 
where in the second line, we used the variational formulas of \cref{prop:SpVariationalFormulas}. Above, we suppressed writing the support conditions of \cref{lem:SupportConditions} on $a,b,c,d$ which are $\Pi_a=L_X, \Pi_b=R_X$, $\Pi_c=\Pi_{XX^*}\leq L_X\otimes\1$ is the projection onto the image of $X$ and $\Pi_d=\Pi_{X^*X}\leq R_X\otimes\1$ the projection onto the support of $X$. This makes $\tilde{C}:=CC^*, \tilde{D}:=D^*D$, with $C^{-1}=c^{-1}a_1^{-1}, D^{-1}=b_1^{-1}d^{-1}$ well defined and such that $\Pi_{\tilde{C}}=\Pi_{XX^*}$ and $\Pi_{\tilde{D}}=\Pi_{X^*X}$, i.e.  $X=\Pi_{\tilde{C}}X\Pi_{\tilde{D}}$ as claimed in the statement of the lemma. To get the second to last line we used that $\frac{1}{r}=\frac{1}{q}-\frac{1}{p}=\frac{1}{-q^\prime}-\frac{1}{-p^\prime}$ and $-q^\prime\leq -p^\prime$ together with \cref{lem:Positive.Symmetrie}. 

\medskip

\noindent The other case $0<p\leq q\leq 1$ with $\frac{1}{r}:=\frac{1}{p}-\frac{1}{q}\leq 1$ is slightly more involved and requires an application of Sion's minimax theorem. We have by \cref{def:Definition1}
\begin{align}
    \|X\|_{(q,p)}&=\sup_{a,b\geq0}\|a\|_{2r}^{-1}\|b\|_{2r}^{-1}\|a_1X_{12}b_1\|_p \\
    &= \sup_{a,b\geq0}\inf_{c,d\geq0}\|a\|_{2r}^{-1}\|b\|_{2r}^{-1}\|c\|_{-2p^\prime}\|d\|_{-2p^\prime}\|c_{12}^{-1}a_1X_{12}b_1d_{12}^{-1}\|_1 \\ 
    &= \sup_{a,b\geq0}\inf_{\tilde{C},\tilde{D}\geq0}\|a\|_{2r}^{-1}\|b\|_{2r}^{-1}\|a_1\tilde{C}_{12}a_1\|^\frac{1}{2}_{-p^\prime}\|b_1'\tilde{D}_{12}^*b_1\|^\frac{1}{2}_{-p^\prime}\|\tilde{C}^{-\frac{1}{2}}X\tilde{D}^{-\frac{1}{2}}\|_1 \\
    &= \left(\sup_{\underset{\|a\|_1,\|b\|_1= 1}{a,b\geq0}}\!\!\!\!\inf_{\tilde{C},\tilde{D}\geq0}\!\left\{\!\frac{p}{-2p^\prime}\|a_1^\frac{1}{2r}\tilde{C}_{12}a_1^\frac{1}{2r}\|_{-p^\prime}^{-p^\prime}\!+\!\frac{p}{-2p^\prime}\|b_1^\frac{1}{2r}\tilde{D}_{12}b_1^\frac{1}{2r}\|_{-p^\prime}^{-p^\prime}\!+\!p\|\tilde{C}^{-\frac{1}{2}}X\tilde{D}^{-\frac{1}{2}}\|_1\!\right\}\!\right)^\frac{1}{p},
\end{align}
where in the second line, we used once again the variational formulas of \cref{prop:SpVariationalFormulas}, and \cref{lem:Inf.Prod.to.Sum} in the last.
As above, by \cref{lem:SupportConditions}, we can assume that $\Pi_a=L_X, \Pi_b=R_X, \Pi_c=\Pi_{XX^*}\leq L_X\otimes\1, \Pi_d=\Pi_{X^*X}\leq R_X\otimes\1$, which makes $C:=a_1^{-1}c, D:=db_1^{-1}$ well defined and s.t. $\tilde{C}:=CC^*, \tilde{D}:=D^*D$ satisfy $\Pi_A=\Pi_{XX^*}, \Pi_B=\Pi_{X^*X}$. 
To apply Sion's minimax theorem we notice that due to \cite[Theorem 1.1 2)]{Zhang.2020} the last summand $\|\tilde{C}^{-\frac{1}{2}}X\tilde{D}^{-\frac{1}{2}}\|_1$ is jointly convex in $\tilde{C},\tilde{D}$ and due to \cite[Lemma B.1 3)]{Rubboli.2025} the first two are as  $-p^\prime\geq 1$. Due to \cref{lem:continuity}, which applies since $\frac{1}{r}=\frac{1}{-p^\prime}-\frac{1}{-q^\prime}\leq \frac{1}{-p^\prime}$, the first two summand posses the desired continuity and concavity in $a,b$ and we can apply Sion's minimax theorem (see \Cref{thm:Sion}). 
Thus,
\begin{align}
    \|X\|_{(q,p)}&=\left(\inf_{\tilde{C},\tilde{D}\geq0}\sup_{\underset{\|a\|_1,\|b\|_1=1}{a,b\geq0}}\left\{\frac{p}{-2p^\prime}\|a_1^\frac{1}{2r}\tilde{C}_{12}a_1^\frac{1}{2r}\|_{-p^\prime}^{-p^\prime}+\frac{p}{-2p^\prime}\|b_1^\frac{1}{2r}\tilde{D}_{12}b_1^\frac{1}{2r}\|_{-p^\prime}^{-p^\prime}+p\|\tilde{C}^{-\frac{1}{2}}X\tilde{D}^{-\frac{1}{2}}\|_1\right\}\right)^\frac{1}{p} \\
    &=\left(\inf_{\tilde{C},\tilde{C}\geq0}\left\{\frac{p}{-2p^\prime}\|\tilde{C}\|_{(-q^\prime,-p^\prime)}^{-p^\prime}+\frac{p}{-2p^\prime}\|\tilde{D}\|_{(-q^\prime,-p^\prime)}^{-p^\prime}+p\|\tilde{C}^{-\frac{1}{2}}X\tilde{D}^{-\frac{1}{2}}\|_1\right\}\right)^\frac{1}{p} \\
    &=\inf_{\tilde{C},\tilde{D}\geq0}\|\tilde{C}\|^{\frac{1}{2}}_{(-q^\prime,-p^\prime)}\|\tilde{D}\|^{\frac{1}{2}}_{(-q^\prime,-p^\prime)}\|\tilde{C}^{-\frac{1}{2}}X\tilde{D}^{-\frac{1}{2}}\|_1,
\end{align} where we used \cite[Lemma 3.1]{Fawzi.2026} with $\frac{1}{r}=\frac{1}{-q^\prime}-\frac{1}{-p^\prime}$ as before and \cref{lem:Inf.Prod.to.Sum} to turn the sum back into a product. To get the statement for $X\geq0$ from \eqref{equ:GenRevHölder}, we simply need to observe that  the above proof holds with $A=B$ due to \cref{lem:Positive.Symmetrie}. Then since $\|A^{-\frac{1}{2}}XA^{-\frac{1}{2}}\|_1=\Tr[A^{-\frac{1}{2}}XA^{-\frac{1}{2}}]=\Tr[XA^{-1}]$ the statement 
\begin{align}
   \|X\|_{(q,p)}=\inf_{\underset{X=(\Pi_A\otimes\1)X(\Pi_A\otimes\1)}{A\geq0}}\|A\|_{(-q^\prime,-p^\prime)}\Tr[XA^{-1}]= \inf_{\underset{X=\Pi_YX\Pi_Y}{Y\geq0}}\|Y^{-1}\|_{(-q^\prime,-p^\prime)}\Tr[YX]
\end{align} follows, where $Y^{-1}=A$ is the generalized Moore-Penrose inverse of $A$. 
\end{proof}

\section{Applications to Quantum Information Theory}\label{sec:Applications}

\noindent In what follows, we make use of standard notations from quantum information theory. For instance, subsystems will be labeled by capital letters $A,Q,P$, etc.~. Channels $\Phi:\cB(\cH_Q)\to \cB(\cH_P)$ will be denoted as $\Phi:P\to Q$, and norms will inherit these notations: for instance, the two-index Schatten (quasi-)norm of a bipartite operator $X_{AB}$ will we written as $$\|X_{AB}\|_{(A:a,B:b)}\equiv \|X\|_{(a,b)}\,.$$

\subsection{Entropic Quantities via Quasi-Norms}\label{sec:Apl:Entropies}

First recall that the sandwiched Rényi relative entropy \cite{Wilde.2014, Book.Tomamichel.2016} between two states $\rho$ and $\sigma$ with $\operatorname{supp}(\rho)\subseteq\operatorname{supp}(\sigma)$ for $\frac{1}{2}\leq\alpha\leq\infty$ is defined as
\begin{align}
    D_\alpha(\rho\|\sigma):=\frac{\alpha}{\alpha-1}\log\|\sigma^{\frac{1-\alpha}{2\alpha}}\rho\sigma^{\frac{1-\alpha}{2\alpha}}\|_\alpha.
\end{align}

\subsubsection{The Sandwiched Conditional Entropy}

In \cite{Beigi.2023, Fawzi.2026}, but already going back to expressions from \cite{Devetak.2006}, it was observed and exploited that the optimized conditional sandwiched Rényi entropy can be expressed via a 2-indexed norm. Specifically for any $\alpha>1$ and bipartite quantum state $\rho_{AB}\in\cD(\cH_{A}\otimes\cH_B)$, it is shown that
\begin{align}
H^\uparrow_\alpha(A|B)_\rho:=\sup_{\sigma_B\in\cD(\cH_B)}-D_\alpha(\rho_{AB}\|\1\otimes\sigma_B) = \frac{\alpha}{1-\alpha}\log \|\rho_{BA}\|_{(B:1, A:\alpha)}.
\end{align} 
 This rewriting as a 2-indexed norm has been used to give simple proofs of standard properties of the optimized conditional R\'{e}nyi entropy, including continuity bounds \cite{Beigi.2023} and chain rules \cite{Fawzi.2026}. In the following, we prove that this correspondence continues to hold for $\alpha\in[\frac{1}{2},1)$. 
\begin{corollary}\label{cor:conditional.entropy}
Let $\alpha\in[\frac{1}{2},\infty]$ and $\rho_{AB} \in \cD(\cH_A \otimes \cH_B)$, then it holds that
\begin{align}
  H^\uparrow_\alpha(A|B)_\rho = \frac{\alpha}{1-\alpha}\log\|\rho_{BA}\|_{(B:1,A:\alpha)}. 
\end{align}
\end{corollary}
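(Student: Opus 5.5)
The case $\alpha>1$ is the known result of \cite{Beigi.2023, Fawzi.2026}, and $\alpha=1$ is understood as a limit. So I will focus on $\alpha\in[\frac12,1)$. In that range $q=1\ge p=\alpha$ with $\frac1r=\frac1\alpha-1\le 1$, so \cref{def:Definition1} applies, and since $\rho\ge0$, \cref{lem:Positive.Symmetrie} gives
\begin{align}
\|\rho_{BA}\|_{(1,\alpha)}=\sup_{a\ge0,\ \Pi_a=\Pi_{\rho_B}}\|a\|_{2r}^{-2}\,\|(a\otimes\1_A)\rho_{BA}(a\otimes\1_A)\|_\alpha .
\end{align}
Here $2r=\frac{2\alpha}{1-\alpha}$.

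The first step is a change of variables. By homogeneity I may normalise $\|a\|_{2r}=1$ and write $a=\sigma_B^{\frac{1-\alpha}{2\alpha}}$ with $\sigma_B$ a state. This works because $\|a\|_{2r}^{2r}=\Tr[a^{\frac{2\alpha}{1-\alpha}}]=\Tr\sigma_B=1$. Conversely, every state $\sigma_B$ gives an admissible $a$. Hence
\begin{align}
\|\rho_{BA}\|_{(1,\alpha)}=\sup_{\sigma_B\in\cD(\cH_B)}\big\|\sigma_B^{\frac{1-\alpha}{2\alpha}}\rho_{AB}\sigma_B^{\frac{1-\alpha}{2\alpha}}\big\|_\alpha=\sup_{\sigma_B}\exp\!\Big(\tfrac{\alpha-1}{\alpha}D_\alpha(\rho_{AB}\|\1_A\otimes\sigma_B)\Big).
\end{align}
The second equality is the definition of the sandwiched divergence, noting that $(\1\otimes\sigma_B)^{s}=\1\otimes\sigma_B^{s}$.

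The second step converts this to the entropy. Since $\frac{\alpha}{1-\alpha}>0$ and $\log$ is monotone, applying $\frac{\alpha}{1-\alpha}\log$ yields $\sup_{\sigma_B}\big(-D_\alpha(\rho_{AB}\|\1\otimes\sigma_B)\big)=H^\uparrow_\alpha(A|B)_\rho$.

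The main obstacle is the support bookkeeping. For $\alpha<1$, $D_\alpha$ is finite even without the condition $\operatorname{supp}\rho\subseteq\operatorname{supp}\sigma$. States $\sigma_B$ whose support is strictly smaller than, or misaligned with, $\Pi_{\rho_B}$ correspond to $a$ violating $\Pi_a=L_\rho$. \cref{lem:SupportConditions} shows that the supremum over all $a\ge0$ equals the one restricted to $\Pi_a=L_\rho$, since such $a$ approximate all others by continuity. So the supremum is unchanged either way, and the two optimisations agree.
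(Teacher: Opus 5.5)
Your proposal is correct and follows essentially the same route as the paper. Like the paper, you use \cref{lem:Positive.Symmetrie} to reduce $\|\rho_{BA}\|_{(B:1,A:\alpha)}$ to a supremum over a single $a\ge0$. You then normalise $\|a\|_{2r}=1$ with $2r=\frac{2\alpha}{1-\alpha}$, which turns $a$ into $\sigma_B^{\frac{1-\alpha}{2\alpha}}$ for a state $\sigma_B$ and reproduces the sandwiched expression defining $H^\uparrow_\alpha$. Your closing paragraph, which reconciles the support constraint $\Pi_a=\Pi_{\rho_B}$ via \cref{lem:SupportConditions} with the support convention in $D_\alpha$, handles correctly a detail the paper leaves implicit.
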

\begin{proof}
The case $\alpha \geq 1$ is already covered in~\cite{Beigi.2023} so we focus on the case $\frac{1}{2}\leq \alpha<1$. By \cref{lem:Positive.Symmetrie},  for a positive operator $\rho\geq0$, we have
\begin{align}
\|\rho_{BA}\|_{(B:1,A:\alpha)}&=\sup_{\sigma_B\geq0}\|\sigma_B\|^{-2}_{-2\alpha^\prime}\|\sigma_B\rho_{BA}\sigma_B\|_\alpha =\sup_{\sigma\geq0, \Tr[\sigma]=1}\|\sigma_B^{-\frac{1}{2\alpha^\prime}}\rho_{BA}\sigma_B^{-\frac{1}{2\alpha^\prime}}\|_\alpha. 
\end{align} where $\alpha'$ denotes the H\"{o}lder conjugate of $\alpha$.
\end{proof}

\noindent We believe that the identification of the optimized conditional sandwiched Rényi entropy as a quasi-norm is of independent interest. As a simple first illustration of this, we recover a well-known continuity bound \cite{Bluhm.2024, Marwah.2022}, in analogy with the regime $\alpha>1$ from~\cite[Theorem 5.2]{Beigi.2023}.

\begin{proposition}[Continuity bound]\label{thm:continuity.bound}
Let $\alpha\in[\frac{1}{2},1)$ and $\rho_{AB},\sigma_{AB}$ be bipartite quantum states  s.t. $\frac{1}{2}\|\rho-\sigma\|_1\leq\epsilon$, then
\begin{align}
  \big|  H^\uparrow_\alpha(A|B)_\rho-H^\uparrow_\alpha(A|B)_\sigma\big|\leq \frac{1}{1-\alpha}\log\big(1+2\epsilon^\alpha d_A^{2(1-\alpha)}\big),
\end{align} where $d_A$ is the dimension of system $A$.
\end{proposition}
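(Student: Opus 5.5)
Since the inequality is symmetric in $\rho$ and $\sigma$, it suffices to bound $H^\uparrow_\alpha(A|B)_\sigma-H^\uparrow_\alpha(A|B)_\rho$ from above. By \cref{cor:conditional.entropy}, and because $\frac{\alpha}{1-\alpha}>0$ for $\alpha\in[\frac12,1)$, this difference equals $\frac{\alpha}{1-\alpha}\log\big(\|\sigma_{BA}\|_{(1,\alpha)}/\|\rho_{BA}\|_{(1,\alpha)}\big)$. We therefore aim to prove
\begin{align}
\|\sigma_{BA}\|_{(1,\alpha)}^\alpha\leq \|\rho_{BA}\|_{(1,\alpha)}^\alpha\big(1+2\epsilon^\alpha d_A^{2(1-\alpha)}\big).
\end{align}
Taking logarithms and multiplying by $\frac{1}{1-\alpha}$ then gives the claim. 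Note that $\frac{\alpha}{1-\alpha}\cdot\frac1\alpha=\frac1{1-\alpha}$.

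The first ingredient is $\kappa$-normability. By \cref{thm:QuasiNomrms}, $\|\cdot\|_{(1,\alpha)}$ is an $\alpha$-normable quasi-norm, since $\kappa=\min\{1,\alpha,1\}=\alpha$ and $|1-\frac1\alpha|\le1$ exactly when $\alpha\ge\frac12$. Writing $\sigma=\rho+(\sigma-\rho)$ gives
\begin{align}
\|\sigma\|_{(1,\alpha)}^\alpha\le\|\rho\|_{(1,\alpha)}^\alpha+\|\sigma-\rho\|_{(1,\alpha)}^\alpha.
\end{align}
Following \cite[Theorem 5.2]{Beigi.2023}, I would split $\sigma-\rho=\Delta_+-\Delta_-$ into its positive and negative parts, with $\Tr\Delta_\pm\le\epsilon$. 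Applying $\alpha$-subadditivity once more gives $\|\sigma-\rho\|^\alpha_{(1,\alpha)}\le\|\Delta_+\|^\alpha_{(1,\alpha)}+\|\Delta_-\|^\alpha_{(1,\alpha)}$, which accounts for the factor $2$.

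The second ingredient is an upper bound on $\|\Delta\|_{(1,\alpha)}$ for $\Delta\ge0$. Here $q=1\ge p=\alpha$, so by \cref{lem:Positive.Symmetrie} we have $\|\Delta\|_{(1,\alpha)}=\sup_{\omega_B}\|\omega_B^{-\frac{1}{2\alpha'}}\Delta\,\omega_B^{-\frac{1}{2\alpha'}}\|_\alpha$ over states $\omega_B$, as in the proof of \cref{cor:conditional.entropy}. For each fixed $\omega_B$, I would pass from the $\alpha$-quasi-norm to the trace norm at a cost of $d_A^{1-\alpha}$ per side, for instance through a sandwiched Rényi bound of the form $\|\omega^{-1/(2\alpha')}\Delta\,\omega^{-1/(2\alpha')}\|_\alpha^\alpha\le d_A^{2(1-\alpha)}\Tr[\Delta]^\alpha$. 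An alternative is to compare with the trivial factorization via \cref{thm:cq.additivity} and \cref{tensorproduct}: $\|\1_B\otimes\Delta_A\|$-type bounds together with operator monotonicity of the quasi-norm for positive operators. This gives $\|\Delta_\pm\|^\alpha_{(1,\alpha)}\le\epsilon^\alpha d_A^{2(1-\alpha)}$.

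The last step is to normalize by $\|\rho\|_{(1,\alpha)}^\alpha$. We need the lower bound $\|\rho_{BA}\|_{(1,\alpha)}\ge1$, which is equivalent to $H^\uparrow_\alpha(A|B)_\rho\ge0$. That is false in general, since the conditional entropy can be as low as $-\log d_A$. So I would use the correct lower bound $\|\rho\|^\alpha_{(1,\alpha)}\ge d_A^{-(1-\alpha)}$, obtained from $H^\uparrow_\alpha\ge-\log d_A$, and absorb it into the dimension factor. I expect this bookkeeping of the $d_A$ powers to be the main obstacle: one factor $d_A^{1-\alpha}$ should come from the estimate on $\Delta_\pm$ and one from the normalization of $\rho$. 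If the exponents do not add up to $2(1-\alpha)$, I would instead bound the $\Delta_\pm$ terms directly relative to $\rho$ using the supremum formula with the optimizer $\omega_B$ chosen for $\sigma$.
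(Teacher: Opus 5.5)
Your architecture is the same as the paper's. You use $\alpha$-subadditivity of $\|\cdot\|_{(B:1,A:\alpha)}$ and split $\sigma-\rho$ into positive and negative parts of trace at most $\epsilon$. You then need an upper dimension bound on each part and the lower bound $\|\rho_{BA}\|^\alpha_{(B:1,A:\alpha)}\ge d_A^{-(1-\alpha)}$ for the normalization. Taking that lower bound from $H^\uparrow_\alpha\ge H_{\min}\ge-\log d_A$ is fine; the paper instead derives it by Weyl twirling on $A$, $\alpha$-subadditivity and \cref{tensorproduct}.

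The gap is your second ingredient. You assert $\|\Delta_\pm\|^\alpha_{(B:1,A:\alpha)}\le\epsilon^\alpha d_A^{2(1-\alpha)}$ with no proof, only a ``cost of $d_A^{1-\alpha}$ per side'' heuristic. Even granting it, combining it with the normalization gives $1+2\epsilon^\alpha d_A^{3(1-\alpha)}$, not the stated bound, as you notice at the end. The argument closes only with the sharp estimate
\[
\|\Delta\|^\alpha_{(B:1,A:\alpha)}\le d_A^{1-\alpha}(\Tr\Delta)^\alpha\qquad(\Delta\ge0),
\]
which the proposal never establishes. Your alternative route cannot supply it either. Comparing $\Delta_{AB}$ with a product operator, e.g.\ via $\Delta_{AB}\le d_A\,\Delta_B\otimes\1_A$ and monotonicity, already loses a full factor $d_A$ and ends at $d_A^{1+\alpha}$. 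The final fallback, reusing the optimizer $\omega_B$ for $\sigma$, is too unspecified to evaluate.

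The missing estimate is a single H\"older step, which is what the proof of \cref{lem:1.alpha.bound} does. The proof yields $d_A^{1-\alpha}$ on the $\alpha$-th power, which is what the continuity proof uses, even though the lemma's displayed exponents are weaker. Set $\frac{1}{r}=\frac{1}{\alpha}-1$, so $2r=-2\alpha'$. The generalized H\"older inequality gives
\[
\|(a\otimes\1_A)\Delta(b\otimes\1_A)\|_\alpha\le\|a\otimes\1_A\|_{2r}\,\|\Delta\|_1\,\|b\otimes\1_A\|_{2r}=d_A^{1/r}\|a\|_{2r}\|b\|_{2r}\Tr\Delta,
\]
hence $\|\Delta\|_{(B:1,A:\alpha)}\le d_A^{\frac{1-\alpha}{\alpha}}\Tr\Delta$. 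So each side costs $d_A^{(1-\alpha)/2}$ after raising to the power $\alpha$, not $d_A^{1-\alpha}$.

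Your sandwiched-R\'enyi idea also works once done correctly. For each fixed state $\omega_B$, use $D_\alpha\bigl(\Delta/\Tr\Delta\,\big\|\,\tfrac{\1_A}{d_A}\otimes\omega_B\bigr)\ge0$ for $\alpha\ge\frac12$; this gives $d_A^{1-\alpha}$, i.e.\ $H^\uparrow_\alpha\le\log d_A$. With this bound,
\[
\|\sigma\|^\alpha\le\|\rho\|^\alpha+2\epsilon^\alpha d_A^{1-\alpha}\le\|\rho\|^\alpha\bigl(1+2\epsilon^\alpha d_A^{2(1-\alpha)}\bigr),
\]
and the exponents add up exactly as you anticipated.
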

\noindent Before we prove this continuity bound we require the following lemma which bounds the $\|\cdot\|_{(1,\alpha)}$-quasi-norm in terms of the trace norm.
\begin{lemma}\label{lem:1.alpha.bound}
Let $\rho_{AB}$ be a bipartite quantum state and $\frac{1}{2}\leq\alpha\leq 1$, then
\begin{align}
   d_A^{\frac{\alpha-1}{\alpha}}\le  \|\rho_{BA}\|^\alpha_{(B:1,A:\alpha)} \leq d_A^{\frac{1-\alpha}{\alpha}}
\end{align}
\end{lemma}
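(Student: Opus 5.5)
The plan is to translate the quasi-norm back into a sandwiched R\'enyi divergence using the formula in the proof of \cref{cor:conditional.entropy}, and then use standard dimension bounds for the divergence. Concretely, that proof gives
\begin{align}
\|\rho_{BA}\|_{(B:1,A:\alpha)}=\sup_{\sigma_B\in\cD(\cH_B)}\big\|\sigma_B^{\frac{\gamma}{2}}\rho_{BA}\sigma_B^{\frac{\gamma}{2}}\big\|_\alpha,\qquad \gamma:=\frac{1-\alpha}{\alpha}=-\frac{1}{\alpha^\prime}\in[0,1].
\end{align}
Comparing with the definition of $D_\alpha$, each term in the supremum equals $\exp\big(\frac{\alpha-1}{\alpha}D_\alpha(\rho_{AB}\|\1_A\otimes\sigma_B)\big)$. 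Hence $\|\rho_{BA}\|_{(1,\alpha)}^\alpha=\exp\big((1-\alpha)H^\uparrow_\alpha(A|B)_\rho\big)$. It therefore suffices to show $|H^\uparrow_\alpha(A|B)_\rho|\le\log d_A$. This gives bounds $d_A^{\pm(1-\alpha)}$, which are stronger than the claimed $d_A^{\pm\frac{1-\alpha}{\alpha}}$ because $1-\alpha\le\frac{1-\alpha}{\alpha}$ and $d_A\ge1$.

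\emph{Upper bound.} For any state $\sigma_B$ write $\1_A\otimes\sigma_B=d_A\,(\pi_A\otimes\sigma_B)$ with $\pi_A=\1_A/d_A$. The scaling property of $D_\alpha$ then gives $D_\alpha(\rho\|\1\otimes\sigma)=D_\alpha(\rho\|\pi_A\otimes\sigma_B)-\log d_A\ge-\log d_A$, using nonnegativity of the sandwiched R\'enyi divergence between states for $\alpha\ge\frac{1}{2}$. Since $\frac{\alpha-1}{\alpha}<0$, each term of the supremum is at most $d_A^{\frac{1-\alpha}{\alpha}}$. Taking the supremum and raising to the power $\alpha$ yields $\|\rho_{BA}\|^\alpha_{(1,\alpha)}\le d_A^{1-\alpha}\le d_A^{\frac{1-\alpha}{\alpha}}$.

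\emph{Lower bound.} Choose $\sigma_B=\rho_B$ in the supremum. The step I expect to need the most care is the operator inequality $\rho_{AB}\le d_A\,\1_A\otimes\rho_B$. I would obtain it from the twirling identity $\frac{1}{d_A^2}\sum_{j}U_j\rho_{AB}U_j^*=\pi_A\otimes\rho_B$ over a unitary operator basis $\{U_j\}$ on $A$ (e.g.\ Heisenberg--Weyl operators). Dropping all terms except $U_0=\1$ gives $\rho_{AB}\le d_A^2\,\pi_A\otimes\rho_B=d_A\,\1_A\otimes\rho_B$. From this, $D_{\max}(\rho_{AB}\|\1_A\otimes\rho_B)\le\log d_A$. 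Monotonicity of $D_\alpha$ in $\alpha$ then gives $D_\alpha(\rho_{AB}\|\1_A\otimes\rho_B)\le\log d_A$.

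Alternatively, one can argue directly: $\sigma^{\gamma/2}\rho\,\sigma^{\gamma/2}\le d_A\,\1\otimes\rho_B^{1+\gamma}$, together with operator monotonicity of $x\mapsto x^\alpha$ for $\alpha\le1$. Either route makes the chosen term at least $d_A^{\frac{\alpha-1}{\alpha}}$. Raising to the power $\alpha$ gives $\|\rho_{BA}\|^\alpha_{(1,\alpha)}\ge d_A^{\alpha-1}\ge d_A^{\frac{\alpha-1}{\alpha}}$, which completes the proof.
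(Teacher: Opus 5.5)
Your proof is correct, but it takes a different route from the paper's. You convert the quasi-norm into $H^\uparrow_\alpha$ via \cref{cor:conditional.entropy} and reduce the lemma to the dimension bound $|H^\uparrow_\alpha(A|B)_\rho|\le\log d_A$. You then derive that bound from properties of $D_\alpha$: nonnegativity and scaling for the upper bound, and $\rho_{AB}\le d_A\,\1_A\otimes\rho_B$, $D_{\max}$ and monotonicity in $\alpha$ for the lower bound.

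The paper never leaves the quasi-norm framework. Its upper bound is a H\"older estimate against $\|\rho_{BA}\|_1=1$, phrased there as a reverse H\"older inequality. This has the same content as your nonnegativity step: $D_\alpha(\rho_{AB}\|\1_A\otimes\sigma_B)\ge-\log d_A$ is exactly $\|\sigma_B^{\gamma/2}\rho_{BA}\sigma_B^{\gamma/2}\|_\alpha\le\|(\1_A\otimes\sigma_B)^{\gamma/2}\|_{2/\gamma}^2\|\rho_{BA}\|_1=d_A^{\gamma}$. For the lower bound the paper uses the same Heisenberg--Weyl twirl as you, but at the level of the norm rather than as an operator inequality. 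By \cref{tensorproduct}, $\|\rho_B\otimes\frac{\1}{d_A}\|^\alpha_{(1,\alpha)}=d_A^{1-\alpha}$. The $\alpha$-subadditivity from \cref{thm:QuasiNomrms}, together with unitary invariance, then bounds the twirled sum by $d_A^{2(1-\alpha)}\|\rho_{BA}\|^\alpha_{(1,\alpha)}$.

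What each route buys:
\begin{itemize}
\item The paper's route is self-contained given its earlier results and showcases the new quasi-norm property. It also never uses positivity: it yields $d_A^{\alpha-1}\|\tr_A[X]\|_1^\alpha\le\|X\|^\alpha_{(1,\alpha)}\le d_A^{1-\alpha}\|X\|_1^\alpha$ for arbitrary $X$.
\item Your route needs $\rho\ge0$ and imports external facts (data processing, monotonicity in $\alpha$). In exchange, it shows transparently that the lemma is just the entropic dimension bound.
\item You state the sharper exponents $\pm(1-\alpha)$ explicitly. Both proofs in fact deliver these, and they are what the paper's subsequent remark $|H^\uparrow_\alpha(A|B)_\rho|\le\log d_A$ actually needs.
\item You reverse the paper's logical order, since the paper deduces that entropy bound from the lemma. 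This is fine because you prove the entropy bound independently.
\end{itemize}

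One slip in your ``direct'' aside. Operator monotonicity of $x\mapsto x^\alpha$ turns $X\le Y$ into an \emph{upper} bound on $\Tr[X^\alpha]$, which is the wrong direction here. Take $X=\rho_B^{\gamma/2}\rho_{BA}\rho_B^{\gamma/2}$ and $Y=d_A\,\1_A\otimes\rho_B^{1+\gamma}$. What works is operator monotonicity of $x\mapsto x^{1-\alpha}$, which gives $X^{\alpha-1}\ge Y^{\alpha-1}$ on the support. Hence $\Tr[X^\alpha]\ge\Tr[XY^{\alpha-1}]=d_A^{\alpha-1}\Tr[\rho_{AB}]=d_A^{\alpha-1}$, after an $\epsilon$-regularization. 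Your main route via $D_{\max}$ does not have this issue.
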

\begin{proof}
For the first inequality let $d\equiv d_A$ and denote with $\{W^{(k)}\}_{k=1}^{d^2}$ the unitary Heisenberg-Weyl operators, then using that $d^{-2}\sum_{k=1}^{d^2}W_A^{(k)}\sigma_AW^{(k) *}_A=\Tr_A[\sigma_A]\frac{\1_A}{d}$ we have, using \cref{tensorproduct} and \cref{thm:QuasiNomrms}
\begin{align}
d^{1-\alpha}&= \|\rho_B\|_1^\alpha\,\Big\|\frac{\1}{d}\Big\|_\alpha^\alpha = \Big\|\rho_B\otimes\frac{\1}{d}\Big\|_{(B:1,A:\alpha)}^\alpha \\&= \Big\|d^{-2}\sum_{k=1}^{d^2} W^{(k)}_A\rho_{BA}  W^{(k)*}_A\Big\|^\alpha_{(B:1,A:\alpha)} \\&= d^{-2\alpha}\Big\|\sum_{k=1}^{d^2} W^{(k)}_A\rho_{BA}  W^{(k)*}_A\Big\|^\alpha_{(B:1,A:\alpha)} \\&\leq d^{-2\alpha}\sum_{k=1}^{d^2}\Big\|W^{(k)}_A\rho_{BA} W^{(k) *}_A\Big\|_{(B:1,A:\alpha)}^\alpha \\ &= d^{2(1-\alpha)}\|\rho_{BA}\|^\alpha_{(B:1,A:\alpha)}.
\end{align}
Rearranging this yields the first inequality.
For the second inequality, we observe that by the reverse Hölder inequality
\begin{align}
    \|\rho_{BA}\|_{1}&\geq\sup_{a,b\geq0}\Big\|a_B^{-1}a_B\rho_{BA}b_Bb_B^{-1}\Big\|_{1}\\ &\geq\sup_{a,b\geq0} \Big\|a_B^{-1}\otimes\1_A\Big\|_{2\alpha^\prime}\Big\|b_B^{-1}\otimes\1_A\Big\|_{2\alpha^\prime}\Big\|a_B\rho_{BA}b_B\Big\|_\alpha \\ &=d^\frac{1}{\alpha^\prime}\sup_{a,b\geq0}\|a\|^{-1}_{-2\alpha^\prime}\|b\|^{-1}_{-2\alpha^\prime}\Big\|a_B\rho_{BA}b_B\Big\|_\alpha \\ &= d^\frac{1}{\alpha^\prime}\|\rho_{BA}\|_{(B:1,A:\alpha)},
\end{align} 
which rearranged and taken to the power $\alpha$ yields the desired inequality.
\end{proof}
\noindent A direct consequence of these bounds is the following known bound on the conditional entropy for $\alpha\in[\frac{1}{2},1)$:
\begin{align}
   |H_\alpha^\uparrow(A|B)_\rho|\leq \log d_A.
\end{align}

\begin{proof}[Proof of \cref{thm:continuity.bound}]
Without loss of generality, we may assume $\|\rho-\sigma\|_1=2\epsilon$.
We denote with $\mu,\nu$ the trace normalized positive and negative parts of $\rho-\sigma$, so we have
\begin{align}
    \rho-\sigma=\epsilon(\mu-\nu).
\end{align}
Now we have that
\begin{align}
    \left|\|\rho_{BA}\|^\alpha_{(B:1,A:\alpha)}-\|\sigma_{BA}\|_{(B:1,A:\alpha)}^\alpha\right|&\leq \|\rho_{BA}-\sigma_{BA}\|^\alpha_{(B:1,A:\alpha)}\\
    &= \epsilon^\alpha\|\mu_{BA}-\nu_{BA}\|_{(B:1,A:\alpha)}^\alpha\\
    &\leq \epsilon^\alpha\big(\|\mu_{BA}\|_{(B:1,A:\alpha)}^\alpha+\|\nu_{BA}\|_{(B:1,A:\alpha)}^\alpha \big)\\
    &\leq 2\epsilon^\alpha d_A^{-\frac{\alpha}{\alpha^\prime}},
\end{align} 
where the first inequality is the reversed triangle inequality, which holds since $\|\cdot\|_{(B:1,A:\alpha)}^\alpha$ satisfies the triangle inequality, which we also used in the second inequality. The last bound follows from \Cref{lem:1.alpha.bound}. Next, 
another use of \cref{lem:1.alpha.bound} yields
\begin{align}
    \frac{\|\rho_{BA}\|_{(B:1,A:\alpha)}^\alpha}{\|\sigma_{BA}\|^\alpha_{(B:1,A:\alpha)}}\leq   \frac{\|\sigma_{BA}\|_{(B:1,A:\alpha)}^\alpha+2\epsilon^\alpha d_A^{-\frac{\alpha}{\alpha^\prime}} }{\|\sigma_{BA}\|^\alpha_{(B:1,A:\alpha)}} \leq 1+\frac{2\epsilon^\alpha d_A^{-\frac{\alpha}{\alpha^\prime}}}{\|\sigma_{BA}\|^\alpha_{(B:1,A:\alpha)}}
    \leq 1+2\epsilon^\alpha d_A^{-2\frac{\alpha}{\alpha^\prime}}
\end{align}
Taking logarithms, multiplying by $\frac{1}{1-\alpha}$ and observing the symmetry of the argument in $\rho\leftrightarrow\sigma$ yields the claim:
\begin{align}
    \left|\frac{\alpha}{1-\alpha}\log\|\rho_{BA}\|_{(B:1,A:\alpha)}-\frac{\alpha}{1-\alpha}\log\|\sigma_{BA}\|_{(B:1,A:\alpha)}\right|\leq \frac{1}{1-\alpha}\log\left(1+2\epsilon^\alpha d_A^{2(1-\alpha)}\right).
\end{align}
\end{proof}

\subsubsection{Reversed Conditional Entropy and Sandwiched Umlaut Information}
The conditional entropy optimizes the second register of the relative entropy. We can also consider the reversed quantity, where we swap the roles of $\rho_{AB}$ and $\1_A\otimes\sigma_B$. Such reversals have been studied and found operational meaning in the case of the mutual information \cite{Girardi.2025}. Here we propose to define them in terms of 2-indexed quasi-norms. 

\begin{corollary}\label{cor:reversedconditional.entropy}
Let $0<\alpha<1$ and set $\beta:=\frac{\alpha}{1-\alpha}\geq \alpha$, then it holds that
\begin{align}
    \sup_{\rho\geq0, \|\rho\|_1= 1}-D_\alpha(\1_A\otimes\rho_B\|\sigma_{AB}) &=\frac{\alpha}{1-\alpha}\log\|\sigma_{BA}^{\frac{1-\alpha}{\alpha}}\|_{(\beta,\alpha)}=\beta\log\|\sigma_{BA}^{\frac{1}{\beta}}\|_{(B:\beta,A:\alpha)}, 
\end{align}
\end{corollary}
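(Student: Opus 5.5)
The plan is to unfold the sandwiched R\'enyi divergence, optimize the state $\rho_B$ by hand, and recognize the resulting supremum as the positive-operator form of \cref{def:Definition1} from \cref{lem:Positive.Symmetrie}. Throughout I write $\gamma:=\frac{1-\alpha}{\alpha}=\frac{1}{\beta}$. Note that
\begin{align}
\frac{1}{\alpha}-\frac{1}{\beta}=\frac{1}{\alpha}-\frac{1-\alpha}{\alpha}=1 .
\end{align}
So the pair $(q,p)=(\beta,\alpha)$ satisfies $q\geq p$ with $\frac{1}{r}=1$, i.e.\ $r=1$. This is exactly the boundary of the compatibility condition, so $\|\cdot\|_{(\beta,\alpha)}$ is a well-defined quasi-norm by \cref{thm:QuasiNomrms}.

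\textbf{Step 1: rewrite the divergence.} By the definition of the sandwiched R\'enyi divergence,
\begin{align}
-D_\alpha(\1_A\otimes\rho_B\|\sigma_{AB})=\frac{\alpha}{1-\alpha}\log\big\|\sigma^{\frac{\gamma}{2}}(\1_A\otimes\rho_B)\sigma^{\frac{\gamma}{2}}\big\|_\alpha .
\end{align}
Here I take the usual convention for $\alpha<1$: the expression makes sense without any support condition, or equivalently one takes the limit over full-rank perturbations. I would remark on this convention briefly.

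Write $\sigma^{\frac{\gamma}{2}}(\1\otimes\rho)\sigma^{\frac{\gamma}{2}}=(\sigma^{\frac{\gamma}{2}}\rho_B^{\frac12})(\rho_B^{\frac12}\sigma^{\frac{\gamma}{2}})$. Since $MM^*$ and $M^*M$ have the same nonzero spectrum, its $\alpha$-quasi-norm equals
\begin{align}
\big\|(\rho_B^{\frac12}\otimes\1_A)\,\sigma^{\gamma}\,(\rho_B^{\frac12}\otimes\1_A)\big\|_\alpha .
\end{align}
After reordering the tensor factors to $BA$, this is $\|a_B\,\sigma_{BA}^{\gamma}\,a_B\|_\alpha$ with $a:=\rho_B^{1/2}\geq0$.

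\textbf{Step 2: identify the quasi-norm.} Since $\log$ is monotone and $\frac{\alpha}{1-\alpha}>0$, the supremum over states passes inside the logarithm. Because $\|a\|_2^2=\Tr[\rho]$, optimizing over normalized $\rho$ is the same as optimizing
\begin{align}
\sup_{a\geq0}\|a\|_{2}^{-2}\|a_B\sigma_{BA}^{\gamma}a_B\|_\alpha
\end{align}
over all $a\geq0$, using homogeneity of degree $2$ in $a$. The operator $X=\sigma^{\gamma}_{BA}$ is positive. Hence by \cref{lem:Positive.Symmetrie} (the case $q\geq p$, with $2r=2$), this supremum is exactly $\|\sigma_{BA}^{\gamma}\|_{(B:\beta,A:\alpha)}$.

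The support restriction $\Pi_a=L_X$ appearing in that lemma is harmless. The lemma is itself an equality with the unrestricted supremum of \cref{def:Definition1}, via \cref{lem:SupportConditions}, so the unconstrained $a$ (i.e.\ arbitrary states $\rho_B$) give the same value. This yields the first equality.

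\textbf{Step 3: the second equality.} This is just bookkeeping: $\frac{\alpha}{1-\alpha}=\beta$ and $\frac{1-\alpha}{\alpha}=\frac1\beta$.

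The only delicate point, which I expect to be the main obstacle, is the support convention in Step 1. When $\operatorname{supp}(\1\otimes\rho_B)\not\subseteq\operatorname{supp}(\sigma)$, one must check that the chosen definition of $D_\alpha$ for $\alpha<1$ agrees with the formula used above, so that no states are wrongly excluded from the supremum. I would handle this by the standard continuity argument, replacing $\sigma$ by $\sigma+\epsilon\1$ and letting $\epsilon\to0$, or by noting directly that for $\alpha<1$ the sandwiched expression is finite and continuous in $\rho$.
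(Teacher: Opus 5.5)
Your proposal is correct and is essentially the paper's proof. Both unfold $D_\alpha$, pass from $\sigma^{\frac{1-\alpha}{2\alpha}}(\1_A\otimes\rho_B)\sigma^{\frac{1-\alpha}{2\alpha}}$ to $\rho_B^{1/2}\sigma_{BA}^{\frac{1-\alpha}{\alpha}}\rho_B^{1/2}$, remove the normalization by homogeneity with $a=\rho_B^{1/2}$ (so $\|a\|_2^2=\Tr[\rho]$), and recognize the $q\geq p$, $r=1$ case of \cref{lem:Positive.Symmetrie}. The paper passes silently over the $\alpha<1$ support convention and over why the restriction $\Pi_a=L_X$ in that lemma does not shrink the supremum; you address both, and handle both correctly.
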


\begin{proof}
This follows directly from \cref{equ:Def1.2} and \cref{lem:Positive.Symmetrie}. The first equality is due to 
\begin{align}
    \sup_{\rho\geq0, \|\rho\|_1= 1}-D_\alpha(\1_A\otimes\rho_B\|\sigma_{AB})&=\frac{\alpha}{1-\alpha}\log\sup_{\rho\geq0, \|\rho\|_1=1}\|\rho_B^\frac{1}{2}\sigma_{BA}^{\frac{1-\alpha}{\alpha}}\rho_B^\frac{1}{2}\|_\alpha  \\&= \frac{\alpha}{1-\alpha}\log\sup_{\rho\geq0}\|\rho\|^{-1}_1\|\rho_B^\frac{1}{2}\sigma_{BA}^{\frac{1-\alpha}{\alpha}}\rho_B^\frac{1}{2}\|_\alpha \\&= \frac{\alpha}{1-\alpha}\log\sup_{\rho\geq0}\|\rho\|^{-2}_2\|\rho_B\sigma_{BA}^{\frac{1-\alpha}{\alpha}}\rho_B\|_\alpha \\ &=\frac{\alpha}{1-\alpha}\log\|\sigma_{BA}^{\frac{1-\alpha}{\alpha}}\|_{(B:\beta,A:\alpha)}\\
    &=\beta\log\|\sigma_{BA}^{\frac{1}{\beta}}\|_{(B:\beta,A:\alpha)}
\end{align} 
\end{proof}

\begin{remark}
Note that without the prefactor and the logarithms this argument works for any $\alpha\leq 1$, however, it breaks down when $\alpha>1$ since then there cannot exist an index $\beta\geq\alpha$ such that $\frac{1}{\alpha}-\frac{1}{\beta}=1$. 
This is related to the fact that for $\alpha\geq1$, $D_\alpha(\1_A\otimes\rho_B\|\sigma_{AB})$ will diverge if $\sigma_{AB}$ has non-trivial kernel on $A$, i.e. if $L_\sigma=R_\sigma\neq \1_A$. 
\end{remark}
\noindent In the framework of mutual informations in \cite{Girardi.2025}, the quantum \emph{umlaut information} is defined as
\begin{align}
U(A;B)_\rho:=\inf_{\sigma_B\in\cD(\cH_B)}D(\rho_A\otimes\sigma_B\|\rho_{AB}).
\end{align} In that work the authors consider also a Petz-Rényi based Umlaut information and give it an operational interpretation in the context of asymmetric hypothesis testing. In the following we consider the Sandwiched Umlaut information for $\alpha<1$:
\begin{align}
\tilde{U}_\alpha(A;B)_\rho&:=\inf_{\sigma_B\geq0, \|\sigma\|_1= 1}D_\alpha(\rho_A\otimes\sigma_B\|\rho_{AB}).
\end{align}
In the next Lemma, we show that the latter can also be expressed as a 2-indexed quasi-norm, which allows for simple proofs of important properties.

\begin{lemma}
Let $0<\alpha<1$ and set $\beta:=\frac{\alpha}{1-\alpha}\geq \alpha$, i.e. s.t. $\frac{1}{\alpha}-\frac{1}{\beta}=1$, then it holds that
\begin{align}
 \tilde{U}_\alpha(A;B)_\rho=\frac{\alpha}{\alpha-1}\log\|\rho_A^\frac{1}{2}\rho_{BA}^{\frac{1-\alpha}{\alpha}}\rho_A^\frac{1}{2}\|_{(B:\beta,A:\alpha)}=-\beta\log\|\rho_A^\frac{1}{2}\rho_{BA}^{\frac{1}{\beta}}\rho_A^\frac{1}{2}\|_{(B:\beta,A:\alpha)}.
\end{align}
\end{lemma}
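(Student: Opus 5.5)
The plan is to mimic the proof of \cref{cor:reversedconditional.entropy}, with $\1_A$ replaced by $\rho_A$. First unfold the definition of the sandwiched Rényi divergence with $\rho_A\otimes\sigma_B$ in the first slot and $\rho_{AB}$ in the second:
\begin{align}
D_\alpha(\rho_A\otimes\sigma_B\|\rho_{AB})=\frac{\alpha}{\alpha-1}\log\big\|\rho_{AB}^{\frac{1-\alpha}{2\alpha}}(\rho_A\otimes\sigma_B)\rho_{AB}^{\frac{1-\alpha}{2\alpha}}\big\|_\alpha .
\end{align}
Since $\|Y^*Y\|_\alpha=\|YY^*\|_\alpha$, set $Y=(\rho_A^{1/2}\otimes\sigma_B^{1/2})\rho_{AB}^{\frac{1-\alpha}{2\alpha}}$. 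Then the norm above equals $\|\sigma_B^{1/2}\rho_A^{1/2}\rho_{BA}^{\frac{1-\alpha}{\alpha}}\rho_A^{1/2}\sigma_B^{1/2}\|_\alpha$, where the factors $\sigma_B^{1/2}$ and $\rho_A^{1/2}$ act on different tensor factors and therefore commute. (Reordering the systems to $BA$ is only a relabelling.)

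Next, because $\frac{\alpha}{\alpha-1}<0$ for $\alpha<1$, the infimum over $\sigma_B$ of $D_\alpha$ becomes $\frac{\alpha}{\alpha-1}\log\sup_{\sigma_B}$ of the norm. Write $W:=\rho_A^{1/2}\rho_{BA}^{\frac{1-\alpha}{\alpha}}\rho_A^{1/2}\geq0$. Exactly as in \cref{cor:reversedconditional.entropy}, substitute $\sigma=\tau^2$ with $\|\sigma\|_1=\|\tau\|_2^2$ and use homogeneity to drop the normalization:
\begin{align}
\sup_{\sigma\geq0,\|\sigma\|_1=1}\|\sigma_B^{1/2}W\sigma_B^{1/2}\|_\alpha=\sup_{\tau\geq0}\|\tau\|_2^{-2}\|\tau_BW\tau_B\|_\alpha .
\end{align}
Since $\frac{1}{\alpha}-\frac{1}{\beta}=1$, we have $2r=2$, and $\beta\geq\alpha$ puts us in the supremum case of \cref{def:Definition1}. \Cref{lem:Positive.Symmetrie} (applicable because $W\geq0$ and the index gap equals $1$) then lets us restrict to $a=b=\tau$, so the right-hand side is exactly $\|W\|_{(B:\beta,A:\alpha)}$. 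This gives the first equality.

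The second equality is pure arithmetic: $\frac{\alpha}{\alpha-1}=-\beta$ and $\frac{1-\alpha}{\alpha}=\frac1\beta$.

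The only delicate points concern supports. The Moore–Penrose conventions in \cref{lem:SupportConditions} and \cref{lem:Positive.Symmetrie} allow $\tau$ to have arbitrary support, and $\sigma_B$ ranges over all states, so nothing is lost. I expect the main (minor) obstacle to be checking that $D_\alpha$ is well defined without a support condition when $\alpha<1$, since $\rho_A\otimes\sigma_B$ need not be supported in $\operatorname{supp}\rho_{AB}$. I would handle this by taking the sandwiched formula, with generalized powers, as the definition for $\alpha<1$, which is the standard convention.
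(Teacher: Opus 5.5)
Your proposal is correct and follows essentially the same route as the paper, whose proof just says the lemma follows analogously to \cref{cor:reversedconditional.entropy}. That route is: unfold $D_\alpha$, move the sandwich using $\|Y^*Y\|_\alpha=\|YY^*\|_\alpha$, drop the normalization by homogeneity with $\sigma=\tau^2$, and invoke \cref{lem:Positive.Symmetrie} with $2r=2$; you additionally make explicit the commutation of $\rho_A^{1/2}$ with $\sigma_B^{1/2}$ and the positive-power convention for $\alpha<1$, which the paper leaves implicit.
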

\begin{proof}
This expression follows completely analogously to reverse conditional entropy of \cref{cor:reversedconditional.entropy}.
\end{proof}

\noindent  As for the Petz-Rényi version in \cite{Girardi.2025}, taking the limit $\alpha\to1$ yields back the Umlaut information and so we get a novel expression for it.

\begin{lemma}
Let $\rho_{AB}\in\cD(\cH_A\otimes\cH_B)$ be a full rank state, then it holds that
    \begin{align}
    U(A;B)_\rho&=\lim_{\alpha\uparrow 1}\tilde{U}_\alpha(A;B)_\rho= \lim_{\alpha\uparrow1}\inf_{\sigma_B\in\cD(\cH_B)}D_\alpha(\rho_A\otimes\sigma_B\|\rho_{AB}) = \frac{d}{d\gamma}\Bigg|_{\gamma=0}\log\|\rho_A^\frac{1}{2}\rho_{BA}^\gamma\rho_A^\frac{1}{2}\|_{(\frac{1}{\gamma},\frac{1}{1+\gamma})}.
    \end{align}
\end{lemma}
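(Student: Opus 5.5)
The plan is to reparametrize $\alpha$ by $\gamma:=\frac{1-\alpha}{\alpha}$ and split the proof into two parts. First I identify the umlaut information as the limit of the sandwiched versions, which is an exchange of limit and infimum. Then I recognize $\tilde U_\alpha$ as a difference quotient at $\gamma=0$ of the function in the statement. With this choice, $\alpha=\frac{1}{1+\gamma}$ and $\beta=\frac{\alpha}{1-\alpha}=\frac1\gamma$, and $\alpha\uparrow1$ corresponds to $\gamma\downarrow0$. The pair $(\frac1\gamma,\frac1{1+\gamma})$ satisfies $\frac{1}{\alpha}-\frac1\beta=1$, so it is compatible for every $\gamma\ge0$, including the endpoint $(\infty,1)$. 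Hence all the quasi-norms involved are well defined by \cref{thm:QuasiNomrms}.

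\textbf{Step 1 (limit and infimum commute).} Since $\rho_{AB}$ has full rank, $\sigma_B\mapsto D_\alpha(\rho_A\otimes\sigma_B\|\rho_{AB})$ is finite and continuous on the compact set $\cD(\cH_B)$ for each $\alpha\in[\frac12,1]$. By the standard monotonicity of the sandwiched Rényi divergence in $\alpha$, these functions increase pointwise to the Umegaki divergence $D(\rho_A\otimes\sigma_B\|\rho_{AB})$ as $\alpha\uparrow1$. An increasing family of continuous functions on a compact set converging pointwise to a continuous limit converges uniformly, by Dini's theorem. Therefore $\inf_{\sigma_B}$ commutes with $\lim_{\alpha\uparrow1}$, which gives $U(A;B)_\rho=\lim_{\alpha\uparrow1}\tilde U_\alpha(A;B)_\rho$. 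The middle equality in the statement is just the definition of $\tilde U_\alpha$.

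\textbf{Step 2 (difference quotient).} Set $N(\gamma):=\|\rho_A^{1/2}\rho_{BA}^\gamma\rho_A^{1/2}\|_{(\frac1\gamma,\frac1{1+\gamma})}$. The preceding lemma gives $\tilde U_\alpha(A;B)_\rho=-\beta\log N(\gamma)=-\frac{1}{\gamma}\log N(\gamma)$. To evaluate $N$ at $\gamma=0$, use full rank: $\rho^0_{BA}=\1$, so $N(0)=\|\1_B\otimes\rho_A\|_{(B:\infty,A:1)}$. By \cref{tensorproduct} this equals $\|\1_B\|_\infty\|\rho_A\|_1=1$, hence $\log N(0)=0$. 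Consequently $\tilde U_\alpha$ is, up to the sign fixed by the preceding lemma, the one-sided difference quotient $\frac{\log N(\gamma)-\log N(0)}{\gamma}$ as $\gamma\downarrow0$. Combining with Step 1, this one-sided derivative exists and equals $U(A;B)_\rho$.

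\textbf{Step 3 (sign convention and the two-sided derivative).} This is the main obstacle. I will first recheck the prefactor $\frac{\alpha}{\alpha-1}=-\beta$ in the preceding lemma against the convention $D_\alpha=\frac{\alpha}{\alpha-1}\log\|\cdot\|_\alpha$ for $\alpha<1$. The goal is to determine which orientation of the quotient, and hence which sign in front of $\frac{d}{d\gamma}$, is correct. I will then present the final formula in the form given in the statement, matched to that convention. Only the one-sided limit $\gamma\downarrow0$ is needed, because the statement's $\frac{d}{d\gamma}\big|_{\gamma=0}$ arises exactly as this limit; $\gamma<0$ would give $\alpha>1$, where the representation is not defined (see the remark after \cref{cor:reversedconditional.entropy}). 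Apart from this sign check, the argument reduces to Dini's theorem and the value $N(0)=1$.
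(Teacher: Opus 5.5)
Your Steps 1 and 2 follow the paper's proof. The paper also exchanges $\lim_{\alpha\uparrow1}$ and $\inf_{\sigma_B}$ using only monotonicity of $D_\alpha$ in $\alpha$ and compactness of $\cD(\cH_B)$, though it invokes the Mosonyi--Hiai minimax lemma where you use Dini's theorem. It then writes $\tilde U_\alpha(A;B)_\rho=-\frac1\gamma\log N(\gamma)$ with $N(0)=\|\1_B\otimes\rho_A\|_{(\infty,1)}=1$, as you do. Your Dini argument is valid because full rank of $\rho_{AB}$ makes the limit $\sigma_B\mapsto D(\rho_A\otimes\sigma_B\|\rho_{AB})$ continuous. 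The minimax/compactness route would only need lower semicontinuity, but nothing is lost here.

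The one real problem is the sign. You defer it to Step 3, but you already get it wrong at the end of Step 2. From your own identity and $\log N(0)=0$,
\[
\frac{\log N(\gamma)-\log N(0)}{\gamma}=\frac{\log N(\gamma)}{\gamma}=-\tilde U_{\alpha}(A;B)_\rho\;\longrightarrow\;-U(A;B)_\rho\qquad(\gamma\downarrow0),
\]
so the right derivative of $\log N$ at $0$ equals $-U(A;B)_\rho$, not $U(A;B)_\rho$.

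There is no convention left to choose in Step 3. For $\alpha<1$ the prefactor $\frac{\alpha}{\alpha-1}=-\beta$ is forced by the definition of $D_\alpha$. A quick check confirms it: for $\gamma\in(0,1]$, i.e.\ $\alpha\in[\frac12,1)$, $D_\alpha\ge0$ between normalized states gives $\tilde U_\alpha(A;B)_\rho\ge0$. Hence $N(\gamma)\le1=N(0)$, so the right derivative of $\log N$ at $0$ is $\le0$, while $U(A;B)_\rho\ge0$.

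So the displayed statement cannot hold with a plus sign in front of $\frac{d}{d\gamma}$, unless $U(A;B)_\rho=0$. The correct identity is
\[
U(A;B)_\rho=-\frac{d}{d\gamma}\Big|_{\gamma=0^+}\log\|\rho_A^{1/2}\rho_{BA}^{\gamma}\rho_A^{1/2}\|_{(\frac1\gamma,\frac1{1+\gamma})},
\]
which is what the paper's own proof arrives at in its last line; the statement has a sign typo. Do not present ``the form given in the statement''. State and prove this corrected version, with the derivative one-sided as you correctly note. With that fix your argument is complete.
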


\begin{proof}
To argue for the switching of the limit and infimum in the first equality we use the Mosony-Hiai minimax theorem~\cite[Corollary A.2]{Mosony.2011} which applies since $\lim_{\alpha \uparrow 1}$ can be replaced by $\sup_{\alpha\in{(0,1)}}$ and the well known facts that $D_\alpha(\rho_A\otimes\sigma_B\|\rho_{AB})$ is monotonically increasing in $\alpha$ on $(0,1)$ and convex and continuous in $\sigma_B$ on the compact set $\cD(\cH_B)$. That is
\begin{align}
    U(A;B)_\rho&= \inf_{\sigma_B\in\cD(\cH_B)}\sup_{\alpha\in(0,1)}D_\alpha(\rho_A\otimes\sigma_B\|\rho_{AB}) = \sup_{\alpha\in(0,1)}\inf_{\sigma_B\in\cD(\cH_B)}D_\alpha(\rho_A\otimes\sigma_B\|\rho_{AB}) \\&= \lim_{\alpha\uparrow 1}\tilde{U}_\alpha(A;B)_\rho.
\end{align}
To derive the second equality we see that
\begin{align}
\lim_{\alpha\uparrow1}\inf_{\sigma_B}D_\alpha(\rho_A\otimes\sigma_B\|\rho_{AB}) &= \lim_{\alpha\uparrow1}-\beta\log\|\rho_A^\frac{1}{2}\rho_{BA}^\frac{1}{\beta}\rho_A^\frac{1}{2}\|_{(B:\beta,A:\alpha)} \\
    &=\lim_{\beta\uparrow\infty}-\beta\log\|\rho_A^\frac{1}{2}\rho_{BA}^\frac{1}{\beta}\rho_A^\frac{1}{2}\|_{(B:\beta,A:\frac{\beta}{1+\beta})} \\ &=
    \lim_{\gamma\downarrow0}-\frac{1}{\gamma}\log\|\rho_A^\frac{1}{2}\rho_{BA}^\gamma\rho_A^\frac{1}{2}\|_{(B:\frac{1}{\gamma},A:\frac{1}{1+\gamma})} \\
    &= -\frac{d}{d\gamma}\Bigg|_{\gamma=0}\log\|\rho_A^\frac{1}{2}\rho_{BA}^\gamma\rho_A^\frac{1}{2}\|_{(B:\frac{1}{\gamma},A:\frac{1}{1+\gamma})},
\end{align} where the last equality follows since
\begin{align}
    F(\gamma):=\log\|\rho_A^\frac{1}{2}\rho_{BA}^\gamma\rho_A^\frac{1}{2}\|_{(B:\frac{1}{\gamma},A:\frac{1}{1+\gamma})}
\end{align} satisfies $F(0)= \log\|\1_B\otimes\rho_A\|_{(\infty,1)}=\log(1)=0$ when $\rho_{AB}$ is full rank.
\end{proof}
\begin{remark}
Note that when $\rho_{AB}$ is s.t. $\|\rho_A^\frac{1}{2}\Pi_{\rho_{BA}}\rho_A^\frac{1}{2}\|_{(B:\infty,A:1)}<1$ then $F(0)<0$ in the above proof and the function has a non-zero jump there implying that $U(A;B)_\rho=\infty$. This condition is thus a necessary and sufficient condition for $U(A;B)_\rho=\infty$ (in finite dimensions) and it is easy to see that it is equivalent to the one of \cite[Lemma 2 (2)]{Girardi.2025}: $\exists \psi_B$ s.t. $\supp(\rho_A\otimes\psi_B)\subset\supp(\rho_{AB})$.
Note also, that the quasi-norm appearing here is s.t. $\infty\geq \beta\geq 1\geq \alpha\geq \frac{1}{2}$, i.e. contains one index corresponding to a norm and another corresponding to a quasi-norm.
\end{remark}

\begin{proposition}
Let $0<\alpha<1$, then $\tilde{U}_\alpha(A:B)_\rho$ satisfies
\begin{enumerate}
     \item Invariance under local unitary channels $\mathcal{U}_A, \mathcal{U_B}$, i.e.
    \begin{align}
        \tilde{U}_\alpha(A;B)_{(\mathcal{U}_A\otimes\mathcal{U}_B)(\rho)}=  \tilde{U}_\alpha(A;B)_\rho
    \end{align}
    \item Simplification on states classical on $B$, i.e. for $\rho_{AB}=\sum_ip_i\rho^{(i)}_A\otimes|i\rangle\langle i|_B$,
    \begin{align}
        \tilde{U}_\alpha(A;B)_\rho= -\log\left(\sum_ip_i\|\rho_A^\frac{1}{2}\rho^{(i)\,\frac{1}{\beta}}_A\rho_A^\frac{1}{2}\|^{\frac{\alpha}{1-\alpha}}_\alpha\right)
    \end{align} 
\end{enumerate}
\end{proposition}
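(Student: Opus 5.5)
Both items follow from the quasi-norm representation established in the preceding lemma,
\begin{align}
\tilde{U}_\alpha(A;B)_\rho=-\beta\log\big\|\rho_A^{\frac{1}{2}}\rho_{BA}^{\frac{1}{\beta}}\rho_A^{\frac{1}{2}}\big\|_{(B:\beta,A:\alpha)},\qquad \beta=\tfrac{\alpha}{1-\alpha}.
\end{align}
We combine it with the structural results of \cref{sec:Operator.valued.quasi.norms}. The indices $(\beta,\alpha)$ satisfy $\frac{1}{\alpha}-\frac{1}{\beta}=1$, so they are compatible and all earlier results apply. In both items we only track how the operator inside the quasi-norm transforms.

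For the first item, write $\rho'=(U_B\otimes U_A)\rho_{BA}(U_B\otimes U_A)^*$, where the unitaries implement $\mathcal{U}_A,\mathcal{U}_B$.
\begin{itemize}
\item The marginal transforms as $\rho'_A=U_A\rho_AU_A^*$, so $\rho_A'^{\frac12}=U_A\rho_A^{\frac12}U_A^*$.
\item By functional calculus, $\rho'^{\frac1\beta}=(U_B\otimes U_A)\rho^{\frac1\beta}(U_B\otimes U_A)^*$.
\item Combining these, the inner operator becomes $(U_B\otimes U_A)\,\rho_A^{\frac12}\rho_{BA}^{\frac1\beta}\rho_A^{\frac12}\,(U_B\otimes U_A)^*$.
\end{itemize}
Applying \cref{prop:LocalIsomInv} once on the first tensor factor ($B$) and once on the second ($A$) removes both unitaries, which gives invariance. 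Alternatively, one can argue directly from the defining infimum, since $D_\alpha$ is unitarily invariant and $\sigma_B\mapsto U_B\sigma_BU_B^*$ is a bijection of $\cD(\cH_B)$.

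For the second item, $\rho_{BA}=\bigoplus_i p_i\rho_A^{(i)}$ is block diagonal with respect to the basis $\{|i\rangle\}$ of $B$, the first system of the quasi-norm. Functional calculus acts blockwise, so
\begin{align}
\rho_{BA}^{\frac{1}{\beta}}=\bigoplus_i p_i^{\frac1\beta}\big(\rho_A^{(i)}\big)^{\frac1\beta}.
\end{align}
The sandwich by $\1_B\otimes\rho_A^{\frac12}$ preserves this block structure, so
\begin{align}
\rho_A^{\frac12}\rho_{BA}^{\frac1\beta}\rho_A^{\frac12}=\bigoplus_i p_i^{\frac1\beta}\,\rho_A^{\frac12}\big(\rho_A^{(i)}\big)^{\frac1\beta}\rho_A^{\frac12}.
\end{align}
Blocks with $p_i=0$ vanish and contribute nothing. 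We now apply \cref{thm:cq.additivity} with $q=\beta$, $p=\alpha$:
\begin{align}
\big\|\rho_A^{\frac12}\rho_{BA}^{\frac1\beta}\rho_A^{\frac12}\big\|_{(\beta,\alpha)}=\Big(\sum_i p_i\,\big\|\rho_A^{\frac12}\rho_A^{(i)\,\frac1\beta}\rho_A^{\frac12}\big\|_\alpha^{\beta}\Big)^{\frac1\beta}.
\end{align}
Taking $-\beta\log$ cancels the outer power $\frac1\beta$. Substituting $\beta=\frac{\alpha}{1-\alpha}$ then yields exactly the stated formula.

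There is no genuine obstacle here; the substantive input is \cref{thm:cq.additivity}, which is available precisely because the indices are compatible. The only points needing care are bookkeeping ones. First, the classical register $B$ must be the first (outer, $\beta$-indexed) system, which matches the ordering $\rho_{BA}$. Second, the power $\frac1\beta$ of a block-diagonal operator is computed blockwise, including the zero blocks.
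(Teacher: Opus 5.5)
Your proposal is correct and follows the paper's proof. For item 1 you conjugate the inner operator $\rho_A^{\frac12}\rho_{BA}^{\frac1\beta}\rho_A^{\frac12}$ by the local unitaries and remove them with \cref{prop:LocalIsomInv}; for item 2 you write it as the block-diagonal operator $\bigoplus_i p_i^{\frac1\beta}\rho_A^{\frac12}(\rho_A^{(i)})^{\frac1\beta}\rho_A^{\frac12}$, apply \cref{thm:cq.additivity} with $(q,p)=(\beta,\alpha)$, and take $-\beta\log$. Your side remark is also valid and more elementary: item 1 follows directly from the defining infimum, by unitary invariance of $D_\alpha$ and the bijection $\sigma_B\mapsto U_B\sigma_BU_B^*$.
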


\begin{proof}
1. The invariance under $\mathcal{U}_B$ is clear by local unitary invariance of the $\|\cdot\|_{(\beta,\alpha)}$-quasi norm. For $\mathcal{U}_A$ see that
\begin{align}
    \|\mathcal{U}_A(\rho_A)^\frac{1}{2}\mathcal{U}_A(\rho_{BA})^\frac{1}{\beta}\mathcal{U}_A(\rho_A)^\frac{1}{2}\|_{(B:\beta,A:\alpha)} &= \|U_A\rho_A^\frac{1}{2}U_A^*U_A\rho_{BA}^\frac{1}{\beta}U_A^*U_A\rho_A^\frac{1}{2}U_A^*\|_{(B:\beta,A:\alpha)}\\
    &=\|\rho_A^\frac{1}{2}\rho_{BA}^\frac{1}{\beta}\rho_A^\frac{1}{2}\|_{(B:\beta,A:\alpha)},
\end{align}
where we used again the local isometric invariance and omitted writing identity operators. \\
2. This follows directly from \cref{thm:cq.additivity}. Writing $\tr_B\rho_{AB}=\sum_ip_i\rho_A^{(i)}\equiv \rho_A$, it holds that
\begin{align}
\Big\|\rho_A^\frac{1}{2}(\sum_i|i\rangle\langle i|_B\otimes  p_i\rho^{(i)}_{A})^\frac{1}{\beta}\rho_A^\frac{1}{2}\Big\|_{(B:\beta,A:\alpha)} &= \Big\|\sum_i|i\rangle\langle i|_B\otimes (p_i^\frac{1}{\beta}\rho_A^\frac{1}{2}(\rho^{(i)}_{A})^\frac{1}{\beta}\rho_A^\frac{1}{2})\Big\|_{(B:\beta,A:\alpha)} \\
&=\left(\sum_ip_i\|\rho_A^\frac{1}{2}(\rho^{(i)}_A)^\frac{1}{\beta}\rho_A^\frac{1}{2}\|^  \beta_\alpha\right)^\frac{1}{\beta}.
\end{align}
Applying $-\beta\log$ to this completes the proof.
\end{proof}
\noindent Another important property which we can prove is additivity under tensor product states. However, because that proof requires more advanced machinery that we believe deserves its own introduction, we leave it for future work \cite{Kochanowski.in.prep} to present.
\begin{proposition}[\cite{Kochanowski.in.prep}]
    Additivity under tensor products: For $\rho=\rho_1\otimes\rho_2\in\cD(\cH_A\otimes\cH_B)$ with $A=A_1A_2, B=B_1B_2$, $\rho_1:=\rho_{A_1B_2}$ and $\rho_2:=\rho_{A_2B_2}$, it holds that
    \begin{align}
         \tilde{U}_\alpha(A;B)_{\rho_1\otimes\rho_2}=  \tilde{U}_\alpha(A_1;B_1)_{\rho_1}+ \tilde{U}_\alpha(A_2;B_2)_{\rho_2}.
    \end{align}
\end{proposition}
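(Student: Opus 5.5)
The plan is to reduce the claim to a multiplicativity statement for the $(\beta,\alpha)$ quasi-norm and then prove two inequalities. By the preceding Lemma,
\begin{align}
\tilde{U}_\alpha(A;B)_\rho=-\beta\log\|\rho_A^{\frac12}\rho_{BA}^{\frac1\beta}\rho_A^{\frac12}\|_{(B:\beta,A:\alpha)} .
\end{align}
For $\rho=\rho_1\otimes\rho_2$ we have $\rho_A=\rho_{A_1}\otimes\rho_{A_2}$ and $\rho_{BA}^{1/\beta}=\rho_{1}^{1/\beta}\otimes\rho_{2}^{1/\beta}$, up to reordering the tensor factors, which is harmless by \cref{prop:LocalIsomInv}. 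So the operator inside the norm is $X\otimes Y$ with $X:=\rho_{A_1}^{1/2}\rho_{1}^{1/\beta}\rho_{A_1}^{1/2}\geq0$ on $B_1A_1$ and $Y$ defined analogously on $B_2A_2$. It therefore suffices to show, for positive $X,Y$ and $\beta\geq\alpha$ with $\frac1\alpha-\frac1\beta=1$,
\begin{align}
\|X\otimes Y\|_{(B_1B_2:\beta,A_1A_2:\alpha)}=\|X\|_{(B_1:\beta,A_1:\alpha)}\,\|Y\|_{(B_2:\beta,A_2:\alpha)} .
\end{align}

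\emph{Easy direction ($\geq$, i.e.\ $\tilde U$ subadditive).} Since $\beta\geq\alpha$, the norm is the supremum \eqref{equ:Def1.2}. By \cref{lem:Positive.Symmetrie} we may take $a=b\geq0$. Restricting to product operators $a=a_1\otimes a_2$, the objective factorizes, because $\|a_1\otimes a_2\|_{2r}=\|a_1\|_{2r}\|a_2\|_{2r}$ and $\|(a_1X a_1)\otimes(a_2Ya_2)\|_\alpha=\|a_1Xa_1\|_\alpha\|a_2Ya_2\|_\alpha$. Taking suprema separately gives $\geq$. Equivalently, in entropic language, one just plugs the product state $\sigma_{B_1}\otimes\sigma_{B_2}$ into the infimum defining $\tilde U_\alpha$.

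\emph{Hard direction ($\leq$).} Here one must show that non-product $a$ on $B_1B_2$ cannot beat the product value. I would proceed in two steps, mirroring the Devetak--Junge--King--Ruskai strategy.
\begin{enumerate}
\item Use \cref{thm:qt.to.pt} with an intermediate index to peel off one factor. Writing $X\otimes Y=(X\otimes\1)(\1\otimes Y)$ in the tripartite system $B_1\,|\,B_2A_2\,|\,A_1$, I aim for
\begin{align}
\|X\otimes Y\|_{(B_1B_2:\beta,A_1A_2:\alpha)}\leq\|X\|_{(B_1:\beta,A_1:\alpha)}\sup_{Z}\frac{\|(\id\otimes\Phi_Y)(Z)\|_{(\beta,\alpha)}}{\|Z\|_{(\beta,\alpha)}},
\end{align}
where $\Phi_Y$ is the CP map tensoring with $Y$.
\item Control the second factor by $\|Y\|_{(\beta,\alpha)}$ using the Minkowski inequality \cref{thm:QuasiNormMinkowskiInequality}, which swaps the order in which the $B_2$ and $A_1$ registers are normed.
\end{enumerate}

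The main obstacle is that the quasi-norm Minkowski inequality is stated for $q\leq p$, while here the outer index $\beta$ exceeds the inner index $\alpha$. I would first try to apply it to a Hölder-type reformulation in which the roles of the indices are reversed. If that fails, I would fall back on a dual-certificate argument built on the first-order optimality conditions of the supremum. Since $r>1$ here, \cref{lem:SupportConditions} guarantees full-support optimizers $a_1^\star$, $a_2^\star$ for $X$ and $Y$. The stationarity equations they satisfy, obtained by differentiating $a\mapsto\|a^{1/2r}Za^{1/2r}\|_\alpha^\alpha$, whose concavity is given by Zhang's result, should tensorize. Concavity of this objective would then show that $a_1^\star\otimes a_2^\star$ is a global maximizer for $X\otimes Y$, yielding $\leq$.
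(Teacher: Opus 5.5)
The paper gives no proof of this proposition (it is deferred to \cite{Kochanowski.in.prep}), so your proposal has to stand on its own. Set $N(Z):=\|Z\|_{(B:\beta,A:\alpha)}=\max_{\sigma\in\cD(\cH_B)}\|Z^{1/2}(\sigma\otimes\1)Z^{1/2}\|_\alpha$. Your reduction to $N(X\otimes Y)=N(X)N(Y)$ is fine, and so is the easy direction $N(X\otimes Y)\geq N(X)N(Y)$.

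The hard direction, however, is not established. As displayed, Step~1 is vacuous: the inequality holds trivially by taking $Z=X$. Bounding the remaining supremum by $\|Y\|_{(\beta,\alpha)}$ is exactly the claim to be proved, so all the content sits in Step~2, for which you give no mechanism. The obstacle you name is also misread. \cref{thm:QuasiNormMinkowskiInequality} with $q\le p$ concerns norms $(EP:p,Q:q)$ whose outer index is the larger one, i.e.\ precisely your regime $p=\beta$, $q=\alpha$. Where Minkowski really helps is through channels. With $\Phi_Z(\omega):=Z^{1/2}(\omega\otimes\1_A)Z^{1/2}$ one has $N(Z)=\|\Phi_Z\|^+_{1\to\alpha}$ and $\Phi_{X\otimes Y}=\Phi_X\otimes\Phi_Y$. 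Hence \cref{thm:cb.1.to.p.multiplicativity} gives $N(X\otimes Y)\le N(X)N(Y)$ at once, but only for $\alpha\geq\frac12$, because the input index $1$ must be compatible with $\alpha$.

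Your fallback is the right idea and can cover all $0<\alpha<1$, but its premise is false. Since $\frac1\alpha-\frac1\beta=1$, you have $r=1$ exactly, so the full-support clause of \cref{lem:SupportConditions}, which needs $r>1$, is unavailable. Full support genuinely fails at $r=1$. Take $Z=|\psi\rangle\langle\psi|$ with $\psi=\sqrt{p}\,|00\rangle+\sqrt{1-p}\,|11\rangle$ and $p>\frac12$. The objective is $\Tr[\sigma\psi_B]$, maximized only at $\sigma=|0\rangle\langle0|$, although $L_Z=\1$. Write $F_Z(\sigma):=\Tr[(Z^{1/2}(\sigma\otimes\1)Z^{1/2})^\alpha]$. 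At this maximizer the optimality condition is merely $\nabla F_Z\le\lambda\1$, not $\nabla F_Z=\lambda\1$. So the equality-type stationarity you want to tensorize is not available in general.

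There are two ways to close this.
\begin{enumerate}
\item[(i)] $N$ is continuous, being a maximum over a compact set of a jointly continuous function, so it suffices to take $X,Y$ positive definite. Then every maximizer of $F_Z$ is full rank. Mixing in weight $\epsilon$ of a state supported on $\ker\sigma$ creates new eigenvalues of order $\epsilon$, which gains order $\epsilon^\alpha$ against an $O(\epsilon)$ loss.
\item[(ii)] Keep general $X,Y$ and use the inequality form $0\le\nabla F_Z(\sigma^\star)\le\lambda\1$, with $\lambda=\Tr[\sigma^\star\nabla F_Z(\sigma^\star)]$. This also tensorizes, because $0\le G_i\le\lambda_i\1$ implies $G_1\otimes G_2\le\lambda_1\lambda_2\1$. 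The same $\epsilon^\alpha$ argument, mixing in $\1/d$, shows that $F_Z$ is differentiable at every maximizer.
\end{enumerate}
In either case $\nabla F_{X\otimes Y}(\sigma_1^\star\otimes\sigma_2^\star)=\alpha^{-1}\nabla F_X(\sigma_1^\star)\otimes\nabla F_Y(\sigma_2^\star)$. Moreover $F_{X\otimes Y}$ is concave, since $M\mapsto\Tr M^\alpha$ is concave and $\sigma\mapsto Z^{1/2}(\sigma\otimes\1)Z^{1/2}$ is linear. Hence $\sigma_1^\star\otimes\sigma_2^\star$ is a global maximizer, giving $N(X\otimes Y)=N(X)N(Y)$.

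With this repair your fallback is a complete, elementary proof for every $0<\alpha<1$. That is strictly more than the Minkowski route yields.
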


\subsection{Completely Bounded Quasi-Norms and Co-Quasi-Norms}\label{sec:cb.mixed.norms}

Completely bounded norms \cite{book:Paulsen.2003} find many applications in quantum information theory. Most famously perhaps in the form of the diamond norm \cite[Chapter 3]{book:Watrous.2018}. When the input and output spaces are (multi-index-)Schatten spaces, applications include channel entropy uncertainty relations \cite{Gao.2018, Gao.2023}, completely bounded hypercontractivity \cite{Bardet.2022, Bardet.2024,Beigi.2016} and complete modified logarithmic Sobolev constants \cite{gao2022complete}, completely bounded minimal output entropies \cite{Devetak.2006}, and minimal output conditional entropies under arbitrary environments \cite{Fawzi.2026}.
In this section we extend the notion of completely bounded norms to quasi-normed spaces and consider completely bounded quasi-norms and co-quasi-norms to show among others multiplicativity results in \cref{sec:app:mixed.norm.additivity} and \cref{sec:app:mixed.conorm.additivity} in analogy to \cite[Section 4]{Devetak.2006}.

\medskip 

 We will make use of the following standard definitions: given any two (quasi-)normed spaces $\mathcal{X},\mathcal{Y}$ with associated (quasi-)norms labeled as $\|\cdot \|_{x}$ and $\|\cdot \|_{y}$, and a bounded map $\Phi:\mathcal{X}\to\mathcal{Y}$:
\begin{align}
\|\Phi\|_{x\to y}:=\sup_{X\in\mathcal{X}}\frac{\|\Phi(X)\|_y}{\|X\|_x}\qquad \text{ and }\qquad
\|\Phi\|^+_{x\to y}:=\sup_{\underset{X\geq0}{X\in\mathcal{X}}}\frac{\|\Phi(X)\|_y}{\|X\|_x}\,~\,.
\end{align}
For instance $\|\cdot\|_x:=\|\cdot \|_{(Q:q,P:p)}$ in what follows. 
Note that these functional are, by definition, also quasi-norms on the space of all bounded (CP) linear maps. We hence sometimes call them `mixed' quasi-norms, referring to the fact that the in- and output Spaces posses different multi-indexed Schatten (quasi-)norms.
Similarly, we define the co-(quasi-)norms of $\Phi$ as 
\begin{align}
\|\Phi\|^+_{\co,x\to y}:=\inf_{\underset{X\geq0}{X\in\mathcal{X}}}\frac{\|\Phi(X)\|_y}{\|X\|_x}\,~\,
\end{align} and refer to them analogously as `mixed' co-quasi-norms.

\subsubsection{Completely Bounded Quasi-Norms of Linear Maps}

Inspired by the completely bounded norm between operator spaces and its rewriting via Operator valued Schatten norms, we define the concept of complete boundedness between Schatten-quasi-normed spaces in the following.

\begin{definition}\label{def:cb.quasi.norm}
Let $q,p\in[\frac{1}{2},\infty]$ and $\Phi:Q\to P$ be a linear map, then
\begin{align}
    &\|\Phi\|_{\cb,(Q:q)\to (P:p)}:=\sup_E\|\id_E\otimes\Phi\|_{(E:1,Q:q)\to (E:1,P:p)}\\
    &\|\Phi\|^+_{\cb,(Q:q)\to (P:p)}:=\sup_E\|\id_E\otimes\Phi\|^+_{(E:1,Q:q)\to (E:1,P:p)}\,,
\end{align}
where the supremum is over any finite dimensional reference system $E$.
\end{definition}

\noindent When $q,p\geq1$ this recovers the usual completely bounded norm  \cite[Lemma 1.7]{Book.Pisier.1998}. We restrict here to values $q,p\geq\frac{1}{2}$, however, in principle one could define a notion of completely bounded quasi-norm analogously by choosing the index on the reference systems to be some $t$, which is compatible in the sense of \cref{def:Definition1} with both $q$ and $p$. This works as long as $\big|\frac{1}{q}-\frac{1}{p}\big|\leq 2$ and is well defined due to the following \cref{lem:cb.norm.via.t}. In the normed setting, i.e. where $q,p,t\geq1$ this is a well known statement going back to Pisier \cite{Book.Pisier.1998}.

\begin{lemma}\label{lem:cb.norm.via.t}
Let $\Phi:Q\to P$ be a linear map, and $q,p,t>0$ such that $\max\big\{\big|\frac{1}{t}-\frac{1}{q}\big|,\big|\frac{1}{t}-\frac{1}{p}\big|\big\}\leq1$, then
\begin{align}
    &\|\Phi\|^{+}_{\cb,(Q:q)\to (P:p)}=\sup_E\|\id_E\otimes\Phi\|^{+}_{(E:t,Q:q)\to (E:t,P:p)}\,,\\
    &\|\Phi\|_{\cb,(Q:q)\to (P:p)}=\sup_E\|\id_E\otimes\Phi\|_{(E:t,Q:q)\to (E:t,P:p)}.
\end{align} 
\end{lemma}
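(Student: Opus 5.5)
The plan is to show that for each fixed reference system $E$ and each compatible $t$, the $t$-indexed quantity agrees with the $1$-indexed one (for $q,p\ge\frac12$), and more generally that it does not depend on $t$. The tool is the relational consistency of \cref{thm:qt.to.pt}, applied to the reference register (the first factor), with $\cH_2=Q$ for the input and $\cH_2=P$ for the output. The argument rests on one observation: $\id_E\otimes\Phi$ commutes with multiplication by operators $a_E\otimes\1$ on the left and $b_E\otimes\1$ on the right. That is,
\begin{align}
(\id_E\otimes\Phi)(a_EXb_E)=a_E\,(\id_E\otimes\Phi)(X)\,b_E .
\end{align}

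Fix two indices $t,t'$ that are both compatible with $q$ and with $p$. Assume without loss of generality $t\le t'$, and set $\frac1r=\frac1t-\frac1{t'}$. First I would check that $|\frac1t-\frac1{t'}|\le1$ can be arranged. If $t,t'$ both lie in the interval of indices compatible with $q$ and $p$, this interval has length at most $1$ in the $1/\cdot$ scale. If not, I would pass through intermediate indices, chaining, e.g.\ through $t''=q$ or $t''=p$, since each of these is compatible with both.

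The step from $t'$ to $t$ goes as follows. Write the input norm via \cref{thm:qt.to.pt}:
\begin{align}
\|X\|_{(t,q)}=\inf_{X=a_EZb_E}\|a\|_{2r}\|b\|_{2r}\|Z\|_{(t',q)}.
\end{align}
By the commutation property, $(\id\otimes\Phi)(X)=a_E(\id\otimes\Phi)(Z)b_E$. Applying the same formula on the output side then gives
\begin{align}
\|(\id\otimes\Phi)(X)\|_{(t,p)}\le\|a\|_{2r}\|b\|_{2r}\|(\id\otimes\Phi)(Z)\|_{(t',p)}\le\|a\|_{2r}\|b\|_{2r}\,\|\id_E\otimes\Phi\|_{(t',q)\to(t',p)}\|Z\|_{(t',q)}.
\end{align}
Taking the infimum over factorizations yields $\|\id_E\otimes\Phi\|_{(t,q)\to(t,p)}\le\|\id_E\otimes\Phi\|_{(t',q)\to(t',p)}$. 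For the converse, I would use the supremum formula of \cref{thm:qt.to.pt} (the case $t'\ge t$):
\begin{align}
\|Y\|_{(t',p)}=\sup_{a,b}\|a\|_{2r}^{-1}\|b\|_{2r}^{-1}\|a_EYb_E\|_{(t,p)}.
\end{align}
Taking $Y=(\id\otimes\Phi)(X)$, pulling $a_E,b_E$ inside $\Phi$, and bounding by the $(t,\cdot)$-ratio, the remaining factor $\|a\|^{-1}\|b\|^{-1}\|a_EXb_E\|_{(t,q)}$ is at most $\|X\|_{(t',q)}$, again by the same formula. This gives equality for each fixed $E$.

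For the positive version ($^{+}$), the same argument needs the factorizations to preserve positivity. When $X\ge0$, \cref{lem:Positive.Symmetrie} allows $a=b$, so both $a_EZa_E$ and $a_EXa_E$ stay positive. I expect the main obstacle to be this positivity point, because \cref{lem:Positive.Symmetrie} is stated only for the Schatten-$p$ factorization and not for the $(t',q)$-factorization of \cref{thm:qt.to.pt}. I would try to redo its proof, using joint convexity/concavity in $(a,b)$ combined with the symmetry $G(a,b)=G(b,a)$, at the level of $(t',q)$ quasi-norms. The joint convexity needed here is the one established in Cases 3 and 6 of \cref{thm:qt.to.pt}. A secondary obstacle is justifying the chaining through intermediate indices so that every step stays compatible.
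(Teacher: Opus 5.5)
Your argument follows the paper's proof: relational consistency (\cref{thm:qt.to.pt}) on the reference register, the commutation of $\id_E\otimes\Phi$ with $a_E,b_E$, the infimum formula for one inequality and the supremum formula for the other. Your direct bound through the supremum is a slightly cleaner version of the paper's $\epsilon$-optimizer step. Both issues you flag are real, and the paper passes over both, but each of your fixes needs adjusting.

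\emph{Chaining.} The set of $\frac1t$ compatible with both $q$ and $p$ is an interval whose length can be as large as $2-|\frac1q-\frac1p|$, not at most $1$. For instance, $q=p=\frac12$, $t=\frac13$ is admissible, yet $|\frac1t-1|=2$. So comparing $t$ directly with the index $1$, as the paper's proof does, is not covered by \cref{thm:qt.to.pt}. Your case split should be on $|\frac1t-\frac1{t'}|>1$ itself. When this holds, the interval has length $>1$, which forces $|\frac1q-\frac1p|<1$. Only then is $q$ compatible with $p$ (the cb quasi-norm only assumes $q,p\ge\frac12$). The route $t\to q\to t'$ then uses the admissible triples $(t,q,q)$, $(t,q,p)$, $(q,t',q)$, $(q,t',p)$.

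\emph{Positivity.} You are right that \cref{lem:Positive.Symmetrie} only symmetrizes the Schatten-level factorization. On the infimum side ($X\ge0$, $t\le t'$) there are two regimes.

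If $t'\le q$, no convexity is needed. Set $\frac1m=\frac1t-\frac1q$ and $\frac1s=\frac1{t'}-\frac1q$. Take $C$ nearly optimal in $\|X\|_{(t,q)}=\inf_C\|C\|_{2m}^2\|C^{-1}XC^{-1}\|_q$. Write $C^2=a\,d^2a$ with $a$ nearly optimal in the $a=b$ case of \cref{prop:SpVariationalFormulas} for $\|C^2\|_m$. Then bound $\|a^{-1}Xa^{-1}\|_{(t',q)}\le\|d\|_{2s}^2\|C^{-1}XC^{-1}\|_q$, where $\|a\|_{2r}^2\|d\|_{2s}^2\approx\|C\|_{2m}^2$.

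If $t'\ge q$, your midpoint argument works with the convexity from the claim in Case~6; Case~3 is not needed. Replace $C$ by $C^{1/r}$ in that proof. This is legitimate because $C\mapsto-\Tr[\tilde A C^{1/r}]$ remains convex for $r\ge1$.

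On the supremum side, the optimizer that must be symmetric is the one for $Y=(\id_E\otimes\Phi)(X)$, not for $X$. The Cauchy--Schwarz bound $\|aYb\|_{(t,p)}\le\|aYa\|_{(t,p)}^{1/2}\|bYb\|_{(t,p)}^{1/2}$, which follows from \cref{lem:Positive.Symmetrie} and H\"older, gives this, but only when $Y\ge0$. So the $^+$ statement, as you and the paper argue it, needs $\Phi$ to be positive. That covers the CP maps to which it is applied later.
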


\begin{proof}
This equivalence is a simple consequence of the factorization formulas from \cref{thm:qt.to.pt}. We show the equality by separately proving the two inequalities contained in it. In order not to separately consider the cases $t\leq 1$ and $t\geq1$, we will in the following work with $s\leq t$, such that both are assumed to be such that the formulas of \cref{thm:qt.to.pt} apply. Then later in the proof taking either $s=1$ covers the proof of the case $t\geq1$ in the statement of the lemma while taking $t=1$ in the proof covers the case $t\leq 1$: Let $\frac{1}{r}=\frac{1}{s}-\frac{1}{t}\leq 1$.
Then on the one hand we have, for any input $X_{EQ}$ and any $\epsilon>0$, that there exists approximately optimal $\hat{a}_E,\hat{b}_E,\hat{Z}_{EQ}$ s.t. $X_{EQ}=\hat{a}_E\hat{Z}_{EQ}\hat{b}_E$ and
\begin{align}
    \|X\|_{(E:s,Q:q)}\geq \|\hat{a}\|_{2r}\|\hat{b}\|_{2r}\|\hat{Z}\|_{(E:t,Q:q)}-\epsilon=\|\hat{Z}\|_{(E:t,Q:q)}-\epsilon,
\end{align}
as we may assume $\|\hat{a}\|_{2r}=\|\hat{b}\|_{2r}=1$. At the same time
\begin{align}
    \|(\id_E\otimes\Phi)(X_{EQ})\|_{(E:s,P:p)}\leq \|a\|_{2r}\|b\|_{2r}\|(\id_E\otimes\Phi)(Z_{EQ})\|_{(E:t,P:p)},
\end{align} for any $a,b,Z$ with $X_{EQ}=a_EZ_{EQ}b_E$, since this yields $(\id_E\otimes\Phi)(X_{EQ})=a_E(\id_E\otimes\Phi)(Z_{EQ})b_E$. Together these yield
\begin{align}
    \frac{\|(\id_E\otimes\Phi)(X)\|_{(E:s,P:p)}}{\|X\|_{(E:s,Q:q)}}\leq \frac{\|\hat{a}\|_{2r}\|\hat{b}\|_{2r}\|(\id_E\otimes\Phi)(\hat{Z})\|_{(E:t,P:p)}}{\|\hat{a}\|_{2r}\|\hat{b}\|_{2r}\|\hat{Z}\|_{(E:t,Q:q)}-\epsilon}= \frac{\|(\id_E\otimes\Phi)(\hat{Z})\|_{(E:t,P:p)}}{\|\hat{Z}\|_{(E:t,Q:q)}-\epsilon}.
\end{align} Since this works for any $\epsilon>0$ it follows that
\begin{align}
    \frac{\|(\id_E\otimes\Phi)(X)\|_{(E:s,P:p)}}{\|X\|_{(E:s,Q:q)}}\leq \sup_{Z_{EQ}}\frac{\|(\id_E\otimes\Phi)(Z)\|_{(E:t,P:p)}}{\|Z\|_{(E:t,Q:q)}}
\end{align}
where the last inequality follows since this works for any $\epsilon>0$. 
On the other hand we similarly get that for any $\epsilon>0$ approximate  $\hat{a},\hat{b}$ such that 
\begin{align}
   \|(\id_E\otimes\Phi)(X_{EQ})\|_{(E:t,P:p)}\leq \|\hat{a}\|^{-1}_{2r}\|\hat{b}\|^{-1}_{2r}\|(\id_E\otimes\Phi)(\hat{a}_EX_{EQ}\hat{b}_E)\|_{(E:s,P:p)}+\epsilon, 
\end{align} where again we may assume $\|\hat{a}\|^{-1}_{2r}=\|\hat{b}\|^{-1}_{2r}=1$. Analogous to before it also holds that
\begin{align}
    \|X_{EQ}\|_{(E:t,Q:q)}\geq \|\hat{a}\|^{-1}_{2r}\|\hat{b}\|^{-1}_{2r}\|\hat{a}_EX_{EQ}\hat{b}_E\|_{(E:s,Q:q)}
\end{align} and hence
\begin{align}
  \frac{\|(\id_E\otimes\Phi)(X)\|_{(E:t,P:p)}}{\|X\|_{(E:t,Q:q)}}&\leq \frac{\|\hat{a}\|^{-1}_{2r}\|\hat{b}\|^{-1}_{2r}\|(\id_E\otimes\Phi)((\hat{a}_EX_{EQ}\hat{b}_E)\|_{(E:s,P:p)}+\epsilon}{\|\hat{a}\|^{-1}_{2r}\|\hat{b}\|^{-1}_{2r}\|\hat{a}_EX_{EQ}\hat{b}_E\|_{(E:s,Q:q)}} \\&= \frac{\|(\id\otimes\Phi)(\hat{a}_EX_{EQ}\hat{b}_E)\|_{(E:s,P:p)}+\epsilon}{\|\hat{a}_EX_{EQ}\hat{b}_E\|_{(E:s,Q:q)}}.
\end{align} Since this works for any $\epsilon>0$ this implies that
\begin{align}
  \frac{\|(\id_E\otimes\Phi)(X)\|_{(E:t,P:p)}}{\|X\|_{(E:t,Q:q)}}  \leq\sup_Z\frac{\|(\id\otimes\Phi)(Z)\|_{(E:s,P:p)}}{\|Z\|_{(E:s,Q:q)}}.
\end{align}
Taking suprema over systems $E$, the dimension of the ancilla systems then completes the proof. For the positive case the exact same proof applies, since by \cref{lem:Positive.Symmetrie} if $X\geq0$ we may take $a=b\geq0$ and hence $Z\geq0$ as well.
\end{proof}

\subsubsection{Completely Bounded Co-Quasi-Norms of CP Maps}

 Instead of considering the usual quasi-norm of a linear map $\Phi:\cX\to\cY$ between two quasi-normed spaces, which are given as maximizations, one may be interested in analogous quantities involving a minimization. This is sometimes called the \textit{minimum modulus} or \textit{co(-quasi-)norm} \cite{Mbekhta.1996, amelin.1973}. Next, we define the \textit{co-quasi-norm} and \textit{completely bounded co-quasi-norm} of a CP maps. This quantity is much less well behaved that the former since it if often just $0$, which is why we will be focusing on the version restricted to positive inputs and CP maps. 
\begin{definition}
Let $\Phi:Q\to P$ be a linear map and let $0<q,p,r,s,t$ s.t. $\big|\frac{1}{r}-\frac{1}{p}\big|\leq1, \big|\frac{1}{s}-\frac{1}{q}\big|\leq 1$ and $\max\big\{\big|\frac{1}{t}-\frac{1}{q}\big|,\big|\frac{1}{t}-\frac{1}{p}\big|\big\}\leq 1$ then we define     
\begin{align}
\|\Phi\|^+_{\cb,\co, (Q:q)\to (P:p)}&:= \inf_{E}\|\id_E\otimes\Phi\|^+_{\co,(E:1,Q:q)\to(E:1,P:p)}.
\end{align}
\end{definition}
\noindent In exactly the same way as for the (completely bounded) quasi-norm, we can show that for the co-quasi-norm the reference system index can be chosen arbitrarily, as long as it is compatible with both $q,p$. In particular this means that the above quantity is well defined: 
\begin{lemma}\label{lem:cb.co.norm.via.t}
Let $\Phi:Q\to P$ be a linear map, $q,p\geq 0,\,t>0$ such that $\max\big\{\big|\frac{1}{t}-\frac{1}{q}\big|,\big|\frac{1}{t}-\frac{1}{p}\big|\big\}\leq1$, then
\begin{align}
    \|\Phi\|^{+}_{\cb,\co,(Q:q)\to(P: p)}=\inf_E\|\id_E\otimes\Phi\|^{+}_{\co,(E:t,Q:q)\to (E:t,P:p)}
\end{align} 
is independent of the index $t$ placed on the reference system, as long as it is compatible with the in- and output indices.
\end{lemma}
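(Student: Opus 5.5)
The plan is to mirror the proof of \cref{lem:cb.norm.via.t}, replacing suprema by infima and working only with positive inputs and CP maps. Fix two indices $s\leq t$, both compatible with $q$ and $p$, and also compatible with each other (automatic, since $|\frac1s-\frac1t|\le |\frac1s-\frac1q|+\ldots$ is not needed: we will take one of them equal to $1$ or insert the intermediate index $1$). Set $\frac{1}{r}=\frac{1}{s}-\frac{1}{t}$. We show
\[
\inf_E\|\id_E\otimes\Phi\|^{+}_{\co,(E:s,Q:q)\to(E:s,P:p)}=\inf_E\|\id_E\otimes\Phi\|^{+}_{\co,(E:t,Q:q)\to(E:t,P:p)}.
\]
Then choosing $s=1$ (if $t\ge1$) or $t=1$ (if $t\le1$) gives the statement. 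Compatibility of $t$ with $1$ follows from $|\frac1t-1|\le \max\{|\frac1t-\frac1q|,|\frac1t-\frac1p|\}$ only when $1$ lies between; otherwise I would chain through an intermediate index, as implicitly done before.

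\textbf{First inequality.} Fix $X_{EQ}\geq0$ and $\epsilon>0$. By \cref{thm:qt.to.pt} (the $q\le p$ branch, with $(s,t)$ on $E$) together with \cref{lem:Positive.Symmetrie}, there is a near-optimal factorization $X=\hat a_E\hat Z\hat a_E$ with $\hat a\geq0$, $\hat Z\geq 0$, $\|\hat a\|_{2r}=1$, and
\[
\|X\|_{(E:s,Q:q)}\ge\|\hat Z\|_{(E:t,Q:q)}-\epsilon.
\]
Since $\id_E\otimes\Phi$ commutes with $\hat a_E$, we get $(\id\otimes\Phi)(X)=\hat a_E(\id\otimes\Phi)(\hat Z)\hat a_E$, and hence
\[
\|(\id\otimes\Phi)(X)\|_{(E:s,P:p)}\le\|(\id\otimes\Phi)(\hat Z)\|_{(E:t,P:p)}.
\]
This bounds the $s$-ratio at $X$ from above only, which is the wrong direction for an infimum. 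So the argument must be arranged in reverse: start from an arbitrary positive $Z$ and build an $X$.

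\textbf{Reversed arrangement.} Given $Z\ge0$, the supremum branch of \cref{thm:qt.to.pt} gives $\|Z\|_{(E:t,Q:q)}\ge\|a\|_{2r}^{-2}\|aZa\|_{(E:s,Q:q)}$ for every $a$. Taking $X=aZa$ with $a$ near-optimal for the \emph{output}, i.e.
\[
\|(\id\otimes\Phi)(Z)\|_{(E:t,P:p)}\le\|a\|_{2r}^{-2}\|(\id\otimes\Phi)(aZa)\|_{(E:s,P:p)}+\epsilon,
\]
does not control the input ratio in the needed direction either. I would therefore pair the infimum factorization on the output with the supremum bound on the input. Concretely, to show $\inf_s\le\inf_t$: given $Z\ge0$, choose $a$ near-optimal for the output $(E:s,P:p)$ norm of $a(\id\otimes\Phi)(Z)a$ relative to the $(E:t,P:p)$ norm of $(\id\otimes\Phi)(Z)$. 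The $s$-ratio at $X=aZa$ is then at most the $t$-ratio at $Z$, using only the inequality $\|aZa\|_{(E:s,Q:q)}\le\|a\|^2_{2r}\|Z\|_{(E:t,Q:q)}$ on the input. The reverse inequality is handled symmetrically, using the infimum factorization on the output side (which holds for every factorization) and near-optimality on the input side, exactly as in the first inequality.

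\textbf{Main obstacle.} The delicate step is the sign bookkeeping. For co-norms, in each direction the near-optimal choice must be made on the side where we need an inequality that holds only approximately, while the other side must hold for all factorizations. The key points are that positivity is preserved ($a_E\ge0$, CP map), and that the support conditions of \cref{lem:SupportConditions} guarantee the factorizations invert correctly, so the $\epsilon$'s vanish. Finally, take infima over $E$.
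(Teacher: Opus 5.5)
Your overall route is the paper's: mirror \cref{lem:cb.norm.via.t} with near-optimal factorizations from \cref{thm:qt.to.pt}, using positivity of $a_E$ and complete positivity of $\Phi$. You are also right that a literal transcription of that proof, which is all the paper's one-line argument offers, yields inequalities pointing the wrong way for an infimum. However, the replacement you carry out is reversed in both directions, so as written neither inequality is proved.

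Write $N_u(X):=\|(\id_E\otimes\Phi)(X)\|_{(E:u,P:p)}$ and $D_u(X):=\|X\|_{(E:u,Q:q)}$. For $\inf_s\le\inf_t$ you need, for each $Z\geq0$, some $X=aZa$ with $N_s(X)/D_s(X)\lesssim N_t(Z)/D_t(Z)$. That requires an upper bound on $N_s(aZa)$ and a \emph{lower} bound on $D_s(aZa)$. Your ``Concretely'' step does the opposite:
\begin{itemize}
\item Choosing $a$ near-optimal for the output only supplies $N_s(aZa)\geq\|a\|_{2r}^2N_t(Z)-\epsilon$; the matching upper bound holds for every $a$ anyway.
\item The inequality $\|aZa\|_{(E:s,Q:q)}\le\|a\|_{2r}^2\|Z\|_{(E:t,Q:q)}$ is an \emph{upper} bound on the denominator.
\end{itemize}
Together these give $N_s(X)/D_s(X)\gtrsim N_t(Z)/D_t(Z)$. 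This is exactly the pairing you had just discarded in your ``Reversed arrangement'' paragraph, and no choice on the output side can rescue an argument that uses only an upper bound on $D_s$. Likewise, for $\inf_t\le\inf_s$, ``near-optimality on the input side, exactly as in the first inequality'' reproduces $N_s(X)/D_s(X)\lesssim N_t(\hat Z)/D_t(\hat Z)$. You yourself flagged that bound as useless for an infimum.

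The roles must be swapped relative to the norm case.

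\emph{For $\inf_s\le\inf_t$:} given $Z\geq0$, pick $\hat a\geq0$ with $\|\hat a\|_{2r}=1$ near-optimal in the \emph{supremum} formula for the \emph{input}, i.e. $\|\hat aZ\hat a\|_{(E:s,Q:q)}\geq\|Z\|_{(E:t,Q:q)}-\epsilon$. On the output use $\|\hat a(\id_E\otimes\Phi)(Z)\hat a\|_{(E:s,P:p)}\leq\|(\id_E\otimes\Phi)(Z)\|_{(E:t,P:p)}$, which holds for all such $\hat a$. Then $X=\hat aZ\hat a$ works. Your sentence ``pair the infimum factorization on the output with the supremum bound on the input'' is correct if read this way, but it is not what you then execute.

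\emph{For $\inf_t\le\inf_s$:} given $X\geq0$, take a near-optimal symmetric factorization of the \emph{output} in the \emph{infimum} formula:
\begin{equation*}
\|(\id_E\otimes\Phi)(X)\|_{(E:s,P:p)}\geq\|\hat a\|_{2r}^2\,\|\hat a^{-1}(\id_E\otimes\Phi)(X)\hat a^{-1}\|_{(E:t,P:p)}-\epsilon .
\end{equation*}
Set $\hat Z:=\hat a^{-1}X\hat a^{-1}\geq0$, and on the input use the universal bound $\|X\|_{(E:s,Q:q)}\leq\|\hat a\|_{2r}^2\|\hat Z\|_{(E:t,Q:q)}$.

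In this step your appeal to \cref{lem:SupportConditions} is not enough. It only gives $\Pi_{\hat a}=L_{(\id_E\otimes\Phi)(X)}$, which can be strictly smaller than $L_X$, and then $X\neq\hat a\hat Z\hat a$. Replacing $\hat a$ by $\hat a+\delta(L_X-\Pi_{\hat a})$ repairs this. Since $L_{(\id_E\otimes\Phi)(X)}\le L_X$, the projection $L_X-\Pi_{\hat a}$ annihilates $(\id_E\otimes\Phi)(X)$. Hence $(\id_E\otimes\Phi)(\hat Z)=\hat a^{-1}(\id_E\otimes\Phi)(X)\hat a^{-1}$ is unaffected, while $\|\hat a\|_{2r}$ changes by an amount that vanishes as $\delta\to0$.
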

\begin{proof}
    The proof follows exactly as the proof of \cref{lem:cb.norm.via.t}, except that we take infima instead of suprema at the end.
\end{proof}

\subsection{Non-Commutative Minkowski Inequality on Quasi-Normed Spaces}\label{sec:app:Minkowski.Inequality}

\noindent Having a notion of completely bounded quasi-norm and co-quasi-norm, we may extend a seminal result concerning the swapping of adjacent systems to the quasi-normed setting. For this, we need to introduce certain 3-indexed Schatten quasi-norms to show that the swap operator for two adjacent systems is a complete contraction (see \cref{thm:QuasiNormMinkowskiInequality} below). We introduce these 3-indexed quasi-norms inspired by both the normed setting \cite{Devetak.2006} and \cref{def:Definition1} focusing only on the required cases and  leave it to future work to present a complete theory of multi-indexed Schatten (quasi-)norms.

\begin{definition}
For $0<q\leq p$ with $\frac{1}{r}:=\frac{1}{q}-\frac{1}{p}\leq 1$ and an operator $X$ on a tripartite system $PQR$,
\begin{align}\label{def:3.indexed.special}
\|X\|_{(P:p,Q:q,R:p)}:=\sup_{a,b\geq0}\|a\|^{-1}_{2r}\|b\|^{-1}_{2r}\|a_PX_{PQR}b_P\|_{(PQ:q,R:p)}.
\end{align}
\end{definition}
\noindent Note that one can easily show, in an extension of the proof of \cref{lem:Positive.Symmetrie}, that when $X_{PQR}\geq0$ it suffices to optimize here also only over $a_P=b_P$. See \cref{prop:3index.sym.pos} in the appendix for this. 

Next, we consider the swap map between two systems $\SWAP:QP\to PQ$, which is defined as
\begin{align}
\SWAP_{}(X_Q\otimes Y_P)=Y_{P}\otimes X_{Q}\,.
\end{align}

\begin{theorem}[A Minkowski Inequality for quasi-normed spaces]\label{thm:QuasiNormMinkowskiInequality}
For any three systems $E,Q,P$ and indices $0<q\leq p$ s.t. $\big|\frac{1}{q}-\frac{1}{p}\big|\leq1$, 
\begin{align}
\|\id_E\otimes \SWAP_{QP}\|_{(E:p,Q:q,P:p)\to (EP:p,Q:q)}\le 1 \text{ and } \|\id_E\otimes \SWAP_{QP}\|^+_{\co,{(EP:p,Q:q)\to (E:p,Q:q,P:p)}}\ge 1\,.
\end{align}
\end{theorem}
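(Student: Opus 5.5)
The plan is to prove the contraction inequality directly, in the form
\begin{align}
\|X_{EP;Q}\|_{(EP:p,Q:q)}\le \|X_{EQP}\|_{(E:p,Q:q,P:p)}
\end{align}
for every operator $X$ on $E\otimes Q\otimes P$. The co-quasi-norm statement should then come for free. It asks that $\|\SWAP(Y)\|_{(E:p,Q:q,P:p)}\ge \|Y\|_{(EP:p,Q:q)}$ for $Y\geq 0$, and this is exactly the same inequality read with the roles of input and output exchanged, restricted to positive operators. Throughout, set $\frac1r=\frac1q-\frac1p\le 1$.

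\emph{Step 1: expand the left-hand side.} Since $p\ge q$, \cref{def:Definition1} together with \cref{lem:SupportConditions} gives
\begin{align}
\|X\|_{(EP:p,Q:q)}=\sup_{A,B\ge 0}\|A\|_{2r}^{-1}\|B\|_{2r}^{-1}\|A_{EP}XB_{EP}\|_q .
\end{align}
So it suffices to fix $A,B$ on $EP$ and show
\begin{align}
\|A_{EP}XB_{EP}\|_q\le \|A\|_{2r}\|B\|_{2r}\,\|X\|_{(E:p,Q:q,P:p)}.
\end{align}

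\emph{Step 2: factor $A$ and $B$ through their $E$-part.} Apply \cref{thm:qt.to.pt} to the positive operator $AA^*$ on $EP$, with first system $E$, outer index $r$ and intermediate index $\infty$ (the triple $(r,\infty,r)$ is compatible because $1/r\le 1$). This gives
\begin{align}
\|A\|_{2r}^2=\|AA^*\|_{(r,r)}=\inf_{AA^*=\alpha_E Z\alpha_E}\|\alpha\|_{2r}^2\|Z\|_{(E:\infty,P:r)},
\end{align}
and by \cref{lem:Positive.Symmetrie} we may take $\alpha=\alpha^*$ and $Z\ge 0$.

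\begin{itemize}
\item Up to $\epsilon$, write $AA^*=\alpha_E GG^*\alpha_E$, normalized so that $\|\alpha\|_{2r}\le\|A\|_{2r}$ and $\|GG^*\|_{(E:\infty,P:r)}\le 1$.
\item By the polar decomposition this yields $A=\alpha_E G U$ for a unitary $U$ on $EP$.
\item Do the same for $B$, obtaining $B=V H\beta_E$ with $\|H^*H\|_{(E:\infty,P:r)}\le 1$.
\end{itemize}
Hence $\|AXB\|_q=\|G\,(U\alpha_E X\beta_E V)\,H\|_q$.

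\emph{Step 3: Hölder-type step (the main obstacle).} We want
\begin{align}
\|G\,Y\,H\|_q\le \|GG^*\|^{1/2}_{(\infty,r)}\|H^*H\|^{1/2}_{(\infty,r)}\|Y\|_{(EQ:q,P:p)},
\end{align}
where $G,H$ act on $EP$ and are tensored with $\1_Q$.
\begin{itemize}
\item For $q\le1$, I would obtain this from \cref{lem:technical.sup.lemma}. Its first system is taken to be $EQ$ and its second $P$, and it gives this bound directly once we know $\|(GG^*)_{EP}\otimes\1_Q\|_{(EQ:\infty,P:r)}=\|GG^*\|_{(E:\infty,P:r)}$.
\item That last identity is the point I expect to need a separate argument. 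I would try to deduce it from the sup formula \eqref{equ:Def1.2} by pinching/averaging the optimizer $c_{EQ}$ over unitaries on $Q$, using the joint concavity from \cite{Zhang.2020} as in the proof of \cref{thm:cq.additivity}, so as to reduce to $c=c_E\otimes\1_Q$.
\item For $q\ge1$ all indices are $\ge1$ and the bound is the known normed Hölder inequality for Pisier spaces.
\end{itemize}

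\emph{Step 4: remove the unitaries and conclude.} The unitaries $U,V$ on $EP$ in Step 2 are a nuisance, since they sit inside $Y=U\alpha_E X\beta_E V$ where local unitary invariance does not apply. I would instead absorb them into $G$ and $H$: replace $G$ by $GU$ and $H$ by $VH$, and check that $(GU)(GU)^*=GG^*$ and $(VH)^*(VH)=H^*H$, so the $(\infty,r)$ bounds are unchanged. Then $Y=\alpha_EX\beta_E$ exactly. Combining Steps 2 and 3 and letting $\epsilon\to0$,
\begin{align}
\|AXB\|_q\le \|A\|_{2r}\|B\|_{2r}\,\|\alpha\|_{2r}^{-1}\|\beta\|_{2r}^{-1}\|\alpha_EX\beta_E\|_{(EQ:q,P:p)}\le \|A\|_{2r}\|B\|_{2r}\,\|X\|_{(E:p,Q:q,P:p)},
\end{align}
where the last inequality is the definition \eqref{def:3.indexed.special} of the three-index quasi-norm. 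Taking the supremum over $A,B$ proves the norm statement, and the co-norm statement follows as explained at the start.
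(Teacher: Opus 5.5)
\textbf{Gap.} Your strategy is sound, but Steps~2--4 contain two errors that happen to cancel each other on paper.

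First, from $AA^*=\alpha_E GG^*\alpha_E$ you correctly get $A=\alpha_E GU$. But then $AXB=\alpha_E\,GUXVH\,\beta_E$, so the $E$-factors end up \emph{outside}. Since $\alpha_E$ does not commute with $G$ or $U$ (both act on $EP$), the identity $\|AXB\|_q=\|G(U\alpha_EX\beta_EV)H\|_q$ is false. Step~4 therefore repairs a formula that never held.

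Second, the inequality in Step~3 uses the wrong Gram operator. Writing $G=W|G|$, one has $\|GYH\|_q=\||G|YH\|_q$, so a left multiplier enters only through $G^*G$. A bound through $\|GG^*\|_{(\infty,r)}$ fails. Take two $d$-dimensional systems with $d\ge 2$, $q<p$, $\psi$ maximally entangled, $G=|\psi\rangle\langle 00|$ and $Y=H=|00\rangle\langle 00|$. Then $\|GYH\|_q=\|Y\|_{(q,p)}=\|H^*H\|_{(\infty,r)}=1$, but $\|GG^*\|_{(\infty,r)}=d^{-1/r}<1$. (The displayed statement of \cref{lem:technical.sup.lemma}, which you are quoting, has the same transposition. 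Its proof actually produces $C^*C$ for the left multiplier and $DD^*$ for the right one.)

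\textbf{Repair.} Both issues disappear once you use that Step~1 lets you take $A,B\ge 0$. Then $A=A^*=U^*G^*\alpha_E$ and $B=B^*=\beta_E H^*V^*$, so $\|AXB\|_q=\|G^*(\alpha_EX\beta_E)H^*\|_q$ with the unitaries outside, and Step~4 becomes unnecessary. The correct Hölder step for the multipliers $G^*,H^*$ then involves exactly the quantities you control, $\|GG^*\|$ and $\|H^*H\|$.

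This step needs neither \cref{lem:technical.sup.lemma} nor $q\le 1$. For any factorization $\alpha_EX\beta_E=c_{EQ}Zd_{EQ}$, generalized Hölder gives $\|G^*cZdH^*\|_q\le\|G^*c\|_{2r}\|Z\|_p\|dH^*\|_{2r}$. Moreover $\|G^*c\|_{2r}^2=\|c\,(GG^*\otimes\1_Q)\,c\|_r\le\|c\|_{2r}^2\,\|GG^*\otimes\1_Q\|_{(EQ:\infty,P:r)}$ by the sup formula. Your averaging over unitaries on $Q$ (or injectivity of the minimal tensor product, since $r\ge 1$) then removes the $\1_Q$.

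A minor point: \cref{lem:Positive.Symmetrie} concerns factorizations through $\|\cdot\|_p$, not through $\|\cdot\|_{(E:\infty,P:r)}$. The symmetric factorization of $A^2$ you need is nonetheless a normed-regime fact, of the kind the paper itself uses in \cref{lem.app.op.concavity}. Your reduction of the co-norm statement to the same inequality on positive inputs is fine.

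\textbf{Comparison with the paper.} Once repaired, your proof is correct and takes a genuinely different route. The paper first proves \cref{lem:3.index.formulas}, expressing both sides as infima of $\|AA^*\|^{1/2}\|B^*B\|^{1/2}\|A^{-1}XB^{-1}\|_p$, with the weights measured in $\|\cdot\|_{(EP:\infty,Q:r)}$ and $\|\cdot\|_{(E:\infty,Q:r,P:\infty)}$ respectively. This costs two Sion minimax exchanges, one of them in quasi-convex form and relying on \cref{lem.app.op.concavity}. It then applies Pisier's normed Minkowski inequality to the weights.

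Your route uses only:
\begin{itemize}
\item the two defining formulas;
\item generalized Hölder;
\item the normed Pisier factorization of $\|\cdot\|_r$ through $\|\cdot\|_{(E:\infty,P:r)}$ (\cref{thm:qt.to.pt});
\item invariance of $\|\cdot\|_{(\infty,r)}$ under amplification.
\end{itemize}
It has no minimax argument of its own and never invokes Pisier's Minkowski inequality, so it is markedly more elementary. The paper's route, in exchange, yields the factorization formulas of \cref{lem:3.index.formulas} as a by-product.
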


 \noindent To prove Theorem \ref{thm:QuasiNormMinkowskiInequality}, we first establish the following factorization result.
\begin{lemma}\label{lem:3.index.formulas}
Let $0<q\leq p$ be s.t. $\frac{1}{r}:=\frac{1}{q}-\frac{1}{p}\leq1$, then for any $X$ on the tripartite system $PQR$ it holds that
\begin{align}\label{equ:pqp.to.ppp}
\|X\|_{(P:p,Q:q,R:p)}&=\inf_{A_{PQR},B_{PQR}}\|AA^*\|_{(P:\infty,Q:r,R:\infty)}^\frac{1}{2}\|B^*B\|_{(P:\infty,Q:r,R:\infty)}^\frac{1}{2}\|A^{-1}XB^{-1}\|_{p} \\ 
    &= \inf_{A_{PQ},B_{PQ}}\|AA^*\|_{(P:\infty,Q:r)}^\frac{1}{2}\|B^*B\|_{(P:\infty,Q:r)}^\frac{1}{2}\|A_{PQ}^{-1}X_{PQR}B_{PQ}^{-1}\|_{p} \nonumber
\end{align} and for any $X$ on the bipartite system $PQ$
\begin{align}\label{equ:pq.to.pp}
\|X\|_{(P:p,Q:q)}=\inf_{A,B}\|A_{PQ}A_{PQ}^*\|_{(P:\infty,Q:r)}^\frac{1}{2}\|B_{PQ}^*B_{PQ}\|_{(P:\infty,Q:r)}^\frac{1}{2}\|A^{-1}XB^{-1}\|_{p}.
\end{align}
\end{lemma}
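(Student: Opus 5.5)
The plan is to prove \eqref{equ:pq.to.pp} first and then obtain \eqref{equ:pqp.to.ppp} by the same argument applied one level up. The recipe is the one used in \cref{lem:technical.sup.lemma}: concatenate two factorization formulas into a sup--inf, change variables, and exchange the optimizations with Sion's minimax theorem.

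\emph{Proof of \eqref{equ:pq.to.pp}.} Since the first index $p$ is at least the second index $q$, \cref{def:Definition1} and \cref{lem:Positive.Symmetrie} give
\begin{align}
\|X\|_{(P:p,Q:q)}=\sup_{a,b\geq0}\|a\|_{2r}^{-1}\|b\|_{2r}^{-1}\|a_PXb_P\|_q .
\end{align}
Next I expand $\|a_PXb_P\|_q$ through \cref{prop:SpVariationalFormulas} (the case $q\le p$) as an infimum over $c,d$ on $PQ$ of
\begin{align}
\|c\|_{2r}\|d\|_{2r}\|c^{-1}a_PXb_Pd^{-1}\|_p .
\end{align}
I then change variables to $A:=a_P^{-1}c$ and $B:=d\,b_P^{-1}$, which is legitimate under the support conventions of \cref{lem:SupportConditions}. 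With $\tilde A:=AA^*$ and $\tilde B:=B^*B$, the objective becomes
\begin{align}
\|a^{\frac{1}{2r}}_P\tilde A a^{\frac{1}{2r}}_P\|_r^{\frac12}\,\|b^{\frac{1}{2r}}_P\tilde B b^{\frac{1}{2r}}_P\|_r^{\frac12}\,\|\tilde A^{-\frac12}X\tilde B^{-\frac12}\|_p ,
\end{align}
where $a,b$ are normalized in trace.

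I convert this product into a sum with \cref{lem:Inf.Prod.to.Sum}, exactly as in the proof of \cref{lem:inverseDef1}. Sion's theorem then applies for the following reasons.
\begin{itemize}
\item The first two summands are concave and continuous in $a,b$ by \cref{lem:continuity}.
\item They are convex in $\tilde A,\tilde B$, since the map $\tilde A\mapsto\Tr[(a^{1/2r}\tilde A a^{1/2r})^r]$ is convex.
\item The last summand is jointly convex in $(\tilde A,\tilde B)$ by \cite[Theorem 1.1 (2)]{Zhang.2020}.
\end{itemize}
After swapping, the inner supremum over $a$ is $\sup_a\|a\|_{2r}^{-2}\|a_P\tilde A a_P\|_r$. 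By \cref{lem:Positive.Symmetrie} with indices $(\infty,r)$, whose difference parameter is again $r$, this equals $\|\tilde A\|_{(P:\infty,Q:r)}$. Undoing the sum-to-product step gives \eqref{equ:pq.to.pp}.

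\emph{Second line of \eqref{equ:pqp.to.ppp}.} I start from the definition \eqref{def:3.indexed.special}, a supremum over $a,b$ on $P$ of $\|a_PXb_P\|_{(PQ:q,R:p)}$. I expand the inner quasi-norm using the infimum form \eqref{equ:Def1.1}, with $c,d$ acting on $PQ$ and $R$ as the second system. The change of variables, the convexity and concavity checks, and the Sion swap are then verbatim as in the previous step, now with everything tensored with $\1_R$. Again the supremum over $a_P$ produces $\|AA^*\|_{(P:\infty,Q:r)}$.

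\emph{First line of \eqref{equ:pqp.to.ppp}.} One inequality is obtained by restricting the infimum to $A=A_{PQ}\otimes\1_R$. This requires $\|\tilde A\otimes\1_R\|_{(P:\infty,Q:r,R:\infty)}=\|\tilde A\|_{(P:\infty,Q:r)}$. All indices here are at least $1$, so this identity lies in Pisier's normed theory, and I would quote it from \cite{Book.Pisier.1998,Devetak.2006}.

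The converse inequality, namely that a general $A_{PQR}$ cannot do better, is the step I expect to be the main obstacle. The approach I would try first is to run the minimax argument directly with $A,B$ on $PQR$, starting from the definition of $\|X\|_{(P:p,Q:q,R:p)}$ and factoring $\|a_PXb_P\|_{(PQ:q,R:p)}$ through the $p$-norm on all of $PQR$. The difficulty is that \eqref{equ:Def1.1} only allows factors on $PQ$. I would therefore first apply \cref{lem:technical.sup.lemma} to write
\begin{align}
\|a_PXb_P\|_{(PQ:q,R:p)}=\sup_{C,D}\|CC^*\|^{-\frac12}_{(PQ:\infty,R:r)}\|D^*D\|^{-\frac12}_{(PQ:\infty,R:r)}\|Ca_PXb_PD\|_q ,
\end{align}
where $C,D$ range over operators on $PQR$. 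I would then expand the $q$-norm through \cref{prop:SpVariationalFormulas} into factors on $PQR$ and perform a double Sion swap. The remaining suprema over $a_P$ and over $C$ should then recombine into the three-index $(P:\infty,Q:r,R:\infty)$ norm of $AA^*$, using the normed factorization of Pisier. Checking that the joint concavity survives these nested optimizations is the delicate part of this plan.
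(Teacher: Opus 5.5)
Your derivations of \eqref{equ:pq.to.pp} and of the second line of \eqref{equ:pqp.to.ppp} are sound, and they are leaner than the paper's. The paper attacks the first line of \eqref{equ:pqp.to.ppp} directly. It rewrites $\|\cdot\|_p$ on $PQR$ as a trivial $\sup_{e,f}\inf_{g,h}$ over operators on $PQR$ and performs two Sion swaps, the second in its quasi-convex form via \cref{lem.app.op.concavity}. It then obtains \eqref{equ:pq.to.pp} by taking $R$ trivial. Your single swap, with $c,d$ on $PQ$, already produces the $(P:\infty,Q:r)$ norm from the supremum over $a$.

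The gap is the first line itself. Restricting to $A=A_{PQ}\otimes\1_R$, $B=B_{PQ}\otimes\1_R$ only yields $\inf_{A_{PQR},B_{PQR}}(\cdots)\le\|X\|_{(P:p,Q:q,R:p)}$. The reverse inequality is
\[
\|X\|_{(P:p,Q:q,R:p)}\le\|AA^*\|^{1/2}_{(P:\infty,Q:r,R:\infty)}\|B^*B\|^{1/2}_{(P:\infty,Q:r,R:\infty)}\|A^{-1}XB^{-1}\|_p
\]
for \emph{arbitrary} $A,B$ on $PQR$, and it is not proved. You give a plan whose key step, joint concavity surviving a double Sion swap, you leave unchecked. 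The plan also rests on \cref{lem:technical.sup.lemma}, which is valid only for $q\le1$, an assumption the statement does not make. The direction you did prove happens to be the only one used in the proof of \cref{thm:QuasiNormMinkowskiInequality}, but the lemma asserts equality.

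The missing direction needs no minimax at all; it is a H\"older-type estimate. Set $Z:=A^{-1}XB^{-1}$, so that $X=AZB$. Take $a,b\ge0$ on $P$ and $c,d\ge0$ on $PQ$, with the usual support conventions. Then $a_PXb_P=c_{PQ}\,\big(c_{PQ}^{-1}a_PA\big)\,Z\,\big(Bb_Pd_{PQ}^{-1}\big)\,d_{PQ}$. Applying \eqref{equ:Def1.1} for the pair $(PQ:q,R:p)$ together with $\|MZN\|_p\le\|M\|_\infty\|Z\|_p\|N\|_\infty$ gives
\[
\|a_PXb_P\|_{(PQ:q,R:p)}\le\|c\|_{2r}\,\|c^{-1}a_PAA^*a_Pc^{-1}\|_\infty^{1/2}\,\|Z\|_p\,\|d\|_{2r}\,\|d^{-1}b_PB^*Bb_Pd^{-1}\|_\infty^{1/2}.
\]
Take the infimum over $c$ and $d$, using \cref{lem:Positive.Symmetrie} for the pair $(r,\infty)$, whose difference parameter is again $r$. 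The right-hand side becomes $\|a_PAA^*a_P\|^{1/2}_{(PQ:r,R:\infty)}\|Z\|_p\|b_PB^*Bb_P\|^{1/2}_{(PQ:r,R:\infty)}$. By the definition \eqref{def:3.indexed.special} of the three-index norm with $(p,q)$ replaced by $(\infty,r)$, this is at most $\|a\|_{2r}\|b\|_{2r}\|AA^*\|^{1/2}_{(P:\infty,Q:r,R:\infty)}\|B^*B\|^{1/2}_{(P:\infty,Q:r,R:\infty)}\|Z\|_p$. Divide by $\|a\|_{2r}\|b\|_{2r}$ and take the supremum over $a,b$ in \eqref{def:3.indexed.special}. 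This gives the missing inequality for all $0<q\le p$ with $\frac1q-\frac1p\le1$. Combined with your restriction argument and your proof of the second line, it completes the first line.
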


\begin{proof}[Proof of \cref{lem:3.index.formulas}]
Note that \eqref{equ:pq.to.pp} follows directly from \eqref{equ:pqp.to.ppp} by taking the last system $R$ to be 1-dimensional.
 We first focus on the the first equation in \eqref{equ:pqp.to.ppp}: 
\begin{align}
\|X\|_{(P:p,Q:q,R:p)}&\overset{(1)}{=}\sup_{a,b\geq0}\inf_{\underset{}{c,d\geq0}}\|a\|^{-1}_{2r}\|b\|^{-1}_{2r}\|c\|_{2r}\|d\|_{2r}\|c^{-1}_{PQ}a_PX_{PQR}b_Pd^{-1}_{PQ}\|_{p} \\  
&\overset{(2)}{=}\sup_{a,b\geq0}\inf_{c,d\geq0}\sup_{e,f\geq0}\inf_{g,h\geq0}\|a\|_{2r}^{-1}\|b\|_{2r}^{-1}\|c\|_{2r}\|d\|_{2r}\|e\|_{2r}^{-1}\|f\|_{2r}^{-1}\|g\|_{2r}\|h\|_{2r}\\ &\qquad\qquad\qquad\qquad\qquad\qquad\qquad\times\|g^{-1}_{PQR}e_{PQR}c^{-1}_{PQ}a_PX_{PQR}b_Pd_{PQ}^{-1}f_{PQR}h_{PQR}^{-1}\|_p \\ 
&\overset{(3)}{=}\sup_{a,b\geq0}\inf_{c,d\geq0}\sup_{e,f\geq0}\inf_{\tilde{A},\tilde{B}\geq0}\|a\|_{2r}^{-1}\|b\|_{2r}^{-1}\|c\|_{2r}\|d\|_{2r}\|e\|_{2r}^{-1}\|f\|_{2r}^{-1}\|\tilde{A}^{-\frac{1}{2}}_{PQR}X_{PQR}\tilde{B}^{-\frac{1}{2}}_{PQR}\|_p\\ &\qquad \qquad \times \|e_{PQR}c_{PQ}^{-1}a_P\tilde{A}_{PQR}a_Pc_{PQ}^{-1}e_{PQR}\|^\frac{1}{2}_{r}\|f_{PQR}d_{PQ}^{-1}b_P\tilde{B}_{PQR}b_Pd_{PQ}^{-1}f_{PQR}\|^\frac{1}{2}_{r}.
\end{align}
Here (1) follows from definition \cref{def:3.indexed.special} and \cref{lem:SupportConditions} relating the $\|\cdot\|_{(q,p)}$ to the $\|\cdot\|_{p}$ quasi-norm; (2) follows by \cref{prop:SpVariationalFormulas} relating the $p$-quasi-norm to the $q$-one via $\sup_{e,f}$ and back to the $p$-one via $\inf_{g,h}$; in (3) we defined $\tilde{A}:=AA^*$, where $A=a_P^{-1}c_{PQ}e_{PQR}^{-1}g_{PQR}$ and likewise $\tilde{B}:=B^*B$, where $B=h_{PQR}f_{PQR}^{-1}d_{PQ}b_P^{-1}$ and used local unitary invariance. 

\medskip

\noindent Next, by \cref{lem:SupportConditions} together with \eqref{equ:full.to.marginal.support}, the supports of $\tilde{A}, \tilde{B}$ are well defined and equal to the image, respectively, support of $X$.
Now by exactly the same argument as in the proof of \eqref{equ:InvDef1.2} of \cref{lem:inverseDef1}, up to the additional appearance of the operators $a_P,b_P,c_{PQ}^{-1},d_{PQ}^{-1}$, which, however, do not play a role the concavity of the displayed functions in $e,f$, nor their continuity, we can apply Sion's minimax theorem to switch the infimum over $\tilde{A},\tilde{B}$ with the supremum over $e,f$. The former also naturally commutes with the infimum over $c,d$, so that
\begin{align}
   \|X\|_{(P:p,Q:q,R:p)}&= \sup_{a,b\geq0}\inf_{\tilde{A},\tilde{B}\geq0}\inf_{c,d\geq0}\sup_{e,f\geq0}\|a\|_{2r}^{-1}\|b\|_{2r}^{-1}\|c\|_{2r}\|d\|_{2r}\|e\|_{2r}^{-1}\|f\|_{2r}^{-1}\|\tilde{A}^{-\frac{1}{2}}X\tilde{B}^{-\frac{1}{2}}\|_p\\ &\quad \qquad \qquad \times \|e_{PQR}c_{PQ}^{-1}a_P\tilde{A}_{PQR}a_Pc_{PQ}^{-1}e_{PQR}\|^\frac{1}{2}_{r}\|f_{PQR}d_{PQ}^{-1}b_P\tilde{B}_{PQR}b_Pd_{PQ}^{-1}f_{PQR}\|^\frac{1}{2}_{r}\\ 
   &\overset{(1)}{=} \!\!\!\!\!\!\sup_{\underset{\|a\|_1,\|b\|_1\leq 1}{a,b\geq0}}\!\!\!\inf_{\tilde{A},\tilde{B}}\!\|a_P^\frac{1}{2r}\tilde{A}_{PQR}a_P^\frac{1}{2r}\|^{\frac{1}{2}}_{(PQ:r,R:\infty)}\|b_P^\frac{1}{2r}\tilde{B}_{PQR}b_P^\frac{1}{2r}\|^{\frac{1}{2}}_{(PQ:r,R:\infty)} \|\tilde{A}^{-\frac{1}{2}}X\tilde{B}^{-\frac{1}{2}}\|_p  \\
   &= \bigg\{\!\sup_{\underset{\|a\|_1,\|b\|_1\leq 1}{a,b\geq0}}\!\!\inf_{\tilde{A},\tilde{B}}\frac{q(1+r)}{2r(1+q)}\bigg(\|a_P^\frac{1}{2r}\tilde{A}_{PQR}a_P^\frac{1}{2r}\|^{\frac{r}{1+r}}_{(PQ:r,R:\infty)}+\|b_P^\frac{1}{2r}\tilde{B}_{PQR}b_P^\frac{1}{2r}\|^{\frac{r}{1+r}}_{(PQ:r,R:\infty)}\bigg)\\ &\qquad\qquad\qquad\qquad\qquad\qquad\qquad\qquad\qquad\qquad\qquad+\frac{q}{p(1+q)}\|\tilde{A}^{-\frac{1}{2}}X\tilde{B}^{-\frac{1}{2}}\|^p_p \bigg\}^{\frac{1+q}{q}},
\end{align} 
where (1) follows from successive uses of \cref{prop:SpVariationalFormulas} and \cref{lem:Positive.Symmetrie},
whereas the last equality follows from \cref{lem:Inf.Prod.to.Sum}. Now, the desired concavity, upper semi-continuity in $a,b$ and quasi-convexity and lower-semi-continuity in $\tilde{A},\tilde{B}$ of the first two expression follows directly from \cref{lem.app.op.concavity} and for the last summand from \cite[Theorem 1.1 (2)]{Zhang.2020}. Lower semicontinuity in $\tilde{A}, \tilde{B}$ with $\Pi_{\tilde{A}}=L_X, \Pi_{\tilde{B}}=R_X$ follows with the help of  \cref{lem:SupportConditions}. These conditions are now sufficient to apply the version of Sion's minimax theorem for quasi-convex-concave functions \cite{Sion.58}, see also \cref{rem:Sions.Quasi.Minimax}. This yields 
\begin{align}
    \|X\|_{(P:p,Q:q,R:p)} &=\!\bigg\{\inf_{\tilde{A},\tilde{B}}\!\!\!\!\sup_{\underset{\|a\|_1,\|b\|_1\leq 1}{a,b\geq0}}\!\!\!\frac{q(1+r)}{2r(1+q)}\bigg(\|a_P^\frac{1}{2r}\tilde{A}_{PQR}a_P^\frac{1}{2r}\|^{\frac{r}{1+r}}_{(PQ:r,R:\infty)}+\|b_P^\frac{1}{2r}\tilde{B}_{PQR}b_P^\frac{1}{2r}\|^{\frac{r}{1+r}}_{(PQ:r,R:\infty)}\bigg)\\ &\qquad\qquad\qquad\qquad\qquad\qquad\qquad\qquad\qquad\qquad\qquad+\frac{q}{p(1+q)}\|\tilde{A}^{-\frac{1}{2}}X\tilde{B}^{-\frac{1}{2}}\|^p_p \bigg\}^{{\frac{1+q}{q}}}\\
    &\overset{(1)}{=}\!\!\bigg\{\!\inf_{\tilde{A},\tilde{B}}\!\!\frac{q(1+r)}{2r(1+q)}\!\bigg(\!\|\tilde{A}\|^{\frac{r}{1+r}}_{(P:\infty,Q:r,R:\infty)}\!\!+\!\|\tilde{B}\|^{\frac{r}{1+r}}_{(P:\infty,Q:r,R:\infty)}\!\!\bigg)\!\!+\!\frac{q}{p(1+q)}\|\tilde{A}^{-\frac{1}{2}}X\tilde{B}^{-\frac{1}{2}}\|^p_p \bigg\}^{{\frac{1+q}{q}}}\\ &=\inf_{A,B}\|AA^*\|^\frac{1}{2}_{(P:\infty,Q:r,R:\infty)}\|B^*B\|^\frac{1}{2}_{(P:\infty,Q:r,R:\infty)}\|A^{-1}X B^{-1}\|_p.
\end{align} 
where (1) follows from \eqref{def:3.indexed.special} and the remark below it, and the last line by the same application of \cref{lem:Inf.Prod.to.Sum} as above. This finishes the proof of the first part of \cref{equ:pqp.to.ppp}. The derivation of \eqref{equ:pq.to.pp} follows exactly as above, but without the need for the optimizations over $a,b,c,d$ and hence the use of the second Sions minimax theorem. We will hence not explicitly spell it out.
\end{proof}

\noindent Having established this lemma we can now use it to prove the Minkowski inequality in the quasi-normed setting. We will use the former to relate to the normed setting and then apply the Minkowski inequality there \cite[Corollary 1.10]{Book.Pisier.1998}. 

\begin{proof}[Proof of \cref{thm:QuasiNormMinkowskiInequality}]
Fix indices $0<q\leq p$, s.t. $\frac{1}{r}=\frac{1}{q}-\frac{1}{p}\leq 1$. By \cref{lem:cb.norm.via.t} and \cref{lem:cb.co.norm.via.t} it suffices to prove that
\begin{align}
    \|X_{EPQ}\|_{(EP:p,Q:q)}\leq \|X_{EQP}\|_{(E:p,Q:q,P:p)}.
\end{align} This is done by using \cref{lem:3.index.formulas} twice as follows.
\begin{align}
    \|X_{EPQ}\|_{(EP:p,Q:q)} &= \inf_{A_{EPQ},B_{EPQ}}\|A_{EPQ}A_{EPQ}^*\|^\frac{1}{2}_{(EP:\infty,Q:r)}\|B_{EPQ}^*B_{EPQ}\|^\frac{1}{2}_{(EP:\infty,Q:r)}\\ &\qquad\qquad\qquad\times\|A_{EPQ}^{-1}X_{EPQ}B_{EPQ}^{-1}\|_{(EP:p,Q:p)} \\
    &\leq \inf_{A_{EQP},B_{EQP}}\|A_{EQP}A_{EQP}^*\|^\frac{1}{2}_{(E:\infty,Q:r,P:\infty)}\|B_{EQP}^*B_{EQP}\|^\frac{1}{2}_{(E:\infty,Q:r,P:\infty)}\\ &\qquad\qquad\qquad\times\|A_{EQP}^{-1}X_{EQP}B_{EQP}^{-1}\|_{(E:p,Q:p,P:p)} \\ 
    &= \|X_{EQP}\|_{(E:p,Q:q,P:p)}.
\end{align}
Here the inequality is the complete contractiveness of the $\SWAP$ from 
\cite[Corollary 1.10]{Book.Pisier.1998}, i.e. the noncommutative Minkowski inequality for Pisier norms. It applies since $\infty\geq r\geq 1$ by assumption. Note that the $p$ quasi-norm is unchanged since the $\SWAP$ map is a unitary map. Note that $X_{EPQ}\geq0$ iff $X_{EQP}\geq0$, i.e. the $\SWAP$ is a completely positive map.
\end{proof}

\subsection{Multiplicativity of Completely Bounded (Co-)Quasi-Norms}\label{sec:app:quasinorm.additivity}

Having established this important inequality we can now apply this to derive several multiplicativity results for mixed norms and co-norms of linear maps. These, among others, will show that completely bounded minimal and maximal output Rényi entropies for any index $\frac{1}{2}\leq p\leq \infty$ are additive and that the notion of completely bounded reverse hypercontractivity is indeed a well defined and useful one, in the sense that the corresponding co-quasi-norm is supermultiplicative under tensor products of channels. We start by considering the mixed quasi-normed setting. 

\subsubsection{Multiplicativity of Completely Bounded Quasi-Norms for linear maps}\label{sec:app:mixed.norm.additivity}

\begin{theorem}\label{thm:mixed.up.cb.additivity}
Let $\Phi:Q_1\to P_1,\Psi:Q_2\to P_2$ be (not necessarily CP) linear maps, then for $0<q\leq p$ with $\frac{1}{q}-\frac{1}{p}\leq1$ it holds that
\begin{align}\label{eq:multiplicativitynorms}
    \|\Phi\otimes\Psi\|_{\cb,(Q_1Q_2:q)\to (P_1P_2:p)} \leq \|\Phi\|_{\cb,(Q_1:q)\to (P_1:p)}\cdot \|\Psi\|_{\cb,(Q_2:q)\to (P_2:p)}
\end{align}
\end{theorem}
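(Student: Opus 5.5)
We follow the standard Devetak--Junge--King--Ruskai strategy: factor $\id_E\otimes\Phi\otimes\Psi$ as $(\id_E\otimes\Phi\otimes\id_{P_2})\circ(\id_E\otimes\id_{Q_1}\otimes\Psi)$ and pass through an intermediate mixed norm. By \cref{lem:cb.norm.via.t} we may put the index $t=p$ on the reference system (compatible since $\frac1q-\frac1p\le1$). It therefore suffices to bound, for every $E$ and $X_{EQ_1Q_2}$,
\begin{align}
\|(\id_E\otimes\Phi\otimes\Psi)(X)\|_{(E:p,P_1P_2:p)}\leq \|\Phi\|_{\cb,q\to p}\|\Psi\|_{\cb,q\to p}\|X\|_{(E:p,Q_1Q_2:q)}.
\end{align}

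\textbf{Step 1: apply $\Psi$.} Treat $EQ_1$ as the reference system and use \cref{lem:cb.norm.via.t} with index $q$ on it (compatible with $q$ and $p$). This gives
\begin{align}
\|(\id_{EQ_1}\otimes\Psi)(X)\|_{(EQ_1:q,P_2:p)}\le\|\Psi\|_{\cb}\|X\|_{(EQ_1:q,Q_2:q)}.
\end{align}
This is not yet the right input norm, since we want $(E:p,Q_1Q_2:q)$ and not $(EQ_1Q_2:q)$. So we instead use reference $EQ_1$ with a three-index norm on the left. Concretely, the intermediate space is $(E:p,Q_1:q,P_2:p)$, defined in \eqref{def:3.indexed.special}. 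We claim
\begin{align}
\|(\id\otimes\Psi)(X)\|_{(E:p,Q_1:q,P_2:p)}\le\|\Psi\|_{\cb}\|X\|_{(E:p,Q_1Q_2:q)}.
\end{align}
To prove this, write the three-index norm as $\sup_{a,b}\|a\|_{2r}^{-1}\|b\|_{2r}^{-1}\|a_E(\cdot)b_E\|_{(EQ_1:q,P_2:p)}$. The factors $a_E,b_E$ commute with $\id\otimes\Psi$, so the cb bound with reference $EQ_1$ at index $q$ gives $\le\|\Psi\|_{\cb}\sup_{a,b}\|a\|^{-1}\|b\|^{-1}\|a_EXb_E\|_{(EQ_1Q_2:q)}$. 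The last supremum equals $\|X\|_{(E:p,Q_1Q_2:q)}$ by \cref{thm:qt.to.pt} in its supremum form, with indices $p\ge q$ on $E$ and $t=q$.

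\textbf{Step 2: swap.} Apply \cref{thm:QuasiNormMinkowskiInequality} with systems $E,Q_1,P_2$. It gives $\|Y_{EP_2Q_1}\|_{(EP_2:p,Q_1:q)}\le\|Y_{EQ_1P_2}\|_{(E:p,Q_1:q,P_2:p)}$.

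\textbf{Step 3: apply $\Phi$.} Use the reference $EP_2$ with index $p$, via \cref{lem:cb.norm.via.t}:
\begin{align}
\|(\id_{EP_2}\otimes\Phi)(Y)\|_{(EP_2:p,P_1:p)}\le\|\Phi\|_{\cb}\|Y\|_{(EP_2:p,Q_1:q)}.
\end{align}
The left-hand side is just the Schatten $p$-norm, which is invariant under the unitary swap and equals $\|(\id\otimes\Phi\otimes\Psi)(X)\|_{(E:p,P_1P_2:p)}$. Chaining Steps 1--3 and taking the supremum over $X$ and $E$ gives \eqref{eq:multiplicativitynorms}.

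The main obstacle is Step 1, namely identifying $\sup_{a,b}\|a\|^{-1}\|b\|^{-1}\|a_EXb_E\|_{(EQ_1Q_2:q)}$ with $\|X\|_{(E:p,Q_1Q_2:q)}$. This must be checked against the compatibility of the triple $(p,q,q)$ in \cref{thm:qt.to.pt}, which holds because $\frac1q-\frac1p\le1$. One should also check that the support conditions of \cref{lem:SupportConditions} do not interfere when $a_E,b_E$ are commuted past $\Psi$.
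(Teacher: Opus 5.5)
Your proposal is correct and is essentially the paper's proof with the roles of $\Phi$ and $\Psi$ interchanged. The paper also factors through the three-index space $(E:p,Q:q,P:p)$, applies \cref{thm:QuasiNormMinkowskiInequality}, and uses \cref{lem:cb.norm.via.t} to switch between reference indices $p$ and $q$. Your Step 1 is exactly its claim (3), which you prove more directly (a pointwise bound for each $a_E,b_E$, then the supremum) rather than via $\epsilon$-optimal $\hat a,\hat b$. The identification you flag as the main obstacle needs no appeal to \cref{thm:qt.to.pt} or any check beyond $\frac1q-\frac1p\le1$, since it is just the defining formula \eqref{equ:Def1.2}.
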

\noindent While equality holds in Equation \eqref{eq:multiplicativitynorms}, we only establish submultiplicativity (which is the relevant inequality from a quantum information perspective) in this paper. We leave the proof of achievability to upcoming work \cite{Kochanowski.in.prep} as it requires more advanced machinery. 

\begin{proof}[Proof of \cref{thm:mixed.up.cb.additivity}]
We prove the upper bound using \cref{thm:QuasiNormMinkowskiInequality}, as in \cite[Theorem 11]{Devetak.2006}. Fix some size of the system $E$, then by straightforward submultiplicativity of the ($\cb$) quasi-norm under compositions we get
\begin{align}
    \|\id_E&\otimes\Phi\otimes\Psi\|_{(E:p,Q_1Q_2:q)\to (E:p,P_1P_2:p)}  \\ &\leq \|\id_E\otimes\Phi\otimes\id_{Q_2}\|_{(E:p,Q_1Q_2:q)\to (EP_1:p,Q_2:q)}\|\id_E\otimes\id_{P_1}\otimes\Psi\|_{(EP_1:p,Q_2:q)\to (E:p,P_1P_2:p)} \\ 
    &\overset{(1)}{\leq} \|\id_E\otimes\id_{Q_2}\otimes\Phi\|_{(E:p,Q_1Q_2:q)\to (E:p,Q_2:q,P_1:p)}\|\id_E\otimes\id_{P_1}\otimes\Psi\|_{(EP_1:p,Q_2:q)\to (EP_1:p,P_2:p)} \\ 
    &\overset{(2)}{\leq} \|\id_E\otimes\id_{Q_2}\otimes\Phi\|_{(E:p,Q_1Q_2:q)\to (E:p,Q_2:q,P_1:p)}\|\Psi\|_{\cb,(Q_2:q)\to (P_2:p)}  \\
    &\overset{(3)}{\leq}\|\Phi\|_{\cb,(Q_1:q)\to (P_1:p)}\|\Psi\|_{\cb,(Q_2:q)\to (P_2:p)}.
\end{align} Taking now $\sup_E$ yields the statement of the theorem thanks to \cref{lem:cb.norm.via.t}.
Here in $(1)$ we used \cref{thm:QuasiNormMinkowskiInequality} and in $(2)$ \cref{lem:cb.norm.via.t} by combining the systems $EP_1$. (3) requires the slightly more involved claim:
\begin{align}
\|\id_E\otimes\id_{Q_2}\otimes\Phi\|_{(E:p,Q_1Q_2:q)\to (E:p,Q_2:q,P_1:p)}\leq \|\Phi\|_{\cb,(Q_1:q)\to (P_1:p)}.
\end{align} 
To show the above bound, we prove that the mixed quasi-norm is unchanged when changing the index on the $E$ system from $p$ to $q$ by following the proof strategy of \cref{lem:cb.norm.via.t} but using \cref{def:3.indexed.special}. We will then conclude with the same argument as in the inequality $(2)$ just above: first, by \cref{equ:Def1.1} we have 
\begin{align}
\|X_{EQ_2Q_1}\|_{(E:p,Q_1Q_2:q)}&=\sup_{a,b\geq0}\|a\|_{2r}^{-1}\|b\|_{2r}^{-1}\|a_EX_{EQ_2Q_1}b_E\|_{(E:q,Q_1Q_2:q)} \\ &\geq \|\hat{a}_EX_{EQ_2Q_1}\hat{b}_E\|_{(E:q,Q_1Q_2:q)}
\end{align} for any $\hat{a}, \hat{b}$ on $E$ s.t. $\|\hat{a}\|_{2r}=\|\hat{b}\|_{2r}=1$. Similarly, by \cref{def:3.indexed.special} for any $\epsilon>0$ we have for any prior fixed $X\equiv X_{EQ_2Q_1}$:
\begin{align}
    \|(\id_E\otimes\id_{Q_2}\otimes\Phi)(X)\|_{(E:p,Q_2:q,P_1:p)}&\!=\!\sup_{a,b\geq0}\|a\|_{2r}^{-1}\|b\|_{2r}^{-1}\|a_E(\id_E\otimes\id_{Q_2}\otimes\Phi)(X_{EQ_2Q_1})b_E\|_{(EQ_2:q,P_1:p)} \\ &\!=\! \sup_{a,b\geq0}\|a\|_{2r}^{-1}\|b\|_{2r}^{-1}\|(\id_E\otimes\id_{Q_2}\otimes\Phi)(a_EX_{EQ_2Q_1}b_E)\|_{(EQ_2:q,P_1:p)} \\
    &\!\leq\!\|(\id_E\otimes\id_{Q_2}\otimes\Phi)(\hat{a}_EX_{EQ_2Q_1}\hat{b}_E)\|_{(EQ_2:q,P_1:p)}+\epsilon
\end{align} 
for some approximately optimal $\hat{a},\hat{b}$ which we may normalize to satisfy $\|\hat{a}\|_{2r}=\|\hat{b}\|_{2r}=1$.
Putting these together we get
\begin{align}
    \|\id_E\otimes\id\otimes\Phi\|_{(E:p,Q_1Q_2:q)\to (E:p,Q_2:q,P_1:p)} &=\sup_X\frac{\|\id_E\otimes\id_{Q_2}\otimes\Phi(X)\|_{(E:p,Q_2:q,P_1:p)}}{\|X\|_{(E:p,Q_1Q_2:q)}} \\ &\leq\sup_X \frac{\|(\id_E\otimes\id_{Q_2}\otimes\Phi)(\hat{a}_EX_{EQ_2Q_1}\hat{b}_E)\|_{(EQ_2:q,P_1:p)}+\epsilon}{\|\hat{a}_EX_{EQ_2Q_1}\hat{b}_E\|_{(E:q,Q_1Q_2:q)}} \\
    &\leq \sup_Z\frac{\|(\id_E\otimes\id_{Q_2}\otimes\Phi)(Z_{EQ_2Q_1})\|_{(EQ_2:q,P_1:p)}+\epsilon}{\|Z_{EQ_2Q_1}\|_{(E:q,Q_1Q_2:q)}}
\end{align}
Since this holds for any $\epsilon>0$ we conclude that
\begin{align}
  \|\id_E\otimes\id_{Q_2}\otimes\Phi\|_{(E:p,Q_1Q_2:q)\to (E:p,Q_2:q,P_1:p)} \leq   \|\id_E\otimes\id_{Q_2}\otimes\Phi\|_{(E:q,Q_1Q_2:q)\to (EQ_2:q,P_1:p)}.
\end{align}
and combining the systems $EQ_2$, as they have the same index $q$ and applying \cref{lem:cb.norm.via.t} we have proved the claim and thus the theorem.
\end{proof}

\noindent Next we present two lemmata related to showing stability of certain mixed quasi-norms under adjoining identity channels. 
\begin{lemma}\label{lem:quasi.id.right}
Let $\Phi:Q\to P$ be a CP map and let $q,p\geq\frac{1}{2}$ be compatible indices, then 
\begin{align}
    \|\Phi\otimes\id_T\|^+_{(Q:q,T:1)\to (P:p,T:1)} = \|\Phi\|^+_{(Q:q)\to (P:p)}
\end{align}
\end{lemma}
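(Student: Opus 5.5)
The plan is to reduce both sides to the same quantity using \cref{cor:q.1.PartialTrace}, which says that on positive operators the $(\,\cdot\,,1)$-quasi-norm is just the Schatten quasi-norm of the partial trace over the second system. Since $q,p\geq\frac12$, both pairs $(q,1)$ and $(p,1)$ are compatible in the sense of \cref{def:Definition1}, because $|\frac1q-1|\le 1$ and $|\frac1p-1|\le 1$. So for every $X_{QT}\geq 0$ we get
\begin{align}
\|X_{QT}\|_{(Q:q,T:1)}=\|\tr_T[X_{QT}]\|_q .
\end{align}
Because $\Phi$ is CP, $(\Phi\otimes\id_T)(X_{QT})\geq 0$ as well. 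Hence the same corollary gives
\begin{align}
\|(\Phi\otimes\id_T)(X_{QT})\|_{(P:p,T:1)}=\|\tr_T[(\Phi\otimes\id_T)(X_{QT})]\|_p .
\end{align}

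The key algebraic step is that $\Phi$ acts only on $Q$, so it commutes with the partial trace over $T$:
\begin{align}
\tr_T[(\Phi\otimes\id_T)(X_{QT})]=\Phi(\tr_T[X_{QT}]).
\end{align}
This follows by linearity after expanding $X_{QT}=\sum_{k,l}X_{kl}\otimes|k\rangle\langle l|$. Writing $Y:=\tr_T[X_{QT}]\geq 0$, the ratio appearing in $\|\Phi\otimes\id_T\|^+$ becomes exactly $\|\Phi(Y)\|_p/\|Y\|_q$. Note that for $X\geq 0$ we have $Y=0$ if and only if $X=0$, so no degenerate denominators arise.

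The two inequalities then follow directly.
\begin{itemize}
\item For "$\leq$": every positive $X_{QT}$ yields a positive $Y$ that is admissible in $\|\Phi\|^+_{(Q:q)\to(P:p)}$. Taking the supremum over $X$ therefore gives $\|\Phi\otimes\id_T\|^+\leq\|\Phi\|^+$.
\item For "$\geq$": take any $Y\geq0$ and set $X_{QT}=Y\otimes|0\rangle\langle0|_T$. Then $\tr_T X=Y$, so this $X$ attains the ratio $\|\Phi(Y)\|_p/\|Y\|_q$. Alternatively one can use \cref{tensorproduct} with $\||0\rangle\langle 0|\|_1=1$.
\end{itemize}

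There is no real obstacle here. The only points needing care are that positivity is essential: \cref{cor:q.1.PartialTrace} is only valid for $X\geq0$, which is why the statement is formulated with the $^+$ version. One should also check that the corollary covers the range $\frac12\le q<1$ as well as $q\ge 1$, which it does by its statement.
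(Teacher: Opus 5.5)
Your proposal is correct and follows essentially the same route as the paper. It uses two applications of \cref{cor:q.1.PartialTrace} together with $\tr_T\circ(\Phi\otimes\id_T)=\Phi\circ\tr_T$ for the upper bound, and product inputs $Y\otimes|0\rangle\langle 0|_T$ for the lower bound. Your remark that $\tr_T[X_{QT}]\geq 0$, so the bound may be taken with $\|\Phi\|^+_{(Q:q)\to(P:p)}$ rather than the unrestricted norm in the paper's display, is exactly the precise form the stated equality needs.
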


\begin{remark}
There are two interesting open questions associated to this lemma. Firstly one may ask whether these mixed quasi-norms for CP maps are achieved on positive semidefinite operators, making the restriction $^+$ redundant. This is true in the normed setting, i.e. when $p,s\geq1$, which follows from \cite[Theorem 4.1]{Fawzi.2026}. In general the question whether CP maps achieve their optimal $q\to p$ quasi-norms is open and proof techniques used in the normed regime, like duality of Schatten norms \cite{Watrous.2004}, break down in the quasi-norm regime.
The second open question is whether this holds also for arbitrary indices $t\geq 1$ on the $T$ system. In fact when $p,s\geq 1$ then this is indeed true by \cite[Theorem 4.1]{Fawzi.2026}. When $t<1$ this will generally not hold: taking $P,S$ to be trivial systems and $t<1$ with $\Phi$ the completely depolarizing channel suffices as we demonstrate below in \cref{counter:1}.
\end{remark}

\begin{proof}[Proof of \cref{lem:quasi.id.right}]
Clearly, by inputting operators which form a tensor product across the partition $QP$ and $T$ it follows that the LHS is at least as large as the RHS.
The core argument of this proof will be the reverse inequality and will follow from two applications of \cref{cor:q.1.PartialTrace}, which imply
\begin{align}
    \|(\Phi\otimes\id_T)(\rho_{QT})\|_{(P:p,T:1)} &= \|\tr_T(\Phi\otimes\id_T)(\rho_{QT})\|_{(P:p)} \\ &= \|\Phi(\tr_T[\rho_{QT}])\|_{(P:p)} \\ &\leq \|\Phi\|_{(Q:q)\to (P:p)}\|\tr_T[\rho_{QT}]\|_{(Q:q)} \\ &=\|\Phi\|_{(Q:q)\to (P:p)}\|\rho_{QT}\|_{(Q:q,T:1)}.
\end{align}
Rearranging and optimizing over all $\rho_{QPT}$ proves the lemma. 
\end{proof}

\noindent The previous Lemma directly implies stability of adjoining identity maps within mixed norms.

\begin{lemma}\label{thm:q.downto.p.cb.equal.noncb}
Let $\Phi:Q\to P$ be a CP map, then for $q\geq 1\geq p$ with $\frac{1}{p}-\frac{1}{q}\leq 1$ it holds that
\begin{align}
    \|\Phi\|^+_{\cb,(Q:q)\to (P:p)}= \|\Phi\|^+_{(Q:q)\to (P:p)}.
\end{align}
\end{lemma}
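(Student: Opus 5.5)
The plan is to reduce the completely bounded co-norm... wait, the statement is about the positive completely bounded norm $\|\Phi\|^+_{\cb,(Q:q)\to(P:p)}=\sup_E\|\id_E\otimes\Phi\|^+_{(E:1,Q:q)\to(E:1,P:p)}$. The inequality "$\geq$" is immediate by taking $E$ trivial, so everything rests on the upper bound. The idea is to use \cref{lem:cb.norm.via.t} to move the reference index from $1$ to a more convenient compatible index $t$, and then exploit \cref{lem:quasi.id.right}. Since $q\geq 1\geq p$ and $\frac1p-\frac1q\leq 1$, we have $|\frac11-\frac1q|\leq 1$ and $|\frac11-\frac1p|\leq 1$ (because $p\geq\frac12$ follows from $\frac1p\leq 1+\frac1q\le 2$). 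Thus $t=1$ is compatible with both indices, which is exactly the setting of \cref{lem:quasi.id.right} once we place the reference system on the right.

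First, I will note that by the local isometric invariance of \cref{prop:LocalIsomInv}, the two-indexed quasi-norms are symmetric in the ordering of the tensor factors in the following sense. The quantity $\|\id_E\otimes\Phi\|^+_{(E:1,Q:q)\to(E:1,P:p)}$ should be rewritten as $\|\Phi\otimes\id_T\|^+_{(Q:q,T:1)\to(P:p,T:1)}$ with $T=E$. This is not a literal reordering of factors, because the first index is the one carrying the $a,b$ weights. The step therefore has to be handled with care.

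The main obstacle is this identification. For positive inputs there is a cleaner route, so I will not rely on reordering. Using \cref{lem:Positive.Symmetrie} together with the support lemma, I would compute $\|X_{EQ}\|_{(E:1,Q:q)}$ for $X\geq0$ via its variational formula with $\frac1r=1-\frac1q$. That is,
\[
\|X\|_{(E:1,Q:q)}=\inf_{a\ge0}\|a\|_{2r}^2\|a_E^{-1}Xa_E^{-1}\|_q .
\]
Similarly, the output is $\|Y\|_{(E:1,P:p)}=\sup_{a}\|a\|_{2r'}^{-2}\|a_EYa_E\|_p$ with $\frac1{r'}=\frac1p-1$. These expressions are compatible with $\Phi$ acting only on $Q$, since $a_E(\id\otimes\Phi)(Z)a_E=(\id\otimes\Phi)(a_EZa_E)$ and the maps preserve positivity.

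With these in hand I would chain two steps. In the first, for given $X\geq0$, I pick a near-optimal $\hat a$ in the output supremum and use it also in the input infimum via the $(q,1)$ relation. This reduces the estimate to bounding $\|(\id_E\otimes\Phi)(W)\|_{(E:1,P:p)}$ for $W=\hat a_EX\hat a_E\geq 0$. In the second, I control $\|(\id\otimes\Phi)(W)\|_{(E:1,P:p)}$ for $p\leq 1$ through its partial-trace-like behavior. Here the relevant fact is \cref{cor:q.1.PartialTrace} applied after the swap $E\leftrightarrow$ output, i.e. we need $\|Y_{EP}\|_{(E:1,P:p)}\le\|Y_{PE}\|_{(P:p,E:1)}=\|\tr_E Y\|_p$ for $Y\geq0$ and $p\le 1$. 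I would prove this using the co-norm part of \cref{thm:QuasiNormMinkowskiInequality} (the SWAP reverse inequality) in the case where one of the indices equals $1$, since for $p\leq 1$ this is the supermultiplicative direction. Symmetrically, on the input side with $q\geq1$, the Pisier Minkowski inequality gives $\|X_{QE}\|_{(Q:q,E:1)}=\|\tr_E X\|_q\leq\|X_{EQ}\|_{(E:1,Q:q)}$.

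Combining these, we get $\|(\id\otimes\Phi)(X)\|_{(E:1,P:p)}\le\|\Phi(\tr_EX)\|_p\le\|\Phi\|^+_{q\to p}\|\tr_E X\|_q\le\|\Phi\|^+_{q\to p}\|X\|_{(E:1,Q:q)}$. This is precisely the computation in the proof of \cref{lem:quasi.id.right}, with the two Minkowski-type comparisons inserted. Taking the supremum over $X\geq0$ and over $E$ concludes the argument. The hard part I anticipate is justifying the output comparison $\|Y_{EP}\|_{(E:1,P:p)}\le\|\tr_E Y\|_p$ in the quasi-norm regime. The first attempt would be the direct formula $\sup_{\sigma}\|\sigma_E^{-1/(2r')}Y\sigma_E^{-1/(2r')}\|_p$ combined with the reverse Hölder / joint concavity for $p\le1$.
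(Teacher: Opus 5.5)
Your final chain is correct and is essentially the paper's proof. The paper also applies \cref{thm:QuasiNormMinkowskiInequality} on both sides to move $E$ into the second slot: with indices $1\le q$ on the input, and with $p\le 1$ (using $p\ge\frac12$) on the output. It then invokes \cref{lem:quasi.id.right}, whose proof is exactly your partial-trace computation via \cref{cor:q.1.PartialTrace}. The intermediate step with the variational formulas and a near-optimal $\hat a$ plays no role in that chain, and it would not even match up, since the input and output weights carry different exponents $2r$ and $2r'$. You can simply drop it.
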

\begin{remark}
This result is an extension of \cite[Theorem 13]{Devetak.2006} where the statement was proved for $q\geq p \geq 1$. It is optimal in the sense that for $q<1$ it cannot hold as we demonstrate in \cref{counter:1}.
\end{remark}

\begin{proof}
Clearly it holds that $ \|\Phi\|^+_{\cb,(Q:q)\to (P:p)}\geq \|\Phi\|^+_{(Q:q)\to (P:p)}$ so it suffices to prove the reverse inequality.
Now due to \cref{thm:QuasiNormMinkowskiInequality} and $q\geq 1\geq p$ it follows that
\begin{align}
    \|\id_E\otimes\Phi\|^+_{(E:1,Q:q)\to (E:1,P:p)} &= \sup_{X \geq 0} \frac{\|\id_E\otimes\Phi(X_{EQ})\|_{(E:1,P:p)}}{\| X_{EQ} \|_{(E:1,Q:q)}} \\ &\leq  \sup_{X \geq 0} \frac{\|(\Phi \otimes \id_E)(X_{QE})\|_{(P:p,E:1)}}{\| X_{QE} \|_{(Q:q,E:1)}} \\& = \|\Phi\otimes\id_E\|^+_{(Q:q,E:1)\to (P:p,E:1)}\\
    &= \|\Phi\|^+_{(Q:q)\to (P:p)},
\end{align} where in the last equality, we used \cref{lem:quasi.id.right}.
\end{proof}

\noindent The following counter example shows that the two lemmata presented here are optimal w.r.t their parameter regime in the sense then when taking $t<1$ or $q<1$ their statements cannot be true. This is a consequence of the non-convex geometry underlying the appearing quasi-norms.
\begin{counterexample}[Counter Example to \cref{lem:quasi.id.right} with $t<1$ and \cref{thm:q.downto.p.cb.equal.noncb} with $q<1$]\label{counter:1}
Let $1>q\geq p$ and consider $\Phi: \cB(\mathbb{C}^{d})\to \cB(\mathbb{C}^{d}),\, \rho\mapsto \Tr[\rho]\frac{\1}{d}$ to be the replacer channel with the maximally mixed state. This map is clearly completely positive, trace preserving and unital. Then on the one hand we can simply lower bound the completely bounded mixed norm by an ancilla system of the same dimension $d$ to get
\begin{align}
    \|\Phi\|^+_{\cb,q\to p}\geq\sup_{\rho_{12}}\frac{\|\tr_2[\rho_{12}]\otimes\frac{\1}{d}\|_{(q,p)}}{\|\rho_{12}\|_{q}}= d^{\frac{1}{p}-1}\cdot \sup_{\rho_{12}}\frac{\|\tr_2[\rho_{12}]\|_q}{\|\rho_{12}\|_q}\geq d^{\frac{1}{p}-1}d^{\frac{1}{q}-1},
\end{align} where the last lower bound follows from lower bounding the supremum over all bipartite operators $\rho_{12}$ with the maximally entangled state $\rho_{12}=\Omega_{12}=|\Omega\rangle\langle\Omega|$, s.t. $\|\Omega_{12}\|_q=1$ and $\tr_2[\Omega_{12}]=\frac{\1_1}{d}$. On the other hand, we may upper bound the non cb norm as
\begin{align}
    \|\Phi\|_{q\to p}&=\sup_\rho\frac{\|\frac{\1}{d}\Tr[\rho]\|_p} {\|\rho\|_q} =d^{\frac{1}{p}-1}\sup_\rho \frac{|\Tr[\rho]|}{\|\rho\|_q} \leq d^{\frac{1}{p}-1}\sup_\rho \frac{\|\rho\|_1}{\|\rho\|_q}\leq d^{\frac{1}{p}-1},
\end{align} where the last inequality follows from the well known fact that $\|\rho\|_1\leq\|\rho\|_q$ since $q \leq 1$.  Note that this entire bound is achievable. 
Hence we have shown that for this particular choice of channel
\begin{align}
    \|\Phi\|^+_{\cb,q\to p}\geq d^{\frac{1}{p}-1}d^{\frac{1}{q}-1} > d^{\frac{1}{p}-1} = \|\Phi\|^+_{q\to p},
\end{align} for $q<1$.
Observe that this also an example of a pair of channels $\Phi$ and $\Psi=\Phi$ such that
\begin{align}
    \|\Phi\otimes\Psi\|^+_{(q,q)\to (p,p)}> \|\Phi\|^+_{q\to p}\|\Psi\|^+_{q\to p},
\end{align} for any $1>q\geq p$. This follows directly from above, with the RHS simply being $d^{\frac{2}{d}-2}$ and the LHS is $d^{\frac{2}{p}-2}d^{\frac{2}{q}-2}>d^{\frac{2}{p}-2}$.
Additionally, when $q=p=t<1$ it holds true that
\begin{align}
    \|\Phi\|^+_{\cb,t\to t}=\sup_n\|\id_n\otimes\Phi\|^+_{(t,t)\to (t,t)} = \sup_n\|\Phi\otimes\id_n\|^+_{(t,t)\to (t,t)}>\|\Phi\|_{t\to t} \geq \|\Phi\|^+_{t\to t}.
\end{align} Hence there must exists some $N\in\mathbb{N}$ s.t. $\|\Phi\otimes\id_N\|^+_{(t,t)\to (t,t)}>\|\Phi\|_{t\to t}$
which constitutes a counter example to \cref{lem:quasi.id.right} with $t<1$.
\end{counterexample}

\noindent Next we present a multiplicativity statement which corresponds to the additivity of maximal output Rényi entropies with indices $p\in[\frac{1}{2},1)$. As far as we are aware, this is the first additivity result for maximum output entropies. An important distinction to the minimum output entropy case is that it holds already for the maximum output Rényi entropy and does not require completely bounded Rényi entropies. We recall the definition of the $p$-Rényi-entropy of a state $\omega$ on system $A$: 
\[H_p(A)_{\omega}:=\frac{1}{1-p}\log\Tr[\omega_A^p]\,.\]

\begin{theorem}[Multiplicativity of completely bounded $1\to p$ quasi-norms for $p<1$]\label{thm:cb.1.to.p.multiplicativity}
Let $\Phi:Q_1\to A_1\,,\Psi:Q_2\to A_2$ be completely bounded CP maps and $1\geq p\geq\frac{1}{2}$, then it holds that
\begin{align}
    \|\Phi\otimes\Psi\|^+_{\cb,(Q_1Q_2:1)\to (A_1A_2:p)}= \|\Phi\|^+_{(Q_1:1)\to (A_1:p)}\|\Psi\|^+_{(Q_2:1)\to (A_2:p)}
\end{align} which is equivalent to
\begin{align}
\sup_{E}\sup_{\rho_{EQ_1Q_2}}H_p^\uparrow(A_1A_2|E)_{(\id_E\otimes\Phi\otimes\Psi)(\rho_{EQ_1Q_2})} =
\sup_{\rho_{Q_1}}H_p(A_1)_{\Phi(\rho_{Q_1})}+\sup_{\rho_{Q_2}}H_p(A_2)_{\Psi(\rho_{Q_2})}\,.
\end{align} 
\end{theorem}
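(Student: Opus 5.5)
The plan is to reduce everything to non-cb, positive-input quantities, and then prove multiplicativity of $\|\cdot\|^+_{(Q:1)\to(A:p)}$ directly. Here $q=1\ge p$ and $\frac1p-1\le1$. By \cref{thm:q.downto.p.cb.equal.noncb}, applied to the CP map $\Phi\otimes\Psi$, we get $\|\Phi\otimes\Psi\|^+_{\cb,1\to p}=\|\Phi\otimes\Psi\|^+_{1\to p}$. It therefore suffices to show
\begin{align}
\|\Phi\otimes\Psi\|^+_{1\to p}=\|\Phi\|^+_{1\to p}\|\Psi\|^+_{1\to p}.
\end{align}
The lower bound "$\ge$" follows by testing on product inputs $\rho_1\otimes\rho_2$, using $\|X\otimes Y\|_p=\|X\|_p\|Y\|_p$.

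For the upper bound, I would factor $\Phi\otimes\Psi=(\Phi\otimes\id_{A_2})\circ(\id_{Q_1}\otimes\Psi)$ and pass through an intermediate 2-indexed quasi-norm $(Q_1:1,A_2:p)$. Take $X=\rho_{Q_1Q_2}\ge0$. Since $\id\otimes\Psi$ is CP, \cref{cor:q.1.PartialTrace} gives
\begin{align}
\|(\id\otimes\Psi)(\rho)\|_{(A_2:p,Q_1:1)}=\|\Psi(\rho_{Q_2})\|_p\le\|\Psi\|^+_{1\to p}\,\Tr\rho .
\end{align}
This bounds the first factor, with the ordering of the systems swapped. That is harmless here, because the $(\cdot,1)$ norm with the 1 on the "inner" system collapses to a partial trace, and the $(1,p)$ norm with the 1 on the "outer" system is handled by \cref{lem:cb.norm.via.t}.

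The second step is the bound
\begin{align}
\|(\Phi\otimes\id_{A_2})(Y)\|_{p}\le\|\Phi\|^+_{\cb,1\to p}\,\|Y\|_{(A_2:1,Q_1:p)}
\end{align}
for $Y\ge0$ on $Q_1A_2$. Indeed, $\|\cdot\|_{(A_2:p,\cdot)}$ with both indices $p$ is just $\|\cdot\|_p$ by \cref{thm:qt.to.pt}, and \cref{lem:cb.norm.via.t} with $t=p$ (compatible with $1$ and $p$ since $\frac1p-1\le1$) lets the reference system carry index $p$ on both sides. The catch is that this yields $\|Y\|_{(A_2:p,Q_1:1)}$, not $\|Y\|_{(A_2:1,\cdot)}$.

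So I would chain the estimates as follows:
\begin{align}
\|(\Phi\otimes\Psi)(\rho)\|_p
&\le \|\Phi\|^+_{\cb,1\to p}\,\|(\id\otimes\Psi)(\rho)\|_{(A_2:p,Q_1:1)}\\
&\le \|\Phi\|^+_{\cb,1\to p}\,\|\Psi\|^+_{1\to p}.
\end{align}
Finally, \cref{thm:q.downto.p.cb.equal.noncb} applied to $\Phi$ replaces $\|\Phi\|^+_{\cb,1\to p}$ by $\|\Phi\|^+_{1\to p}$, which closes the argument. For the entropic restatement, I would use \cref{cor:conditional.entropy} together with $\|\Phi\|^+_{1\to p}=\sup_{\rho}\|\Phi(\rho)\|_p$. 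I would also use that $\frac{p}{1-p}\log$ of the cb quantity equals $-\inf H^\uparrow_p$; the sign works out because $\frac{p}{1-p}>0$, so the supremum of the norm corresponds to the maximum output entropy.

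The main obstacle is positivity in the middle step. \cref{lem:cb.norm.via.t} is applied to $\id_{A_2}\otimes\Phi$ acting on $(\id\otimes\Psi)(\rho)\ge0$, so I need the positive version $\|\cdot\|^+$, which that lemma provides. I also need the ordering of systems in the 2-indexed norm (reference system first) to be consistent, and I would handle this with an explicit swap of tensor factors, which is a unitary relabeling. If that ordering fails, my fallback is \cref{thm:QuasiNormMinkowskiInequality} with $q=1\ge p$ reversed, as in the proof of \cref{thm:q.downto.p.cb.equal.noncb}.
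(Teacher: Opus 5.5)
Your proposal is correct and is essentially the paper's proof with the roles of $\Phi$ and $\Psi$ swapped. The paper likewise removes the cb via \cref{thm:q.downto.p.cb.equal.noncb}, passes through the intermediate norm $(A_1:p,Q_2:1)$ (yours is $(A_2:p,Q_1:1)$), bounds the first factor by the partial-trace collapse of \cref{cor:q.1.PartialTrace} (packaged as \cref{lem:quasi.id.right}) and the second by the cb quasi-norm with index $p$ on the environment via \cref{lem:cb.norm.via.t}, and then drops the cb again. Your ordering $(A_2:p,Q_1:1)$ is consistent in both steps, so the ``catch'' and the Minkowski fallback are unnecessary; you only need to fix the slip in the entropic step, where $\frac{p}{1-p}\log\|\Phi\otimes\Psi\|^+_{\cb,(Q_1Q_2:1)\to(A_1A_2:p)}=\sup_{E,\rho}H^\uparrow_p(A_1A_2|E)$, not $-\inf H^\uparrow_p$.
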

\begin{remark}
Since conditioning on environments on the LHS can only decrease the conditional entropy, this theorem implies in particular that $$\sup_{\rho_{Q_1Q_2}}H_p(A_1A_2)_{(\Phi\otimes\Psi)(\rho_{Q_1Q_2})} = \sup_{\rho_{Q_1}}H_p(A_1)_{\Phi(\rho_{Q_1})}+\sup_{\rho_{Q_2}}H_p(A_2)_{\Psi(\rho_{Q_2})}.$$
It is an open question whether this multiplicativity extends to arbitrary input norm-indices $q\geq 1$. For $q<1$ this cannot be true due to the optimality of \cref{thm:q.downto.p.cb.equal.noncb}, see also \cref{counter:1}. 
\end{remark}
\begin{proof}[Proof of \cref{thm:cb.1.to.p.multiplicativity}]
Firstly by \cref{thm:q.downto.p.cb.equal.noncb}, since $q=1\geq p$ we have
\begin{align}
    \|\Phi\otimes\Psi\|^+_{\cb,(Q_1Q_2:1)\to (A_1A_2:p)} = \|\Phi\otimes\Psi\|^+_{(Q_1Q_2:1)\to (A_1A_2:p)},
\end{align} for which it holds that
\begin{align}
    \|\Phi\otimes\Psi\|^+_{(Q_1:1,Q_2:1)\to (A_1:p,A_2:p)} &= \|(\id_{A_1}\otimes\Psi)\circ (\Phi\otimes\id_{Q_2})\|^+_{(Q_1:1,Q_2:1)\to (P_1:p,P_2:p)} \\&\leq \|\Phi\otimes\id_{Q_2}\|^+_{(Q_1:1,Q_2:1)\to (A_1:p,Q_2:1)}\|\id_{A_1}\otimes\Psi\|^+_{(A_1:p,Q_2:1)\to (A_1:p,A_2:p)} \\ &\leq \|\Phi\|^+_{(Q_1:1)\to (A_1:p)}\, \|\Psi\|^+_{\cb,(Q_2:1)\to (A_2:p)} \\
    &=\|\Phi\|^+_{(Q_1:1)\to (A_1:p)}\, \|\Psi\|^+_{(Q_2:1)\to (A_2:p)},
\end{align} 
where in the second inequality we used \cref{lem:quasi.id.right} and the definition of the \(\cb\) norm with an index $p$ on the environment system. The last equality follows from \cref{thm:q.downto.p.cb.equal.noncb}. 
Achievability in this case is quite simple since for any $X,Y\geq0$ it holds that
\begin{align}
  \|\Phi\otimes\Psi\|^+_{\cb,(Q_1Q_2:1)\to (A_1A_2:p)} \geq \frac{\|(\Phi\otimes\Psi)(X_{Q_1}\otimes Y_{Q_2})\|_{(A_1:p,A_2:p)}}{\|X\otimes Y\|_{(Q_1:1,Q_2:1)}}= \frac{\|\Phi(X_{Q_1})\|_p}{\|X_{Q_1}\|_1}\frac{\|\Psi(Y_{Q_2})\|_p}{\|Y_{Q_2}\|_1}
\end{align} and maximizing the RHS yields the achievability.
To derive the alternate expression see that for a CP map $\Phi:Q\to A$ and $p<1$
\begin{align}    \sup_{\rho\in\cD(\cH_Q)}H_p(A)_{\Phi(\rho)}&=\sup_{\rho\in\cD(\cH_Q)}\frac{p}{1-p}\log\|\Phi(\rho)\|_p = \frac{p}{1-p}\log\sup_{\rho\in\cD(\cH_Q)}\|\Phi(\rho)\|_p = \frac{p}{1-p}\log\|\Phi\|^+_{1\to p}.
\end{align}
\end{proof}

\subsubsection{Multiplicativity of Completely Bounded Co-Quasi-Norms for CP maps}\label{sec:app:mixed.conorm.additivity}

Analogously to the mixed norms, for the co-norms we prove the following reverse multiplicativity. 
\begin{theorem*}[\cref{thm:cb.reverse hypercontractive.bound}]
Let $q\geq p>0$ be s.t. $\frac{1}{p}-\frac{1}{q}\leq 1$ and $\Phi:Q_1\to P_1,\,\Psi:Q_2\to P_2$ be CP maps, then it holds that
\begin{align}\label{eqcocornmulti}
    \|\Phi\otimes\Psi\|^+_{\cb,\co,(Q_1Q_2:q)\to (P_1P_2:p)}\geq \|\Phi\|^+_{\cb,\co,(Q_1:q)\to (P_1:p)}\cdot \|\Psi\|^+_{\cb,\co, (Q_2:q)\to (P_2:p)}.
\end{align}
\end{theorem*}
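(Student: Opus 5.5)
I would mirror the proof of \cref{thm:mixed.up.cb.additivity}, with every inequality reversed, using the co-norm half of \cref{thm:QuasiNormMinkowskiInequality}. Note that here $q\geq p$, so the relevant Minkowski inequality has the roles of the indices swapped relative to the one used for norms. Concretely, the co-part of \cref{thm:QuasiNormMinkowskiInequality} (applied with $p\le q$ playing the roles $q\le p$ there) says that for positive $X$ the swap of adjacent systems only increases the relevant 2/3-indexed quasi-norm in the appropriate direction. I would first restate it in the form needed: for $X\geq0$ on $EQP$,
\begin{align}
\|X_{EQP}\|_{(E:q,Q:p,P:q)}\ \geq\ \|X_{EPQ}\|_{(EP:q,Q:p)}\ \text{or its reverse,}
\end{align}
whichever direction matches the co-norm statement. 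I would check this orientation carefully against \cref{lem:3.index.formulas}, since the 3-indexed quasi-norm was only defined for the pattern (large, small, large).

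Next, fix an environment $E$ and use \cref{lem:cb.co.norm.via.t} to place the index $q$ (or $p$, whichever is convenient and compatible) on $E$. For positive $X_{EQ_1Q_2}$, write $\id_E\otimes\Phi\otimes\Psi=(\id_{EP_1}\otimes\Psi)\circ(\id_E\otimes\Phi\otimes\id_{Q_2})$. Because both maps are CP, the intermediate output $Y=(\id_E\otimes\Phi\otimes\id_{Q_2})(X)$ is again positive. This gives the trivial super-multiplicativity of co-norms under composition on positive inputs:
\begin{align}
\frac{\|(\id\otimes\Phi\otimes\Psi)(X)\|_{(E:q,P_1P_2:p)}}{\|X\|_{(E:q,Q_1Q_2:q)}}=\frac{\|(\id\otimes\Psi)(Y)\|}{\|Y\|_{\ast}}\cdot\frac{\|Y\|_{\ast}}{\|X\|},
\end{align}
with an intermediate quasi-norm $\|\cdot\|_\ast$ on $EP_1Q_2$ to be chosen. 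For the first factor I take $\|\cdot\|_\ast=\|\cdot\|_{(EP_1:q,Q_2:q)}$ or a Minkowski-related variant. Grouping $EP_1$ as one environment and using \cref{lem:cb.co.norm.via.t}, this factor is bounded below by $\|\Psi\|^+_{\cb,\co,q\to p}$. For the second factor I need to swap $P_1$ and $Q_2$ and use the co-Minkowski inequality, so that the ratio becomes
\begin{align}
\|(\id_{EQ_2}\otimes\Phi)(X_{EQ_2Q_1})\|_{(E:\cdot,Q_2:\cdot,P_1:p)}\,/\,\|X\|.
\end{align}
I would then bound this below by $\|\Phi\|^+_{\cb,\co,q\to p}$, exactly as in step (3) of \cref{thm:mixed.up.cb.additivity}. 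That step changes the index on $E$ between $p$ and $q$ via the defining supremum/infimum of \cref{def:3.indexed.special}, approximate optimizers $\hat a,\hat b$ normalized in $\|\cdot\|_{2r}$, and positivity, where \cref{lem:Positive.Symmetrie} and \cref{prop:3index.sym.pos} let me take $\hat a=\hat b$ so the inputs stay positive. After that, $EQ_2$ is merged into a single environment. Finally, I take the infimum over $X\geq0$ and over $E$.

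The main obstacle is the index bookkeeping in the middle step. The intermediate norm must simultaneously (i) make the $\Psi$-factor a genuine cb co-norm with a compatible environment index, (ii) be related to the 3-indexed norm by \cref{thm:QuasiNormMinkowskiInequality} in the correct direction, and (iii) let the $E$-index change in step (3) go the right way, since inequalities from $\sup$-approximants now have to be lower bounds. I would first try the environment index $p$ on $E$ (so $EP_1$ carries $p$), matching the pattern $(E:p,Q:q,P:p)$ reversed into $(E:q,Q_2:q,P_1:p)$. If the orientation fails, I would instead prove a variant of \cref{thm:QuasiNormMinkowskiInequality} for $q\geq p$ with pattern $(q,p,q)$ directly from \cref{lem:3.index.formulas}'s factorization and Pisier's Minkowski inequality applied to the positive normed factors.
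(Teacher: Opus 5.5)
\textbf{There is a genuine gap: the intermediate quasi-norm is never fixed, and the choices you do commit to fail.}

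Your toolkit is the right one. You factor $\id_E\otimes\Phi\otimes\Psi$ through an intermediate quasi-norm on $EP_1Q_2$, use the co-half of \cref{thm:QuasiNormMinkowskiInequality}, use index-independence from \cref{lem:cb.co.norm.via.t}, and use an approximate-optimizer argument with $\hat a=\hat b$. The orientation you wrote first, $\|X_{EQP}\|_{(E:q,Q:p,P:q)}\ge\|X_{EPQ}\|_{(EP:q,Q:p)}$, is exactly \cref{thm:QuasiNormMinkowskiInequality} after renaming, so no new $(q,p,q)$ variant is needed.

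However, the whole proof hinges on the choice of $\|\cdot\|_\ast$, and your concrete choices do not work:
\begin{itemize}
\item With $\|\cdot\|_\ast=\|\cdot\|_{(EP_1:q,Q_2:q)}=\|\cdot\|_q$, the $\Phi$-factor is just a $q\to q$ co-norm: the swap is vacuous and $p$ never appears.
\item With the same choice, the $\Psi$-factor $\|(\id\otimes\Psi)(Y)\|_{(E:q,P_1P_2:p)}/\|Y\|_{(EP_1:q,Q_2:q)}$ is not bounded below by $\|\Psi\|^+_{\cb,\co,q\to p}$, because $P_1$ carries $q$ in the input but $p$ in the output. Concretely, take $\Psi=\id_{\mathbb C}$, so $\|\Psi\|^+_{\cb,\co,q\to p}=1$, and let $Y$ be maximally entangled on $EP_1$ with local dimension $d$. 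The ratio is then $\|Y\|_{(E:q,P_1:p)}=d^{\frac1q-\frac1p}<1$ whenever $p<q$.
\item Your preferred route with index $p$ on $E$ fails too. Mirroring the norm proof, the swap would have to produce $(E:p,Q_2:q,P_1:p)$. For $q\ge p$ this is a (small, large, small) pattern, and the lower bound you need would be a reversed Minkowski inequality for a 3-indexed quasi-norm that is neither defined nor proved in the paper.
\end{itemize}

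\textbf{The missing idea.} Keep the input index $q$ on $E$ and take $\|\cdot\|_\ast=\|\cdot\|_{(E:q,P_1:p,Q_2:q)}$ from \eqref{def:3.indexed.special}, which is a (large, small, large) pattern. The two steps then play the opposite roles from the ones you assign.

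On the $\Phi$-side, Minkowski lower-bounds the numerator:
$\|Y_{EP_1Q_2}\|_{(E:q,P_1:p,Q_2:q)}\ge\|Y_{EQ_2P_1}\|_{(EQ_2:q,P_1:p)}$.
The denominator is $\|X\|_q=\|X\|_{(EQ_2:q,Q_1:q)}$. So this factor is at least $\|\id_{EQ_2}\otimes\Phi\|^+_{\co,(EQ_2:q,Q_1:q)\to(EQ_2:q,P_1:p)}\ge\|\Phi\|^+_{\cb,\co,q\to p}$, with no index change at all.

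The change of the $E$-index from $q$ to $p$ is needed on the $\Psi$-side instead. It must run in the co-direction, opposite to step (3) of \cref{thm:mixed.up.cb.additivity}; that step upper-bounds the numerator and lower-bounds the denominator, which is useless for an infimum. The correct argument is:
\begin{enumerate}
\item Choose $\hat a\ge 0$ with $\|\hat a\|_{2r}=1$ nearly optimal in the supremum defining the denominator, so that $\|X\|_{(E:q,P_1:p,Q_2:q)}\le\|\hat a_EX\hat a_E\|_{(EP_1:p,Q_2:q)}+\epsilon$ (using \cref{prop:3index.sym.pos}).
\item Use the same $\hat a$ as a feasible point in the supremum defining the numerator, giving $\|(\id\otimes\Psi)(X)\|_{(E:q,P_1P_2:p)}\ge\|(\id\otimes\Psi)(\hat a_EX\hat a_E)\|_p$.
\item Both suprema use the same $\frac1r=\frac1p-\frac1q$. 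Hence the factor is at least $\|\id_{EP_1}\otimes\Psi\|^+_{\co,(EP_1:p,Q_2:q)\to(EP_1:p,P_2:p)}\ge\|\Psi\|^+_{\cb,\co,q\to p}$, by \cref{lem:cb.co.norm.via.t} with $t=p$.
\end{enumerate}
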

\noindent As in the case of \cref{thm:mixed.up.cb.additivity}, although  inequality \eqref{eqcocornmulti} is achievable, we leave the proof of this fact to future work \cite{Kochanowski.in.prep}, as it requires more tools which we believe deserve their own exposition. 

This \cref{thm:cb.reverse hypercontractive.bound} is important and should be considered as one of the main results of this section and the paper for the following two reasons. On the one hand it is necessary for defining a concept of \emph{complete reverse hypercontractivity} which builds on the (super)multiplicativity of the above completely bounded $q\to p$-conorm (for $1> q\geq p\geq\frac{1}{2}$) under tensor products of channels.
\begin{remark}\label{rem:cb.reverse.hypercontractivity}
A quantum channel is called $q\to p$ reverse hypercontractive \cite{Beigi.2020} if for any operator $X>0$ and any $1>q$ it holds that $\|\Phi(X)\|_p\geq \|X\|_q$, i.e. if
\begin{align}
\|\Phi\|^+_{\co,q\to p}\geq1.
\end{align} 
In the context of quantum Marokv semigroups (QMS) this concept becomes interesting whenever $1>q> p\geq\frac{1}{2}$ \cite[Theorem 11]{Beigi.2020}. There it relates the logarithmic Sobolev inequality, related to the mixing time to the the first time a QMS becomes $q\to p$ reverse hypercontractive. The problem with this concept, however, is when considering tensor-products of QMS $\Phi\otimes\Psi$, as
\begin{align}
    \|\Phi\otimes\Psi\|^+_{\co,q\to p}\leq \|\Phi\|^+_{\co,q\to p}\|\Psi\|^+_{\co,q\to p},
\end{align}
with the inequality in general being strict. 
Hence even if both $\Phi, \Psi$ are $q\to p$-reverse hypercontractive, then their tensorproduct channel (generated by the direct sum of their generators) may not be. This problem is solved by introducing the notion of complete reverse hypercontractivity. We say, analogously to above, that a channel $\Phi$ satisfies \emph{complete }$q\to p$\emph{ reverse hypercontractive} if for $1>q\geq p\geq\frac{1}{2}$
\begin{align}
    \|\Phi\|^+_{\cb,\co,q\to p}\geq1
\end{align}
Then by \cref{thm:cb.reverse hypercontractive.bound} if for two channels $\Phi,\Psi$ they each are completely $q\to p$ reverse hypercontractive, then so is their tensorproduct, i.e.
\begin{align}
    \|\Phi\otimes\Psi\|^+_{\cb,\co,q\to p}\geq \|\Phi\|^+_{\cb,\co,q\to p}\|\Psi\|^+_{\cb,\co,q\to p}\geq1
\end{align}
This notion is thus completely analogous to the notion of complete hypercontractivity \cite{Beigi.2016, Bardet.2024} and extends reverse hypercontractivity \cite{Beigi.2020, Cubitt.2015}. 
We note that the supermultiplicativity property we demonstrate here is not sufficient, but necessary, to make it a sensible property of a QMS connected to its mixing time and we leave the details of proving a connection between this notion of complete reverse hypercontractivity of a QMS and it actually satisfying a complete modified logarithmic Sobolev inequality to future work.

\end{remark}

On the other hand it justifies our definition of completely bounded co-quasi-norm, since for $q=1$, it reproduces naturally a minimal output entropy for $p\in[\frac{1}{2},1)$ for CP maps.

\noindent Before we give the proof of this statement, we write it in terms of additivity of the completely bounded minimal output Rényi-$p$-entropy for $\frac{1}{2}\leq p<1$.
\begin{corollary*}[\cref{cor:Additivity.mib.cb.entropy}]
Let $\Phi:Q_1\to A_1\,,\Psi:Q_2\to A_2$ be CP maps and $p\geq\frac{1}{2}$, then 
\begin{align}
\inf_{E,\rho_{EQ_1Q_2}}\!\!\!\!\!H^\uparrow_p(A_1A_2|E)_{(\id_E\otimes\Phi\otimes\Psi)(\rho_{EQ_1Q_2})}\!=\! \!\!\inf_{E,\rho_{EQ_1}}\!\!\!H^\uparrow_p(A_1|E)_{(\id_E\otimes\Phi)(\rho_{EQ_1})}\!\!+\!\!\!\!\inf_{E,\rho_{EQ_2}}\!H^\uparrow_p(A_2|E)_{(\id_E\otimes\Psi)(\rho_{EQ_2})}\,.
\end{align}
\end{corollary*}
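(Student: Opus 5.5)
Our plan is to deduce the corollary from \cref{thm:cb.reverse hypercontractive.bound} with $q=1$, together with the easy reverse inequality from tensor-product inputs. The first step is to turn each cb-minimal output conditional entropy into a cb co-quasi-norm. For $\frac{1}{2}\le p<1$, \cref{cor:conditional.entropy} gives $H^\uparrow_p(A|E)_\omega=\frac{p}{1-p}\log\|\omega_{EA}\|_{(E:1,A:p)}$. For $\rho_{EQ}\ge0$, \cref{cor:q.1.PartialTrace} gives $\|\rho_{EQ}\|_{(E:1,Q:1)}=\Tr[\rho]$. Since $\frac{p}{1-p}>0$ and $\log$ is monotone, we obtain
\begin{align}
\inf_{E,\rho_{EQ}}H^\uparrow_p(A|E)_{(\id_E\otimes\Phi)(\rho)}=\frac{p}{1-p}\log\|\Phi\|^+_{\cb,\co,(Q:1)\to(A:p)}.
\end{align}
Here the infimum over states equals the infimum over all $X\ge0$ by homogeneity. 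We use the definition of the cb co-quasi-norm with index $1$ on $E$ directly; \cref{lem:cb.co.norm.via.t} confirms this is consistent with any other compatible $t$. The same identity holds for $\Psi$ and for $\Phi\otimes\Psi$.

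The second step is the inequality ``$\ge$'' in the corollary. \Cref{thm:cb.reverse hypercontractive.bound} applies with $q=1\ge p$ and $\frac{1}{p}-1\le1$, which is exactly $p\ge\frac{1}{2}$. It gives supermultiplicativity of the co-quasi-norm, i.e.\ superadditivity of the entropies, so the left-hand side is at least the sum. For the converse ``$\le$'', take product inputs $\rho_{E_1Q_1}\otimes\sigma_{E_2Q_2}$ with $E=E_1E_2$. The output is a tensor product across $(E_1A_1):(E_2A_2)$. Additivity of $H^\uparrow_p$ under tensor products then bounds the left-hand side by the sum; at the norm level this is $\|X\otimes Y\|_{(E_1E_2:1,A_1A_2:p)}=\|X\|_{(1,p)}\|Y\|_{(1,p)}$.

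The third step, which we expect to be the main obstacle, is justifying that multiplicativity. One inequality is immediate. In the supremum formula of \cref{lem:Positive.Symmetrie}, restricting to product $a=a_1\otimes a_2$ and using multiplicativity of Schatten quasi-norms gives ``$\ge$''. The other direction, ``$\le$'', is the non-trivial one. We would derive it from the known additivity of the optimized sandwiched conditional entropy $H^\uparrow_p$ under tensor products for $p\ge\frac12$, which follows from its closed form, i.e.\ a Sibson-type identity. Alternatively, we can use the reverse Hölder \cref{lem:GenRevHölder} applied with product $Y=Y_1\otimes Y_2$ and multiplicativity of $\|\cdot\|_{(-q',-p')}$ on products, which are norms here since $-p'\ge1$. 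Only ``$\le$'' of the entropy is needed for this direction, so the product lower bound already suffices.

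The case $p=1$ follows by continuity in $p$. For $p>1$ the statement is the known result of \cite{Devetak.2006}, with the sign of $\frac{p}{1-p}$ flipping the roles of $\inf$ and $\sup$ consistently.
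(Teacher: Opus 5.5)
Your proposal follows the paper's proof: you rewrite the cb minimal output entropies as $\frac{p}{1-p}\log\|\cdot\|^+_{\cb,\co,(Q:1)\to(A:p)}$ via \cref{cor:conditional.entropy} and \cref{cor:q.1.PartialTrace}, obtain ``$\ge$'' from \cref{thm:cb.reverse hypercontractive.bound} at $q=1$ (where $\frac1p-1\le1$ is exactly $p\ge\frac12$), and obtain ``$\le$'' from product inputs plus the known additivity of $H^\uparrow_p$ on product states, which the paper simply cites (\cite[Corollary 5.9]{Book.Tomamichel.2016}).

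Two side remarks in your third step are inaccurate, though neither is needed. First, the sandwiched $H^\uparrow_p$ has no Sibson-type closed form; that closed form belongs to the Petz version, and the sandwiched additivity comes from superadditivity (product $\sigma$) together with the duality $H^\uparrow_\alpha(A|B)=-H^\uparrow_\beta(A|C)$, $\frac1\alpha+\frac1\beta=2$, for a purification. Second, your reverse-H\"older alternative needs the nontrivial, supremum-side inequality $\|Y_1^{-1}\otimes Y_2^{-1}\|_{(\infty,-p')}\le\|Y_1^{-1}\|_{(\infty,-p')}\|Y_2^{-1}\|_{(\infty,-p')}$, so the easy product bound does not suffice there.
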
 
\begin{remark}
This result complements the additivity result of \cite[Theorem 11]{Devetak.2006} of the minimum output $p$-entropy for $p > 1$ by establishing the same statement for $p < 1$.
\end{remark}
\begin{proof}[Proof of \cref{cor:Additivity.mib.cb.entropy}]
The $\geq$ inequality in the corollary directly follows from the theorem with $q=1>p$, when observing that for $p<1$ and any CP map $\Phi$ 
\begin{align}
   \inf_E\inf_{\rho_{EQ}\in\cD(\cH_{E}\otimes\cH_Q)} H_p^\uparrow(A|E)_{(\id\otimes\Phi)(\rho_{EQ})}
   &=\inf_{E,\rho_{EQ}}\frac{p}{1-p}\log\|(\id_E\otimes\Phi)(\rho_{EQ})\|_{(E:1,A:p)} \\
   &=\frac{p}{1-p}\log\inf_{E,\rho_{EQ}\in\cD(\cH_{E}\otimes\cH_Q)}\|(\id_E\otimes\Phi)(\rho_{EQ})\|_{(E:1,A:p)} \\
   &= \frac{p}{1-p}\log\inf_E\|\id_E\otimes\Phi\|^+_{\co,(E:1,Q:1)\to (E:1,A:p)} \\ 
   &=\frac{p}{1-p}\log\|\Phi\|^+_{\cb,\co,(Q:1)\to (A:p)}.
\end{align}
Equality in that bound follows directly from additivity of the optimized sandwiched conditional entropy under tensor-product states \cite[Corollary 5.9]{Book.Tomamichel.2016}.
\end{proof}

\begin{proof}[Proof of \cref{thm:cb.reverse hypercontractive.bound}]
Let $q\geq p>0$ be s.t. $\frac{1}{r}=\frac{1}{q}-\frac{1}{p}\leq1$, then
  \begin{align}
      \|&\id_E\otimes\Phi\otimes\Psi\|^+_{\co,(E:q,Q_1Q_2:q)\to (E:q,P_1P_2:p)}\\
      &\geq \|\id_E\otimes\Phi\otimes\id_{Q_2}\|^+_{\co,(E:q,Q_1Q_2:q)\to (E:q,P_1:p,Q_2:q)}\cdot \|\id_E\otimes\id_{P_1}\otimes\Psi\|^+_{\co,(E:q,P_1:p,Q_2:q)\to (E:q,P_1P_2:p)} \\
      &\overset{(1)}{\geq} \|\id_E\otimes\id_{Q_2}\otimes\Phi\|^+_{\co,(EQ_2:q,Q_1:q)\to (EQ_2:q,P_1:p)}\cdot \|\id_E\otimes\id_{P_1}\otimes\Psi\|^+_{\co,(E:q,P_1:p,Q_2:q)\to (E:q,P_1P_2:p)}
      \\ &\overset{(2)}{\geq} \|\Phi\|^+_{\cb,\co,(Q_1:q)\to (P_1:p)}\cdot \|\Psi\|^+_{\cb,\co,(Q_2:q)\to (P_2:p)},
  \end{align} 
  where $(1)$ follows from \cref{thm:QuasiNormMinkowskiInequality} and in $(2)$ for $\Phi$ we used \cref{lem:cb.norm.via.t} on the combined system $EQ_2$. For $\Psi$ we use the following claim
\begin{align}
    \|\id_E\otimes\id_{P_1}\otimes\Psi\|^+_{\co,(E:q,P_1:p,Q_2:q)\to (E:q,P_1P_2:p)} \geq \|\Psi\|^+_{\cb,\co,(Q_2:q)\to (P_2:p)},
\end{align} which we prove analogously to \cref{lem:cb.co.norm.via.t}, however with \cref{def:3.indexed.special}: we have, for any $\epsilon>0$ and $X_{EP_1Q_2}\geq0$ that there exist $\hat{a}$ on $E$ s.t. $\|\hat{a}\|_{2r}=1$ and
\begin{align}
    \|X_{EP_1Q_2}\|_{(E:q,P_1:p,Q_2:q)}=\sup_{a\geq0}\|a\|_{2r}^{-2}\|a_EX_{EP_1Q_2}a_E\|_{(EP_1:p,Q_2:q)}\leq \|\hat{a}_EX_{EP_1Q_2}\hat{a}_E\|_{(EP_1:p,Q_2:q)}+\epsilon
\end{align}
Analogously by \eqref{equ:Def1.2} we have
\begin{align}
    \|(\id_E\otimes\id_{P_1}\otimes\Psi)(X)\|_{(E:q,P_1P_2:p)}&=\sup_{a,b\geq0}\|a\|_{2r}^{-2}\|a_E(\id_E\otimes\id_{P_1}\otimes\Psi)(X_{EP_1Q_2})a_E\|_{p} \\ &= \sup_{a\geq0}\|a\|_{2r}^{-2}\|(\id_E\otimes\id_{P_1}\otimes\Psi)(a_EX_{EP_1Q_2}a_E)\|_{p}  \\
    &\geq \|(\id_E\otimes\id_{P_1}\otimes\Psi)(\hat{a}_EX_{EP_1Q_2}\hat{a}_E)\|_{p}.
\end{align}
Combining these it follows that
\begin{align}
    \inf_{X\geq0}\frac{\|(\id_E\otimes\id_{P_1}\otimes\Psi)(X)\|_{(E:q,P_1P_2:p)}}{\|X_{EP_1Q_2}\|_{(E:q,P_1:p,Q_2:q)}} &\geq \inf_{X\geq0}\frac{\|(\id_E\otimes\id_{P_1}\otimes\Psi)(\hat{a}_EX_{EP_1Q_2}\hat{a}_E)\|_{p}}{\|\hat{a}_EX_{EP_1Q_2}\hat{a}_E\|_{(EP_1:p,Q_2:q)}+\epsilon}\\ &\geq\inf_{Z\geq0}\frac{\|(\id_E\otimes\id_{P_1}\otimes\Psi)(Z_{EP_1Q_2})\|_{p}}{\|Z_{EP_1Q_2}\|_{(EP_1:p,Q_2:q)}+\epsilon}.
\end{align} Since this works for any $\epsilon>0$ it follows that
\begin{align}
  \|\id_E\otimes\id_{P_1}\otimes\Psi\|^+_{\co,(E:q,P_1:p,Q_2:q)\to (E:q,P_1P_2:p)}&\geq \|\id_E\otimes\id_{P_1}\otimes\Psi\|^+_{\co,(EP_1:p,Q_2:q)\to (EP_1:p,P_2:p)} \\&\geq \|\Psi\|^+_{\cb,\co,(Q_2:q)\to (P_2:p)},  
\end{align} where the last inequality follows by \cref{lem:cb.co.norm.via.t}.
\end{proof}

\noindent As in the previous section, we also analyze stability of the mixed-co-quasi-norms under adjoining an identity map. 
\begin{lemma}
\label{lem:adjoining.id.conorm}
Let $\Phi:Q\to P$ be a CP map and let $q,p\geq\frac{1}{2}$ be compatible indices, then it holds that
\begin{align}
    \|\Phi\otimes\id_T\|^+_{\co,(Q:q,T:1)\to (P:p,T:1)} = \|\Phi\|^+_{\co,(Q:q)\to (P:p)}.
\end{align}
\end{lemma}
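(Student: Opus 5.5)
The plan is to mirror the proof of \cref{lem:quasi.id.right}, replacing the supremum by an infimum. For a positive input $\rho_{QT}\geq 0$, \cref{cor:q.1.PartialTrace} (applicable since $q,p\geq\frac12$) collapses both 2-indexed quasi-norms with index $1$ on $T$ into partial traces.

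The first step is the inequality ``$\leq$''. Restrict the infimum defining $\|\Phi\otimes\id_T\|^+_{\co,(Q:q,T:1)\to(P:p,T:1)}$ to product inputs $X_Q\otimes Y_T$ with $X,Y\geq0$. By \cref{tensorproduct}, the ratio equals
\begin{align}
\frac{\|\Phi(X)\|_p\|Y\|_1}{\|X\|_q\|Y\|_1}=\frac{\|\Phi(X)\|_p}{\|X\|_q}.
\end{align}
Taking the infimum over $X\geq0$ then shows that the left hand side is at most $\|\Phi\|^+_{\co,(Q:q)\to(P:p)}$.

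The second step is the reverse inequality ``$\geq$''. Take any $\rho_{QT}\geq0$. Since $\Phi$ is CP, $(\Phi\otimes\id_T)(\rho_{QT})\geq0$, so \cref{cor:q.1.PartialTrace} applies to both input and output:
\begin{align}
\|(\Phi\otimes\id_T)(\rho_{QT})\|_{(P:p,T:1)}&=\|\tr_T[(\Phi\otimes\id_T)(\rho_{QT})]\|_p=\|\Phi(\tr_T[\rho_{QT}])\|_p\\
&\geq \|\Phi\|^+_{\co,(Q:q)\to(P:p)}\,\|\tr_T[\rho_{QT}]\|_q=\|\Phi\|^+_{\co,(Q:q)\to(P:p)}\,\|\rho_{QT}\|_{(Q:q,T:1)}.
\end{align}
The inequality in the middle uses that $\tr_T[\rho_{QT}]\geq0$ is an admissible input for the co-quasi-norm of $\Phi$. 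Dividing by $\|\rho_{QT}\|_{(Q:q,T:1)}$ and taking the infimum over $\rho_{QT}\geq 0$ gives the claim.

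The only point needing care is that the partial trace commutes with $\Phi\otimes\id_T$; this holds by linearity. A minor degenerate case is $\tr_T[\rho]=0$, but this forces $\rho=0$ because $\rho\geq0$, so it is excluded. Beyond this, positivity is exactly what makes \cref{cor:q.1.PartialTrace} applicable, which is why the statement is restricted to positive inputs and CP maps. I therefore expect no real obstacle.
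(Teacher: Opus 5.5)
Your proposal is correct and follows the paper's proof essentially verbatim: product inputs give the ``$\leq$'' direction. For ``$\geq$'', two applications of \cref{cor:q.1.PartialTrace} (to input and output, both positive since $\Phi$ is CP) reduce it to the single-system co-quasi-norm of $\Phi$ evaluated at $\tr_T[\rho_{QT}]$, where you correctly write the factor $\|\tr_T[\rho_{QT}]\|_q$ (the paper's displayed chain has a typo there, $\|\cdot\|_{(P:p)}$).
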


\begin{proof}
By inputting operators which form a tensor product across the the partition $Q$ and $T$ it follows that the LHS is at most as large as the RHS.
The core argument of this proof will be the reverse inequality and will follow from two applications of \cref{cor:q.1.PartialTrace}, which implies that
\begin{align}
    \|(\Phi\otimes\id_T)(\rho_{QT})\|_{(P:p,T:1)} &= \|\tr_T(\Phi\otimes\id_T)(\rho_{QT})\|_{(P:p)} \\ &= \|\Phi(\tr_T[\rho_{QT}])\|_{(P:p)} \\ &\geq \|\Phi\|^+_{\co,(Q:q)\to (P:p)}\|\tr_T[\rho_{QT}]\|_{(P:p)} \\ &=\|\Phi\|^+_{\co,(Q:q)\to (P:p)}\|\rho_{QT}\|_{(Q:q,T:1)}.
\end{align} Rearranging and minimizing over all $\rho_{QT}$ proves the lemma.
\end{proof}

\noindent The previous Lemma directly implies stability of adjoining identity maps within mixed conorms.

\begin{lemma}\label{thm:q.downto.p.cb.equal.noncb.co}
Let $\frac{1}{2}\leq q\leq 1\leq p$ and $\Phi:Q\to P$ a CP map, then
\begin{align}
    \|\Phi\|^+_{\cb,\co,(Q:q)\to (P:p)}= \|\Phi\|^+_{\co,(Q:q)\to (P:p)}.
\end{align}
\end{lemma}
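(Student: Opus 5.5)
The plan is to mirror the proof of \cref{thm:q.downto.p.cb.equal.noncb}: move the reference system $E$ from the left to the right with the $\SWAP$ map, and then invoke \cref{lem:adjoining.id.conorm}. The only difference is that the two Minkowski-type inequalities are now needed in the opposite directions, and the hypothesis $q\le 1\le p$ is exactly what supplies them. The trivial inequality is $\|\Phi\|^+_{\cb,\co,(Q:q)\to(P:p)}\le\|\Phi\|^+_{\co,(Q:q)\to(P:p)}$, obtained by taking $E$ one-dimensional. By \cref{lem:cb.co.norm.via.t}, for the reverse inequality I will work with the index $t=1$ on $E$. This is compatible with both $q$ and $p$, since $\frac1q-1\le 1$ for $q\ge\frac12$ and $1-\frac1p\le 1$.

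Fix $E$ and $X_{EQ}\ge 0$. I need
\begin{align}
\|(\id_E\otimes\Phi)(X_{EQ})\|_{(E:1,P:p)}\ \ge\ \|\Phi\|^+_{\co,(Q:q)\to(P:p)}\,\|X_{EQ}\|_{(E:1,Q:q)}.
\end{align}
I plan to obtain this from the following chain of three inequalities:
\begin{align}
\|(\id_E\otimes\Phi)(X_{EQ})\|_{(E:1,P:p)}\ \ge\ \|(\Phi\otimes\id_E)(X_{QE})\|_{(P:p,E:1)}\ \ge\ \|\Phi\|^+_{\co,(Q:q)\to(P:p)}\|X_{QE}\|_{(Q:q,E:1)}\ \ge\ \|\Phi\|^+_{\co,(Q:q)\to(P:p)}\|X_{EQ}\|_{(E:1,Q:q)}.
\end{align}
The middle inequality is \cref{lem:adjoining.id.conorm}, applied to the positive input $X_{QE}$ with $T=E$.

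The first inequality is a statement purely about Pisier norms, since $1\le p$. It says that for positive $Y$ the swap from $L_1[L_p]$ into $L_p[L_1]$ is contractive, i.e.\ $\|Y_{PE}\|_{(P:p,E:1)}\le\|Y_{EP}\|_{(E:1,P:p)}$. This is precisely the noncommutative Minkowski inequality \cite[Corollary 1.10]{Book.Pisier.1998}. Alternatively, it follows from \cref{cor:q.1.PartialTrace} together with the variational formula \eqref{equ:Def1.1}, via the comparison $\|\tr_E Y\|_p\le \|Y\|_{(E:1,P:p)}$ for $Y\ge0$.

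The third inequality is the quasi-normed direction. I will obtain it from \cref{thm:QuasiNormMinkowskiInequality} with indices $q\le 1$ (so the pair $(q,1)$ is compatible) and the first system taken trivial. With that choice the three-index quasi-norm $\|\cdot\|_{(1,q,1)}$ of \eqref{def:3.indexed.special} collapses to the two-index quasi-norm $\|\cdot\|_{(Q:q,E:1)}$, which gives $\|X_{EQ}\|_{(E:1,Q:q)}\le\|X_{QE}\|_{(Q:q,E:1)}$.

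Dividing by $\|X_{EQ}\|_{(E:1,Q:q)}$, taking the infimum over $X\ge0$ and then over $E$ will conclude the proof. The step I expect to need the most care is this last one: checking that \cref{thm:QuasiNormMinkowskiInequality} really specializes correctly when the first system is one-dimensional. Concretely, the supremum in \eqref{def:3.indexed.special} over scalars $a,b$ must be trivial, and the orientation of the swap (which system carries $q$ and which carries $1$) must match the required direction. If that specialization turns out to be awkward, my fallback is to use \cref{lem:3.index.formulas} directly with a trivial $P$-system, combined with Pisier's Minkowski inequality for the $(\infty,r)$ factors.
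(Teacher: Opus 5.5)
Your proposal is correct and follows the paper's proof. The trivial direction comes from a one-dimensional $E$. The reverse direction comes from moving $E$ to the right using the noncommutative Minkowski inequality (\cref{thm:QuasiNormMinkowskiInequality}) in the two directions that $q\le 1\le p$ permits, followed by \cref{lem:adjoining.id.conorm}. Your write-up makes explicit the numerator and denominator comparisons that the paper compresses into its single unjustified ``$\geq$''. The specialization you were worried about is immediate: with a trivial first system the optimizers $a,b$ in \eqref{def:3.indexed.special} are scalars with $\|a\|_{2r}=|a|$.
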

\begin{proof}
First, trivially it holds that $\|\Phi\|^+_{\cb,\co,(Q:q)\to (P:p)}\leq \|\Phi\|^+_{\co,(Q:q)\to (P:p)}$, so it suffices to consider the reverse inequality.
Note that by assumption, both $q,p$ are compatible indices, so consider
\begin{align}
    \|\id_E\otimes\Phi\|^+_{\co,(E:1,Q:q)\to (E:1,P:p)}&\geq \|\Phi\otimes\id_E\|^+_{\co,(Q:q,E:1)\to (P:p,E:1)}=\|\Phi\|^+_{\co,(Q:q)\to (P:p)},
\end{align}  
where in the last identity we used \cref{lem:adjoining.id.conorm}. Minimizing over systems $E$ of arbitrary dimension and noting that the RHS is independent of it yields the result of the theorem.
\end{proof}

\noindent For completeness we also show additivity of the maximal output entropy for $p>1$ complementing the $p<1$ case in \cref{thm:cb.1.to.p.multiplicativity}.
\begin{theorem}\label{thm:cb.co.1.to.p.additivity}
Let $\Phi:Q_1\to A_1,\Psi:Q_2\to A_2$ be CP maps and $p\geq1$, then it holds that
\begin{align}
    \|\Phi\otimes\Psi\|^+_{\cb,\co,(Q_1Q_2:1)\to (A_1A_2:p)}=\|\Phi\|^+_{\co,(Q_1:1)\to (A_1:p)}\cdot\|\Psi\|^+_{\co,(Q_2:1)\to (A_2:p)},
\end{align} which is equivalent to
\begin{align}
\sup_{E,\rho_{EQ_1Q_2}}H_p^\uparrow(A_1A_2|E)_{(\id_E\otimes\Phi\otimes\Psi)(\rho_{EQ_1Q_2})} = \sup_{\rho_{Q_1}}H_p(A_1)_{\Phi(\rho_{Q_1})}+\sup_{\rho_{Q_2}}H_p(A_2)_{\Psi(\rho_{Q_2})}.
\end{align} 
\end{theorem}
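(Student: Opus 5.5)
The plan is to reduce everything to non-completely-bounded co-norms and then split $\Phi\otimes\Psi$ into two compositions, mirroring the proof of \cref{thm:cb.1.to.p.multiplicativity} with infima in place of suprema.

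\textbf{Removing the cb.} Since $q=1$ satisfies $\frac{1}{2}\leq 1\leq 1\leq p$, \cref{thm:q.downto.p.cb.equal.noncb.co} applies to $\Phi\otimes\Psi$, to $\Phi$ and to $\Psi$. It gives
\begin{align}
\|\Phi\otimes\Psi\|^+_{\cb,\co,(Q_1Q_2:1)\to(A_1A_2:p)}=\|\Phi\otimes\Psi\|^+_{\co,(Q_1Q_2:1)\to(A_1A_2:p)},
\end{align}
and the same identity holds for $\Phi$ and for $\Psi$ separately. So it suffices to prove multiplicativity of the plain $1\to p$ co-norm on positive inputs.

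\textbf{Upper bound.} Input $X_{Q_1}\otimes Y_{Q_2}$ with $X,Y\geq0$. By \cref{tensorproduct}, or simply multiplicativity of Schatten norms on tensor products,
\begin{align}
\frac{\|\Phi(X)\otimes\Psi(Y)\|_p}{\|X\otimes Y\|_1}=\frac{\|\Phi(X)\|_p}{\|X\|_1}\cdot\frac{\|\Psi(Y)\|_p}{\|Y\|_1}.
\end{align}
Taking infima over $X,Y\geq0$ shows the co-norm of $\Phi\otimes\Psi$ is at most the product of the co-norms.

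\textbf{Lower bound, factorization.} Write $\Phi\otimes\Psi=(\id_{A_1}\otimes\Psi)\circ(\Phi\otimes\id_{Q_2})$ and use the intermediate space with quasi-norm $\|\cdot\|_{(A_1:p,Q_2:1)}$. Since $\Phi\otimes\id_{Q_2}$ is CP, it maps positive inputs to positive outputs. Hence for every $X\geq0$,
\begin{align}
\|(\Phi\otimes\Psi)(X)\|_{(A_1:p,A_2:p)}\geq \|\id_{A_1}\otimes\Psi\|^+_{\co,(A_1:p,Q_2:1)\to(A_1:p,A_2:p)}\,\|\Phi\otimes\id_{Q_2}\|^+_{\co,(Q_1:1,Q_2:1)\to(A_1:p,Q_2:1)}\,\|X\|_{(Q_1:1,Q_2:1)}.
\end{align}
Here $\|\cdot\|_{(Q_1:1,Q_2:1)}=\|\cdot\|_1$ and $\|\cdot\|_{(A_1:p,A_2:p)}=\|\cdot\|_p$ by \cref{thm:qt.to.pt}.

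\textbf{Lower bound, the two factors.}
\begin{itemize}
\item The second factor is bounded below by $\|\Phi\|^+_{\co,(Q_1:1)\to(A_1:p)}$ using \cref{lem:adjoining.id.conorm} with $T=Q_2$ and $q=1$.
\item The first factor is of the form $\|\id_E\otimes\Psi\|^+_{\co,(E:t,Q_2:1)\to(E:t,A_2:p)}$ with $E=A_1$ and $t=p$. This $t$ is compatible with both indices, since $|\frac{1}{p}-1|\leq1$ and $|\frac1p-\frac1p|=0$.
\item Therefore the first factor is at least the infimum over $E$, which by \cref{lem:cb.co.norm.via.t} equals $\|\Psi\|^+_{\cb,\co,(Q_2:1)\to(A_2:p)}$. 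By \cref{thm:q.downto.p.cb.equal.noncb.co} this equals $\|\Psi\|^+_{\co,(Q_2:1)\to(A_2:p)}$.
\end{itemize}
Combining the two factors gives the reverse inequality, and hence the first displayed identity.

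\textbf{Entropic reformulation.} By \cref{cor:conditional.entropy},
\begin{align}
H^\uparrow_p(A|E)=\frac{p}{1-p}\log\|\cdot\|_{(E:1,A:p)}.
\end{align}
Since $p>1$, the prefactor $\frac{p}{1-p}$ is negative, so an infimum over $E$ and $\rho$ of the norm becomes a supremum of the entropy. Exactly as in the proof of \cref{cor:Additivity.mib.cb.entropy}, this turns the cb co-norm into $\sup_{E,\rho}H^\uparrow_p$. On the right-hand side, $\|\Phi(\rho)\|_p$ for $\rho\in\cD(\cH_{Q_1})$ gives $\sup_\rho H_p(A_1)_{\Phi(\rho)}$, and likewise for $\Psi$. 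Normalization is harmless by homogeneity. At $p=1$ both sides are trivial for trace-preserving maps, or are handled by the same computation for general CP maps, with the expression read as a limit.

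\textbf{Main obstacle.} The delicate step is the index bookkeeping in the middle factor. One must confirm that \cref{lem:cb.co.norm.via.t} really permits $t=p$ on $E$ with input index $1$ and output index $p$, and that the supremum and infimum directions are preserved when going from co-norms to entropies.
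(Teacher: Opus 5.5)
Your proposal is correct and follows essentially the same route as the paper. Both arguments remove the cb via \cref{thm:q.downto.p.cb.equal.noncb.co}, factor $\Phi\otimes\Psi=(\id_{A_1}\otimes\Psi)\circ(\Phi\otimes\id_{Q_2})$ through the intermediate space $(A_1:p,Q_2:1)$, bound the first factor by \cref{lem:adjoining.id.conorm} and the second by the cb co-norm with index $p$ on the environment, obtain achievability from product inputs, and pass to entropies using the negative prefactor $\frac{p}{1-p}$. Your intermediate index $(A_1:p,Q_2:1)$ is the right one; the paper's display writes $(A_1Q_2:1)$, which appears to be a typo.
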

\begin{remark}
As in \cref{thm:cb.1.to.p.multiplicativity} this also holds without the conditioning system on the LHS, as conditioning can only decrease the conditional entropy. Thus this complements the result of \cref{thm:cb.1.to.p.multiplicativity} and both together imply additivity of the maximum output Rényi entropy for any index $\frac{1}{2}\leq p\leq \infty$ (\cref{cor:max.entropy.additivity}):
\begin{align}
    \sup_{\rho_{Q_1Q_2}}H_p(A_1A_2)_{(\Phi\otimes\Psi)(\rho_{Q_1Q_2})} = \sup_{\rho_{Q_1}}H_p(A_1)_{\Phi(\rho_{Q_1})}+\sup_{\rho_{Q_2}}H_p(A_2)_{\Psi(\rho_{Q_2})}.
\end{align}
\end{remark}

\begin{proof}
Using \cref{thm:q.downto.p.cb.equal.noncb.co} twice, the standard supermultiplicativity of conorms, and \cref{lem:adjoining.id.conorm} on the combined systems $Q_1Q_2, A_1A_2$, we have
\begin{align}
  \|\Phi\otimes\Psi\|^+_{\cb,\co,(Q_1Q_2:1)\to (A_1A_2:p)} &=  \|\Phi\otimes\Psi\|^+_{\co,(Q_1Q_2:1)\to (A_1A_2:p)} \\ 
  &\geq \|\Phi\otimes\id\|^+_{\co,(Q_1Q_2:1)\to (A_1Q_2:1)}\cdot \|\id\otimes\Psi\|^+_{\co, (A_1Q_2:1)\to (A_1A_2:p)} \\ 
  &\geq\|\Phi\|^+_{\co,(Q_1:1)\to (A_1:p)}\cdot \|\Psi\|^+_{\cb,\co,(Q_2:1)\to (A_2:p)}\\&=\|\Phi\|^+_{\co,(Q_1:1)\to (A_1:p)}\cdot \|\Psi\|^+_{\co,(Q_2:1)\to (A_2:p)}.
\end{align} 
Achievability in this case is quite simple since for any $X,Y\geq0, X,Y\neq0$ it holds
\begin{align}
  \|\Phi\otimes\Psi\|^+_{\cb,\co,(Q_1Q_2:1)\to (A_1A_2:p)} \leq \frac{\|(\Phi\otimes\Psi)(X_{Q_1}\otimes Y_{Q_2})\|_{(A_1A_2:p)}}{\|X_{Q_1}\otimes Y_{Q_2}\|_{(Q_1Q_2:1)}}= \frac{\|\Phi(X_{Q_1})\|_p}{\|X_{Q_1}\|_1}\frac{\|\Psi(Y_{Q_2})\|_p}{\|Y_{Q_2}\|_1}\,,
\end{align} and minimizing the RHS yields achievability.
To derive the entropic identity, see that for a CP map $\Phi:Q\to A$ and $p>1$ we have
\begin{align}
\sup_{\rho\in\cD(\cH_Q)}H_p(A)_{\Phi(\rho)}&=\sup_{\rho\in\cD(\cH_Q)}\frac{p}{1-p}\log\|\Phi(\rho)\|_p = \frac{p}{1-p}\log\inf_{\rho\in\cD(\cH_Q)}\|\Phi(\rho)\|_p = \frac{p}{1-p}\log\|\Phi\|^+_{\co,1\to p}.
\end{align}
\end{proof}

\section{Summary and Outlook}

In this work we presented a theory of 2-indexed Schatten-quasi-norms which manifestly generalize Pisier's 2-indexed norms. Our construction in \cref{def:Definition1} is entirely based on factorization statements given via variational formulas instead of operator space theory. This is what allows our construction to work and overcome the obstacles met in trying to approach this problem from a purely operator space theoretic approach. 
As a main result we give the first construction of such quasi-norms which satisfy natural properties one would expect from them:
namely under the necessary and sufficient constraint $\big|\frac{1}{q}-\frac{1}{p}\big|\leq 1$ on the two indices $0<q,p$ not being too far apart we prove in \cref{thm:cq.additivity} that these quasi-norms simplify on block operators to $\ell_q[\cS_p]$-quasi-norms, which implies among others that they are generalizations of the commutative $\ell_q[\ell_p]$-quasi-norms. In \cref{thm:qt.to.pt}, we derive `relational consistency', which in some sense should be viewed as analogues of Pisier's seminal theorem. In \cref{lem:GenRevHölder}, we derive a reverse Hölder's inequality for when $1\geq q,p>0$.

Further in \cref{sec:Apl:Entropies} we established some elementary connections between multiple information measures based on an optimized sandwiched Rényi relative entropy and the 2-indexed Schatten quasi-norms. These results both extend known connections (e.g.~optimized sandwiched Rényi conditional entropy) and create new ones (sandwiched Umlaut information). We leveraged this link to prove properties of these quantities in a simple manner. 

As one of our main applications we used these 2-indexed quasi-norms to define a notion of completely bounded quasi-norm and co-quasi-norm and demonstrated their usefulness to quantum information theory via several multiplicativity/additivity results. To do so, we proved in \cref{thm:QuasiNormMinkowskiInequality} a non-commutative Minkowski inequality for arbitrary indices $0<q,p\leq\infty$ generalizing Pisier's seminal result. We then proved several multiplicativity results generalizing effectivity all of the results of \cite{Devetak.2006} to indices $0<q,p$. As a consequence we prove additivity of the completely bounded minimal output $p$-Rényi entropy for indices $\frac{1}{2}\leq p\leq 1$ and additivity of the maximal Rényi output entropy for any index $\frac{1}{2}\leq p$.

In \cref{sec:app:mixed.norm.additivity} we discussed some open questions regarding the achievability of quasi-norms for CP maps on positive operators. In our notations this is effectively asking whether
\begin{align}
    \|\Phi\|_{q\to p}= \|\Phi\|^+_{q\to p} 
\end{align} is true also for indices $1>q,p>0$. For $1\leq q,p\leq\infty$ see \cite{Watrous.2004} and references therein. Likewise one may ask if the same holds for the completely bounded counterparts we introduced. There see \cite[Theorem 12, 13]{Devetak.2006} and \cite[Theorem 4.1]{Fawzi.2026} for the cases $1\leq q,p\leq\infty$.

Lastly we present a conjecture concerning our 2-indexed quasi-norms forming an interpolation scale. For an introduction to (complex) interpolation see \cite{book:interpolation.BL.76}.
A central step in the construction of Pisier norms is complex interpolation \cite{Book.Pisier.1998, Beigi.2023}, which as a consequence means that they form an interpolation scale, i.e. for $\theta\in[0,1]$ and indices $1\leq q_0,p_0,q_1,p_1\leq\infty$ it holds 
\begin{align}\label{equ:Pisiernorm.interpolation}
    \cS_q[\cH_1,\cS_p(\cH_2)] = \left[\cS_{q_0}[\cH_1,\cS_{p_1}(\cH_2)],\cS_{q_1}[\cH_1,\cS_{p_0}(\cH_2)]\right]_\theta,
\end{align} where $\frac{1}{q}=\frac{1-\theta}{q_0}+\frac{\theta}{q_1}$ and $\frac{1}{p}=\frac{1-\theta}{p_0}+\frac{\theta}{p_1}$, see \cite[Definition 4.1, Proposition 4.3]{Beigi.2023}.
Likewise Schatten-quasi-norms form an interpolation scale for all positive indices $0<q\leq\infty$ \cite{Gu.2019, Xu.1990}. So it seems natural to assume the same to hold true for 2-indexed Schatten-quasi-norms defined in this work, effectively extending \eqref{equ:Pisiernorm.interpolation}. We conjecture
\begin{conjecture}
The quasi-Banach spaces defined in \cref{thm:QuasiNomrms} are complex interpolations spaces in the following sense. Let $\theta,r\in[0,1]$ and $1\leq q_1,p_1\leq\infty$  then completely isometrically
        \begin{align}
            \cS_q[\cH_1,\cS_p(\cH_2)] = \left[\cS_{r}(\cH_1\otimes\cH_2),\cS_{q_1}[\cH_1,\cS_{p_1}(\cH_2)]\right]_\theta
        \end{align} for $\frac{1}{q}=\frac{1-\theta}{r}+\frac{\theta}{q_1}$ and $\frac{1}{p}=\frac{1-\theta}{r}+\frac{\theta}{p_1}$.
    \end{conjecture}
\noindent Note that this conjecture requires only the well-established Pisier norms and regular Schatten-$r$-quasi-norms. Observe also that the condition for it to be an interpolation space gives naturally rise to the condition $\big|\frac{1}{q}-\frac{1}{p}\big|\leq1$ since
\begin{align}
    \left|\frac{1}{q}-\frac{1}{p}\right|=\left|\frac{1-\theta}{r}+\frac{\theta}{q_1}-\frac{1-\theta}{r}+\frac{\theta}{p_1}\right|= \theta \left|\frac{1}{q_1}-\frac{1}{p_1}\right|\leq 1.
\end{align}

\noindent Finally, for simplicity we restricted ourselves to the setting of finite dimensional Hilbert spaces in this work. We believe, however, that with essentially no modifications it extends also to separable ones and extensions to type II and III von Neumann algebras seem also quite likely. The technical difficulty in proving all our results in separable Hilbert spaces is that we have to be more careful when invoking Sion's minimax \cref{thm:Sion}. It requires one of the sets over which we optimize to be compact. In this work we could always choose this set to be the set of subnormalized quantum states, which for separable Hilbert spaces, however, is only weak-$^*$ compact by Banach-Alaoglu, implying that one effectively only needs to prove weak-$^*$ upper semicontinuity in \cref{lem:continuity} to conclude all our results for infinite dimensional separable Hilbert spaces. \newline

\noindent \textbf{Acknowledgments}
JK and OF would like to thank Mikael de la Salle for helpful discussions regarding quasi norms and operator space theory and Lewis Wolltorton for helpful discussions related to our additivity results. JK would like to thank Marco Tomamichel and Milad Goodarzi for encouraging discussions of the results and acknowledges support from the Program QuanTEdu-France n° ANR-22-CMAS-0001 France 2030. OF acknowledges support from the European Research Council (ERC Grant AlgoQIP, Agreement No. 851716). CR is supported by France 2030 under the French National Research Agency award number ``ANR-22-EXES-0013''. \newline

\noindent While finalizing this manuscript we were made aware of the independent and concurrent work \cite{Li.2026}. For a brief discussion of how their main result is a special case of our see \cref{ref:concurrent.work}.

\bibliographystyle{abbrv}
\bibliography{lib}

\hfill
\newpage

\appendix
\section{Proofs Omitted from the Main Text}

\subsection{Variational Formulas for Two Simple Cases}

While we ultimately care about the non-commutative 2-indexed norm case we will for intuition building present the much simpler commutative case and the single-indexed non commutative case here. They will both be rather simple to derive from 'just' the generalized Hölder inequality in the respective setting. However, both will have some striking similarity which we capture in our \cref{def:Definition1}. 

\subsubsection{Variational Formulas for $\ell_q[\ell_p]$ Quasi-Norms}\label{app:lplqVariationalFormulas}

Consider the rather simple and well known case of 2-indexed $\ell_q[\ell_p]$ quasi-norms. For any indices $0<q,p<\infty$ and any vector $v\in\ell_q[\ell_p]\subset \ell_\infty\otimes\ell_\infty$ one defines 
\begin{align}
\|v\|_{\ell_q[\ell_p]}:=\left(\sum_i\left(\sum_j|v_{ij}|^p\right)^\frac{q}{p} \right)^\frac{1}{q}.
\end{align}

While in this setting it may at first be quite unnatural and unhelpful to rewrite them as optimization problems we will do so to highlight the similarity to the non-commutative setting later on, which is the main part of this work. In fact, using standard tools it can be shown that these quasi-norms are also given by variational formulas, completely analogous to the Pisier norms, however, manifestly for all indices $0<q,p\leq \infty$.

The goal of this subsection is to illustrate and prove this.
\begin{theorem*}[Multi-indexed $\ell_q$ spaces \cref{thm:classical-2-indexed-spaces}]
Let $0<p,q\leq\infty$ and $v\in\ell_q[\ell_p]$, then
\begin{align}
    \|v\|_{\ell_q[\ell_p]} = \begin{cases}
        \inf_{\underset{v_i\neq0\implies a_i,b_i>0}{a,b\geq0}}\|a\|_{2r}\|b\|_{2r}\|(a^{-1}\otimes\mathbf{1})v(b^{-1}\otimes\mathbf{1})\|_{\ell_p} \quad &\text{ if } q\leq p \\
        \sup_{a,b\geq0}\|a\|^{-1}_{2r}\|b\|^{-1}_{2r}\|(a\otimes\mathbf{1})v(b\otimes\mathbf{1})\|_{\ell_p}  &\text{ if } q\geq p,
    \end{cases}
\end{align} where $\frac{1}{r}=\big|\frac{1}{q}-\frac{1}{p}\big|$.
\end{theorem*}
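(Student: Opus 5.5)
The plan is to reduce everything to the scalar (one-index) generalized H\"older inequality of \cref{lem:GeneralizedHölder}, applied row by row and then across rows. Write $u_i:=\|v_{i\cdot}\|_{\ell_p}=\big(\sum_j|v_{ij}|^p\big)^{1/p}$, so that $\|v\|_{\ell_q[\ell_p]}=\|u\|_{\ell_q}$. The key observation is that $(a\otimes\mathbf 1)\circ v\circ(b\otimes\mathbf 1)$ has rows $a_ib_i v_{i\cdot}$. Its $\ell_p$ quasi-norm is therefore
\begin{align}
\Big(\sum_i |a_ib_i|^p u_i^p\Big)^{1/p}=\|(ab)\circ u\|_{\ell_p}.
\end{align}
The analogous identity holds with $a^{-1},b^{-1}$. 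This reduces the statement to a purely one-index problem on $u\in\ell_q$ with the product weight $c:=ab$. It is in fact the diagonal case of \cref{prop:SpVariationalFormulas}, but I would argue it directly.

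\textbf{Case $q\geq p$.} Here $\frac1p=\frac1q+\frac1r$. H\"older with exponents $(2r,2r)$ gives $\|ab\|_{\ell_r}\le\|a\|_{\ell_{2r}}\|b\|_{\ell_{2r}}$. Then H\"older with exponents $(r,q)$ gives
\begin{align}
\|(ab)\circ u\|_{\ell_p}\le \|ab\|_{\ell_r}\|u\|_{\ell_q}\le\|a\|_{\ell_{2r}}\|b\|_{\ell_{2r}}\|u\|_{\ell_q},
\end{align}
so the supremum is at most $\|u\|_{\ell_q}$. For equality I take $a_i=b_i=u_i^{q/(2r)}$. This makes $|ab|^q\propto u^{r}$, which saturates the second H\"older inequality by its equality condition, and $a=b$ saturates the first. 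If $r=\infty$ one simply takes $a=b=\mathbf 1$.

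\textbf{Case $q\le p$.} Here $\frac1q=\frac1p+\frac1r$. For any admissible $a,b$ we write $u=(ab)\circ\big((ab)^{-1}\circ u\big)$; this is valid because $a_i,b_i>0$ wherever $u_i\neq0$. H\"older then gives
\begin{align}
\|u\|_{\ell_q}\le\|ab\|_{\ell_r}\,\|(ab)^{-1}\circ u\|_{\ell_p}\le \|a\|_{\ell_{2r}}\|b\|_{\ell_{2r}}\|(a^{-1}\otimes\mathbf 1)\circ v\circ(b^{-1}\otimes\mathbf 1)\|_{\ell_p},
\end{align}
which yields $\le$ for the infimum. For the reverse, I again take $a_i=b_i=u_i^{q/(2r)}$ on the support of $u$, so that $(ab)^{-1}u=u^{1-q/r}$. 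Since $1-\frac qr=\frac qp$, a direct computation gives $\|a\|_{2r}^2\|u^{q/p}\|_{\ell_p}=\|u\|_q^{q/r}\,\|u\|_q^{q/p}=\|u\|_q$.

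\textbf{Expected obstacles.} The main points of care are, first, the support condition: indices with $u_i=0$ but $v_{i\cdot}$ formally present only contribute zero. Second, the degenerate exponents $r=\infty$ and $p$ or $q=\infty$ must be handled by interpreting the norms as maxima, where the same choices, constant or indicator weights, work. None of these steps is genuinely hard.
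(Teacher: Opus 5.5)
Your proposal is correct and follows essentially the same route as the paper: you reduce to the vector of row norms $u_i=\|v_{i\cdot}\|_{\ell_p}$ (the paper's $w$), then prove the one-index variational formula by generalized H\"older, with the power weights $a=b=u^{q/(2r)}$ giving achievability. One small slip: for $q\ge p$ your weights satisfy $|ab|^r=u^q$, which is the actual H\"older equality condition for $\ell_r\times\ell_q\to\ell_p$ (the condition as printed in the paper's generalized H\"older lemma has its exponents swapped), not $|ab|^q\propto u^r$ as you wrote; the direct check $\|a\|_{2r}^{-2}\|u^{1+q/r}\|_{\ell_p}=\|u\|_q^{q/p-q/r}=\|u\|_q$ confirms equality.
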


\begin{remark}
These expression should remind directly of the factorization expression for the Operator valued Schatten norms on $\cS_q[\cS_p]$ of \cite{Devetak.2006, Beigi.2023} with the difference, apart from the operators being replaced by vectors, that these manifestly are given for any $0<q,p$. This should illustrate that factorization or variational expressions of this type are quite natural and will serve as our starting point for Operator valued Schatten quasi-norms.
\end{remark}

To prove this we will be using the generalized Hölder's inequality \cref{lem:GeneralizedHölder} for vectors as the core tool of this subsection.


\begin{corollary}
Let $0<p,q,r\leq \infty$ be s.t. $\frac{1}{r}=\frac{1}{p}+\frac{1}{q}$, then it holds that
\begin{align}   
    \|v\|_{\ell_p}&=\inf_{\underset{v_i\neq 0 \implies a_i,b_i>0}{a,b\in\ell_{2r}}}\|a\|_{\ell_{2r}}\|b\|_{\ell_{2r}}\|a^{-1}vb^{-1}\|_{\ell_q} \quad &\text{ for } q\leq p \text{ and } \frac{1}{r}=\frac{1}{q}-\frac{1}{p}, \\
    \|v\|_{\ell_q}&=\sup_{a,b\in\ell_{2r}}\|a\|^{-1}_{\ell_{2r}}\|b\|^{-1}_{\ell_{2r}}\|a^{}vb^{}\|_{\ell_p} \quad &\text{ for } q\geq p \text{ and } \frac{1}{r}=\frac{1}{p}-\frac{1}{q},   
\end{align}
\end{corollary}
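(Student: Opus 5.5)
The plan is to derive both lines directly from the generalized Hölder inequality (Lemma~\ref{lem:GeneralizedHölder}) for vectors. Each line splits into an inequality that follows from Hölder and a matching bound attained by an explicit choice of $a,b$ built from powers of $|v|$. I expect the Hölder exponent bookkeeping in the first line to be the main obstacle, and I flag below that, with the exponents exactly as printed, the first identity does not seem to hold.

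\emph{Infimum formula.} Take any admissible $a,b$, meaning $v_i\neq 0$ implies $a_i,b_i>0$, and set $w:=a^{-1}vb^{-1}$. Then $v=a\circ w\circ b$ holds entrywise, since $a_ia_i^{-1}=1$ on the support of $v$. Applying Lemma~\ref{lem:GeneralizedHölder} twice gives
\begin{align}
\|v\|_{\ell_s}\leq \|a\|_{\ell_{2r}}\|b\|_{\ell_{2r}}\|w\|_{\ell_q},\qquad \frac{1}{s}=\frac{1}{2r}+\frac{1}{2r}+\frac{1}{q}.
\end{align}
To reach $\|v\|_{\ell_p}$ I would then use monotonicity, $\|v\|_{\ell_p}\leq\|v\|_{\ell_s}$ whenever $s\leq p$. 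This yields the bound ``$\leq$ infimum''.

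For the reverse bound I would pick $a=b=|v|^{\theta}$ on the support of $v$ and $0$ elsewhere, with $\theta$ chosen so that the Hölder equality condition $|a_i|^{2r}\propto|w_i|^{q}$ holds. A direct computation then shows the infimum equals $\|v\|_{\ell_s}$. The delicate point is whether $s=p$. With the printed relation $\frac{1}{r}=\frac{1}{q}-\frac{1}{p}$ one gets $\frac{1}{s}=\frac{2}{q}-\frac{1}{p}$, which equals $\frac{1}{p}$ only if $q=p$.

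A concrete check confirms this. For $v=(1,1)$, $q=1$, $p=2$ (so $r=2$), the infimum is $\|v\|_{\ell_{2/3}}=2^{3/2}$, whereas $\|v\|_{\ell_2}=2^{1/2}$. So the printed first identity fails in general.

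I would therefore carry the argument out in the form that is consistent with the displayed relation $\frac{1}{q}-\frac{1}{p}=\frac{1}{r}$. The sharp statement the argument actually proves is that the infimum equals $\|v\|_{\ell_q}$ with the roles of the norms exchanged, since then $\frac{1}{p}+\frac{1}{r}=\frac{1}{q}$ is exactly the Hölder relation. For the formula as printed, the Hölder-plus-monotonicity argument gives only $\|v\|_{\ell_p}\leq\inf(\cdots)$, with equality exactly when $q=p$.

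\emph{Supremum formula.} For $q\geq p$ and $\frac{1}{r}=\frac{1}{p}-\frac{1}{q}$ the exponents match: $\frac{1}{p}=\frac{1}{2r}+\frac{1}{2r}+\frac{1}{q}$. Hölder then gives $\|a\circ v\circ b\|_{\ell_p}\leq\|a\|_{\ell_{2r}}\|b\|_{\ell_{2r}}\|v\|_{\ell_q}$, so the supremum is at most $\|v\|_{\ell_q}$. Choosing $a=b=|v|^{q/(2r)}$ gives $|a_i|^{2r}\propto|v_i|^q$, which is the Hölder equality case, so the supremum is attained and equals $\|v\|_{\ell_q}$.

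Degenerate indices ($r=\infty$, or $p$ or $q$ infinite) I would handle by the same argument with the usual conventions, or by a limiting argument.
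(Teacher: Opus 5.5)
Your proposal is correct. As printed, with $q\le p$ and $\frac1r=\frac1q-\frac1p$, the infimum equals $\|v\|_{\ell_s}$ with $\frac1s=\frac1r+\frac1q$, and your example $v=(1,1)$, $q=1$, $p=2$ is a valid counterexample to the first line. The paper's own proof actually treats that case under $0<p\le q$, $\frac1r=\frac1p-\frac1q$, so what it establishes is exactly your corrected version with the two norms exchanged.

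Your route is the same as the paper's: the generalized H\"older inequality gives one inequality, and the explicit choice $a=b=|v|^{\theta}$ on the support of $v$ gives attainment.
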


\begin{proof}
Let $0<p\leq q$ and fix $\frac{1}{r}=\frac{1}{p}-\frac{1}{q},$ and $ \frac{1}{k}=\frac{1}{q}+\frac{1
}{p}\geq 0$. Then for any two vectors $a,b$, s.t. $a_i,b_i\neq0$ whenever $v_i\neq0$ we have
\begin{align}
    \|v\|_{\ell_p}=\|aa^{-1}vb^{-1}b\|_{\ell_p}\leq \|a\|_{\ell_{2r}}\|a^{-1}vb^{-1}b\|_{\ell_{2k}}\leq\|a\|_{\ell_{2r}}\|b\|_{\ell_{2q}}\|a^{-1}vb^{-1}\|_{\ell_q} ,
\end{align} since $\frac{1}{p}=\frac{1}{2r}+\frac{1}{2k}=\frac{1}{2r}+\frac{1}{2r}+\frac{1}{q}=\frac{1}{r}+\frac{1}{q}$.
Achievability follows when choosing $a_i=b_i=|v_i|^s$, where $s=\frac{p}{2r}$, as then
\begin{align}
  \bigg(\sum_i|v_i|^{(1-2s)q}\bigg)^\frac{1}{q}  \bigg(\sum_i|v_i|^{2rs}\bigg)^{\frac{2}{2r}}=\bigg(\sum_i|v_i|^p\bigg)^\frac{1}{p},
\end{align} since $2rs=(1-2s)q=p$ and $\frac{1}{q}+\frac{1}{r}=\frac{1}{p}$.
The other case $0<q\leq p$ goes similarly. First by generalized Hölder we have
\begin{align}
    \|avb\|_{\ell_q}\leq \|a\|_{\ell_{2r}}\|vb\|_{\ell_{2k}}\leq \|a\|_{\ell_{2r}}\|b\|_{\ell_{2r}}\|v\|_{\ell_p} \Leftrightarrow \|v\|_{\ell_q}\geq \|a\|^{-1}_{\ell_{2r}}\|b\|_{\ell_{2r}}^{-1}\|avb\|_{\ell_q},
\end{align} since $\frac{1}{q}=\frac{1}{2r}+\frac{1}{2k}=\frac{1}{2r}+\frac{1}{2r}+\frac{1}{p}=\frac{1}{r}+\frac{1}{q}$. Achievability follows equally be setting $a_i=b_i=|v_i|^s$, where $s=\frac{p}{2r}$, since
\begin{align}
    \bigg(\sum_i|v_i|^{(1+2s)q}\bigg)^{\frac{1}{q}}    \bigg(\sum_i|v_i|^{2rs}\bigg)^{-\frac{2}{2r}} =     \bigg(\sum_i|v_i|^{p}\bigg)^{\frac{1}{p}},
\end{align} as $2rs=(1+2s)q=p$ and $\frac{1}{q}-\frac{1}{r}=\frac{1}{p}$.
\end{proof}

For the multi-indexed $\ell_p$-norms, these two variational formulas suffice to give a closed form expression.

\begin{proof}[Proof of \cref{thm:classical-2-indexed-spaces}]
The proof hinges on the simple observation that the $\ell_q[\ell_p]$-quasi-norm of a vector $v=(v_{ij})_{ij}$ is nothing but $\|w\|_{\ell_q}$ for $w=(w_i)_i$ with $w_i=(\sum_j|v_{ij}|^p)^\frac{1}{p}\equiv \|v_{i\cdot}\|_p$. Hence it follows directly that
\begin{align}
\|v\|_{\ell_q(\ell_p)}=\|w\|_{\ell_q}=\begin{cases}\inf_{a,b\in\ell_{2r}}\|a\|_{\ell_{2r}}\|b\|_{\ell_{2r}}\|a^{-1}wb^{-1}\|_{\ell_p} &\text{ for } q\leq p \\
\sup_{a,b\in\ell_{2r}}\|a\|^{-1}_{\ell_{2r}}\|b\|^{-1}_{\ell_{2r}}\|a^{}wb^{}\|_{\ell_p} \quad &\text{ for } q\geq p.
    \end{cases}
\end{align}
Now observing that
\begin{align}
    \|awb\|_{\ell_p}&=(\sum_i|a_iw_ib_i|^p)^\frac{1}{p} = (\sum_{i,j}|a_iv_{ij}b_i|^p)^\frac{1}{p} = \|(a\otimes\1)v(b\otimes\1)\|_{\ell_p} \\
     \|a^{-1}wb^{-1}\|_{\ell_p}&=(\sum_i|a^{-1}_iw_ib^{-1}_i|^p)^\frac{1}{p} = (\sum_{i,j}|a^{-1}_iv_{ij}b^{-1}_i|^p)^\frac{1}{p} = \|a_1^{-1}vb_1^{-1}\|_{\ell_p}
\end{align} finishes the proof.
\end{proof}

\subsubsection{Variational Formulas for Schatten-Quasi-Norms}\label{app:SpVariationalFormulas}

There exist well known variational formulas for computing Schatten quasi-norms, in particular \cite[Lemma 3.4]{Book.Tomamichel.2016} which are effectively consequences of reverse Hölder inequalities and factoring through the trace-norm. While for these it suffices to consider optimizing over one matrix valued variable, one can also rewrite these in terms of factorizations mimicking the above ones. To do so we will only require the generalized Hölder's inequality \cref{lem:GeneralizedHölder} for operators.


\begin{proof}[Proof of \cref{prop:SpVariationalFormulas}]
Consider first the case $q\leq p$, set $\frac{1}{r}=\frac{1}{q}-\frac{1}{p}$, then due to generalized Hölder's inequality we have for any $X,a,b$ s.t. $X=\Pi_aX\Pi_b$ that
\begin{align}
    \|X\|_q=\|aa^{-1}Xb^{-1}b\|_q\leq \|a\|_{2r}\|a^{-1}Xb^{-1}\|_p\|b\|_{2r},
\end{align} where $a^{-1},b^{-1}$ denote the generalized Moore-Penrose inverses of $a,b$.
To see that this is achievable let $X=UDV$ be the singular value decomposition of $X$. Then taking $a=UD^{s}U^*, b=V^*D^{s}V$ with $s=\frac{q}{2r}$ one quickly verifies equality.

In the case $q\geq p$, set $\frac{1}{r}=\frac{1}{q}-\frac{1}{p}$, then due to generalized Hölder's inequality we have for any $X,a,b$ that
\begin{align}
    \|aXb\|_p\leq \|a\|_{2r}\|X\|_q\|b\|_{2r}
\end{align}
which is achieved for $a=UD^{s}U^*, b=V^*D^{s}V$, where $s=\frac{q}{2r}$ and where the singular value decomposition of $X=UDV$. Note that these $a,b$ satisfy $X=\Pi_aX\Pi_b$.
\end{proof}

\subsection{Property of Optimizers for $q\geq p$ of \cref{lem:SupportConditions}}\label{app:MarginalSupports}

\begin{proof}
For completeness we will give an explicit argument that the optimizers in the supremum case ($q\geq p$) of \cref{lem:SupportConditions} when $\frac{1}{r}<1$ do actually posses this full marginal rank. This argument also nicely demonstrates the required continuity of our objective functional. The idea of this proof is that given any $a$ whose support is too small, we can perturb it which will both increase its support and the objective value.
For sake of contradiction take any operator $b$ and some operator $a$ with support projection $\Pi_a$ s.t. $X\neq (\Pi_a\otimes\1)X(\Pi_b\otimes\1)$. Let $\Pi_a^\perp:=\1-\Pi_a$, then $Z:=(\Pi_a^\perp\otimes\1)X(b\otimes\1)\neq0$ and for some $\epsilon>0$ consider the perturbation
\begin{align}
    a_\epsilon:=a+\epsilon\Pi_a^\perp.
\end{align}
Note that by construction $a_\epsilon$ has full support.
We will now show that the objective value under this perturbation does not decrease for small enough $\epsilon>0$, i.e. that 
\begin{align}
    \frac{\|(a_\epsilon\otimes\1)X(b\otimes\1)\|_p}{\|a_\epsilon\|_{2r}} \geq \frac{\|(a\otimes\1)X(b\otimes\1)\|_p}{\|a\|_{2r}}
\end{align} for small enough $\epsilon>0$.

Consider first the denominator under this perturbation. Then using the Taylor approximation $(1+\epsilon)^{s}=1+s\epsilon+o(\epsilon)$ for small $\epsilon$ we have
\begin{align}
    \|a_\epsilon\|_{2r}= \left(\|a\|^{2r}_{2r}+\epsilon^{2r}\|\Pi_a^\perp\|^{2r}_{2r}\right)^\frac{1}{2r} = \|a\|_{2r}\left(1+\epsilon^{2r}\frac{\|\Pi^\perp_a\|_{2r}^{2r}}{\|a\|^{2r}_{2r}}\right)^\frac{1}{2r}=\|a\|_{2r}(1+K_1\epsilon^{2r}+o(\epsilon^{2r}))
\end{align} where $0\leq K_1=\frac{\|\Pi^\perp_a\|_{2r}^{2r}}{2r\|a\|^{2r}_{2r}}<\infty$.

Analyzing the numerator under this perturbation is slightly more involved and requires some perturbation theory. The idea will be that this perturbation on the one hand changes existing eigenvalues which will result in an relative increase of the $p-$(quasi-)norm of order $\mathcal{O}(\epsilon^2)$ and on the other creates new eigenvalues which will result in an relative increase of the $p-$(quasi-)norm of order $\mathcal{O}(\epsilon^p)$. Then since $2r>p$ this will yield a net increase of the objective for small $\epsilon$.

To see this write $X^\prime:=X(b\otimes\1)$, see that
\begin{align}
    M_\epsilon:=(a_\epsilon\otimes\1)X^\prime=(a\otimes\1)X^\prime+\epsilon(\Pi_a^\perp\otimes\1)X^\prime \equiv Y+\epsilon Z
\end{align} and notice that
\begin{align}
    Y^*Z=X^{\prime^*}(a\otimes\1)(\Pi_a^\perp\otimes\1)X^\prime = 0 = Z^*Y
\end{align} which implies that
\begin{align}\label{equ:Suppoertproof.1}
    M_\epsilon^*M_\epsilon=Y^*Y+\epsilon^2Z^*Z.
\end{align}
This means in particular that $M_\epsilon^*M_\epsilon$ is monotonically increasing in Loewner order and hence also all its eigenvalues and its Schatten-$\frac{p}{2}$-(quasi-)norm, i.e. $\epsilon\mapsto\|M_\epsilon\|_{p}$ is monotonically non-decreasing. Morally this demonstrates that the increase to existing and new eigenvalues will in lowest order be of order $\mathcal{O}(\epsilon^2)$.

It is a standard result by Kato \cite{Kato.1995} that the perturbed eigenvalues of $H(\epsilon)$ expand in a Puisaux series and from \eqref{equ:Suppoertproof.1} it becomes evident that the lowest order increase must be of order $\epsilon^2$. So the eigenvalues $\lambda_i(\epsilon)$ of $M_\epsilon^*M_\epsilon$ satisfy
\begin{align}
    \lambda_i(\epsilon)=\lambda_i+C_i\epsilon^2+o(\epsilon^2)
\end{align} where $C_i\geq0$ and $\lambda_i\geq0$ are the eigenvalues of $Y^*Y=M_0^*M_0$. Thus we have
\begin{align}
    \|M_\epsilon^* M_\epsilon\|^\frac{p}{2}_{\frac{p}{2}}&= \sum_i  \lambda_i(\epsilon)^\frac{p}{2} = \sum_{i,\lamdba_i>0}\lambda_i(\epsilon)^\frac{p}{2}+\sum_{i,\lambda_i=0}\lambda_i(\epsilon)^\frac{p}{2} \\&= \sum_{i,\, \lambda_i>0}(\lambda^\frac{p}{2}_i+\epsilon^2\lambda_i^{\frac{p}{2}-1}C_i+ o(\epsilon^2))+\sum_{i,\, \lambda_i=0}C^\frac{p}{2}_i\epsilon^p+o(\epsilon^p) \\
    &= \|M_0^*M_0\|_{\frac{p}{2}}^\frac{p}{2}+K\epsilon^2+K^\prime\epsilon^p+o(\epsilon^2,\epsilon^p)
\end{align} with $K,K^\prime\geq0$ and at least one strictly positive. This induces a relative increase
\begin{align}
    \|M_\epsilon\|_p&=\|M_\epsilon^* M_\epsilon\|^\frac{1}{2}_{\frac{p}{2}} = (\|M_0^*M_0\|_{\frac{p}{2}}^\frac{p}{2}+K\epsilon^2+K^\prime\epsilon^p+o(\epsilon^2,\epsilon^p))^\frac{1}{p}
    \\ &= \|M_0\|_{p}\left(1+\underbrace{\frac{K}{p\|M_0\|^p_p}}_{K_2\geq0}\epsilon^2+\underbrace{\frac{K^\prime}{p\|M_0\|_p^p}}_{K_3\geq0}\epsilon^p+o(\epsilon^2,\epsilon^p)\right)
\end{align} with at least one of $K_2,K_3>0$. Considering our objective function we have thus shown that this perturbation induces
\begin{align}
    \frac{\|M_\epsilon\|_p}{\|a_\epsilon\|_{2r}}=\frac{\|M_0\|_p(1+K_2\epsilon^2+K_3\epsilon^p+o(\epsilon^2,\epsilon^p))}{\|a_0\|_{2r}(1+K_1\epsilon^{2r}+o(\epsilon^{2r})}= \frac{\|M_0\|_p}{\|a_0\|_{2r}}(1+K_2\epsilon^2+K_3\epsilon^p-K_1\epsilon^{2r}+o(\epsilon^2,\epsilon^p,\epsilon^{2r}))
\end{align} which is an increase for small $\epsilon$ whenever $2r\max\{2,p\}$ since without further assumptions we are only guaranteed that $\max\{K_2,K_3\}>0$.
This condition holds whenever $r>1$ as it always holds that $2r>r\geq p$ since $\frac{1}{r}=\frac{1}{p}-\frac{1}{q}\leq \frac{1}{p}$. By the same argument for $b$ instead of $a$ the overall claim follows hence whenever $r>1$.
\end{proof}

\begin{proposition}\label{prop:3index.sym.pos}
Let $X_{QPR}\geq0$ then the optimization in \cref{def:3.indexed.special} can be restricted to $a_p=b_b$, i.e.
\begin{align}
\|X\|_{(P:p,Q:q,R:p)}=\sup_{a\geq0}\|a\|^{-2}_{2r}\|a_PX_{PQR}a_P\|_{(PQ:q,R:p)}.
\end{align}
\end{proposition}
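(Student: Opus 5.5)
The plan is to mimic the Hölder-type argument used for the case $q\geq p$ in \cref{lem:Positive.Symmetrie}, replacing the Schatten-$p$ quasi-norm by the 2-indexed quasi-norm $\|\cdot\|_{(PQ:q,R:p)}$. The restriction to $a=b$ gives the trivial inequality ``$\geq$''. For the converse, write $X_{PQR}=xx^*$ and set $A:=a_P$, $B:=b_P$. The aim is a Cauchy--Schwarz type bound
\begin{align}
\|AXB\|_{(PQ:q,R:p)}\leq \|AXA^*\|^{\frac12}_{(PQ:q,R:p)}\|B^*XB\|^{\frac12}_{(PQ:q,R:p)}.
\end{align}
Granting this, the normalization $\|a\|_{2r}^{-1}\|b\|_{2r}^{-1}=(\|a\|_{2r}^{-2})^{1/2}(\|b\|_{2r}^{-2})^{1/2}$ gives objective$(a,b)\leq\max\{\text{objective}(a,a),\text{objective}(b,b)\}$. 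The self-adjoint choice $a=a^*$ can be assumed by the positivity reduction for optimizers.

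To prove the Cauchy--Schwarz bound, I would use \cref{lem:Positive.Symmetrie} and \cref{lem:SupportConditions} on the bipartition $(PQ,R)$, distinguishing $q\leq p$ and $q\geq p$. In this setting $q\leq p$, so the infimum form applies. Take near-optimal $c_{PQ}\geq0$ for $AXA^*$ and $d_{PQ}\geq0$ for $B^*XB$, with $\|AXA^*\|_{(q,p)}\approx \|c\|_{2r}^2\|c^{-1}AXA^*c^{-1}\|_p$ and similarly for $d$. Then I factor
\begin{align}
AXB=c\,(c^{-1}Ax)(x^*Bd^{-1})\,d,
\end{align}
with support conditions inherited as in \cref{lem:SupportConditions}, since $A$ acts on $P$ only and the marginal supports on $PQ$ behave well. \cref{def:Definition1} then gives $\|AXB\|_{(q,p)}\leq\|c\|_{2r}\|d\|_{2r}\|c^{-1}Ax x^*Bd^{-1}\|_p$. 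Generalized Hölder (\cref{lem:GeneralizedHölder}) with $\frac1p=\frac1{2p}+\frac1{2p}$ bounds the last factor by $\|c^{-1}AXA^*c^{-1}\|_p^{1/2}\|d^{-1}B^*XBd^{-1}\|_p^{1/2}$, and the bound follows after letting the approximation error go to zero. If $q\geq p$ were needed, the same would hold with the supremum formula, choosing a common $c$ and using Hölder $\|cAxx^*Bc\|_p\leq\|cAXA^*c\|_p^{1/2}\|cB^*XBc\|_p^{1/2}$ before taking sup. This is exactly the argument of \cref{lem:Positive.Symmetrie} for $q\geq p$.

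The main obstacle is the bookkeeping of supports and generalized inverses when $AXA^*$ and $B^*XB$ are not full rank. In particular, one must check that $c\,c^{-1}Ax=Ax$, i.e.\ that $\Pi_c\geq$ the marginal image of $Ax$, which follows from $\Pi_c=L_{AXA^*}$. I would handle this first by assuming $X$ and $a,b$ are invertible and then extending by the norm continuity of the objective, as in the end of the proof of \cref{lem:SupportConditions}.
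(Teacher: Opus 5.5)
Your proposal is correct and takes essentially the same route as the paper. The paper first proves the H\"older-type bound $\|AB\|_{(PQ:q,R:p)}\leq\|AA^*\|^{\frac12}_{(PQ:q,R:p)}\|B^*B\|^{\frac12}_{(PQ:q,R:p)}$ from the infimum formula with separate $c,d$, Schatten H\"older, and \cref{lem:Positive.Symmetrie}, then applies it with $A=a_Px$, $B=x^*b_P$ and bounds by the maximum, exactly as you do. Your support check via $\Pi_c=L_{AXA^*}$ (and likewise for $d$) already justifies the factorization, so the invertibility-plus-continuity detour is unnecessary.
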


\begin{proof}
The proof will go analogously to the supremum case of \cref{lem:Positive.Symmetrie}. We first establish a suitable Hölder-type inequality for this 3-indexed quasi-norm and then use that to conclude the claim.
Let $A,B\in\cB(\cH_P\otimes\cH_Q\otimes\cH_R)$, then by \cref{def:3.indexed.special} and Hölder's inequality for Schatten-quasi-norms we have 
\begin{align}
   & \|AB\|_{(PQ:q,R:p)}\\
    &\qquad =\inf_{c,d\geq0}\|c\|_{2r}\|d\|_{2r}\|c^{-1}_{PQ}A_{PQR}B_{PQR}d^{-1}_{PQ}\|_{p} \\
    &\qquad\leq \inf_{c,d\geq0}\|c\|_{2r}\|d\|_{2r}\|c^{-1}_{PQ}A_{PQR}A^*_{PQR}c^{-1}_{PQ}\|^\frac{1}{2}_p\|d^{-1}_{PQ}B_{PQR}^*B_{PQR}d^{-1}_{PQ}\|^\frac{1}{2}_{p} \\
    &\qquad= \|A_{PQR}A_{PQR}^*\|_{(PQ:q,R:p)}^\frac{1}{2}\|B^*_{PQR}B_{PQR}\|_{(PQ:q,R:p)}^\frac{1}{2} \\
    &\qquad\leq\! \|AA^*\|^\frac{1}{2}_{(PQ:q,R:p)}\|B^*B\|_{(PQ:q,R:p)}^\frac{1}{2}
\end{align} where we used \cref{lem:Positive.Symmetrie}. Now it follows for any $X_{PQR}=xx^*\geq0$ that
\begin{align}
    \|a_PX_{PQR}b_P\|_{(PQ:q,R:p)}&\leq \|a_Pxx^*a_P\|^\frac{1}{2}_{(PQ:q,R:p)}\|b_Pxx^*b_P\|^\frac{1}{2}_{(PQ:q,R:p)}\\&\leq \max\{\|a_Pxx^*a_P\|_{(PQ:q,R:p)},\|b_Pxx^*b_P\|_{(PQ:q,R:p)}\}
\end{align} which finishes the proof.
\end{proof}

\section{Some Useful Optimization Tools}

\begin{lemma}\label{lem:continuity}
Let $0<s\leq m$ with $1\leq m$ and set $\frac{1}{r}:=\frac{1}{s}-\frac{1}{m}\geq0$, then for some $K\in\cS_{2r}(\cH)$ with $\cH$ being a finite dimensional Hilbert space, the map
\begin{align}
    f: &\cD(\cH):=\{a\in\cS_1(\cH)|a\geq0, \|a\|_1\leq 1\} \to [0,\|K\|^{2s}_{2r}] \\
    &a\mapsto \|Ka^\frac{1}{m}K^*\|_{s}^s=\|a^\frac{1}{2m}KK^*a^\frac{1}{2m}\|_{s}^s
\end{align}
is concave and (norm-)continuous.
\end{lemma}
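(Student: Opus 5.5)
Concavity and continuity are two separate claims, and I would prove them in that order. Continuity is the routine part. On the finite-dimensional compact set $\cD(\cH)$, the map $a\mapsto a^{1/m}$ is norm-continuous, since $t\mapsto t^{1/m}$ is continuous on $[0,1]$ and continuous functional calculus is continuous on bounded sets of positive matrices. Multiplying by the fixed $K$ and $K^*$ keeps this continuous. Finally $Y\mapsto \|Y\|_s^s=\Tr[|Y|^s]$ is continuous on matrices for every $s>0$, because singular values depend continuously on the matrix and $t\mapsto t^s$ is continuous on $[0,\infty)$.

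The range bound follows from the generalized Hölder inequality (\cref{lem:GeneralizedHölder}). With $\frac1s=\frac1{2r}+\frac1m+\frac1{2r}$, the exponents match because $\frac1r=\frac1s-\frac1m$, so
\begin{align}
\|Ka^{1/m}K^*\|_s\leq \|K\|_{2r}\,\|a^{1/m}\|_m\,\|K^*\|_{2r}=\|K\|_{2r}^2\|a\|_1^{1/m}\leq \|K\|_{2r}^2 .
\end{align}
Raising to the power $s$ gives the stated interval $[0,\|K\|_{2r}^{2s}]$.

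For concavity I would use the form $\Tr[(Ka^{1/m}K^*)^s]$, where $Ka^{1/m}K^*\ge0$, and recognise it as a trace functional of Carlen–Lieb / Zhang type, $a\mapsto\Tr[(K a^{p} K^*)^{t}]$ with $p=1/m$ and $t=s$. The standard result, for instance \cite[Theorem 1.1 (1)]{Zhang.2020} or the concavity statements collected in \cite[Lemma B.1]{Rubboli.2025}, says this map is concave when $0<p\le1$ and $0<t\le 1/p$. Here $p=1/m\in(0,1]$ because $m\ge1$, and $t=s\le m=1/p$ by hypothesis, so the conditions hold and concavity follows.

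If the citation needs justification in this exact range, I would argue directly in two cases. When $s\le1$, the map $a\mapsto Ka^{1/m}K^*$ is operator concave because $t\mapsto t^{1/m}$ is operator concave, and $\Tr[(\cdot)^s]$ is monotone and concave, so the composition is concave. When $1<s\le m$ this argument breaks, and the main obstacle is exactly this regime. There I would use the variational formula
\begin{align}
\Tr[Y^s]=\max_{Z\ge0}\{s\Tr[YZ]-(s-1)\Tr[Z^{s/(s-1)}]\},
\end{align}
but a maximum of functions that are each concave in $a$ is not generally concave, so this alone does not close the argument. The cleaner route is Epstein/Hiai-type concavity, or equivalently Zhang's theorem, and I would cite it.
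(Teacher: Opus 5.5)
Your proposal is correct and follows the paper's proof. Concavity comes from citing \cite[Theorem 1.1 (1)]{Zhang.2020} with inner exponent $1/m\in(0,1]$ and outer power $s\le m$, and continuity from the continuity of functional calculus, multiplication and $Y\mapsto\Tr[|Y|^s]$ on the compact finite-dimensional set $\cD(\cH)$. Your three-factor Hölder bound $\|Ka^{1/m}K^*\|_s\le\|K\|_{2r}^2\|a\|_1^{1/m}$ also justifies the stated codomain $[0,\|K\|_{2r}^{2s}]$, which the paper leaves implicit, and you correctly note that the elementary composition argument only covers $s\le 1$, so the citation is genuinely needed for $1<s\le m$.
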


Concavity follows from \cite[Theorem 1.1]{Zhang.2020} by the assumptions on $s$ and $m$. Continuity holds since in finite dimensions matrix multiplication, taking positive powers, and the trace are continuous operations on the compact set $\cD(\cH)$.

\subsection{Sion's Minimax Theorem}
A major technical part of this paper will be the switching of infima and suprema. There are various formulations of Sion's minimax theorem, going back to \cite{Sion.58} which give sufficiency criteria to do exactly this.
We will use the following formulation going back to Ky Fan, but in a slightly unusual manner, in that we assume the set $X$ over which we take the supremum to be compact, resulting in requiring the function to be upper semi-continuous in that respective variable $x\in X$.

\begin{theorem}[Ky Fan's Sion's minimax theorem, \cite{KyFan.53} Theorem 2]\label{thm:Sion} Let $X$ be a compact convex subset of a Hausdorff topological vector space and $Y$ a subset of a linear space. Let $f:X\times Y\to \mathbb{R}\cup\{-\infty\}$ be upper-semicontinous on $X$ for fixed $y\in Y$ and concave in $x$ and convex in $y$, then
\begin{align}
    \sup_{x\in X}\inf_{y\in Y}f(x,y)= \inf_{y\in Y}\sup_{x\in X}f(x,y).
\end{align}
\end{theorem}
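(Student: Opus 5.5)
The statement to prove is Ky Fan's form of Sion's minimax theorem. The plan is to prove the nontrivial inequality $\sup_x\inf_y f\ge \inf_y\sup_x f$, since $\sup_x\inf_y f\le\inf_y\sup_x f$ is immediate. Set $c<\inf_y\sup_x f(x,y)$ arbitrary. The goal is to exhibit an $x\in X$ with $f(x,y)> c$ for all $y$ (or at least $\ge c$), which gives $\sup_x\inf_y f\ge c$.

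First I reduce to finitely many $y$ by compactness. For each $y$ let $K_y:=\{x\in X: f(x,y)\ge c\}$. By upper semicontinuity in $x$ these sets are closed in the compact $X$, and by concavity in $x$ they are convex. If $\bigcap_y K_y\neq\emptyset$ we are done. Otherwise, the finite intersection property yields $y_1,\dots,y_n$ with $\bigcap_{i} K_{y_i}=\emptyset$, i.e.
\begin{align}
\max_{x\in X}\min_{i} f(x,y_i)<c
\end{align}
(the maximum exists by upper semicontinuity and compactness).

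It remains to contradict this with the convexity in $y$. I first treat $n=2$, by contradiction in the spirit of Komiya's elementary proof. Suppose $\max_x\min\{f(x,y_1),f(x,y_2)\}<c'<c$ for some $c'$. For $z$ on the segment $[y_1,y_2]$, convexity gives $f(x,z)\le \max\{f(x,y_1),f(x,y_2)\}$. Define $A_z:=\{x: f(x,z)\ge c'\}$, which is nonempty because $\sup_x f(x,z)>c$. Each $A_z$ lies in $K'_{y_1}\cup K'_{y_2}$, where $K'_{y_i}:=\{f(\cdot,y_i)\ge c'\}$. These two sets are disjoint closed sets, and $A_z$ is convex and hence connected, so $A_z$ sits entirely in one of them. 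Partition the segment into $I_j:=\{z: A_z\subset K'_{y_j}\}$. Showing that both $I_1,I_2$ are closed in the segment, using upper semicontinuity in $x$ together with convexity of $f$ along the segment in $z$, gives a contradiction to connectedness of the interval.

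This closedness step is the main obstacle. Since $f$ is not assumed lower semicontinuous in $y$, I would use convexity along the segment: a convex function of $z\in[0,1]$ is upper semicontinuous on the interval, which is the direction needed. A strict-inequality margin ($c'<c''<c$) handles limits. The general $n$ then follows by induction. Apply the $n=2$ case (more precisely, the statement for $n-1$ points) to the restriction of $f$ to the compact convex set $X':=K''_{y_n}$, or use the standard reduction replacing $X$ by a sublevel set. This shows $\max_x\min_i f(x,y_i)\ge c$, contradicting the reduction. Letting $c\uparrow\inf_y\sup_x f$ concludes. The value $-\infty$ causes no issue, because only superlevel sets are used.
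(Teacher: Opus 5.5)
Your compactness reduction, the two-point set-up via connectedness of the convex sets $A_z$, and the induction skeleton follow Komiya's proof of Sion's theorem, and they are fine. The gap is exactly at the step you flag: closedness of $I_1$ and $I_2$. This is where your hypotheses differ from Sion's, since there is no lower semicontinuity in $y$, and the reason you give does not carry the step.

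Komiya's closedness argument fixes an $x$ with $f(x,z)$ above the threshold and needs $f(x,z_n)$ above the threshold for $z_n\to z$. That is \emph{lower} semicontinuity of $f(x,\cdot)$. Convexity gives this only on the open segment, where $f(x,\cdot)$ is continuous. This shows that all interior $z$ lie in the same $I_j$, so the contradiction has to come from an endpoint. At an endpoint a convex function can jump up, e.g.\ $g(0)=1$ and $g\equiv 0$ on $(0,1]$.

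Going the upper-semicontinuous way, as you suggest, means taking witnesses $x_n\in A_{z_n}\subset K'_{y_1}$ and a cluster point $x^*\in K'_{y_1}$. But the witnesses move with $n$. Upper semicontinuity of $z\mapsto f(x,z)$ for each \emph{fixed} $x$, together with upper semicontinuity of $f(\cdot,z)$, does not yield $f(x^*,z)\ge c'$, because separate semicontinuity is not joint semicontinuity along $(x_n,z_n)$.

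What closes the step is the convexity inequality used \emph{uniformly in $x$}. Pass to a one-sided subsequence and write $z_n=(1-s_n)z+s_ne$ with $e\in\{y_1,y_2\}$ and $s_n\downarrow 0$. Set $M_e:=\max_{x\in X}f(x,e)$, which is finite because $f(\cdot,e)$ is usc on the compact $X$ and $\sup_x f(x,e)>c$. Then $c'\le f(x_n,z_n)\le (1-s_n)f(x_n,z)+s_nM_e$, so $f(x_n,z)\ge (c'-s_nM_e)/(1-s_n)\to c'$. Upper semicontinuity of $f(\cdot,z)$ at $x^*$ then gives $f(x^*,z)\ge c'$, so $x^*\in A_z\cap K'_{y_1}$ and $z\in I_1$. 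This is precisely where convexity, not mere quasi-convexity, in $y$ substitutes for Sion's lower semicontinuity.

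With this repair your induction goes through, provided two things hold. The inductive statement must be formulated for every compact convex $X'\subset X$. Your margin must be used to replace a minimizer of $y\mapsto \max_{X'}f(\cdot,y)$, which need not exist here, by an $\varepsilon$-minimizer. You also tacitly need $Y$ convex so that $[y_1,y_2]\subset Y$.

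For comparison: the paper gives no proof and cites Fan, whose argument finishes the finite case directly. Choose $c''<c$ above $\max_x\min_i f(x,y_i)$. Concavity in $x$ makes $\{u\in\mathbb{R}^n:\exists x,\ u_i\le f(x,y_i)\ \forall i\}$ convex, and it is disjoint from $\{u: u_i\ge c''\ \forall i\}$. A separating probability vector $\lambda$ then gives $\sup_x\sum_i\lambda_i f(x,y_i)<c$. With a margin one can arrange all $\lambda_i>0$, which also handles points where some $f(x,y_i)=-\infty$. Jensen in $y$ at $\sum_i\lambda_i y_i$ then contradicts $c<\inf_y\sup_x f$. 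That route needs no connectedness, no induction and no segments in $Y$; it covers Fan's convex-like functions on arbitrary $Y$. Your repaired route, in exchange, needs only quasi-concavity in $x$.
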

We note here, that when minimizing and maximizing over multiple variables each, then the function needs to be jointly concave, respectively convex in them.
\begin{remark}\label{rem:Sions.Quasi.Minimax}
The requirements of convexity and concavity here are necessary and can be loosened only to quasi-convexity and quasi-concavity when requiring that the function $f:X\times Y$ is additionally lower-semicontinous in the minimization variable $y$ onn $Y$ \cite[3.3 Corollary]{Sion.58}. We use the version only once in this paper in the proof of \cref{lem:3.index.formulas} in \cref{sec:app:Minkowski.Inequality}.
\end{remark}

\subsection{Optimization over Products as Sums}

Two important tools we require are rewritings of optimizations over a product of functions into a sum of suitable powers of these functions. We will consider the cases of a maximization \cref{lem:Sup.Prod.to.Sum} and minimization \cref{lem:Inf.Prod.to.Sum} separately.
We present specific, rather then the most general such statements, but results like this are more or less known and generalize easily to products of arbitrary number of functions with different scalings, with the core proof tool being the (reversed) weighted AM-GM inequality.

For convenience we repeat them here in the specific case we require. 

\begin{proposition}[A weighted AM-GM inequality]
Let $\alpha\in(0,1)$ and $x,y,z\geq0$, then
\begin{align}
    \alpha x+\frac{1-\alpha}{2}y+\frac{1-\alpha}{2}z \geq x^\alpha y^{\frac{1-\alpha}{2}}z^{\frac{1-\alpha}{2}}.
\end{align}
\end{proposition}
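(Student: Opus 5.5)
This is the three-term weighted AM-GM inequality, and I will deduce it from concavity of the logarithm. First I dispose of the degenerate cases. If any of $x,y,z$ vanishes, then every exponent $\alpha,\frac{1-\alpha}{2},\frac{1-\alpha}{2}$ is strictly positive because $\alpha\in(0,1)$. Hence the right-hand side $x^\alpha y^{\frac{1-\alpha}{2}}z^{\frac{1-\alpha}{2}}$ equals $0$, while the left-hand side is a nonnegative combination of nonnegative numbers. The inequality therefore holds trivially, and from now on I may assume $x,y,z>0$.

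For strictly positive arguments I use that the weights $w_1=\alpha$ and $w_2=w_3=\frac{1-\alpha}{2}$ are nonnegative and sum to $1$. Since $\log$ is concave on $(0,\infty)$, Jensen's inequality gives
\begin{align}
    \log\Big(\alpha x+\tfrac{1-\alpha}{2}y+\tfrac{1-\alpha}{2}z\Big)\geq \alpha\log x+\tfrac{1-\alpha}{2}\log y+\tfrac{1-\alpha}{2}\log z=\log\Big(x^\alpha y^{\frac{1-\alpha}{2}}z^{\frac{1-\alpha}{2}}\Big).
\end{align}
The exponential is monotone, so exponentiating both sides yields the claim.

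If I prefer to avoid multi-point Jensen, I can instead apply the two-point version twice. First, $\frac{1}{2}y+\frac{1}{2}z\geq y^{1/2}z^{1/2}$. Then $\alpha x+(1-\alpha)w\geq x^\alpha w^{1-\alpha}$ with $w=\frac{y+z}{2}$. Chaining these with monotonicity of $w\mapsto w^{1-\alpha}$ gives the same bound.

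There is no real obstacle here. The only points that need care are the zero cases, which the argument handles by positivity of the exponents, and checking that the weights sum to one. Equality holds precisely when $x=y=z$, which is what is needed later when the sums are turned back into products in \cref{lem:Inf.Prod.to.Sum} and \cref{lem:Sup.Prod.to.Sum}.
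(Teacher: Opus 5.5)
Your proof is correct. The zero cases are handled properly by the strict positivity of the exponents. For positive arguments, Jensen's inequality for the concave logarithm with weights $\alpha,\frac{1-\alpha}{2},\frac{1-\alpha}{2}$ (which sum to one) gives exactly the claim, and the two-step alternative is also valid.

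The paper gives no proof of this proposition; it quotes it as the standard weighted AM-GM inequality, so your argument supplies the standard proof. Of your equality remark, the later lemmas only need the trivial direction, that $x=y=z$ implies equality.
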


\begin{lemma}[A reverse weighted AM-GM inequalities]
Let $\alpha>1$, and $x\geq0,\,y,z>0$ then
\begin{align} \label{equ:reversed.AM.GM}
    \alpha x+\frac{1-\alpha}{2}y+\frac{1-\alpha}{2}z \leq x^\alpha y^{\frac{1-\alpha}{2}}z^{\frac{1-\alpha}{2}}.
\end{align}
\end{lemma}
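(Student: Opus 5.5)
We will reduce the reverse inequality \eqref{equ:reversed.AM.GM} to the ordinary weighted AM-GM inequality of the preceding proposition, applied with a different choice of the distinguished variable. Throughout, $\alpha>1$, so that $\beta:=\frac{1}{\alpha}\in(0,1)$ and $\frac{\alpha-1}{2\alpha}=\frac{1-\beta}{2}\in(0,\tfrac12)$.

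First we dispose of the degenerate case $x=0$. There the right hand side equals $0$. The left hand side equals $\frac{1-\alpha}{2}(y+z)$, which is strictly negative since $\alpha>1$ and $y,z>0$, so the inequality holds trivially.

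For $x>0$ we set $w:=x^\alpha y^{\frac{1-\alpha}{2}}z^{\frac{1-\alpha}{2}}>0$, which is well defined because $y,z>0$. Solving for $x$ gives
\begin{align}
x = w^{\frac{1}{\alpha}}\, y^{\frac{\alpha-1}{2\alpha}}\, z^{\frac{\alpha-1}{2\alpha}} = w^{\beta}\, y^{\frac{1-\beta}{2}}\, z^{\frac{1-\beta}{2}}.
\end{align}
Now we apply the weighted AM-GM proposition with exponent $\beta\in(0,1)$ to the triple $(w,y,z)$. This yields
\begin{align}
x \le \beta w+\frac{1-\beta}{2}y+\frac{1-\beta}{2}z = \frac{1}{\alpha}w+\frac{\alpha-1}{2\alpha}(y+z).
\end{align}
Multiplying by $\alpha>0$ and moving the $y,z$ terms to the left gives
\begin{align}
\alpha x+\frac{1-\alpha}{2}y+\frac{1-\alpha}{2}z\le w,
\end{align}
which is exactly \eqref{equ:reversed.AM.GM}.

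There is no real obstacle here. The only points requiring care are the sign bookkeeping for the negative coefficients $\frac{1-\alpha}{2}$, and ensuring $y,z>0$ so that their negative powers, and hence the change of variables, make sense. This positivity requirement is precisely why the lemma assumes $y,z>0$ but only $x\ge0$. If one preferred to avoid the substitution, an alternative route would use concavity of $\log$ directly, but the substitution above is the cleanest.
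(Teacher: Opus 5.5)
Your proof is correct and is essentially the paper's argument: both derive the reverse inequality from the ordinary weighted AM--GM with weights $\frac{1}{\alpha},\frac{\alpha-1}{2\alpha},\frac{\alpha-1}{2\alpha}$, with $x$ written as the corresponding weighted geometric mean. The paper applies AM--GM to the triple $(X,y,z)$, where $X$ is the left-hand side, so it must first dispose of the case $X\le 0$ and then raise to the power $\alpha$; you apply it to $(w,y,z)$, where $w\ge 0$ is the right-hand side, which makes the sign case split unnecessary (your separate $x=0$ case is also already covered) and leaves only a linear rearrangement.
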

\begin{proof}
Denote the LHS with $X:=\alpha x+\frac{1-\alpha}{2}y+\frac{1-\alpha}{2}z$. If $X\leq 0$, then the inequality holds trivially, as the RHS is positive semidefinite. Now for $X\geq0$ apply the standard weighted AM-GM inequality with weights $0\leq \frac{1}{\alpha},\frac{\alpha-1}{2},\frac{\alpha-1}{2}\leq 1$ to 
\begin{align}
    x=\frac{1}{\alpha}X+\frac{\alpha-1}{2\alpha}y+\frac{\alpha-1}{2\alpha}z\geq X^\frac{1}{\alpha}y^{\frac{\alpha-1}{2\alpha}}z^{\frac{\alpha-1}{2\alpha}}.
\end{align} Rearranging yields the desired inequality \eqref{equ:reversed.AM.GM} $y^{\frac{1-\alpha}{2}}z^{\frac{1-\alpha}{2}}x^\alpha\geq X$.

\end{proof}

Note that equality in both of these holds if and only if $x=y=z$.

Explicitly for a product of three functions we now have the following.

\begin{lemma}[Supremum of products as sum]\label{lem:Sup.Prod.to.Sum}
Given three positive functions $g,h:\cB^+(\cH)\to [0,\infty],\, f:\cB^+(\cH\otimes\cH)\to [0,\infty]$ and numbers $r_2<0<r_1$ such that $\frac{1}{r_0}:=\frac{2}{r_1}+\frac{1}{r_2}<0$ and
\begin{align}
    g(\lambda C)=\lambda^{r_1}g(C)\quad h(\mu D)=\mu^{r_1}h(D)\quad f(\lambda C,\mu D)= (\lambda\mu)^{-r_2}f(C,D),
\end{align} holds for any $\lambda,\mu\geq0$.
Then 
\begin{align}
    \sup_{C,D\geq0}\left\{g(C)^{\frac{r_0}{r_1}}h(D)^{\frac{r_0}{r_1}}f(C,D)^{\frac{r_0}{r_2}}\right\} = \sup_{C,D\geq0}\left\{\frac{r_0}{r_1}g(C)+\frac{r_0}{r_1}h(D)+\frac{r_0}{r_2}f(C,D)\right\}.
\end{align}
\end{lemma}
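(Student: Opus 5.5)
The plan is to prove the two inequalities separately. Pointwise, the sum is dominated by the product via the reverse weighted AM-GM inequality \eqref{equ:reversed.AM.GM}. Conversely, the product is scale invariant, so for each fixed $(C,D)$ one can rescale to the equality case of AM-GM, where sum and product coincide.

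First I record the exponent bookkeeping. Since $r_2<0<r_1$ and $\frac{1}{r_0}<0$, we have $r_0<0$, so $\frac{r_0}{r_1}<0<\frac{r_0}{r_2}$. Moreover
\begin{align}
\frac{r_0}{r_1}+\frac{r_0}{r_1}+\frac{r_0}{r_2}=r_0\Big(\frac{2}{r_1}+\frac{1}{r_2}\Big)=1 .
\end{align}
Setting $\alpha:=\frac{r_0}{r_2}=1-\frac{2r_0}{r_1}>1$ gives $\frac{1-\alpha}{2}=\frac{r_0}{r_1}$. Now apply \eqref{equ:reversed.AM.GM} with $x=f(C,D)$, $y=g(C)$, $z=h(D)$ whenever $g(C),h(D)\in(0,\infty)$. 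This yields
\begin{align}
\frac{r_0}{r_1}g(C)+\frac{r_0}{r_1}h(D)+\frac{r_0}{r_2}f(C,D)\le g(C)^{\frac{r_0}{r_1}}h(D)^{\frac{r_0}{r_1}}f(C,D)^{\frac{r_0}{r_2}}
\end{align}
for all such $(C,D)$, and hence ``sum $\le$ product'' after taking suprema.

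For the converse, fix $(C,D)$ with all three values in $(0,\infty)$ and replace it by $(\lambda C,\mu D)$. By the homogeneity assumptions the product picks up the factor
\begin{align}
\lambda^{r_0}\mu^{r_0}(\lambda\mu)^{-r_0}=1,
\end{align}
so it is invariant. I then choose $\lambda,\mu>0$ so that $g(\lambda C)=h(\mu D)=f(\lambda C,\mu D)$, which is the equality case of AM-GM, where sum equals product. Writing $u=\log\lambda$ and $v=\log\mu$, the conditions are the linear system
\begin{align}
r_1u+\log g(C)=r_1v+\log h(D)=-r_2(u+v)+\log f(C,D).
\end{align}
The first equation fixes $u-v$. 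Substituting into the second leaves a single equation whose coefficient is $r_1+2r_2=r_1r_2/r_0\neq0$, so a unique solution exists. Hence every value of the product is attained by the sum at some rescaled point, giving ``product $\le$ sum''.

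The main obstacle I expect is the degenerate cases, where some of $g,h,f$ vanish or are infinite, together with the resulting conventions for $0^{\text{negative}}$. I would handle these as follows. If $f(C,D)=0$ the product is $0$ (positive exponent on $f$), while the sum can be sent to something $\ge 0$ by scaling, or such points are simply irrelevant to the supremum. If $g(C)=0$ or $h(D)=0$, the product is $+\infty$; scaling then shows the sum is also unbounded, since $f$ grows without bound under the scaling while the $g,h$ terms stay at $0$. In the applications in the paper the support conditions of \cref{lem:SupportConditions} keep all quantities finite and positive, so it would suffice to state the lemma on that domain.
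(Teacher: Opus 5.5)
Your proposal is correct and is essentially the paper's proof. The reverse weighted AM-GM inequality gives sum $\le$ product pointwise, and scale invariance of the product lets you rescale $(C,D)$ to a point with $g=h=f$, where the two coincide. Your log-linear solve for $(\lambda,\mu)$, with the check $r_1+2r_2=r_1r_2/r_0\neq0$, is a more reliable justification of the rescaling than the explicit $\tilde\lambda,\tilde\mu$ in the paper, which correspond to the homogeneity $(\lambda\mu)^{r_2}$ rather than $(\lambda\mu)^{-r_2}$ for $f$. Your remarks on degenerate values also go beyond what the paper addresses.
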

The proof we present here is inspired by the proof of \cite[Lemma B.4]{Rubboli.2025}.
\begin{proof}
As $2\frac{r_0}{r_1}+\frac{r_0}{r_2}=1$ and $\frac{r_0}{r_1}<0$ we have $\frac{r_0}{r_2}>1$ and so by weighted reverse AM-GM inequality \eqref{equ:reversed.AM.GM} we get
\begin{align} \label{equ:proof.sup.prod.sum}
    g(C)^\frac{r_0}{r_1}h(D)^\frac{r_0}{r_1}f(C,D)^\frac{r_0}{r_2}\geq \frac{r_0}{r_1}g(C)+\frac{r_0}{r_1}h(D)+\frac{r_0}{r_2}f(C,D).
\end{align} This inequality is clearly preserved under taking suprema over $C,D\geq0$. For the other inequality consider the transformation 
\begin{align}
    (C,D)\mapsto (\lambda C,\mu D), \quad \lambda,\mu\geq0.
\end{align} The LHS of \eqref{equ:proof.sup.prod.sum} is invariant under it, since
\begin{align}
    g(\lambda C)^\frac{r_0}{r_1}h(\mu D)^\frac{r_0}{r_1}f(\lambda C,\mu D)^\frac{r_0}{r_2} &= (\lambda\mu)^{r_0}g(C)^\frac{r_0}{r_1}h(D)^\frac{r_0}{r_1}(\lambda\mu)^{-r_0}f(C,D)^\frac{r_0}{r_2}\\
    &= g(C)^\frac{r_0}{r_1}h(D)^\frac{r_0}{r_1}f(C,D)^\frac{r_0}{r_2}.
\end{align}
On the other hand, given any feasible $C,D\geq0$ setting
\begin{align}
    \tilde{\lambda}:=\tilde{\mu}\left(\frac{h(D)}{g(C)}\right)^\frac{1}{r_1}, \quad \tilde{\mu}:=\left(\left(\frac{h(D)}{g(C)}\right)^\frac{1}{r_1}\left(\frac{f(C,D)}{g(C)}\right)^\frac{1}{r_2-r_1}\right)^\frac{r_1-r_2}{2r_2-r_1}
\end{align} one verifies that $g(\tilde{\lambda}C)=f(\tilde{\lambda}C,\tilde{\mu}D)=h(\tilde{\mu}D)$, which implies equality in the inequality \eqref{equ:proof.sup.prod.sum} for $(\tilde{\lambda} C,\tilde{\mu} D)$ and thus finishes the proof of the lemma.
\end{proof}

Similarly for infima of a product of three functions we have
\begin{lemma}[Infimum of products as sum]\label{lem:Inf.Prod.to.Sum}
Given three positive functions $g,h:\cB^+(\cH)\to [0,\infty],\, f:\cB^+(\cH\otimes\cH)\to [0,\infty]$ and numbers $0<r_1,r_2$ s.t. 
\begin{align}
   g(\lambda C)=\lambda^{r_1}g(C), \quad  h(\mu D)=\mu^{r_1}h(D), \quad f(\lambda C,\mu D)=(\lambda\mu)^{-r_2}f(C,D)
\end{align} 
holds for any $\lambda,\mu\geq0$. 
Define $\frac{1}{r_0}:=\frac{2}{r_1}+\frac{1}{r_2}$, then it holds that
\begin{align}
    \inf_{C,D\geq0}g(C)^{\frac{r_0}{r_1}}h(D)^{\frac{r_0}{r_1}}f(C,D)^{\frac{r_0}{r_2}} = \inf_{C,D\geq0}\left\{\frac{r_0}{r_1}g(C)+\frac{r_0}{r_1}h(D)+\frac{r_0}{r_2}f(C,D)\right\}.
\end{align}
\end{lemma}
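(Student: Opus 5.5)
The plan is to mirror the proof of \cref{lem:Sup.Prod.to.Sum}, replacing the reverse weighted AM--GM inequality by the ordinary weighted AM--GM inequality. Since $0<r_1,r_2$, the exponents $\frac{r_0}{r_1},\frac{r_0}{r_1},\frac{r_0}{r_2}$ are all positive and sum to $r_0\big(\frac{2}{r_1}+\frac{1}{r_2}\big)=1$. Writing $\alpha:=\frac{r_0}{r_2}\in(0,1)$, so that $\frac{1-\alpha}{2}=\frac{r_0}{r_1}$, the weighted AM--GM inequality applied to $x=f(C,D)$, $y=g(C)$, $z=h(D)$ gives, pointwise for every feasible $C,D\geq0$,
\begin{align}
    g(C)^{\frac{r_0}{r_1}}h(D)^{\frac{r_0}{r_1}}f(C,D)^{\frac{r_0}{r_2}}\leq \frac{r_0}{r_1}g(C)+\frac{r_0}{r_1}h(D)+\frac{r_0}{r_2}f(C,D).
\end{align}
Taking infima over $C,D\geq0$ on both sides yields the inequality ``$\leq$'' in the statement.

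For the reverse inequality I would use the scaling $(C,D)\mapsto(\lambda C,\mu D)$ with $\lambda,\mu>0$. The product on the left is invariant under it, because it picks up the factor $\lambda^{r_0}\mu^{r_0}(\lambda\mu)^{-r_0}=1$. So it suffices to show that for each feasible $(C,D)$ with $g(C),h(D),f(C,D)\in(0,\infty)$ one can choose $\lambda,\mu$ so that $g(\lambda C)=h(\mu D)=f(\lambda C,\mu D)$. At that point AM--GM holds with equality, so the sum evaluated at $(\lambda C,\mu D)$ equals the product at $(C,D)$. Taking the infimum over $(C,D)$ then gives ``$\geq$''.

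Solving for the scalings amounts to two equations in logarithms. The condition $\lambda^{r_1}g=\mu^{r_1}h$ gives $\lambda=\mu\,(h/g)^{1/r_1}$. Substituting into $\lambda^{r_1}g=(\lambda\mu)^{-r_2}f$ leaves an equation $\mu^{r_1+2r_2}=c$ for an explicit constant $c>0$ built from $g,h,f$. Since $r_1+2r_2>0$, this has a unique positive solution, which is simpler than the sign bookkeeping needed in \cref{lem:Sup.Prod.to.Sum}.

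I expect the main obstacle to be the degenerate cases where one of $g,h,f$ is $0$ or $\infty$.
\begin{itemize}
\item If some factor is $\infty$ while the others are positive, both sides are $+\infty$ at that point, so such points are irrelevant to either infimum.
\item If, say, $g(C)=0$ and the product is $0$, I would drive the sum to $0$. Send $\lambda\to\infty$ with $\mu$ fixed, so that $f(\lambda C,\mu D)\to0$ while $g(\lambda C)$ stays $0$. Then shrink $\mu$ at a rate that keeps $f$ small while $h(\mu D)\to0$; this is possible because $f$ scales like $\mu^{-r_2}$ while $h$ scales like $\mu^{r_1}$, and $\lambda$ is free.
\item The case $f(C,D)=0$ with $g,h>0$ is handled by letting $\lambda,\mu\to0$.
\end{itemize}
In every such case both infima are $0$. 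In the applications of this lemma the support conditions from \cref{lem:SupportConditions} make all three quantities finite and positive anyway.
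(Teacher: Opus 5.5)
Your proof is correct and follows the same route as the paper. The pointwise bound product $\le$ sum comes from weighted AM--GM. For the reverse inequality, the product is invariant under $(C,D)\mapsto(\lambda C,\mu D)$, and an explicit choice of $(\lambda,\mu)$ makes $g(\lambda C)=h(\mu D)=f(\lambda C,\mu D)$, so AM--GM becomes an equality; the paper's closed-form $\tilde\lambda,\tilde\mu$ is equivalent to solving your equation for $\mu$ first.

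Your handling of the cases where some of $g,h,f$ vanish goes beyond the paper, whose formulas for $\tilde\lambda,\tilde\mu$ silently assume all three values lie in $(0,\infty)$. Only the mixed $0\cdot\infty$ cases remain convention-dependent, and these do not arise in the paper's applications.
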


This proof goes analogous as the previous one, with the difference being the use of the usual weighted AM-GM inequality instead of the reversed one. 
\begin{proof}
On the one hand, by the weighted arithmetic-geometric mean inequality it holds that
\begin{align}\label{equ:w.am.gm.ineq}
    \frac{r_0}{r_1}g(C)+\frac{r_0}{r_1}h(D)+\frac{r_0}{r_2}f(C,D)\geq g(C)^{\frac{r_0}{r_1}}h(D)^{\frac{r_0}{r_1}}f(C,D)^{\frac{r_0}{r_2}},
\end{align} which still holds, when taking the infimum over all positive $C,D$. To show the achievability first notice that the RHS of \eqref{equ:w.am.gm.ineq} is invariant under the transformation 
\begin{align}
    (C,D)\mapsto (\lambda C,\mu D)
\end{align} for positive $\lambda,\mu\geq0$, since
\begin{align}
g(\lambda C)^{\frac{r_0}{r_1}}f(\lambda C,\mu D)^{\frac{r_0}{r_2}}h(\mu D)^{\frac{r_0}{r_1}} &= \lambda^{r_0}g(C)^{\frac{r_0}{r_1}}\lambda^{-r_0}\mu^{-r_0}f(C,D)^{\frac{r_0}{r_2}}\mu^{r_0}h(D)^{\frac{r_0}{r_1}}\\
&= g(C)^{\frac{r_0}{r_1}}f(C,D)^{\frac{r_0}{r_2}}h(D)^{\frac{r_0}{r_1}}.
\end{align}
Now given any feasible $C,D$, let 
\begin{align}
    \tilde{\lambda}&:=\big(f(C,D)g(C)^{-\frac{r_2}{r_1}-1}h(D)^{\frac{r_2}{r_1}}\big)^{\frac{1}{r_1+2r_2}}, \qquad
    \tilde{\mu} :=\tilde{\lambda}\left(\frac{g(C)}{h(D)}\right)^\frac{1}{r_1}.
\end{align} Then it is easily verified that
$g(\tilde{\lambda}C)=f(\tilde{\lambda}C,\tilde{\mu}D)=h(\tilde{\mu} D)$, which implies equality in \eqref{equ:w.am.gm.ineq} and hence the statement of the lemma.
\end{proof}

\section{Concavity-Convexity of a Certain Operator Space Functional}

Here we prove the following lemma, which was required in the proof of \cref{thm:qt.to.pt}. It is a statement purely about operator valued Schatten norms and thus also of independent interest. The proof uses some properties of operator valued Schatten norms not introduced in this work. For an introduction see the references \cite{Book.Pisier.1998, Book.Pisier.2003, Beigi.2023}. We use the same notation as in \cite{Fawzi.2026}.

\begin{lemma}\label{lem.app.op.concavity}
Let $\cH$ be a finite dimensional Hilbert space, $\cX$ be an operator space and $r\geq1$, then it holds that 
\begin{align}
    \|(a^{\frac{1}{2r}}\otimes\1)A(a^{\frac{1}{2r}}\otimes\1)\|^{\frac{r}{1+r}}_{(r;\cX)}
\end{align} is concave and upper semi-continuous in $0\leq a\in\cS_1(\cH)$ and continuous and quasi-convex in $0\leq A\in\cS_\infty[\cH;\cX]$.
\end{lemma}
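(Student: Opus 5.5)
The plan is to treat the two variables separately: the statement in $A$ follows from the norm property of $\|\cdot\|_{(r;\cX)}$, while the statement in $a$ comes from rewriting the functional as an infimum over auxiliary positive operators of functions that are each concave and continuous in $a$. Throughout, $r\geq 1$, so $\cS_r[\cH;\cX]$ is a genuine normed (operator) space and Pisier's theory applies.

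\emph{Step 1 (the variable $A$).} Fix $a\geq 0$. The map $A\mapsto (a^{\frac{1}{2r}}\otimes\1)A(a^{\frac{1}{2r}}\otimes\1)$ is linear. Since $a$ lives on the finite dimensional $\cH$, it is also bounded from $\cS_\infty[\cH;\cX]$ into $\cS_r[\cH;\cX]$, with norm at most $\|a^{\frac{1}{2r}}\|_{2r}^2$ by the Hölder-type factorization bound. Composing with the norm $\|\cdot\|_{(r;\cX)}$ therefore gives a convex, norm-continuous function of $A$. Raising a nonnegative convex continuous function to the power $\frac{r}{1+r}\in(0,1]$ preserves continuity. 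It also preserves the convexity of sublevel sets, because $\{F^{\frac{r}{1+r}}\leq c\}=\{F\leq c^{\frac{1+r}{r}}\}$. Hence the functional is continuous and quasi-convex in $A$. Only quasi-convexity, not convexity, survives the power, and this is exactly why the paper needs the quasi-convex version of Sion's theorem (\cref{rem:Sions.Quasi.Minimax}).

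\emph{Step 2 (a variational formula in $a$).} For fixed $A\geq 0$ I would use Pisier's factorization $\|Y\|_{(r;\cX)}=\inf_{Y=\alpha Z\beta}\|\alpha\|_{2r}\|\beta\|_{2r}\|Z\|_{(\infty;\cX)}$. For positive $Y$ I would restrict to symmetric factorizations $\alpha=\beta=c\geq 0$, using the operator-space analogue of \cref{lem:Positive.Symmetrie} (cf.\ \cite[Proposition 5.2]{Bardet.2024}). Then I apply this to $Y=a^{\frac{1}{2r}}Aa^{\frac{1}{2r}}$ and make the same change of variables as in the proof of \cref{lem:inverseDef1}: write $c^2=a^{\frac{1}{2r}}Ca^{\frac{1}{2r}}$, restricted to the support of $a$ via \cref{lem:SupportConditions}. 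This should give
\begin{align}
\|a^{\frac{1}{2r}}Aa^{\frac{1}{2r}}\|_{(r;\cX)}=\inf_{C\geq0}\|a^{\frac{1}{2r}}Ca^{\frac{1}{2r}}\|_{r}\,\|C^{-\frac12}AC^{-\frac12}\|_{(\infty;\cX)} .
\end{align}
Next I convert this product into a sum with \cref{lem:Inf.Prod.to.Sum}, in its two-factor form, applied to $g(C)=\|a^{\frac{1}{2r}}Ca^{\frac{1}{2r}}\|_r^r$ (homogeneous of degree $r$) and $f(C)=\|C^{-\frac12}AC^{-\frac12}\|_{(\infty;\cX)}$ (homogeneous of degree $-1$). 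The resulting exponent is exactly $\frac{1}{r_0}=\frac1r+1$, which gives
\begin{align}
\|a^{\frac{1}{2r}}Aa^{\frac{1}{2r}}\|_{(r;\cX)}^{\frac{r}{1+r}}=\inf_{C\geq0}\Big\{\tfrac{1}{1+r}\|a^{\frac{1}{2r}}Ca^{\frac{1}{2r}}\|_r^r+\tfrac{r}{1+r}\|C^{-\frac12}AC^{-\frac12}\|_{(\infty;\cX)}\Big\}.
\end{align}
This is precisely where the power $\frac{r}{1+r}$ in the statement comes from.

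\emph{Step 3 (concavity and upper semi-continuity in $a$).} For each fixed $C$, the first summand $\|C^{\frac12}a^{\frac1r}C^{\frac12}\|_r^r$ is concave and continuous in $a$ by \cref{lem:continuity}, taken with $s=m=r$ (or directly by \cite[Theorem 1.1]{Zhang.2020}, since $\frac1r\leq 1$). The second summand does not depend on $a$. An infimum of concave continuous functions is concave and upper semi-continuous, which finishes the argument.

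The main obstacle is Step 2. It requires justifying the symmetric factorization for positive elements of $\cS_r[\cH;\cX]$ with a general operator space $\cX$, and the change of variables requires care with supports. I would first handle full-rank $a$, where all inverses are genuine, and then pass to general $a$ through the support restriction of \cref{lem:SupportConditions}, or by approximating $a$ with $a+\epsilon\1$. The limit $\epsilon\to0$ is compatible with the upper semi-continuity being proved, since that is preserved under such limits of the infimum representation.
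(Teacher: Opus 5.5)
Your proposal is correct. For the variable $a$ it is the paper's argument. Both use Pisier's factorization with a symmetric positive factor, the substitution $c^2=a^{\frac{1}{2r}}\tilde C a^{\frac{1}{2r}}$, and the product-to-sum rewriting that produces the exponent $\frac{r}{1+r}$. Both then conclude from concavity and continuity of $a\mapsto\|a^{\frac{1}{2r}}\tilde C a^{\frac{1}{2r}}\|_r^r$, which survive the infimum over $\tilde C$.

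For the variable $A$ you take a genuinely different and much shorter route. The paper stays inside the infimum representation. It embeds $\cX\subset\cB(\cK)$ and shows that $(\tilde C,A)\mapsto\|(\tilde C^{-\frac12}\otimes\1)A(\tilde C^{-\frac12}\otimes\1)\|_{(\infty;\cX)}$ has convex sublevel sets, each cut out by the linear constraint $A\leq t(\tilde C\otimes\1)$. It then pushes quasi-convexity through $\inf_{\tilde C}$, and gets continuity from equivalence of norms in finite dimensions. You instead note that, for fixed $a$, the functional is the $\frac{r}{1+r}$-th power of a norm composed with a bounded linear map, hence continuous with convex sublevel sets.

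This needs no embedding of $\cX$. Beyond brevity, it also avoids a delicate point in the paper's route. Pushing quasi-convexity through $\inf_{\tilde C}$ requires the whole summand $\frac{1}{1+r}\|a^{\frac{1}{2r}}\tilde C a^{\frac{1}{2r}}\|_r^r+\frac{r}{1+r}\|\tilde C^{-\frac12}A\tilde C^{-\frac12}\|_{(\infty;\cX)}$ to be jointly quasi-convex. That does not follow merely from the first term being convex and the second quasi-convex.

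Two small remarks on your Step 2:
\begin{itemize}
\item The $\epsilon\to0$ passage is cleanest if you observe that $a\mapsto\|a^{\frac{1}{2r}}Aa^{\frac{1}{2r}}\|_{(r;\cX)}^{\frac{r}{1+r}}$ is directly continuous in $a$. Concavity then extends from full-rank $a$ by density, and upper semi-continuity is automatic.
\item If you want to bypass the symmetric factorization for positive elements of $\cS_r[\cH;\cX]$, keep two independent factors $\tilde C,\tilde D$ and use the three-factor form of \cref{lem:Inf.Prod.to.Sum}. Both resulting $a$-dependent terms are concave in $a$.
\end{itemize}
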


\begin{proof}
From Pisier's theorem \cite[Lemma 3.1]{Fawzi.2026} we have
\begin{align}
    \|(a^{\frac{1}{2r}}\otimes\1)A(a^{\frac{1}{2r}}\otimes\1)\|^{\frac{r}{1+r}}_{(r;\cX)} 
    &= \left(\inf_{c\geq0,\Pi_{c}=L_{(a^{\frac{1}{2r}}\otimes\1)A(a^{\frac{1}{2r}}\otimes\1)}}\|c\|^2_{2r}\|(c^{-1}a^{\frac{1}{2r}}\otimes\1)A(a^{\frac{1}{2r}}c^{-1}\otimes\1)\|_{(\infty;\cX)}\right)^{\frac{r}{1+r}} \\ 
    &= \inf_{C}\|a^{\frac{1}{2r}}CC^*a^{\frac{1}{2r}}\|_r^{\frac{r}{1+r}}\|(C^{-1}\otimes\1)A(C^{-1}\otimes\1)\|_{(\infty;\cX)}^{\frac{r}{1+r}} \\
    &=\inf_{\tilde{C}\geq0}\|a^{\frac{1}{2r}}\tilde{C}a^{\frac{1}{2r}}\|_r^{\frac{r}{1+r}}\|(\tilde{C}^{-\frac{1}{2}}\otimes\1)A(\tilde{C}^{-\frac{1}{2}}\otimes\1)\|_{(\infty;\cX)}^{\frac{r}{1+r}} \\
    &= \inf_{\tilde{C}\geq0}\left\{\frac{1}{1+r}\|a^{\frac{1}{2r}}\tilde{C}a^{\frac{1}{2r}}\|_r^r+\frac{r}{1+r}\|(\tilde{C}^{-\frac{1}{2}}\otimes\1)A(\tilde{C}^{-\frac{1}{2}}\otimes\1)\|_{(\infty;\cX)} \right\}.
\end{align} 
The rewriting as a minimization over a sum follows directly from \cite[Lemma B.4]{Rubboli.2025} and is unaffected by the constraints that \cref{lem:SupportConditions} allows us to place on the support of $c$ and hence $\tilde{C}$. Explicitly these constraints are such that $C$ defined via $C^{-1}:=c^{-1}a$ is well defined since $\Pi_c\leq\Pi_a$ can be assumed and $\tilde{C}:=CC^*=a^{-1}cc^*a^{-1}$ satisfies $\Pi_{\tilde{C}}=\Pi_c\leq \Pi_a$. We dropped to write explicitly in the later infima.
Now observe that due to \cite[Theorem 1.1 (1)]{Zhang.2020} $a\mapsto\|a^{\frac{1}{2r}}\tilde{C}a^{\frac{1}{2r}}\|_r^r$ is concave, since $0\leq \frac{1}{r}\leq 1$, and that this concavity is preserved under minimization. Concerning the convexity in $0\leq A$ notice first that  $\tilde{C}\mapsto\|a^{\frac{1}{2r}}\tilde{C}a^{\frac{1}{2r}}\|_r^r$ is convex since $1\leq r$ by \cite[Lemma B.1]{Rubboli.2025}. By showing that 
\begin{align}
    (\tilde{C}\otimes\1,A)\mapsto \|(\tilde{C}^{-\frac{1}{2}}\otimes\1)A(\tilde{C}^{-\frac{1}{2}}\otimes\1)\|_{(\infty;\cX)}
\end{align} is jointly quasi-convex we can conclude the proof, since this means that the infimum is over a jointly quasi-convex function, which implies quasi-convexity in $A$ of the resulting function. To show this joint quasi-convexity we have to show that the sublevel sets
\begin{align}
    S_t:=\{(\tilde{C}\otimes\1,A)\in (\cB(\cH)\otimes\1_{\cX})\times\cS_\infty[\cH;\cX]\,|\,\|(\tilde{C}^{-\frac{1}{2}}\otimes\1)A(\tilde{C}^{-\frac{1}{2}}\otimes\1)\|_{(\infty;\cX)}\leq t \}
\end{align} are convex sets for any $t\geq0$. We do this by appealing to a standard fact from operator space theory \cite{Book.Pisier.2003}, namely that we can completely isometrically embed the operator space $\cX\subset B(\cK)$ for some Hilbert space $\cK$ and hence we get the complete isometric embedding $\cS_\infty[\cH;\cX]\subset B(\cH\otimes\cK)$ as a linear subspace. 
Thus by showing that that the extended sets
\begin{align}
    \hat{S}_t:=\{(\tilde{C}\otimes\1,A)\in (\cB(\cH)\otimes\1_{\cK})\times\cB(\cH\otimes\cK)\,|\,\|(\tilde{C}^{-\frac{1}{2}}\otimes\1)A(\tilde{C}^{-\frac{1}{2}}\otimes\1)\|_{\cB(\cH\otimes\cK)}\leq t \}
\end{align}
are convex for each $t\geq0$, it directly implies that the original sets $S_t$ are since they are the restrictions of $\hat{S}_t$ to the linear subspace $(\cB(\cH)\otimes\1_{\cX})\times\cS_\infty[\cH;\cX]\subset \cB(\cH)\otimes\1_{\cK}\times \cB(\cH\otimes\cK)$.
But these extended sets $\hat{S}_t$ are convex, since for two positive operators $\tilde{C}\otimes\1, A\in\cB(\cH\otimes\cK)$ it holds that
\begin{align}
    \|(\tilde{C}^{-\frac{1}{2}}\otimes\1)A(\tilde{C}^{-\frac{1}{2}}\otimes\1)\|_{\cB(\cH\otimes\cK)}\leq t &\Longleftrightarrow (\tilde{C}^{-\frac{1}{2}}\otimes\1)A(\tilde{C}^{-\frac{1}{2}}\otimes\1)\leq t\1_{\cH\otimes\cK}\\ &\Longleftrightarrow A\leq t(\tilde{C}\otimes\1_{\cK}) \\ &\Longleftrightarrow 0\leq t(\tilde{C}\otimes\1_{\cK})-A,
\end{align} which is a linear constraint and hence the set of feasible positive $(\tilde{C},A)$ is convex. Upper semi-continuity follows, since the map
\begin{align}
    0\leq a\mapsto \|a^\frac{1}{2r}\tilde{C}a^\frac{1}{2r}\|_r^r
\end{align} is continuous, hence in particular upper semi-continuous, which is preserved under the infimum.
Continuity in $A$ follows since in finite dimensions all norms are equivalent. In particular is every norm continuous w.r.t it-self and thus also w.r.t the operator norm. Since in addition multiplication $A\mapsto (a^{\frac{1}{2r}}\otimes\1)A(a^{\frac{1}{2r}}\otimes\1)$ and taking finite powers $t\mapsto t^\frac{r}{1+r}$ is continuous, the result follows.
\end{proof}
Note that this equally holds when replacing $a^\frac{1}{2r}$ by $a^\frac{1}{2s}$ for any $\infty\geq s\geq r$.

\end{document}